\documentclass[aps,pra,twocolumn,superscriptaddress]{revtex4-2}

% You should use BibTeX and apsrev.bst for references
% Choosing a journal automatically selects the correct APS
% BibTeX style file (bst file), so only uncomment the line
% below if necessary.
%\bibliographystyle{apsrev4-2}
\usepackage{amsmath}
\usepackage{physics}
\usepackage[inline]{enumitem}

\usepackage{graphicx}
\usepackage{amsfonts}
\usepackage{amsthm}
\usepackage{booktabs}
\usepackage{xcolor}
\usepackage{upgreek,amssymb}
\definecolor{myblue}{RGB}{50, 100, 200}
\usepackage[colorlinks,allcolors=myblue]{hyperref}

\newtheorem{proposition}{Proposition}
\newtheorem{lemma}{Lemma}

\newcommand{\smallnorm}[1]{\Vert{#1}\Vert}
\newcommand{\smallabs}[1]{ | {#1} |}
\newcommand{\smallket}[1]{ | {#1} \rangle}
\newcommand{\smallbra}[1]{ \langle {#1} |}

% Added by Daniel ----------------------------
\definecolor{darkgreen}{rgb}{.1,.6,.1}

\usepackage[capitalize]{cleveref}

% TIKZ ---------------------------------------
\usepackage{tikz}
\usetikzlibrary{decorations.pathreplacing,calligraphy,decorations.markings}

\definecolor{tensorcolor}{rgb}{0.65,0.77,0.95}
\definecolor{tensormatcolor}{rgb}{0.8,0.8,0.85}
\definecolor{btensorcolor}{rgb}{0.65,0.50,0.69}
\definecolor{whitetensorcolor}{rgb}{0.93,0.93,0.93}
\definecolor{vectorcolor}{rgb}{0.98, 0.91, 0.71}

\newcommand{\gate}[2]{
    \begin{scope}[shift={(#1)}]
        \draw[very thick] (-1.2, 0) -- (1.2, 0);
        \draw[thick, fill=whitetensorcolor, rounded corners=2pt] (-0.8, -0.6) rectangle (0.8, 0.6);
	    \draw (0,0) node {\scriptsize #2};
    \end{scope}
        }

\newcommand{\KTensor}[2]{
	\begin{scope}[shift={(#1)}]
		\draw[very thick] (-1.2,0) -- (1.2,0);
		\draw[very thick] (0,1.2) -- (0,0);
        \draw[thick, fill=tensorcolor, rounded corners=2pt] (-0.6,-0.6) rectangle (0.6,0.6);
		\draw (0,0) node {\scriptsize #2};
	\end{scope}
}

\newcommand{\UT}[2]{
	\begin{scope}[shift={(#1)}]
		\draw[very thick] (-1.2,0) -- (1.2,0);
		\draw[very thick] (0,1.2) -- (0,-1.2);
        \draw[thick, fill=whitetensorcolor, rounded corners=2pt] (-0.6,-0.6) rectangle (0.6,0.6);
		\draw (0,0) node {\scriptsize #2};
	\end{scope}
}

\newcommand{\KGate}[3]{
	\begin{scope}[shift={(#1)}]
		\draw[very thick] (-1.2,0.9) -- (1.2,0.9);
		\draw[very thick] (-1.2,-0.9) -- (1.2,-0.9);
        \draw[thick, fill=tensormatcolor, rounded corners=2pt] (-0.6,-1.5) rectangle (0.6,-0.3);
        \draw[thick, fill=tensormatcolor, rounded corners=2pt] (-0.6,1.5) rectangle (0.6,0.3);
		\draw (0,-0.9) node {\scriptsize #2};
        \draw (0,0.9) node {\scriptsize #3};
	\end{scope}
}

\newcommand{\KDTensor}[2]{
	\begin{scope}[shift={(#1)}]
		\draw[very thick] (-1.2,0) -- (1.2,0);
		\draw[very thick] (0,-1.2) -- (0,0);
        \draw[thick, fill=tensorcolor, rounded corners=2pt] (-0.6,-0.6) rectangle (0.6,0.6);
		\draw (0,0) node {\scriptsize #2};
	\end{scope}
}

\newcommand{\TTensor}[3]{
	\begin{scope}[shift={(#1)}]
        \KTensor{(0,-.9)}{#2};
        \KDTensor{(0,.9)}{#3};
	\end{scope}
}

\newcommand{\CTensor}[2]{
	\begin{scope}[shift={(#1)}]
        \draw[very thick] (-1.2, -.9) -- (1.2, -.9);
        \draw[very thick] (-1.2, .9) -- (1.2, .9);
        \draw[thick, fill=tensorcolor, rounded corners=2pt] (-0.9, -1.5) rectangle (0.9, 1.5);
        \draw (0, 0) node {\scriptsize #2};
	\end{scope}
}

\newcommand{\UDTensor}[2]{
	\begin{scope}[shift={(#1)}]
        \draw[very thick] (-1.2, -.9) -- (1.2, -.9);
        \draw[very thick] (-1.2, .9) -- (1.2, .9);
        \draw[thick, fill=whitetensorcolor, rounded corners=2pt] (-0.9, -1.5) rectangle (0.9, 1.5);
        \draw (0, 0) node {\scriptsize #2};
	\end{scope}
}

\newcommand{\UTensor}[3]{
	\begin{scope}[shift={(#1)}]
        \gate{(0,-.9)}{#2};
        \gate{(0,.9)}{#3};
	\end{scope}
}

\newcommand{\bTensor}[3]{
	\begin{scope}[shift={(#1)}]
    \ifnum#3=-1
	    \draw [very thick] (0,0) to  (1,0);
    \fi

    \ifnum#3=1
        \draw [very thick] (-1,0) to (0,0);
    \fi

		\filldraw[color=black, fill=whitetensorcolor, thick] (0,0) circle (\stradius);
	\draw (0,0) node {#2};
	\end{scope}
}

\newcommand{\BTensor}[4]{
	\begin{scope}[shift={(#1)}]
        \bTensor{(0,0.9)}{#3}{#4};
        \bTensor{(0,-0.9)}{#2}{#4};
	\end{scope}
}

\newcommand{\SingleDots}[2]{
	\begin{scope}[shift={(#1)}]
      \draw [very thick, dotted] (-#2/2,0) to (#2/2,0);
	\end{scope}
}

\newcommand{\DoubleDots}[2]{
	\begin{scope}[shift={(#1)}]
      \SingleDots{0,0.9}{#2};
      \SingleDots{0,-0.9}{#2};
	\end{scope}
}

\newcommand{\sidetensorright}[2]{
    \begin{scope}[shift={(#1)}]
    	\draw [very thick] (-\doubledx+1,0.9) to  [bend left=90] (-\doubledx+1,-0.9);
        \draw [very thick] (-\doubledx+1,0.9) -- (-\doubledx+0.5,0.9);
        \draw [very thick] (-\doubledx+1,-0.9) -- (-\doubledx+0.5,-0.9);
    	\filldraw[color=black, fill=vectorcolor, thick] (-\doubledx+1.4,0) circle (\stradius);
    	\draw (-\doubledx+1.4,0) node {#2};
    \end{scope}
}

\newcommand{\sidetensorleft}[2]{
    \begin{scope}[shift={(#1)}]
        \draw [very thick] (\doubledx-1,0.9) to  [bend right=90] (\doubledx-1,-0.9);
        \draw [very thick] (\doubledx-1,0.9) -- (\doubledx-0.5,0.9);
        \draw [very thick] (\doubledx-1,-0.9) -- (\doubledx-0.5,-0.9);
	    \filldraw[color=black, fill=vectorcolor, thick] (\doubledx-1.4,0) circle (\stradius);
	    \draw (\doubledx-1.4,0) node {#2};
    \end{scope}
}

\newcommand{\sidetensorleftlarger}[2]{
    \begin{scope}[shift={(#1)}]
        \draw [very thick] (\doubledx-1,0.9) to  [bend right=90] (\doubledx-1,-0.9);
        \draw [very thick] (\doubledx-1,0.9) -- (\doubledx-0.5,0.9);
        \draw [very thick] (\doubledx-1,-0.9) -- (\doubledx-0.5,-0.9);
	    \filldraw[color=black, fill=vectorcolor, thick] (\doubledx-1.4,0) ellipse (0.9cm and 0.5cm);
	    \draw (\doubledx-1.4,0) node {#2};
    \end{scope}
}

%Lengths

\newcommand\doubledx{1.6}
\newcommand\singledx{2.2}
\newcommand\dx{1.8}

\newcommand\stradius{0.6}

%Modifications

\newcommand\subsetsim{\mathrel{%
  \ooalign{\raise0.2ex\hbox{$\subset$}\cr\hidewidth\raise-0.8ex\hbox{\scalebox{0.9}{$\sim$}}\hidewidth\cr}}}

\begin{document}

% Use the \preprint command to place your local institutional report
% number in the upper righthand corner of the title page in preprint mode.
% Multiple \preprint commands are allowed.
% Use the 'preprintnumbers' class option to override journal defaults
% to display numbers if necessary
%\preprint{}

%Title of paper
\title{Theory of quantum-enhanced interferometry with general Markovian light sources}
% repeat the \author .. \affiliation  etc. as needed
% \email, \thanks, \homepage, \altaffiliation all apply to the current
% author. Explanatory text should go in the []'s, actual e-mail
% address or url should go in the {}'s for \email and \homepage.
% Please use the appropriate macro foreach each type of information

% \affiliation command applies to all authors since the last
% \affiliation command. The \affiliation command should follow the
% other information
% \affiliation can be followed by \email, \homepage, \thanks as well.
\author{Erfan Abbasgholinejad}
\affiliation{Department of Electrical and Computer Engineering, University of Washington, Seattle, WA 98195}
%\email[]{Your e-mail address}
%\homepage[]{Your web page}
%\thanks{}
%\altaffiliation{}
\author{Daniel Malz}
\affiliation{Department of Mathematical Sciences, University of Copenhagen, 2100 Copenhagen, Denmark}
\author{Ana Asenjo-Garcia}
\affiliation{Department of Physics, Columbia University, New York, NY 10027}
\author{Rahul Trivedi}
\email{rahul.trivedi@mpq.mpg.de}
\affiliation{Max-Planck-Institut für Quantenoptik, Hans-Kopfermann-Str.~1, 85748 Garching, Germany}
\affiliation{Department of Electrical and Computer Engineering, University of Washington, Seattle, WA 98195}

%Collaboration name if desired (requires use of superscriptaddress
%option in \documentclass). \noaffiliation is required (may also be
%used with the \author command).
%\collaboration can be followed by \email, \homepage, \thanks as well.
%\collaboration{}
%\noaffiliation

\date{\today}

\begin{abstract}
Quantum optical systems comprising quantum emitters interacting with engineered optical modes generate non-classical states of light that can be used as resource states for quantum-enhanced interferometry. However, outside of well-controlled systems producing either single-mode states (e.g.~Fock states or squeezed states) or highly symmetric multi-mode states (e.g.~superradiant states), their potential for quantum advantage remains uncharacterized. In this work, we develop a  framework to analyze quantum enhanced interferometry with general Markovian quantum light sources. First, we show how to compute the quantum Fisher Information (QFI) of the photons emitted by a source efficiently by just tracking its internal dynamics and without explicitly computing the state of the emitted photons. We then use this relationship to elucidate the connection between the level structure and spectrum of the source to a potential quantum advantage in interferometry. Finally, we analyze optimal measurement protocols that can be used to achieve this quantum advantage with experimentally available optical elements. In particular, we show that tunable optical elements with Kerr non-linearity can always be harnessed to implement the optimal measurement for any given source. Simultaneously, we also outline general conditions under which linear optics and photodetection is enough to implement the optimal measurement.
\end{abstract}

% insert suggested keywords - APS authors don't need to do this
%\keywords{}

%\maketitle must follow title, authors, abstract, and keywords
\maketitle
\raggedbottom

% Quantum sensing explores the ultimate precision achievable in sensors by leveraging the principles of quantum mechanics. By exploiting phenomena such as superposition and entanglement, quantum sensors can achieve sensitivities that surpass those of classical sensors. These advanced sensors can be utilized in a wide range of applications, including medical imaging and fundamental physics research. One specific set up for quantum sensing is in interferometry, which uses the interference of light to make precise measurements of physical properties such as refractive index!, and gravitational waves.\\
\section{Introduction}
Leveraging quantum many-body correlations is well known to provide a quadratic advantage in sensing \cite{giovannetti2011advances, toth2014quantum, dowling2015quantum, demkowicz2015quantum}. Using classical sensors with $N$ probes, the precision for estimating an unknown parameter is limited to the standard quantum limit (SQL) which scales as $1/\sqrt{N}$. However, employing probes in a highly correlated state, such as a Greenberger–Horne–Zeilinger (GHZ) state \cite{greenberger1989going}, allows us to surpass this limit and achieve Heisenberg-limited (HL) scaling  of $1/N$ \cite{fox2006quantum, demkowicz2015quantum, sanders1995optimal}. A specific and practically relevant application of quantum-enhanced sensing is interferometry, where the probes are photons emitted from a non-classical source, and the parameter to be sensed is an unknown phase. This phase is typically applied in one arm of a Mach-Zehnder interferometer (MZI). The interferometer is then illuminated with the probe photons and the photons at the MZI output are detected to infer the unknown phase. Such a scheme, using ``classical" photonic states, such as coherent states, senses the unknown phase at the standard quantum limit. In contrast, quantum photonic states, such as NOON states \cite{bollinger1996optimal} or twin Fock states \cite{holland1993interferometric}, can achieve the Heisenberg limit.

However, while several theoretical proposals exist for generating quantum states such as NOON states or Fock states \cite{dowling2008quantum,lee2002quantum, gerry2001generation, mccusker2009efficient, kapale2007bootstrapping, cable2007efficient, pryde2003creation, cirac1993preparation, saxena2024boundary}, they largely require strong photon blockade as well as a high degree of controllability in the quantum emitters \cite{huang2018nonreciprocal, deng2024quantum, bishop2009proposal, chen2010heralded, muller2015coherent, omran2019generation, trivedi2018few}, which is hard to achieve at optical frequencies \cite{tang2015quantum, snijders2018observation}. This has limited the number photons that can be prepared in such quantum states \cite{mitchell2004super, hofheinz2008generation, nagata2007beating, slussarenko2017unconditional}.

Rather than asking which states saturate the Heisenberg limit and then finding ways to prepare them, a practically oriented approach is to ask which quantum-light sources are both simple and able to deliver a quantum advantage in metrology (but not necessarily Heisenberg scaling). Notably, significant experimental efforts have focused on efficiently generating squeezed states \cite{schnabel2017squeezed, vahlbruch2016detection, safavi2013squeezed} that go beyond the standard quantum limit \cite{caves1981quantum, wiseman2009quantum} and are generated by strongly pumping bulk optical non-linearities \cite{andersen201630, orozco1987squeezed, wu1986generation, scully1997quantum}. Another recent line of work has theoretically analyzed the metrological potential of photonic states generated by the collective superradiant decay in the waveguide Dicke limit \cite{dicke1954coherence} and have shown that these states achieve a Heisenberg-limited scaling with the number of photons and are only $\sim$18$\%$ worse than Fock states \cite{paulisch2019quantum}.\\
% As a result, there has been interest in exploring more experimentally feasible quantum states for metrology, created through the emission of multiple emitters into optical modes. Notably, studies have shown that quantum states of light produced through such configurations, such as squeezed states \cite{caves1981quantum, orozco1987squeezed} and super-radiant states \cite{paulisch2019quantum}, can still offer advantages over the SNL.\\

\begin{figure*}[htpb]
	\centering
	\includegraphics[width=0.9\linewidth]{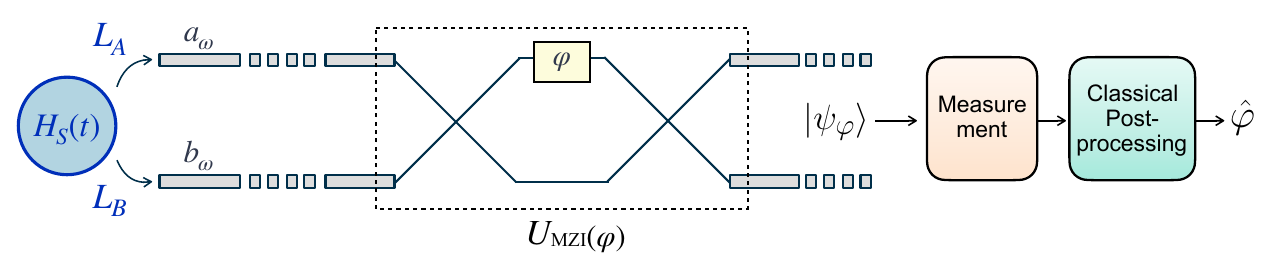}
	\caption{\label{fig:mzi}Schematic depiction of the setup considered in this paper. A source emits photons into input ports of the MZI through the jump operators $L_A$ and $L_B$. The generated photonic state passes though a MZI that includes an unknown sensing parameter, $\varphi$ and the output state $\ket{\psi_\varphi}$ then depends on $\varphi$. This state is then subsequently measured and the measurement outcome is classically postprocessed to output an estimate $\hat{\varphi}$ of $\varphi$.}
\end{figure*}

Going beyond individual examples, it would be desirable to assess more generally the metrological potential of different types of quantum-optical light sources.
An important quantity to calculate is the Quantum Fisher Information (QFI)~\cite{braunstein1994statistical}, which characterizes the maximum sensitivity the interferometer can achieve \cite{helstrom1969quantum, holevo2011probabilistic}. A key obstacle here is that the QFI for photonic states is, in general, challenging to compute since it usually requires being able to calculate the full photonic state, which often has a large number of photons in possibly multiple modes. {While this question has been studied in the context of spectroscopy with continuous-field measurements \cite{yang2023efficient, gammelmark2014fischer}, computing QFI for interferometry has been largely limited to either single-mode states \cite{demkowicz2015quantum} or certain specific highly symmetric multi-mode states \cite{paulisch2019quantum}.} Furthermore, the QFI only provides an upper bound on the sensitivity achievable in the spectroscope. Whether an optically implementable measurement protocol can be devised to achieve this maximal sensitivity also remains an important open question and has been answered only for very specific set of photonic states \cite{pezze2013ultrasensitive, campos2003optical,hofmann2009all, paulisch2019quantum}.

In this paper, we develop a framework to address these problems in a set up where a general source emit photons into each arm of the Mach-Zehnder interferometer to sense an unknown phase $\varphi$ [Fig.~\ref{fig:mzi}]. The only assumption that we make on the source is that its coupling with the output photon field is well described within the Markov approximation. We explicitly connect the QFI of the emitted photons to their two-time correlation functions, which can be computed by only tracking the internal dynamics of the source. Unlike previous works \cite{paulisch2019quantum}, we do not require an explicit computation of the output many-body photonic state, which allows us to numerically analyze a much larger class of photon sources \cite{aoki2006observation, akimov2007generation, lund2008experimental, wallraff2004strong}. Furthermore, using this result for the QFI, we then analyze how the source spectrum impacts the possibility of quantum metrological advantage and demonstrate that the photons generated by driving a source with a single ground state will always exhibit a standard quantum limited (SQL) scaling. However, sources with multiple ground states can generate photons with Heisenberg limited (HL) scaling. For instance, we show that a simple $\pi-$level system with a continous-wave drive can generate photons with quantum metrological advantage.

Finally, we explore the optimal measurement protocol that is needed to achieve the QFI-limited sensitivity. We show that, irrespective of the source, such a measurement can always be implemented with a dynamically driven non-linear resonator and photodetectors. Furthermore, since it is experimentally much easier to implement linear optics, we also study the optimality of measurements that can be implemented with only dynamically controllable linear optical elements and photodetectors and outline conditions when such measurements can be optimal.\\

\section{QFI for general light sources}
\subsection{Setup} 
The interferometric setup that we consider throughout this paper is shown in Fig.~\ref{fig:mzi} --- a source of photons emits into the two input ports (labeled as port A and port B) of the MZI, which has the unknown phase $\varphi$ to be sensed. After passing through the MZI, the photons are in a $\varphi-$dependent state $\ket{\psi_\varphi}$ which is then measured and the measurement outcome is classically post-processed to produce an estimate of the unknown phase $\varphi$. We model the input ports of the MZI as one-dimensional propagating fields (e.g.,~fields in a waveguide mode or a collimated free space beam) with annihilation operators $a_\omega$ and $b_\omega$ corresponding to the frequency $\omega$. Within the Markov approximation \cite{breuer2002theory}, the source-port Hamiltonian can be expressed as:
\begin{subequations}\label{eq:model}
\begin{align}
H(t) = H_S(t) + H_P + V_{SP},
\end{align}
where $H_S(t)$ is a possibly time-dependent Hamiltonian acting on the source, $H_P$ is the Hamiltonian describing photon propagation in the ports, i.e.,
\begin{align}
H_P = \sum_{x \in \{a, b\}}\int_{-\infty}^\infty \omega \tilde{x}_\omega^\dagger \tilde{x}_\omega d\omega,
\end{align}
and
\begin{align}
V_{SP} = \sum_{x\in \{a, b\}} \int_{-\infty}^\infty \big(\tilde{x}_\omega^\dagger L_X + \text{h.c.}\big)\frac{d\omega}{\sqrt{2\pi}},
\end{align}
\end{subequations}
is the interaction Hamiltonian between the source and port. Here $L_A, L_B$ are the operators, acting only on the Hilbert space of the source, through which the source couples to the two output ports. Equation (\ref{eq:model}) follows the Markov approximation for the model by (i) assuming that the frequencies in the port extend from $-\infty$ to $\infty$, with the addition of fictitious negative frequencies being justified when the natural resonant frequency of the source is much larger than other frequency scales in the problem and (ii) assuming that the source interacts identically with each frequency $\omega$ in the output ports.

It is convenient to introduce the time-domain annihilation operator for the port, ${x}_\tau = \int_{\mathbb{R}}\tilde{x}_\omega e^{-i\omega \tau} d\omega/\sqrt{2\pi}$, which can be interpreted as destroying excitations at the time $\tau$ in the port. Working in the interaction picture with respect to the port Hamiltonian $H_P$, we obtain the source-port Hamiltonian $H_I(t)$:
\[
H_I(t) = H_S(t) + \sum_{x\in \{a, b\}} \big({x}_{t} L_X^\dagger +\text{h.c.}\big).
\]
Assuming that the source is initially in state $\ket{\phi_{S, i}}$ and the ports are initially in a vacuum state, and finally projecting the source onto the state $\ket{\phi_{S, f}}$, the state of the light channel after time $T$ is expressed as:
\begin{subequations}\label{eq:psi}
\begin{equation}
    \ket{\psi} = \frac{1}{\mathcal{N}}\bra{\phi_{S, f}}U_I(T, 0) \ket{\phi_{S, i} , \text{vac}},
\end{equation}
where
\begin{align}
    U(T, 0) = \overrightarrow{\mathcal{T}} \text{exp}\left({\int_0^T H_{I}(s) ds}\right),
\end{align}
\end{subequations}
with $\overrightarrow{\mathcal{T}}$ being the time-ordering operator that orders times in decreasing order and $\mathcal{N}$ chosen to ensure that $\ket{\psi}$ is normalized. The projection onto a final state $\ket{\phi_{S, f}}$ maybe performed explicitly via a projective measurement on the source and post-selecting on the measurement outcome. More realistically, in an experiment, we would often wait for a sufficiently long time $T$ for the source to have decayed into its ground state which, if unique, would disentangle from the photons and thus effectively projecting the source onto its ground state. Finally, while we consider the more general setting of the same source emitting into both the ports, in many experimental setups (such as twin-Fock state interferometry \cite{holland1993interferometric}), it is more common to consider two independent sources emitting separately into the two ports. In this case, $H_S(t) = H_A(t) \otimes I_B + I_A\otimes  H_B(t)$ and the operator $L_A$ will act only on the Hilbert space of the source coupling to port A and the operator $L_B$ will act only on the Hilbert space of the source coupling to port B. Consequently, the state of photons $\ket{\psi}$ will be a separable state in between the two ports i.e. $\ket{\psi} = \ket{\psi_A}\otimes \ket{\psi_B}$.

The ports then pass through an MZI with the which implements a unitary transformation dependent on the phase $\varphi$ that is to be sensed and can be expressed as
\begin{align}\label{eq:MZI_transformation}
    U^\dag_{\text{MZI}}(\varphi) \begin{bmatrix} {a}_t \\ {b}_t \end{bmatrix} U_{\text{MZI}}(\varphi) &= \begin{bmatrix} \cos \varphi & \sin \varphi \\
    -\sin \varphi & \cos \varphi\end{bmatrix}\begin{bmatrix} {a}_t \\ {b}_t \end{bmatrix}.
\end{align}
Consequently, the final photonic state at the output of the MZI is given by
\begin{equation}\label{eq:mzi_unitary}
    \ket{\psi_\varphi} = U_{\text{MZI}}(\varphi) \ket{\psi},
\end{equation}
which depends on the phase $\varphi$. In the next subsection, we start from this expression for the state $\ket{\psi_\varphi}$ and compute its QFI with respect to $\varphi$.

\subsection{Computing QFI efficiently}
The QFI of the $\varphi-$dependent state $\ket{\psi_\varphi}$ evaluated at $\varphi = \varphi_0$ is formally given by \cite{liu2020quantum}
\begin{align}\label{eq:QFI_def}
    \text{QFI} = 4\bigg(\norm{\frac{\partial}{\partial \varphi} \ket{\psi_\varphi}}^2 - \bra{\psi_\varphi}\frac{\partial}{\partial \varphi}\ket{\psi_\varphi}\bigg)_{\varphi =\varphi_0} .
\end{align}
As is typical while studying quantum enhanced interferometry, we will consider the QFI at $\varphi = 0$ since the measurement of a non-zero $\varphi$ can always be done in two steps --- first making an approximate estimate of $\varphi$ and then biasing the MZI around $\varphi = 0$ using the estimated phase \cite{barndorff2000fisher, hayashi2011comparison, yang2019attaining}.

One approach to developing an explicit expression for $\text{QFI}$, which has been adopted in Ref.~\cite{paulisch2019quantum}, is to first calculate the wave-functions
\[\psi_\varphi(t_1, t_2 \dots t_n; s_1, s_2 \dots s_m) = \bra{\text{vac}}\prod_{i = 1}^n {a}_{t_i} \prod_{i = 1}^m {b}_{s_i} \ket{\psi_\varphi},\]
associated with $\ket{\psi_\varphi}$, which can be obtained entirely in terms of the source's effective Hamiltonian $H_\text{eff}(t) = H_S(t) - i\sum_{x \in \{a, b\}} L_x^\dagger L_x / 2$, using input-output formalism \cite{trivedi2018few, xu2015input}, and then evaluate $\text{QFI}$. The complexity of evaluating $\psi_\varphi(t_1, t_2 \dots t_n; s_1, s_2 \dots s_m)$ grows exponentially with $n, m$ in the absence of any special symmetries in the source, making it hard to employ this approach beyond very special cases \cite{paulisch2019quantum}.

We take an alternative route and first express $\text{QFI}$ in terms of the correlation functions of $\ket{\psi_\varphi}$. We begin by noting that the MZI unitary in Eq.~(\ref{eq:mzi_unitary}) can be considered as being generated by the time dependent Hamiltonian \cite{trivedi2019point}:
\begin{align}
    H_\text{MZI}(t, \varphi) = 2 \tan \bigg(\frac{\varphi}{2}\bigg) h(t) \text{ with } h(t) = i({a}_{ t}^\dagger {b}_{t} - {a}_{t}{b}_{t}^\dagger).
\end{align}
Since $\ket{\psi_\varphi} = \overrightarrow{\mathcal{T}}\text{exp}(-i\int_0^T H_\text{MZI}(s, \varphi) ds) \ket{\psi}$, we can now use Duhamel's formula \cite{horn2012matrix} to evaluate $\partial \ket{\psi_\varphi}/ \partial \varphi$ at $\varphi = 0$:
\begin{align}\label{eq:deriv_psi_phi}
    \frac{\partial}{\partial \varphi} \ket{\psi_\varphi} \bigg |_{\varphi = 0} = -i H_d \ket{\psi} \text{ where }H_d = \int_0^T h(s) ds .
\end{align}
{It then follows from Eq.~(\ref{eq:QFI_def}) that $\text{QFI} = 4(\langle H_{d}^2 \rangle- \langle H_{d} \rangle^2)$ or equivalently:
\begin{widetext}
    \begin{align}\label{eq:QFI_corr_fun}
    \text{QFI} = 8 \int_{0}^{T} \int_{0}^{T} \text{Re}\big( \mathcal{C}^{(2)}_{a, b; b, a}(t, s) - \mathcal{C}^{(2)}_{b, b; a, a}(t, s) \big) d t d s  +  4\int_{0}^{T} \big(\mathcal{C}^{(1)}_{a; a}(t) + \mathcal{C}^{(1)}_{b; b}(t) \big) d t + 16\abs{ \int_{0}^{T} \text{Im}\big(\mathcal{C}^{(1)}_{a; b}(t)\big) d t }^2,
\end{align}
\end{widetext}
where $\mathcal{C}_{x^1 \dots x^n; y^1\dots y^n}^{(n)}(t_1 \dots t_n)$, for $x^i, y^i \in \{a, b\}$ is an $n-$point correlation function given by
\[
\mathcal{C}^{(n)}_{x^1 \dots x^n; y^1 \dots y^n}(t_1 \dots t_n) = \left\langle \prod_{i = 1}^n x^{i\dagger}_{t_i} \prod_{i = 1}^n y^i_{t_i}  \right\rangle.
\]
Equation (\ref{eq:QFI_corr_fun}) expresses the QFI of the emitted photonic state in terms of its  correlation functions --- as we detail in Appendix \ref{app:qfi_measurement}, this also yields a method to extract the QFI from two-photon correlation functions without necessarily having to perform the full sensing measurement to measure the unknown phase.

From the perspective of numerical calculations of QFI it is convenient to express it in terms of the internal dynamics of the source, which in turn is captured by the Lindbladian $\mathcal{L}_S(t)$:
\begin{align}
    \mathcal{L}_S(t) = -i[H_S(t),\cdot]+\sum_{X\in\{A,B\}} \mathcal{D}_{L_X},
\end{align}
where $\mathcal{D}_L(X) = L X L^\dagger-\{L^\dagger L, X\}/2$, and the associated channel $\mathcal{E}_S(t, s) = \overrightarrow{\mathcal{T}}\text{exp}(\int_s^t \mathcal{L}_S(\tau)d\tau)$. In Appendix \ref{app:QFI_comp}, we show that this can be accomplished by using the input-output formalism \cite{gardiner2004quantum} and the quantum regression theorem \cite{carmichael2009open, carmichael2013statistical}. In particular, we show that the correlator $\mathcal{C}^{(n)}_{x^1 \dots x^n; y^1 \dots y^n}(t_1 \dots t_n)$ can be expressed as
\begin{align}\label{eq:qfi_system_dynamics}
    &\mathcal{C}^{(n)}_{x^1\dots x^n; y^1\dots y^n}(t_1\dots t_n) =\nonumber\\
    &\quad \frac{1}{\mathcal{N}^2}\bra{\phi_f}\bigg[\prod_{i=1}^{n}\mathcal{E}_S(t_{i -1 },t_{i})\mathcal{M}_i \bigg]\mathcal{E}_S(t_n,0)(\rho_{S, i})\ket{\phi_f},
\end{align}
where we have assumed $T = t_0 \geq t_1\geq t_2\geq \dots \geq t_n \geq 0$, $\rho_{S,i} = \ket{\phi_{S, i}}\!\bra{\phi_{S, i}}$ and $\mathcal{M}_i(\cdot) = L_{Y^i}(\cdot)L_{X^i}^\dagger$. 
Furthermore, the normalization constant $\mathcal{N}$ can also be expressed as $\mathcal{N} = (\bra{\phi_{S, f}}\mathcal{E}_S(T,0)(\rho_{S,i}) \ket{\phi_{S, f}})^{1/2} $.}

We emphasize that, for a source with $D$ levels, the computation of the QFI from Eqs.~(\ref{eq:QFI_corr_fun}) and (\ref{eq:qfi_system_dynamics}) requires time that scales as $D^2 T^2$, whereas computing the QFI via first computing the output photon wave-packet would typically require time that scales exponentially with $T$. Appendix \ref{app:qfi_examples} demonstrates an application of Eqs.~(\ref{eq:QFI_corr_fun}) and (\ref{eq:qfi_system_dynamics}) to numerically analyze the QFI of the photons emitted from some paradigmatic multi-emitter quantum optical systems (such as the Dicke model and Tavis-Cumming model).

Finally, we remark that thus far, we have only considered sources which are coupling only to the ports of the MZI and do not lose photons into any additional channels. In many experimental settings, this may not exactly be the case --- the source might emit into multiple output channels which may correspond to different propagation directions, or even channels corresponding to decohering processes (such as non-radiative photon absorption from the source, or dephasing). When the source couples to additional loss channels, its dynamics is described instead by the Lindbladian
\begin{align}\label{eq:source_lindblad_with_losses}
    \mathcal{L}_S(t) = -i[H_S(t), \cdot] + \sum_{X \in \{A, B\}} \mathcal{D}_{L_X} + \sum_{\alpha} \mathcal{D}_{N_\alpha},
\end{align}
where the jump operators $N_1, N_2 \dots$ model photon emission into additional channels. When the photons emitted into these channels are not discarded (for e.g., if the different channels correspond to photon emission in different propagation directions and these photons are optically collected), then the expression for the QFI in Eqs.~\eqref{eq:QFI_corr_fun} and \eqref{eq:qfi_system_dynamics} continue to hold. However, when the photons in the additional channels are discarded, the result given in Eqs.~(\ref{eq:QFI_corr_fun}) and (\ref{eq:qfi_system_dynamics}) no longer accurately captures the QFI of the MZI output, since it is derived under the assumption that the photons emitted are in a pure-state. In the presence of the loss channels, however, the photons in the output ports would be in a mixed state for which computing the QFI is a significantly harder task. However, Eqs.~(\ref{eq:QFI_corr_fun}) and (\ref{eq:qfi_system_dynamics}) still provide the QFI for a purification of the mixed state of the photons in the output ports, and consequently still sets an upper bound on the QFI of the photons in the output ports \cite{braunstein1994statistical}.

{We would also like to highlight that while we have analyzed the interferometric setup, Refs.~\cite{yang2023efficient, gammelmark2014fischer} used a similar approach to compute the QFI for a spectroscopic setup where the goal is to optimally measure a parameter of the source Hamiltonian by performing a measurement on the emitted photons. There, the authors used a matrix product state representation of the emitted photons to relate the QFI to the internal dynamics of the source, which is formally similar to the approach that we take in this subsection.}
\subsection{Source spectrum and QFI} \label{sec:source_spectrum_qfi}
\indent In this section, we show that the result of the previous section can also be used to understand the constraints put by the level structure and the spectrum of the source on the possible quantum advantage. Throughout this section, we will restrict ourselves to the case where both the arms of the MZI have independent and indentical sources. In this case, the system Hamiltonian has the form $H_S(t) = H(t) \otimes I + I \otimes H(t)$, the jump operators are of the form $L_A = L\otimes I$, $L_B = I \otimes L$ and the initial and final source states are of the form $\ket{\phi_{S, i}} = \ket{\phi_{i}} \otimes \ket{\phi_i}$, $\ket{\phi_{S, f}} = \ket{\phi_f}\otimes \ket{\phi_f}$. We will denote by $\mathcal{E}(t, s) = \mathcal{T}\textnormal{exp}(\int_s^t \mathcal{L}(\tau) d\tau)$, where $\mathcal{L}(\tau) = -i[H(\tau), \cdot] + \mathcal{D}_L$, to be the channel describing the dynamics of the individual sources, and by $\rho_i(t) = \mathcal{E}(t, 0)(\ket{\phi_i}\!\bra{\phi_i}), \mathcal{P}_f(t) = \mathcal{E}^\dagger(T, t)(\ket{\phi_f}\!\bra{\phi_f})$. The expression for the QFI in Eqs.~\eqref{eq:QFI_corr_fun} and~\eqref{eq:qfi_system_dynamics} can then be reduced to
\begin{subequations}\label{eq:QFI_identical_sources}
\begin{align}
    \text{QFI} = 8\bigg(\text{Q}^{(2)} + \frac{1}{\mathcal{N}^2}\int_0^T n(t) dt\bigg),
\end{align}
where $n(t) = \text{Tr}(\rho_f(T, t)L\rho(t) L^\dagger)$, $\mathcal{N}^2 = \bra{\phi_f} \rho(T)\ket{\phi_f}$ and $\text{Q}^{(2)}$ is the contribution to the QFI from the two-point correlation functions:
\begin{align} \label{eq:Q_2}
    \text{Q}^{(2)} &= 2 \int_0^T \int_0^t \bigg( \smallabs{C^{(g)}( t, s)}^2 -  \smallabs{C^{(\chi)}(t, s)}^2\bigg)dt ds.
\end{align}
with
\begin{align}
    C^{(g)}( t, s) &= \bra{\psi} a_t^\dagger a_s \ket{\psi}=\frac{1}{\mathcal{N}^2}\text{Tr}( L^\dagger \mathcal{P}_f(t) \mathcal{E}(t,s)(L\rho_i(s) ), \nonumber\\
    C^{(\chi)}(t, s) &= \bra{\psi} a_t a_s \ket{\psi} = \frac{1}{\mathcal{N}^2}\text{Tr}(\mathcal{P}_f(t) L \mathcal{E}(t, s)(L\rho_i(s)),
\end{align}
\end{subequations}
where $\ket{\psi}$ is the state of the photons emitted into any one of the ports.
$C^{(g)}(t, s)$ and $C^{(\chi)}(t, s)$ are the two-point correlation functions contributing to the QFI. 

We will assume each source to have $D$ levels and driven for time $T$ during which it emits photons into the output port. The number of photons emitted by this source will typically grow as $\sim T$: Even when the source has only a few levels (i.e.,~$D$ is small), it can still be used to emit a large number of photons in a possibly entangled state. It can then be asked what kind of a source is required to emit photons in a state which has a QFI that scales as $\sim (\text{Number of photons})^2 \sim T^2$, and under what conditions would the QFI scale only as $\sim T$, thus forbidding a scaling quantum advantage with $T$.
\begin{figure}
    \centering
    \includegraphics[width=1.0\linewidth]{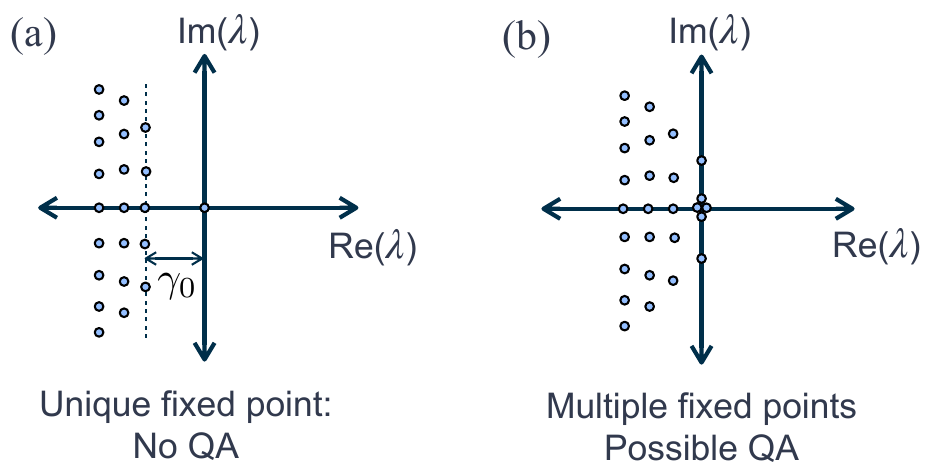}
    \caption{Schematic depiction of eigenvalues \( \lambda \) of a time-independent Lindbladian \( \mathcal{L} \) for the two cases considered in our analysis. (a) If \( \mathcal{L} \) has a unique fixed point $\uptau$ (i.e., \( \mathcal{L}(\rho) = 0 \implies \rho = \uptau\)) and all other eigenvalues have a strictly negative real part, QFI scales at most linearly with time \( T \), preventing any quantum advantage. (b) If the Lindbladian has multiple fixed points, then a quadratic quantum advantage may be possible with respect to the emission time $T$.}
    \label{fig:spectrum}
\end{figure}

While analyzing the scaling of QFI as a function of $T$, we will make the assumption that the normalization constant $\mathcal{N}$, which in general will depend on the total time $T$, is asymptotically lower bounded by a $T-$independent constant $p_0 > 0$, i.e., $\mathcal{N}^2 \geq p_0 > 0$ as $T \to \infty$. Physically, this corresponds to the assumption that the projection of the source onto the final state $\ket{\phi_f}$ at time $T$ succeeds with a probability at least $p_0$ which does not vanish as $T \to \infty$. With this assumption, we note that a possible quadratic scaling in QFI 
[i.e.,~Eq.~(\ref{eq:QFI_identical_sources})] can only be due to a quadratic scaling in $\text{Q}^{(2)}$ since
\begin{align}
\text{QFI} &= 8\bigg(\text{Q}^{(2)} + \frac{1}{\mathcal{N}^2}\int_0^T \text{Tr}(\mathcal{P}_f(t)L\rho(t) L^\dagger) dt\bigg) \nonumber\\
&\leq 8\bigg(\text{Q}^{(2)} + \frac{T\norm{L}^2}{p_0}\bigg).
\end{align}
Therefore, to assess quantum advantage, we aim to understand how $\text{Q}^{(2)}$ scales with $T$ in the large $T$ limit.

\subsubsection{Time-independent sources}
For a time-independent source (i.e.,~$\mathcal{L}(t) = \mathcal{L}$ independent of $t$), whether $\text{Q}^{(2)}$ scales as $T^2$ or only as $T$ can be related to the spectrum of the source Lindbladian $\mathcal{L}$. Consider first the case where $\mathcal{L}$ has a unique fixed point $\uptau$ (i.e., $\mathcal{L}(\uptau) = 0$) and its decay rates have a gap $\gamma_0$ (i.e., all other eigen-values of $\mathcal{L}$ have a negative real part $\leq -\gamma_0$ [Fig.~\ref{fig:spectrum}(a)]). In this case, using the Jordan decomposition of $\mathcal{L}$, we can decompose $\mathcal{E}(t, s)$ as
\begin{subequations}\label{eq:lindbladian_decomp}
\begin{align}
\mathcal{E}(t, s) = \text{Tr}(\cdot) \uptau + \Delta(t - s),
\end{align}
with
\begin{align}
    \norm{\Delta(\tau)}_\diamond \leq D_0(\tau) e^{-\gamma_0 \tau},
\end{align}
\end{subequations}
where $\smallnorm{\cdot}_\diamond$ is the superoperator diamond norm, $\gamma_0 > 0$ is a constant that depends on the spectrum of $\mathcal{L}$ and $D_0(\tau) \leq O(\tau^{m_0})$ for some $m_0$ as $\tau \to \infty$. Returning to Eq.~(\ref{eq:QFI_identical_sources}), Eq.~(\ref{eq:lindbladian_decomp}) implies that the correlation functions $C^{(g)}(t, s), C^{(\chi)}(t, s)$ approximately factorize into products of a function of $t$ and another function of $s$ when $\abs{t- s}$ is large:
\begin{align}\label{eq:connected_corr_time_ind}
    &\smallabs{C^{(g)}(t, s) - \alpha^*(t) \beta(s)} \leq  \frac{\norm{L}^2}{p_0} D_0(\abs{t - s}) e^{-\gamma_0 \abs{t - s}}, \nonumber\\
    &\smallabs{C^{(\chi)}(t, s) - \alpha(t) \beta(s)} \leq \frac{\norm{L}^2}{p_0} D_0(\abs{t - s}) e^{-\gamma_0 \abs{t - s}},
\end{align}
where $\alpha(t) = \textnormal{Tr}(\mathcal{P}_f(t) L \uptau)$ and $\beta(s) = \textnormal{Tr}(L\rho_i(s))$. This factorization of the two-point correlation function suggests that the photon emission at a time $t$ becomes uncorrelated with a previous photon emission at a time $s \ll t$. Using Eq.~(\ref{eq:connected_corr_time_ind}) together with  $\smallabs{C^{(g)}(t, s)}, \smallabs{C^{(\chi)}(t, s)} \leq \norm{L}^2$ and $\smallabs{\alpha(t)}, \smallabs{\beta(t)} \leq \norm{L}$, we obtain that
\begin{align}
&\text{Q}^{(2)} \leq \frac{4\norm{L}^4 T}{p_0^2}\int_0^T \big(2D_0(\tau) + D_0^2(\tau)\big)e^{-\gamma_0 \tau} d\tau \leq O(T).
\end{align}
For such sources, it thus follows that QFI $\leq O(T)$ as $T \to\infty$, thus forbidding a quadratic quantum advantage in interferometry. This can be seen as a consequence of such sources only emitting photons with short-time temporal correlations [as shown in Eq.~(\ref{eq:connected_corr_time_ind})], while obtaining a quadratic quantum advantage in an interferometry task requires long-time temporal correlations. We remark that this conclusion does not contradict the Heisenberg limited scalings obtained in Ref.~\cite{paulisch2019quantum} for the photons emitted in by the Dicke model, even though the Dicke model has a unqiue fixed point within the subspace of permutationally invariant states. This is because our analysis applies to sources with a fixed number of levels, with the number of emitted photons being increased by increasing $T$. On the other hand,  Ref.~\cite{paulisch2019quantum} increased the number of emitted photons by increasing the number of emitters in the Dicke model, which corresponds to increasing the number of levels in the source.

On the other hand, when the source Lindbladian does not have a unique fixed point [Fig.~\ref{fig:spectrum}(b)], then there is the possibility of quantum advantage. The simplest and physically relevant example of a source that can produce a state with QFI $\sim T^2$ is a $\pi-$level system with two excited states $\ket{e_1}, \ket{e_2}$ and two ground states $\ket{g_1}, \ket{g_2}$, with both the transitions ($\ket{e_1}\to \ket{g_1}$ and $\ket{e_2} \to \ket{g_2}$) emitting collectively through the jump operator $L = \ket{g_1}\!\bra{e_1} + \ket{g_2} \!\bra{e_2}$. Applying a Hamiltonian
\begin{align}
    H = \Omega_0 (\sigma_1 + \text{h.c.}) + \Omega_0 (\sigma_2 e^{i \alpha} + \text{h.c.}),
\end{align}
where $\sigma_i = \ket{g_i}\!\bra{e_i}$ and taking the initial and final states for the source to be $(\ket{g_1} + \ket{g_2})/\sqrt{2}$, it can be shown that when $\alpha \neq 0$, this source exhibits $\text{Q}^{(2)} \sim T^2 \sin^2 \alpha $ as $T \to \infty$. This can physically be understood as follows: for the $\pi-$level source, if initialized in and projected on the state $(\ket{g_1} + \ket{g_2})/\sqrt{2}$, the emitted photonic state $\ket{\psi}$ is a macroscopic superposition (i.e.,~``cat" like state) of two photonic states $\ket{\psi_1}, \ket{\psi_2}$ which would have been emitted by a two-level system when driven by a laser $\Omega_0$ and $\Omega_0 e^{i\alpha}$ respectively. Since $\alpha \neq 0$, $\ket{\psi_1}, \ket{\psi_2}$ become asymptotically orthogonal to each other as $T \to \infty$, and their macroscopic superposition inherits long-range correlations that make them useful for quantum-enhanced interferometry.

\subsubsection{Time-dependent sources}
We next consider sources where the source Hamiltonian itself is dependent on time $t$. In many experimental setups, while it is often possible to have a completely controllable source Hamiltonian (i.e.,~the source Hamiltonian can be designed as a function of time to apply any desired unitary on the source Hilbert space), it is typically hard to modulate this interaction between the source and the output port (i.e.,~the jump operator $L$). Consequently, we will assume that the jump operator $L$ is time-independent. While it might be physically expected that the ability to apply arbitrary unitaries on the source provides a huge flexibility in designing the output wave-packet, we provide evidence below that the form of the jump operator $L$ places severe restrictions on the achievable scalings of QFI with $T$.

Similar to the time-independent sources with a unique fixed point and a decay rate gap, time-dependent sources with asymptotically strictly contractive dynamics will have QFI $\sim T$. More specifically, a time-dependent source will be asymptotically strictly contractive if for all states $\rho_1, \rho_2$ with $X = \rho_1 - \rho_2$,
\begin{align}\label{eq:source_dynamics}
\norm{\mathcal{E}(t, s)(X)}_1 \leq C_0 e^{-\gamma_0 \abs{t - s}} \norm{X}_1 \ \ \forall \abs{t - s} \geq \tau_0
\end{align}
for some $C_0, \gamma_0, \tau_0 \geq 0$. Physically, this conditions implies that the dynamics of the source at long-times becomes independent of the initial state of the source. Given that the source dynamics satisfies Eq.~(\ref{eq:source_dynamics}), we show in Appendix \ref{app:strictly_contr} that, similar to the case of time-independent sources with a unique fixed point (Eq.~\ref{eq:connected_corr_time_ind}), $C^{(g)}(t, s), C^{(\chi)}(t, s)$ also factorize when $t \gg s$ and consequently the QFI $\sim T$. While it is generally a difficult task to show strict contractivity of a Lindbladian with a time-dependent Hamiltonian and a set of time-independent jump operators, it is generally expected to hold for sources with a single ground state and no possible dark states. In Appendix~\ref{app:strictly_contr}, we rigorously show that this condition holds for a two-level source decaying into the output port with $L = \sqrt{\gamma}\ket{g}\! \bra{e}$, as well as a multi-level source where the jump operators have a full Kraus rank, i.e., they form a complete basis for the space of source operators. 

We next consider sources where the dynamics is not strictly contractive and thus a quantum advantage is not forbidden. The simplest and experimentally available sources with non strictly contractive dynamics are those that emit via a transition from an excited state $\ket{e}$ to a ground state $\ket{g}$ and additionally have some number of dark states $\{\ket{m_1}, \ket{m_2} \dots \ket{m_k}\}$. The jump operator corresponding to the source is $L = \ket{g}\!\bra{e}$, which leaves the dark states unchanged, but the Hamiltonian $H(t)$ can in general couple the dark states with the excited and ground state. Examples of such sources include $\Lambda$ and V-level systems with both the transitions coupling to the same output port, as well as multi-emitter systems described by the Dicke model \cite{dicke1954coherence}. Several protocols have been devised that leverage the dark states in such sources to create long-time correlated photonic states (such as the photonic GHZ state) \cite{pichler2017universal, wei2021generation, wei2022generation, rubies2024deterministic}. As a simple example, consider a source with an excited state $\ket{e}$, ground state $\ket{g}$ and a dark state $\ket{m}$: By initializing the source in $(\ket{g} + \ket{m}) / \sqrt{2}$ and then driving the $\ket{e} - \ket{g}$ transition with $H(t) = \Omega (\ket{e}\!\bra{g} + \text{h.c.})$ and finally projecting the source on $(\ket{g} + \ket{m}) / \sqrt{2}$, we can generate a photonic state 
\begin{align}\label{eq:state_form_macro_super}
\ket{\psi} \propto \ket{\psi_\text{TLS}} + \ket{\text{vac}}, 
\end{align}
where $\ket{\psi_\text{TLS}}$ is the state of photons emitted by a driven two-level state. Since $\ket{\psi}$ is a coherent superposition between $\ket{\text{vac}}$ and a state $\ket{\psi_\text{TLS}}$ with $O(T)$ photons emitted in the time interval $[0,T]$, it is a state with long-time correlations.
\begin{figure*}
    \centering
    \includegraphics[width=1.0\linewidth]{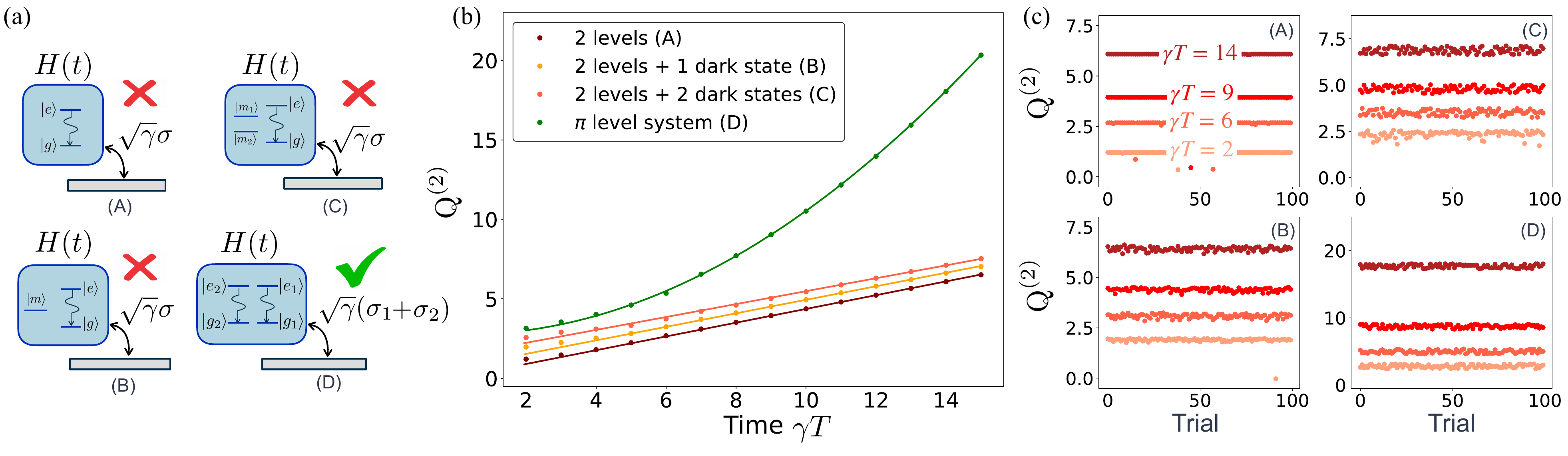}
    \caption{(a) Schematic illustration of different level structures analyzed for possible quantum advantage. (A) a two-level system, (B) a two-level system with a single dark state, (C) a two-level system with two dark states, and (D) a \( \pi \)-level system. (b) Numerically optimized \( \text{Q}^{(2)} \), with the optimization variable being the time-dependent source Hamiltonian $H(t)$, as a function of emission time $T$ for different sources. When the source has only one emission pathway ($\ket{e}\to \ket{g}$), with or without dark states, \( \text{Q}^{(2)} \) scales linearly with $T$ thus providing no quantum advantage. Since a \( \pi \)-level source has two emission pathways $\ket{e_1}\to\ket{g_1}, \ket{e_2}\to \ket{g_2}$, $\text{Q}^{(2)}$ scales quadratically with $T$ thus providing a quadratic quantum advantage at large $T$. (c) Distribution of the optimized \( \text{Q}^{(2)} \) over 100 trials with random initial conditions for \( H(t) \). The data plotted in (b) is the maximum $\text{Q}^{(2)}$ obtained over the 100 optimization trials to ensure that it is a good estimate of the ``global optimum" of the optimization problem.}
    \label{fig:qfi_optimized}
\end{figure*}

Despite the possibility of generating long-time correlations, we find that the QFI of the photons emitted from such sources $\sim T$. In Fig.~\ref{fig:qfi_optimized}, we numerically investigate the photonic state emitted by a two-level system, a two-level system with upto two dark states and a $\pi$-level system [Fig.~\ref{fig:qfi_optimized}(a)]. We use a gradient-based optimization algorithm to numerically maximize $\text{Q}^{(2)}$ of the emitted photons with respect to the time-dependent source Hamiltonian $H(t)$. To compute the gradient of $\text{Q}^{(2)}$ with respect to $H(t)$ for large $T$, we develop an adjoint-variable method \cite{giles2000introduction, givoli2021tutorial, white2022enhancing} that significantly speeds up the gradient computation (Appendix \ref{app:adj_vm}). As shown in Fig.~\ref{fig:qfi_optimized}(b), we find that the optimized $\text{Q}^{(2)} \sim T$ for the two-level system, which is expected since the source dynamics of the two-level systems are provably strictly contractive. Furthermore, we also find that for a two-level system with additional dark states, while having a $\text{Q}^{(2)}$ higher than that of a two-level system without dark states, still exhibits $\text{Q}^{(2)}\sim T$. On the other hand, for a $\pi$-level system, the optimized $\text{Q}^{(2)}$ scales as $T^2$. We remark that the optimized $\text{Q}^{(2)}$ shown in Fig.~\ref{fig:qfi_optimized}(b) are the largest $\text{Q}^{(2)}$ obtained from individual optimization trials with randomly chosen guess for the initial $H(t)$ --- Fig.~\ref{fig:qfi_optimized}(c) shows the distribution of the optimized $\text{Q}^{(2)}$ obtained from these different trials. 

The scaling of $\text{Q}^{(2)} \sim T$ for the two-level system with dark states, despite their potential for generating long-time correlated states, can be physically attributed to the correlation functions $C^{(g)}(t, s)$ and $C^{(\chi)}(t, s)$ that determine the QFI [Eq.~(\ref{eq:QFI_identical_sources})] remaining short-range correlated. Specifically, for states of the form as in Eq.~(\ref{eq:state_form_macro_super}), $C^{(g)}(t, s) \propto \bra{\psi_\text{TLS}} a_t^\dagger a_s \ket{\psi_\text{TLS}}, C^{(\chi)}(t, s) \propto \bra{\psi_\text{TLS}} a_t a_s \ket{\psi_\text{TLS}}$, i.e.,~the correlation functions $C^{(g)}(t, s)$ and $C^{(\chi)}(t, s)$ inherit the correlations of the state $\ket{\psi_\text{TLS}}$ which, being a photonic state emitted by a simple two-level system, only has short time correlations thus yielding a QFI $\propto T$. Even with more complicated protocols, we expect that the emitted photonic state that can possibly be generated is of the form of Eq.~(\ref{eq:state_form_macro_super}), since such a source can only create correlations by either emitting a photon (when it is in $\ket{e}$) or not emitting a photon (when it is in a dark state). In order to generate a long-time correlated state with QFI $\sim T^2$, we need a source that can generate photons via two distinct and distinguishable emission paths. For instance, as discussed previously in this section, using a $\pi-$level system where photons can be generated via either the transition $\ket{e_1} \to \ket{g_1}$ or $\ket{e_2}\to \ket{g_2}$, we can obtain a state $\ket{\psi}$ that is the coherent superposition of two macroscopic photonic states for which even $C^{(g)}(t, s)$ and $C^{(\chi)}(t, s)$ are long-time correlated thus yielding a QFI $\sim T^2$. Indeed, this is what we observe in Fig.~\ref{fig:qfi_optimized}(b), where a time-dependent $\pi$-level source with the source Hamiltonian $H(t)$ optimized to maximize the QFI, exhibits a quadratic improvement over the QFI obtained from source with one emitting transition and a dark state subspace.

\section{Optimal measurements}

In the previous section, we characterized the QFI that can be achieved with a Markovian light source. However, even when the QFI has Heisenberg-limited scaling, to achieve the corresponding phase-sensitivity in an actual spectroscopic setup, it is necessary to carefully design the measurement that extracts the unknown phase $\varphi$ from the state at the output of the MZI. More formally, the measurement is captured by a positive-operator valued measurement $E_a$ corresponding to the outcome $a$. When the measurement is applied on the state $\ket{\psi_\varphi}$ at the output of the MZI, it yields the outcome $x$ with probability $p_\varphi(a) = \bra{\psi_\varphi} E_a \ket{\psi_\varphi}$. The measurement is optimal if the Classical Fisher information (CFI) of $p_\varphi(a)$, given by
\[
\text{CFI} = \int  \bigg(\frac{d}{d\varphi}\ln p_\varphi(a)\bigg)^2 p_\varphi(a) da,
\]
is equal to the quantum Fisher information (QFI) of $\ket{\psi_\varphi}$. In general, the optimal measurement is not unique and there can be multiple measurements for which the corresponding CFI is equal to the QFI of $\ket{\psi_\varphi}$.

In the following, we analyze different strategies that can be used to optimally measure the phase $\varphi$ from the photonic state $\ket{\psi_\varphi}$ obtained at the output of the MZI. Our goal here is to understand what experimental resources are required and how they depend on the photonic state. We first describe a general strategy to optimally obtain $\varphi$ but which requires a high quality factor Kerr resonator. Then, we go onto understand when linear optics and photodetection are sufficient to perform an optimal measurement, deriving a set of sufficient conditions under which a simple photodetection measurement is optimal. Furthermore, in the case of independent sources emitting into the two input ports of the MZI, we show that if $\varphi$ can be measured optimally with linear optics and photodetectors, then only photodetection is also an optimal measurement.

\subsection{Optimal measurement with non-linear optics}\label{sec:pd_nlo}

\begin{figure}
	\centering
	\includegraphics[width=1\linewidth]{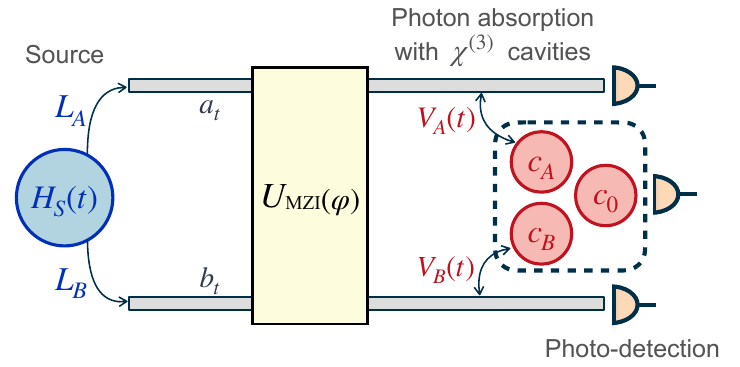}
	\caption{\label{fig:mzi_reabsorb} Schematic depiction of the optimal measurement around $\varphi = 0$ using non-linear optical elements. A   $\chi^{(3)}$ cavity with three optical modes ($c_A,c_B, c_0$), with tunable couplings $V_A(t), V_B(t)$ to the output ports of the MZI, is used to re-absorb the photons emitted by the source, and then a photodetection is performed on both the output ports as well as on the $\chi^{(3)}$ cavity.}
\end{figure}
% The ultimate QFI bound in Eq.~(\ref{eq:QFI_MZI}) can be achieved by employing time reversal, reabsorbing the photons into an auxiliary controllable system of dimension $D+2$, and performing photodetection on the output port along with a measurement on the auxiliary system as shown in Fig.~(\ref{fig:mzi_reabsorb}).\\ 
\indent We first consider the most general setting in which a time-dependent $D$-level source emits photons into the two ports of the MZI interferometer. When the source has no further symmetries, the optimal measurement essentially has to undo the source-port dynamics. To illustrate this in a simple setting, we consider the simpler setup of $n$ qubits initially in $\ket{0}^{\otimes n}$ and a unitary $U_z$, dependent on the unkown parameter $z$, is applied on them to obtain that state $\ket{\psi_z} = U_z \ket{0}^{\otimes n}$. An optimal measurement to extract the parameter $z$, assuming it to be in the neighbourhood of $z_0$, would be to implement the projective measurement given by $\{P_0, P_1, P_2 \dots \}$
\begin{align}\label{eq:projective_measurement}
P_0 = \ket{\psi_{z_0}}\!\bra{\psi_{z_0}} \text{ and }\sum_{i \geq 1} P_i = I  - \ket{\psi_{z_0}}\!\bra{\psi_{z_0}}.
\end{align}
It is easy to check that this measurement is optimal, i.e.,~the CFI of the probability distribution  over the measurement outcomes $\{0, 1, 2 \dots \}$, at $z_0$, is equal to the QFI of the state $\ket{\psi_z}$ at $z = z_0$ (Ref.~\cite{braunstein1994statistical,zhou2020saturating}, and reviewed in Appendix \ref{app:opt_measurement_nlo}). The simplest strategy to implement this measurement on $\ket{\psi_z}$ is to first undo the unitary transformation $U_{z_0}$ (i.e.,~apply the unitary $U_{z_0}^\dagger$) followed by a computational basis measurement on the $n$ qubits. Measuring zero on all qubits is then equivalent to measuring outcome $0$ in the projective measurement in Eq.~(\ref{eq:projective_measurement}). 

To apply this measurement strategy to the interferometric setup, where we want to measure the MZI phase $\varphi$ in the neighbourhood of $\varphi = 0$, we need to undo the photon emission from the source, i.e.,~coherently re-absorb the photons emitted by the source. {A similar strategy has been used to implement an optimal measurement for spectroscopy using measurements on the emitted photon fields \cite{yang2023efficient} --- our focus in this section is to show that this measurement can always be implemented with nonlinear optical elements. In particular, we show that a perfectly re-absorbing system can in principle be implemented using a multi-mode optical cavity with a $\chi^{(3)}$ non-linearity interacting with the output ports of the MZI (Fig.~\ref{fig:mzi_reabsorb}).} Specifically, we use 3 modes of this optical cavity, with corresponding annihilation operators $c_A, c_B$ and $c_0$. We assume that the mode $c_A$ is coupled to port A, $c_B$ is coupled to port $B$ and all three modes couple to each other via the nonlinear self- and cross-phase modulation induced by the $\chi^{(3)}$ non-linearity. In addition, the three modes will be driven by a tunable time-dependent coherent laser. The Hamiltonian of this reabsorbing system is then given by
\begin{align}
H_R(t) &= \sum_{k \in \{0, A, B\}} \big(\Omega_k(t) c_k^\dagger + \text{h.c.}\big) \nonumber\\
&+\frac{\chi}{2}\sum_{k, k' \in \{0, A, B\}}c_k^\dagger c_{k'}^\dagger c_k c_k',
\end{align}
with $\Omega_k(t)$ being the coherent field applied on the $k^\text{th}$ mode. By building upon Ref.~\cite{yuan2023universal}, we show in Appendix \ref{app:opt_measurement_nlo} that any unitary on the joint Hilbert space of these cavity modes can be applied by designing the laser field $\Omega_k(t)$. As we show, the speed at which unitaries can be applied is \emph{not} limited by the non-linear strength $\chi$, but only by the magnitude and the rate of change of the coherent drives $\Omega_k(t)$. Thus, gates can applied very fast on the Hilbert space of the cavity modes. We also assume that the modes $c_A$ and $c_B$ couple linearly with the output ports of the MZI via
\begin{align}
    H_{R, P} = V_A(t) a_t^\dagger c_A + V_B(t) b_t^\dagger c_B + \text{h.c.},
\end{align}
which gives us the ability to sequentially transfer photons from the output ports to the cavity modes during the reabsorption process. We show in Appendix \ref{app:opt_measurement_nlo} that, by appropriately tuning the parameters $V_A(t), V_B(t), \Omega_A(t), \Omega_B(t), \Omega_0(t)$ as a function of time $t$, we can coherently absorb the photons emitted from the source into the non-linear cavity. This effectively inverts the emission of photons from the source, and a subsequent photodetection on both the optical cavity as well as the output ports allows us to implement the target optimal measurement.

To implement the coherent photon reabsorption required to apply this measurement strategy, one approach is to time-reverse the source. However, this may not always be possible.  \emph{First}, we may not have access to a replica of the physical system used as the source and thus would like to be able to design the re-absorbing system with components that are independent of the source. \emph{Second}, in several sources exhibiting quantum advantage, the source might not disentangle from the emitted photons on its own and we might have to project the source onto a final target state (e.g.,~see the example of the driven $\pi$-level system discussed in section \ref{sec:source_spectrum_qfi}).

We instead design a perfectly reabsorbing system using a completely controllable multi-mode non-linear optical cavity. The key idea that we use to implement this photon absorption is to note that, when expressed in the time-bin basis, the photonic state of the port is approximately a matrix product state (MPS) with a bond-dimension equal to the dimensionality of the Hilbert space of the source \cite{fischer2018particle}. Consequently, by using the canonical form of this MPS, we can compute a sequence of unitaries applied on the qudits in this MPS and an ancilla to map it to the vacuum state \cite{schon2005sequential, schon2007sequential, cirac2021matrix}. To physically implement the reabsorption, we treat the cavity mode $c_0$ as the ancilla and perform two operations when each time-bin in the output port is incident on the reabsorbing $\chi^{(3)}$ cavity --- we first swap the photons in this time-bin from the ports $A$ and $B$ into the cavities $c_A$ and $c_B$ respectively. This operation can be performed by only tuning $V_A(t), V_B(t)$. Next, since control over the parameters $\Omega_A(t), \Omega_B(t), \Omega_0(t)$ allows us to implement any unitary on the joint Hilbert space the three cavity modes, which now contains both the qudit of the MPS describing the photonic state in the cavities $c_A, c_B$ and the ancilla in $c_0$, we design them to implement the gates in the previously computed unitary circuit. The details of this protocol are explicitly laid out in Appendix \ref{app:opt_measurement_nlo}.

We remark that this protocol can be considered to be a generalization of well known state-transfer protocols \cite{cirac1997quantum}, which considered the problem of coherently transferring the quantum state of a source, that couples to an output photon field and emits a single photon entangled with the source, to another quantum system also coupling to the photon field. Finally, we remark that this protocol guarantees that there is always a choice of the parameters of the $\chi^{(3)}$ cavity so as to reabsorb the emitted photons, but might not be the most practical implementation of the reabsorption process. In practice, depending on the constraints of the specific experimental system at hand, heuristic gradient-based control design algorithms such as GRAPE \cite{khaneja2005optimal} can also be used to find the parameters of the $\chi^{(3)}$ cavity that accomplish this reabsorption.

\begin{figure*}
	\centering
	\includegraphics[width=0.95\linewidth]{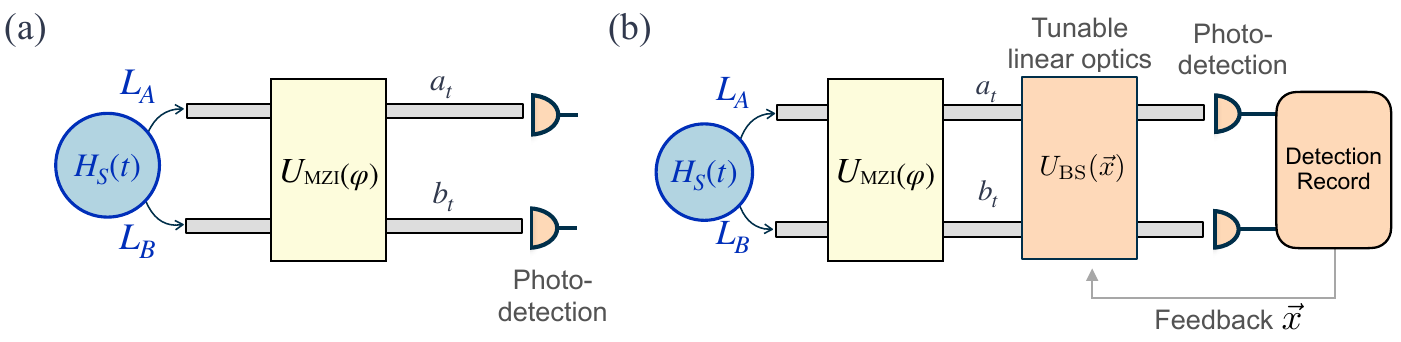}
	\caption{\label{fig:beam_splitter} Schematic depiction of two measurements implementable with linear optical elements and photodetectors studied in Section \ref{sec:pdlo}. (a) A setup where photodetection is performed at the MZI output, and (b) a setup where a  linear-optical element, tuned based on the photodetection record, is applied before the photodetectors.}
\end{figure*}

\subsection{Optimal measurement with photodetection and linear optics}\label{sec:pdlo}
Although the previous subsection described a general strategy to implement an optimal measurement irrespective of the source, having a completely controllable nonlinear optical cavity is still experimentally challenging. In this subsection, we study under what conditions photodetectors and linear-optical elements alone are enough to implement an optimal measurement.

The simplest measurement that can be performed is a measurement of the total number of photons in the output ports. We first provide a general sufficient condition under which photon-number measurement at the two ports (Fig.~\ref{fig:beam_splitter}) is optimal. Suppose $\Pi_{n_A, n_B}$ is the projector on the subspace where there $n_A$ photons in the port $A$ and $n_B$ photons in the port $B$. Let $\mathcal{I} = \{(n_A, n_B): n_A, n_B \in \{0, 1, 2 \dots\}\}$ be a set of photon numbers in the two ports such that for all $(n_A, n_B) \in \mathcal{I}$, $(n_A \pm 1, n_B \pm 1) \notin \mathcal{I}$. Then, if the photonic state incident on the MZI $\ket{\psi}$ satisfies 
\begin{align}
    \Pi_{\mathcal{I}} \ket{\psi} = \ket{\psi},
\end{align}
where $\Pi_{\mathcal{I}} = \sum_{(n_A, n_B) \in \mathcal{I}}\Pi_{n_A, n_B}$ is a projector, then photon-number measurement at the output of the MZI is the optimal measurement in the neighbourhood of $\varphi = 0$. Physically, this condition implies that addition or subtraction of a single photon in both the output ports to $\ket{\psi}$ makes the resulting state orthogonal to the subspace with projector $\Pi_{\mathcal{I}}$ which contains the state $\ket{\psi}$. Furthermore, since this subspace can be identified by just the number of photons in the two output ports, photon number measurement becomes an optimal measurement for sensing the phase $\varphi$.

Examples of states that satisfy this condition are
\begin{enumerate}
    \item[(a)] $\ket{\psi} = \smallket{\psi_A^{(N_A)}}\otimes \smallket{\psi_B^{(N_B)}}$ where $\smallket{\psi_A^{(N_A)}}$ has $N_A$ photons and $\smallket{\psi_B^{(N_B)}}$ has $N_B$ photons. Here $\mathcal{I} = \{(N_A, N_B)\}$ and it is clear that adding or removing a photon from any of the output ports results in state orthogonal to $\Pi_{\mathcal{I}}$. For such states, it is already known that just photon-number measurement is optimal even if the state is not a Fock state \cite{paulisch2019quantum}.
    \item[(b)] $\ket{\psi} = \smallket{\psi_A^{(N_A)}}\otimes \ket{\psi_B}$ where $\smallket{\psi_A^{(N_A)}}$ has $N_A$ photons while $\ket{\psi_B}$ can be arbitrary. In this case, $\mathcal{I} = \{(N_A, n_B) : n_B \in \{0, 1, 2 \dots \}\}$ and therefore adding or removing a photon from port $A$ also results in a state orthogonal to $\Pi_{\mathcal{I}}$. Consequently, photon number measurement at the output ports remains optimal if only one of the ports has a state with a definite number of photons.
    \item[(c)] $\ket{\psi} = \smallket{\psi_A^{(k_A, r_A)}}\otimes \ket{\psi_B}$ where $\smallket{\psi_A^{(N_A)}}$ has either $r_A, k_A + r_A, 2k_A + r_A, 3k_A + r_A \dots$ photons and $\ket{\psi_B}$ can be arbitrary. When $k_A = 2$, the state $\ket{\psi_A}$ is a state with a definite photon number parity (i.e.,~odd parity if $r_A = 1$ and even parity if $r_A = 0$). Here, again photon number measurement at the output ports is optimal when $k_A \geq 2$ since $\mathcal{I} = \{(r_A + m_A k_A, n_B) : m_A, n_B \in \{0, 1, 2 \dots\}\}$ and adding or removing a photon from port $A$ results in an orthogonal state.
\end{enumerate}

To show the optimality of the photon-number measurement, we consider the projective measurement with outcomes $0, 1$ given by $P_0 = \Pi_\mathcal{I}, P_1 = I - \Pi_\mathcal{I}$. This measurement can clearly be performed with just photodetectors by first measuring the photon numbers $(n_A, n_B)$ at the output port and then recording $0$ if $(n_A, n_B) \in \mathcal{I}$ else recording 1. The measurement outcome follows the probability distribution $p_\varphi(0) = \bra{\psi_\varphi} \Pi_{\mathcal{I}} \ket{\psi_\varphi}, p_\varphi(1) = 1 - p_\varphi(0)$. We first note that $p_{\varphi = 0}(0) = 1, p_{\varphi = 0}(1) = 0$ and from Eq.~(\ref{eq:deriv_psi_phi}),
\begin{align*}
    \frac{d}{d\varphi} p_{\varphi = 0}(0)  = -\frac{d}{d\varphi} p_{\varphi= 0}(1)  = 2\ \text{Im}\big(\bra{\psi}\Pi_{\mathcal{I}} H_d\ket{\psi}\big) = 0,
\end{align*}
where we remind the reader that $H_d = i\int_0^T (a_t^\dagger b_t - a_t b_t^\dagger)dt$. Furthermore, we have used $\Pi_{\mathcal{I}} H_d \ket{\psi} = 0$ since $H_d \ket{\psi}$ is a state in which one photon has either been added or subtracted from both the modes and hence, by assumption, is in the subspace orthogonal to $\Pi_{\mathcal{I}}$. Consequently, the CFI of the probability distribution $p_\varphi$ at $\varphi = 0$ is given by
\begin{align}
    \text{CFI} &= \lim_{\varphi \to 0} \frac{1}{p_\varphi(1)}\bigg(\frac{d}{d\varphi}p_\varphi(1)\bigg)^2 =2\frac{d^2}{d\varphi^2} p_{\varphi = 0}(1) \nonumber \\
    &=4\bra{\psi}H_d (I - \Pi_{\mathcal{I}}) H_d \ket{\psi} = 4 \bra{\psi} H_d^2 \ket{\psi},
\end{align}
where we have again used $\Pi_{\mathcal{I}}H_d\ket{\psi} = 0$. Finally, noting that $\bra{\psi} H_d \ket{\psi} = \bra{\psi} \Pi_{\mathcal{I}} H_d\ket{\psi} = 0$, it follows from Eq.~(\ref{eq:QFI_corr_fun}) that CFI = QFI, thus establishing that photon-number measurement is the optimal measurement.

{Next, we analyze if adding linear-optical elements before the photo detectors can allow us to construct optimal measurements for a larger class of quantum photonic states [Fig.~\ref{fig:beam_splitter}(b)]. To keep the measurement protocol as general as possible, we allow the linear optical elements before the photodetector to be modulated as a function of time depending on the result of the photodetection. More specifically, the output $\vec{x}$ of the photodetectors is a set of times $\tau_1, \tau_2, \tau_3\dots $ at which a photon has been detected as well indices $\sigma_1, \sigma_2, \sigma_3 \dots \in\{a, b\}$ where $\sigma_i$ indicates the port in which the photon at time $\tau_i$ is detected. As depicted in Fig.~\ref{fig:beam_splitter}(b), we allow the beam splitter $U_\text{BS}(\vec{x})$, that is applied before the photodetection, to change depending on the photodetector output $\vec{x}$ recorded so far. 

We first analyze this measurement protocol for the case of independent sources emitting into the two input ports of the MZI, i.e.,~the input photonic state $\ket{\psi} = \ket{\psi_A}\otimes \ket{\psi_B}$, where $\ket{\psi_A}$ and $\ket{\psi_B}$ are the photonic states emitted into the ports A and B, respectively. We make an additional assumption that the wave-functions corresponding to the states $\ket{\psi_A}, \ket{\psi_B}$ are non-zero, i.e.,
\[
\bra{\text{vac}} \prod_{i = 1}^n x_{t_i} \ket{\psi_X} \neq 0,
\]
for all $x \in \{a, b\}, \ n \in \{0, 1, 2\dots \},\ t_1, t_2 \dots t_n \geq 0$. For sources satisfying this assumption, we show in Appendix \ref{app:opt_measurement_pdlo} that optimality of a measurement protocol with a tunable beam-splitter and photodetector also implies optimality of a measurement protocol with only photodetectors without using any linear-optical elements. Stated differently, when the sources emitting photons in the two input ports of the MZI are independent, if photodetection is a sub-optimal measurement then adding linear-optics to it alone cannot make it optimal. On the other hand, when the photons emitted in the two input ports of the MZI are entangled, then it is possible that photodetection alone is sub-optimal as a measurement, but becomes optimal when supplemented with linear optical elements. We provide an explicit example of a photonic state illustrating this fact in Appendix \ref{app:opt_measurement_pdlo}.}

\section{Conclusion}
\indent In this paper, we provide a general framework for evaluating the quantum advantage of light sources for quantum interferometry. Within the Markov approximation, we first show how to compute the Quantum Fisher Information (QFI) of the emitted photons in a Mach-Zehnder Interferometer (MZI). In particular, we show that to compute QFI, it is enough to be able to simulate the internal dynamics of the source and we do not need to calculate the entire photon wave-packet. We then use this result to analytically and numerically elucidate the level structure and spectral properties of the photon source needed to obtain a possible quantum advantage in interferometry. Finally, we turn to the question of how to implement optimal measurements for photons emitted from general light sources and with experimentally available optical elements. We show that a controllable system of coupled $\chi^{(3)}$ cavities together with a photodetector can always be used to implement the optimal measurement irrespective of the light source. We additionally elucidate conditions when photodetectors and linear optics alone are enough to implement the optimal measurement. 

An immediate next step is to apply the framework developed in our work to numerically and analytically study the potential of experimentally realistic quantum photonic system to generate metrologically useful photonic states. It would be particularly interesting to understand the impact of experimental non-idealities, such as position and spectral inhomogeneities in the quantum emitters as well as losses, on the possible quantum advantage in the emitted photons. On the more theoretical side, making rigorous the limitations to quantum advantage imposed by the level structure of the quantum emitters driven by a possible time-dependent Hamiltonian is also an open question --- while we have partly addressed this question through analytical calculations and numerical simulations, a fully rigorous treatment of this question remains open. Finally, extending our framework to cases where the light source is not Markovian (e.g., has a time-delay and feedback \cite{pichler2017universal, grimsmo2015time, calajo2019exciting, trivedi2021optimal}) could also allow us to study if and how non-Markovianity can be used as a resource to generate quantum advantage in interferometry.

\section{Acknowledgements}
R.T.~acknowledges support from Center for Integration of Modern Optoelectronic Materials on Demand (IMOD) seed grant (DMR-2019444). R.T.~also acknowledges support from QuPIDC, an Energy Frontier Research Center, funded by the US Department of Energy (DOE), Office of Science, Basic Energy Sciences (BES), under the award number DE-SC0025620.  A.A.G. acknowledges support by the National Science Foundation through the CAREER Award (No. 2047380), the Air Force Office of Scientific Research through their Young Investigator Prize (grant No. 21RT0751), as well as by the David and Lucile Packard Foundation. DM acknowledges support from Novo Nordisk Fonden under grant numbers NNF22OC0071934 and NNF20OC0059939. This research was also gusupported in part by grant NSF PHY-2309135 to the Kavli Institute for Theoretical Physics (KITP).

\bibliography{ref_biblo}

%apsrev4-2.bst 2019-01-14 (MD) hand-edited version of apsrev4-1.bst
%Control: key (0)
%Control: author (8) initials jnrlst
%Control: editor formatted (1) identically to author
%Control: production of article title (0) allowed
%Control: page (0) single
%Control: year (1) truncated
%Control: production of eprint (0) enabled
\begin{thebibliography}{90}%
\makeatletter
\providecommand \@ifxundefined [1]{%
 \@ifx{#1\undefined}
}%
\providecommand \@ifnum [1]{%
 \ifnum #1\expandafter \@firstoftwo
 \else \expandafter \@secondoftwo
 \fi
}%
\providecommand \@ifx [1]{%
 \ifx #1\expandafter \@firstoftwo
 \else \expandafter \@secondoftwo
 \fi
}%
\providecommand \natexlab [1]{#1}%
\providecommand \enquote  [1]{``#1''}%
\providecommand \bibnamefont  [1]{#1}%
\providecommand \bibfnamefont [1]{#1}%
\providecommand \citenamefont [1]{#1}%
\providecommand \href@noop [0]{\@secondoftwo}%
\providecommand \href [0]{\begingroup \@sanitize@url \@href}%
\providecommand \@href[1]{\@@startlink{#1}\@@href}%
\providecommand \@@href[1]{\endgroup#1\@@endlink}%
\providecommand \@sanitize@url [0]{\catcode `\\12\catcode `\$12\catcode `\&12\catcode `\#12\catcode `\^12\catcode `\_12\catcode `\%12\relax}%
\providecommand \@@startlink[1]{}%
\providecommand \@@endlink[0]{}%
\providecommand \url  [0]{\begingroup\@sanitize@url \@url }%
\providecommand \@url [1]{\endgroup\@href {#1}{\urlprefix }}%
\providecommand \urlprefix  [0]{URL }%
\providecommand \Eprint [0]{\href }%
\providecommand \doibase [0]{https://doi.org/}%
\providecommand \selectlanguage [0]{\@gobble}%
\providecommand \bibinfo  [0]{\@secondoftwo}%
\providecommand \bibfield  [0]{\@secondoftwo}%
\providecommand \translation [1]{[#1]}%
\providecommand \BibitemOpen [0]{}%
\providecommand \bibitemStop [0]{}%
\providecommand \bibitemNoStop [0]{.\EOS\space}%
\providecommand \EOS [0]{\spacefactor3000\relax}%
\providecommand \BibitemShut  [1]{\csname bibitem#1\endcsname}%
\let\auto@bib@innerbib\@empty
%</preamble>
\bibitem [{\citenamefont {Giovannetti}\ \emph {et~al.}(2011)\citenamefont {Giovannetti}, \citenamefont {Lloyd},\ and\ \citenamefont {Maccone}}]{giovannetti2011advances}%
  \BibitemOpen
  \bibfield  {author} {\bibinfo {author} {\bibfnamefont {V.}~\bibnamefont {Giovannetti}}, \bibinfo {author} {\bibfnamefont {S.}~\bibnamefont {Lloyd}},\ and\ \bibinfo {author} {\bibfnamefont {L.}~\bibnamefont {Maccone}},\ }\bibfield  {title} {\bibinfo {title} {Advances in quantum metrology},\ }\href@noop {} {\bibfield  {journal} {\bibinfo  {journal} {Nature photonics}\ }\textbf {\bibinfo {volume} {5}},\ \bibinfo {pages} {222} (\bibinfo {year} {2011})}\BibitemShut {NoStop}%
\bibitem [{\citenamefont {T{\'o}th}\ and\ \citenamefont {Apellaniz}(2014)}]{toth2014quantum}%
  \BibitemOpen
  \bibfield  {author} {\bibinfo {author} {\bibfnamefont {G.}~\bibnamefont {T{\'o}th}}\ and\ \bibinfo {author} {\bibfnamefont {I.}~\bibnamefont {Apellaniz}},\ }\bibfield  {title} {\bibinfo {title} {Quantum metrology from a quantum information science perspective},\ }\href@noop {} {\bibfield  {journal} {\bibinfo  {journal} {Journal of Physics A: Mathematical and Theoretical}\ }\textbf {\bibinfo {volume} {47}},\ \bibinfo {pages} {424006} (\bibinfo {year} {2014})}\BibitemShut {NoStop}%
\bibitem [{\citenamefont {Dowling}\ and\ \citenamefont {Seshadreesan}(2015)}]{dowling2015quantum}%
  \BibitemOpen
  \bibfield  {author} {\bibinfo {author} {\bibfnamefont {J.~P.}\ \bibnamefont {Dowling}}\ and\ \bibinfo {author} {\bibfnamefont {K.~P.}\ \bibnamefont {Seshadreesan}},\ }\bibfield  {title} {\bibinfo {title} {Quantum optical technologies for metrology, sensing, and imaging},\ }\href@noop {} {\bibfield  {journal} {\bibinfo  {journal} {Journal of Lightwave Technology}\ }\textbf {\bibinfo {volume} {33}},\ \bibinfo {pages} {2359} (\bibinfo {year} {2015})}\BibitemShut {NoStop}%
\bibitem [{\citenamefont {Demkowicz-Dobrza{\'n}ski}\ \emph {et~al.}(2015)\citenamefont {Demkowicz-Dobrza{\'n}ski}, \citenamefont {Jarzyna},\ and\ \citenamefont {Ko{\l}ody{\'n}ski}}]{demkowicz2015quantum}%
  \BibitemOpen
  \bibfield  {author} {\bibinfo {author} {\bibfnamefont {R.}~\bibnamefont {Demkowicz-Dobrza{\'n}ski}}, \bibinfo {author} {\bibfnamefont {M.}~\bibnamefont {Jarzyna}},\ and\ \bibinfo {author} {\bibfnamefont {J.}~\bibnamefont {Ko{\l}ody{\'n}ski}},\ }\bibfield  {title} {\bibinfo {title} {Quantum limits in optical interferometry},\ }\href@noop {} {\bibfield  {journal} {\bibinfo  {journal} {Progress in Optics}\ }\textbf {\bibinfo {volume} {60}},\ \bibinfo {pages} {345} (\bibinfo {year} {2015})}\BibitemShut {NoStop}%
\bibitem [{\citenamefont {Greenberger}\ \emph {et~al.}(1989)\citenamefont {Greenberger}, \citenamefont {Horne},\ and\ \citenamefont {Zeilinger}}]{greenberger1989going}%
  \BibitemOpen
  \bibfield  {author} {\bibinfo {author} {\bibfnamefont {D.~M.}\ \bibnamefont {Greenberger}}, \bibinfo {author} {\bibfnamefont {M.~A.}\ \bibnamefont {Horne}},\ and\ \bibinfo {author} {\bibfnamefont {A.}~\bibnamefont {Zeilinger}},\ }\bibfield  {title} {\bibinfo {title} {Going beyond bell’s theorem},\ }in\ \href@noop {} {\emph {\bibinfo {booktitle} {Bell’s theorem, quantum theory and conceptions of the universe}}}\ (\bibinfo  {publisher} {Springer},\ \bibinfo {year} {1989})\ pp.\ \bibinfo {pages} {69--72}\BibitemShut {NoStop}%
\bibitem [{\citenamefont {Fox}(2006)}]{fox2006quantum}%
  \BibitemOpen
  \bibfield  {author} {\bibinfo {author} {\bibfnamefont {A.~M.}\ \bibnamefont {Fox}},\ }\href@noop {} {\emph {\bibinfo {title} {Quantum optics: an introduction}}},\ Vol.~\bibinfo {volume} {15}\ (\bibinfo  {publisher} {Oxford university press},\ \bibinfo {year} {2006})\BibitemShut {NoStop}%
\bibitem [{\citenamefont {Sanders}\ and\ \citenamefont {Milburn}(1995)}]{sanders1995optimal}%
  \BibitemOpen
  \bibfield  {author} {\bibinfo {author} {\bibfnamefont {B.}~\bibnamefont {Sanders}}\ and\ \bibinfo {author} {\bibfnamefont {G.}~\bibnamefont {Milburn}},\ }\bibfield  {title} {\bibinfo {title} {Optimal quantum measurements for phase estimation},\ }\href@noop {} {\bibfield  {journal} {\bibinfo  {journal} {Physical review letters}\ }\textbf {\bibinfo {volume} {75}},\ \bibinfo {pages} {2944} (\bibinfo {year} {1995})}\BibitemShut {NoStop}%
\bibitem [{\citenamefont {Bollinger}\ \emph {et~al.}(1996)\citenamefont {Bollinger}, \citenamefont {Itano}, \citenamefont {Wineland},\ and\ \citenamefont {Heinzen}}]{bollinger1996optimal}%
  \BibitemOpen
  \bibfield  {author} {\bibinfo {author} {\bibfnamefont {J.~J.}\ \bibnamefont {Bollinger}}, \bibinfo {author} {\bibfnamefont {W.~M.}\ \bibnamefont {Itano}}, \bibinfo {author} {\bibfnamefont {D.~J.}\ \bibnamefont {Wineland}},\ and\ \bibinfo {author} {\bibfnamefont {D.~J.}\ \bibnamefont {Heinzen}},\ }\bibfield  {title} {\bibinfo {title} {Optimal frequency measurements with maximally correlated states},\ }\href@noop {} {\bibfield  {journal} {\bibinfo  {journal} {Physical Review A}\ }\textbf {\bibinfo {volume} {54}},\ \bibinfo {pages} {R4649} (\bibinfo {year} {1996})}\BibitemShut {NoStop}%
\bibitem [{\citenamefont {Holland}\ and\ \citenamefont {Burnett}(1993)}]{holland1993interferometric}%
  \BibitemOpen
  \bibfield  {author} {\bibinfo {author} {\bibfnamefont {M.~J.}\ \bibnamefont {Holland}}\ and\ \bibinfo {author} {\bibfnamefont {K.}~\bibnamefont {Burnett}},\ }\bibfield  {title} {\bibinfo {title} {Interferometric detection of optical phase shifts at the heisenberg limit},\ }\href@noop {} {\bibfield  {journal} {\bibinfo  {journal} {Physical review letters}\ }\textbf {\bibinfo {volume} {71}},\ \bibinfo {pages} {1355} (\bibinfo {year} {1993})}\BibitemShut {NoStop}%
\bibitem [{\citenamefont {Dowling}(2008)}]{dowling2008quantum}%
  \BibitemOpen
  \bibfield  {author} {\bibinfo {author} {\bibfnamefont {J.~P.}\ \bibnamefont {Dowling}},\ }\bibfield  {title} {\bibinfo {title} {Quantum optical metrology--the lowdown on high-n00n states},\ }\href@noop {} {\bibfield  {journal} {\bibinfo  {journal} {Contemporary physics}\ }\textbf {\bibinfo {volume} {49}},\ \bibinfo {pages} {125} (\bibinfo {year} {2008})}\BibitemShut {NoStop}%
\bibitem [{\citenamefont {Lee}\ \emph {et~al.}(2002)\citenamefont {Lee}, \citenamefont {Kok},\ and\ \citenamefont {Dowling}}]{lee2002quantum}%
  \BibitemOpen
  \bibfield  {author} {\bibinfo {author} {\bibfnamefont {H.}~\bibnamefont {Lee}}, \bibinfo {author} {\bibfnamefont {P.}~\bibnamefont {Kok}},\ and\ \bibinfo {author} {\bibfnamefont {J.~P.}\ \bibnamefont {Dowling}},\ }\bibfield  {title} {\bibinfo {title} {A quantum rosetta stone for interferometry},\ }\href@noop {} {\bibfield  {journal} {\bibinfo  {journal} {Journal of Modern Optics}\ }\textbf {\bibinfo {volume} {49}},\ \bibinfo {pages} {2325} (\bibinfo {year} {2002})}\BibitemShut {NoStop}%
\bibitem [{\citenamefont {Gerry}\ and\ \citenamefont {Campos}(2001)}]{gerry2001generation}%
  \BibitemOpen
  \bibfield  {author} {\bibinfo {author} {\bibfnamefont {C.~C.}\ \bibnamefont {Gerry}}\ and\ \bibinfo {author} {\bibfnamefont {R.}~\bibnamefont {Campos}},\ }\bibfield  {title} {\bibinfo {title} {Generation of maximally entangled photonic states with a quantum-optical fredkin gate},\ }\href@noop {} {\bibfield  {journal} {\bibinfo  {journal} {Physical Review A}\ }\textbf {\bibinfo {volume} {64}},\ \bibinfo {pages} {063814} (\bibinfo {year} {2001})}\BibitemShut {NoStop}%
\bibitem [{\citenamefont {McCusker}\ and\ \citenamefont {Kwiat}(2009)}]{mccusker2009efficient}%
  \BibitemOpen
  \bibfield  {author} {\bibinfo {author} {\bibfnamefont {K.~T.}\ \bibnamefont {McCusker}}\ and\ \bibinfo {author} {\bibfnamefont {P.~G.}\ \bibnamefont {Kwiat}},\ }\bibfield  {title} {\bibinfo {title} {Efficient optical quantum state engineering},\ }\href@noop {} {\bibfield  {journal} {\bibinfo  {journal} {Physical review letters}\ }\textbf {\bibinfo {volume} {103}},\ \bibinfo {pages} {163602} (\bibinfo {year} {2009})}\BibitemShut {NoStop}%
\bibitem [{\citenamefont {Kapale}\ and\ \citenamefont {Dowling}(2007)}]{kapale2007bootstrapping}%
  \BibitemOpen
  \bibfield  {author} {\bibinfo {author} {\bibfnamefont {K.~T.}\ \bibnamefont {Kapale}}\ and\ \bibinfo {author} {\bibfnamefont {J.~P.}\ \bibnamefont {Dowling}},\ }\bibfield  {title} {\bibinfo {title} {Bootstrapping approach for generating maximally path-entangled photon states},\ }\href@noop {} {\bibfield  {journal} {\bibinfo  {journal} {Physical review letters}\ }\textbf {\bibinfo {volume} {99}},\ \bibinfo {pages} {053602} (\bibinfo {year} {2007})}\BibitemShut {NoStop}%
\bibitem [{\citenamefont {Cable}\ and\ \citenamefont {Dowling}(2007)}]{cable2007efficient}%
  \BibitemOpen
  \bibfield  {author} {\bibinfo {author} {\bibfnamefont {H.}~\bibnamefont {Cable}}\ and\ \bibinfo {author} {\bibfnamefont {J.~P.}\ \bibnamefont {Dowling}},\ }\bibfield  {title} {\bibinfo {title} {Efficient generation of large number-path entanglement using only linear optics and feed-forward},\ }\href@noop {} {\bibfield  {journal} {\bibinfo  {journal} {Physical review letters}\ }\textbf {\bibinfo {volume} {99}},\ \bibinfo {pages} {163604} (\bibinfo {year} {2007})}\BibitemShut {NoStop}%
\bibitem [{\citenamefont {Pryde}\ and\ \citenamefont {White}(2003)}]{pryde2003creation}%
  \BibitemOpen
  \bibfield  {author} {\bibinfo {author} {\bibfnamefont {G.~J.}\ \bibnamefont {Pryde}}\ and\ \bibinfo {author} {\bibfnamefont {A.~G.}\ \bibnamefont {White}},\ }\bibfield  {title} {\bibinfo {title} {Creation of maximally entangled photon-number states using optical fiber multiports},\ }\href@noop {} {\bibfield  {journal} {\bibinfo  {journal} {Physical Review A}\ }\textbf {\bibinfo {volume} {68}},\ \bibinfo {pages} {052315} (\bibinfo {year} {2003})}\BibitemShut {NoStop}%
\bibitem [{\citenamefont {Cirac}\ \emph {et~al.}(1993)\citenamefont {Cirac}, \citenamefont {Blatt}, \citenamefont {Parkins},\ and\ \citenamefont {Zoller}}]{cirac1993preparation}%
  \BibitemOpen
  \bibfield  {author} {\bibinfo {author} {\bibfnamefont {J.~I.}\ \bibnamefont {Cirac}}, \bibinfo {author} {\bibfnamefont {R.}~\bibnamefont {Blatt}}, \bibinfo {author} {\bibfnamefont {A.}~\bibnamefont {Parkins}},\ and\ \bibinfo {author} {\bibfnamefont {P.}~\bibnamefont {Zoller}},\ }\bibfield  {title} {\bibinfo {title} {Preparation of fock states by observation of quantum jumps in an ion trap},\ }\href@noop {} {\bibfield  {journal} {\bibinfo  {journal} {Physical review letters}\ }\textbf {\bibinfo {volume} {70}},\ \bibinfo {pages} {762} (\bibinfo {year} {1993})}\BibitemShut {NoStop}%
\bibitem [{\citenamefont {Saxena}\ \emph {et~al.}(2024)\citenamefont {Saxena}, \citenamefont {Abbasgholinejad}, \citenamefont {Majumdar},\ and\ \citenamefont {Trivedi}}]{saxena2024boundary}%
  \BibitemOpen
  \bibfield  {author} {\bibinfo {author} {\bibfnamefont {A.}~\bibnamefont {Saxena}}, \bibinfo {author} {\bibfnamefont {E.}~\bibnamefont {Abbasgholinejad}}, \bibinfo {author} {\bibfnamefont {A.}~\bibnamefont {Majumdar}},\ and\ \bibinfo {author} {\bibfnamefont {R.}~\bibnamefont {Trivedi}},\ }\bibfield  {title} {\bibinfo {title} {Boundary measurement tomography of the bose hubbard model on general graphs},\ }\href@noop {} {\bibfield  {journal} {\bibinfo  {journal} {Physical Review Research}\ }\textbf {\bibinfo {volume} {6}},\ \bibinfo {pages} {033058} (\bibinfo {year} {2024})}\BibitemShut {NoStop}%
\bibitem [{\citenamefont {Huang}\ \emph {et~al.}(2018)\citenamefont {Huang}, \citenamefont {Miranowicz}, \citenamefont {Liao}, \citenamefont {Nori},\ and\ \citenamefont {Jing}}]{huang2018nonreciprocal}%
  \BibitemOpen
  \bibfield  {author} {\bibinfo {author} {\bibfnamefont {R.}~\bibnamefont {Huang}}, \bibinfo {author} {\bibfnamefont {A.}~\bibnamefont {Miranowicz}}, \bibinfo {author} {\bibfnamefont {J.-Q.}\ \bibnamefont {Liao}}, \bibinfo {author} {\bibfnamefont {F.}~\bibnamefont {Nori}},\ and\ \bibinfo {author} {\bibfnamefont {H.}~\bibnamefont {Jing}},\ }\bibfield  {title} {\bibinfo {title} {Nonreciprocal photon blockade},\ }\href@noop {} {\bibfield  {journal} {\bibinfo  {journal} {Physical review letters}\ }\textbf {\bibinfo {volume} {121}},\ \bibinfo {pages} {153601} (\bibinfo {year} {2018})}\BibitemShut {NoStop}%
\bibitem [{\citenamefont {Deng}\ \emph {et~al.}(2024)\citenamefont {Deng}, \citenamefont {Li}, \citenamefont {Chen}, \citenamefont {Ni}, \citenamefont {Cai}, \citenamefont {Mai}, \citenamefont {Zhang}, \citenamefont {Zheng}, \citenamefont {Yu}, \citenamefont {Zou} \emph {et~al.}}]{deng2024quantum}%
  \BibitemOpen
  \bibfield  {author} {\bibinfo {author} {\bibfnamefont {X.}~\bibnamefont {Deng}}, \bibinfo {author} {\bibfnamefont {S.}~\bibnamefont {Li}}, \bibinfo {author} {\bibfnamefont {Z.-J.}\ \bibnamefont {Chen}}, \bibinfo {author} {\bibfnamefont {Z.}~\bibnamefont {Ni}}, \bibinfo {author} {\bibfnamefont {Y.}~\bibnamefont {Cai}}, \bibinfo {author} {\bibfnamefont {J.}~\bibnamefont {Mai}}, \bibinfo {author} {\bibfnamefont {L.}~\bibnamefont {Zhang}}, \bibinfo {author} {\bibfnamefont {P.}~\bibnamefont {Zheng}}, \bibinfo {author} {\bibfnamefont {H.}~\bibnamefont {Yu}}, \bibinfo {author} {\bibfnamefont {C.-L.}\ \bibnamefont {Zou}}, \emph {et~al.},\ }\bibfield  {title} {\bibinfo {title} {Quantum-enhanced metrology with large fock states},\ }\href@noop {} {\bibfield  {journal} {\bibinfo  {journal} {Nature Physics}\ ,\ \bibinfo {pages} {1}} (\bibinfo {year} {2024})}\BibitemShut {NoStop}%
\bibitem [{\citenamefont {Bishop}\ \emph {et~al.}(2009)\citenamefont {Bishop}, \citenamefont {Tornberg}, \citenamefont {Price}, \citenamefont {Ginossar}, \citenamefont {Nunnenkamp}, \citenamefont {Houck}, \citenamefont {Gambetta}, \citenamefont {Koch}, \citenamefont {Johansson}, \citenamefont {Girvin} \emph {et~al.}}]{bishop2009proposal}%
  \BibitemOpen
  \bibfield  {author} {\bibinfo {author} {\bibfnamefont {L.~S.}\ \bibnamefont {Bishop}}, \bibinfo {author} {\bibfnamefont {L.}~\bibnamefont {Tornberg}}, \bibinfo {author} {\bibfnamefont {D.}~\bibnamefont {Price}}, \bibinfo {author} {\bibfnamefont {E.}~\bibnamefont {Ginossar}}, \bibinfo {author} {\bibfnamefont {A.}~\bibnamefont {Nunnenkamp}}, \bibinfo {author} {\bibfnamefont {A.}~\bibnamefont {Houck}}, \bibinfo {author} {\bibfnamefont {J.}~\bibnamefont {Gambetta}}, \bibinfo {author} {\bibfnamefont {J.}~\bibnamefont {Koch}}, \bibinfo {author} {\bibfnamefont {G.}~\bibnamefont {Johansson}}, \bibinfo {author} {\bibfnamefont {S.}~\bibnamefont {Girvin}}, \emph {et~al.},\ }\bibfield  {title} {\bibinfo {title} {Proposal for generating and detecting multi-qubit ghz states in circuit qed},\ }\href@noop {} {\bibfield  {journal} {\bibinfo  {journal} {New Journal of Physics}\ }\textbf {\bibinfo {volume} {11}},\ \bibinfo {pages} {073040} (\bibinfo {year} {2009})}\BibitemShut {NoStop}%
\bibitem [{\citenamefont {Chen}\ \emph {et~al.}(2010)\citenamefont {Chen}, \citenamefont {Bao}, \citenamefont {Yuan}, \citenamefont {Chen}, \citenamefont {Zhao},\ and\ \citenamefont {Pan}}]{chen2010heralded}%
  \BibitemOpen
  \bibfield  {author} {\bibinfo {author} {\bibfnamefont {Y.-A.}\ \bibnamefont {Chen}}, \bibinfo {author} {\bibfnamefont {X.-H.}\ \bibnamefont {Bao}}, \bibinfo {author} {\bibfnamefont {Z.-S.}\ \bibnamefont {Yuan}}, \bibinfo {author} {\bibfnamefont {S.}~\bibnamefont {Chen}}, \bibinfo {author} {\bibfnamefont {B.}~\bibnamefont {Zhao}},\ and\ \bibinfo {author} {\bibfnamefont {J.-W.}\ \bibnamefont {Pan}},\ }\bibfield  {title} {\bibinfo {title} {Heralded generation of an atomic noon state},\ }\href@noop {} {\bibfield  {journal} {\bibinfo  {journal} {Physical review letters}\ }\textbf {\bibinfo {volume} {104}},\ \bibinfo {pages} {043601} (\bibinfo {year} {2010})}\BibitemShut {NoStop}%
\bibitem [{\citenamefont {M{\"u}ller}\ \emph {et~al.}(2015)\citenamefont {M{\"u}ller}, \citenamefont {Rundquist}, \citenamefont {Fischer}, \citenamefont {Sarmiento}, \citenamefont {Lagoudakis}, \citenamefont {Kelaita}, \citenamefont {Sanchez~Munoz}, \citenamefont {Del~Valle}, \citenamefont {Laussy},\ and\ \citenamefont {Vu{\v{c}}kovi{\'c}}}]{muller2015coherent}%
  \BibitemOpen
  \bibfield  {author} {\bibinfo {author} {\bibfnamefont {K.}~\bibnamefont {M{\"u}ller}}, \bibinfo {author} {\bibfnamefont {A.}~\bibnamefont {Rundquist}}, \bibinfo {author} {\bibfnamefont {K.~A.}\ \bibnamefont {Fischer}}, \bibinfo {author} {\bibfnamefont {T.}~\bibnamefont {Sarmiento}}, \bibinfo {author} {\bibfnamefont {K.~G.}\ \bibnamefont {Lagoudakis}}, \bibinfo {author} {\bibfnamefont {Y.~A.}\ \bibnamefont {Kelaita}}, \bibinfo {author} {\bibfnamefont {C.}~\bibnamefont {Sanchez~Munoz}}, \bibinfo {author} {\bibfnamefont {E.}~\bibnamefont {Del~Valle}}, \bibinfo {author} {\bibfnamefont {F.~P.}\ \bibnamefont {Laussy}},\ and\ \bibinfo {author} {\bibfnamefont {J.}~\bibnamefont {Vu{\v{c}}kovi{\'c}}},\ }\bibfield  {title} {\bibinfo {title} {Coherent generation of nonclassical light on chip via detuned photon blockade},\ }\href@noop {} {\bibfield  {journal} {\bibinfo  {journal} {Physical review letters}\ }\textbf {\bibinfo {volume} {114}},\ \bibinfo {pages} {233601} (\bibinfo {year} {2015})}\BibitemShut {NoStop}%
\bibitem [{\citenamefont {Omran}\ \emph {et~al.}(2019)\citenamefont {Omran}, \citenamefont {Levine}, \citenamefont {Keesling}, \citenamefont {Semeghini}, \citenamefont {Wang}, \citenamefont {Ebadi}, \citenamefont {Bernien}, \citenamefont {Zibrov}, \citenamefont {Pichler}, \citenamefont {Choi} \emph {et~al.}}]{omran2019generation}%
  \BibitemOpen
  \bibfield  {author} {\bibinfo {author} {\bibfnamefont {A.}~\bibnamefont {Omran}}, \bibinfo {author} {\bibfnamefont {H.}~\bibnamefont {Levine}}, \bibinfo {author} {\bibfnamefont {A.}~\bibnamefont {Keesling}}, \bibinfo {author} {\bibfnamefont {G.}~\bibnamefont {Semeghini}}, \bibinfo {author} {\bibfnamefont {T.~T.}\ \bibnamefont {Wang}}, \bibinfo {author} {\bibfnamefont {S.}~\bibnamefont {Ebadi}}, \bibinfo {author} {\bibfnamefont {H.}~\bibnamefont {Bernien}}, \bibinfo {author} {\bibfnamefont {A.~S.}\ \bibnamefont {Zibrov}}, \bibinfo {author} {\bibfnamefont {H.}~\bibnamefont {Pichler}}, \bibinfo {author} {\bibfnamefont {S.}~\bibnamefont {Choi}}, \emph {et~al.},\ }\bibfield  {title} {\bibinfo {title} {Generation and manipulation of schr{\"o}dinger cat states in rydberg atom arrays},\ }\href@noop {} {\bibfield  {journal} {\bibinfo  {journal} {Science}\ }\textbf {\bibinfo {volume} {365}},\ \bibinfo {pages} {570} (\bibinfo {year} {2019})}\BibitemShut {NoStop}%
\bibitem [{\citenamefont {Trivedi}\ \emph {et~al.}(2018)\citenamefont {Trivedi}, \citenamefont {Fischer}, \citenamefont {Xu}, \citenamefont {Fan},\ and\ \citenamefont {Vuckovic}}]{trivedi2018few}%
  \BibitemOpen
  \bibfield  {author} {\bibinfo {author} {\bibfnamefont {R.}~\bibnamefont {Trivedi}}, \bibinfo {author} {\bibfnamefont {K.}~\bibnamefont {Fischer}}, \bibinfo {author} {\bibfnamefont {S.}~\bibnamefont {Xu}}, \bibinfo {author} {\bibfnamefont {S.}~\bibnamefont {Fan}},\ and\ \bibinfo {author} {\bibfnamefont {J.}~\bibnamefont {Vuckovic}},\ }\bibfield  {title} {\bibinfo {title} {Few-photon scattering and emission from low-dimensional quantum systems},\ }\href@noop {} {\bibfield  {journal} {\bibinfo  {journal} {Physical Review B}\ }\textbf {\bibinfo {volume} {98}},\ \bibinfo {pages} {144112} (\bibinfo {year} {2018})}\BibitemShut {NoStop}%
\bibitem [{\citenamefont {Tang}\ \emph {et~al.}(2015)\citenamefont {Tang}, \citenamefont {Geng},\ and\ \citenamefont {Xu}}]{tang2015quantum}%
  \BibitemOpen
  \bibfield  {author} {\bibinfo {author} {\bibfnamefont {J.}~\bibnamefont {Tang}}, \bibinfo {author} {\bibfnamefont {W.}~\bibnamefont {Geng}},\ and\ \bibinfo {author} {\bibfnamefont {X.}~\bibnamefont {Xu}},\ }\bibfield  {title} {\bibinfo {title} {Quantum interference induced photon blockade in a coupled single quantum dot-cavity system},\ }\href@noop {} {\bibfield  {journal} {\bibinfo  {journal} {Scientific reports}\ }\textbf {\bibinfo {volume} {5}},\ \bibinfo {pages} {9252} (\bibinfo {year} {2015})}\BibitemShut {NoStop}%
\bibitem [{\citenamefont {Snijders}\ \emph {et~al.}(2018)\citenamefont {Snijders}, \citenamefont {Frey}, \citenamefont {Norman}, \citenamefont {Flayac}, \citenamefont {Savona}, \citenamefont {Gossard}, \citenamefont {Bowers}, \citenamefont {Van~Exter}, \citenamefont {Bouwmeester},\ and\ \citenamefont {L{\"o}ffler}}]{snijders2018observation}%
  \BibitemOpen
  \bibfield  {author} {\bibinfo {author} {\bibfnamefont {H.}~\bibnamefont {Snijders}}, \bibinfo {author} {\bibfnamefont {J.}~\bibnamefont {Frey}}, \bibinfo {author} {\bibfnamefont {J.}~\bibnamefont {Norman}}, \bibinfo {author} {\bibfnamefont {H.}~\bibnamefont {Flayac}}, \bibinfo {author} {\bibfnamefont {V.}~\bibnamefont {Savona}}, \bibinfo {author} {\bibfnamefont {A.}~\bibnamefont {Gossard}}, \bibinfo {author} {\bibfnamefont {J.}~\bibnamefont {Bowers}}, \bibinfo {author} {\bibfnamefont {M.}~\bibnamefont {Van~Exter}}, \bibinfo {author} {\bibfnamefont {D.}~\bibnamefont {Bouwmeester}},\ and\ \bibinfo {author} {\bibfnamefont {W.}~\bibnamefont {L{\"o}ffler}},\ }\bibfield  {title} {\bibinfo {title} {Observation of the unconventional photon blockade},\ }\href@noop {} {\bibfield  {journal} {\bibinfo  {journal} {Physical review letters}\ }\textbf {\bibinfo {volume} {121}},\ \bibinfo {pages} {043601} (\bibinfo {year} {2018})}\BibitemShut {NoStop}%
\bibitem [{\citenamefont {Mitchell}\ \emph {et~al.}(2004)\citenamefont {Mitchell}, \citenamefont {Lundeen},\ and\ \citenamefont {Steinberg}}]{mitchell2004super}%
  \BibitemOpen
  \bibfield  {author} {\bibinfo {author} {\bibfnamefont {M.~W.}\ \bibnamefont {Mitchell}}, \bibinfo {author} {\bibfnamefont {J.~S.}\ \bibnamefont {Lundeen}},\ and\ \bibinfo {author} {\bibfnamefont {A.~M.}\ \bibnamefont {Steinberg}},\ }\bibfield  {title} {\bibinfo {title} {Super-resolving phase measurements with a multiphoton entangled state},\ }\href@noop {} {\bibfield  {journal} {\bibinfo  {journal} {Nature}\ }\textbf {\bibinfo {volume} {429}},\ \bibinfo {pages} {161} (\bibinfo {year} {2004})}\BibitemShut {NoStop}%
\bibitem [{\citenamefont {Hofheinz}\ \emph {et~al.}(2008)\citenamefont {Hofheinz}, \citenamefont {Weig}, \citenamefont {Ansmann}, \citenamefont {Bialczak}, \citenamefont {Lucero}, \citenamefont {Neeley}, \citenamefont {O’connell}, \citenamefont {Wang}, \citenamefont {Martinis},\ and\ \citenamefont {Cleland}}]{hofheinz2008generation}%
  \BibitemOpen
  \bibfield  {author} {\bibinfo {author} {\bibfnamefont {M.}~\bibnamefont {Hofheinz}}, \bibinfo {author} {\bibfnamefont {E.}~\bibnamefont {Weig}}, \bibinfo {author} {\bibfnamefont {M.}~\bibnamefont {Ansmann}}, \bibinfo {author} {\bibfnamefont {R.~C.}\ \bibnamefont {Bialczak}}, \bibinfo {author} {\bibfnamefont {E.}~\bibnamefont {Lucero}}, \bibinfo {author} {\bibfnamefont {M.}~\bibnamefont {Neeley}}, \bibinfo {author} {\bibfnamefont {A.}~\bibnamefont {O’connell}}, \bibinfo {author} {\bibfnamefont {H.}~\bibnamefont {Wang}}, \bibinfo {author} {\bibfnamefont {J.~M.}\ \bibnamefont {Martinis}},\ and\ \bibinfo {author} {\bibfnamefont {A.}~\bibnamefont {Cleland}},\ }\bibfield  {title} {\bibinfo {title} {Generation of fock states in a superconducting quantum circuit},\ }\href@noop {} {\bibfield  {journal} {\bibinfo  {journal} {Nature}\ }\textbf {\bibinfo {volume} {454}},\ \bibinfo {pages} {310} (\bibinfo {year} {2008})}\BibitemShut {NoStop}%
\bibitem [{\citenamefont {Nagata}\ \emph {et~al.}(2007)\citenamefont {Nagata}, \citenamefont {Okamoto}, \citenamefont {O'brien}, \citenamefont {Sasaki},\ and\ \citenamefont {Takeuchi}}]{nagata2007beating}%
  \BibitemOpen
  \bibfield  {author} {\bibinfo {author} {\bibfnamefont {T.}~\bibnamefont {Nagata}}, \bibinfo {author} {\bibfnamefont {R.}~\bibnamefont {Okamoto}}, \bibinfo {author} {\bibfnamefont {J.~L.}\ \bibnamefont {O'brien}}, \bibinfo {author} {\bibfnamefont {K.}~\bibnamefont {Sasaki}},\ and\ \bibinfo {author} {\bibfnamefont {S.}~\bibnamefont {Takeuchi}},\ }\bibfield  {title} {\bibinfo {title} {Beating the standard quantum limit with four-entangled photons},\ }\href@noop {} {\bibfield  {journal} {\bibinfo  {journal} {Science}\ }\textbf {\bibinfo {volume} {316}},\ \bibinfo {pages} {726} (\bibinfo {year} {2007})}\BibitemShut {NoStop}%
\bibitem [{\citenamefont {Slussarenko}\ \emph {et~al.}(2017)\citenamefont {Slussarenko}, \citenamefont {Weston}, \citenamefont {Chrzanowski}, \citenamefont {Shalm}, \citenamefont {Verma}, \citenamefont {Nam},\ and\ \citenamefont {Pryde}}]{slussarenko2017unconditional}%
  \BibitemOpen
  \bibfield  {author} {\bibinfo {author} {\bibfnamefont {S.}~\bibnamefont {Slussarenko}}, \bibinfo {author} {\bibfnamefont {M.~M.}\ \bibnamefont {Weston}}, \bibinfo {author} {\bibfnamefont {H.~M.}\ \bibnamefont {Chrzanowski}}, \bibinfo {author} {\bibfnamefont {L.~K.}\ \bibnamefont {Shalm}}, \bibinfo {author} {\bibfnamefont {V.~B.}\ \bibnamefont {Verma}}, \bibinfo {author} {\bibfnamefont {S.~W.}\ \bibnamefont {Nam}},\ and\ \bibinfo {author} {\bibfnamefont {G.~J.}\ \bibnamefont {Pryde}},\ }\bibfield  {title} {\bibinfo {title} {Unconditional violation of the shot-noise limit in photonic quantum metrology},\ }\href@noop {} {\bibfield  {journal} {\bibinfo  {journal} {Nature Photonics}\ }\textbf {\bibinfo {volume} {11}},\ \bibinfo {pages} {700} (\bibinfo {year} {2017})}\BibitemShut {NoStop}%
\bibitem [{\citenamefont {Schnabel}(2017)}]{schnabel2017squeezed}%
  \BibitemOpen
  \bibfield  {author} {\bibinfo {author} {\bibfnamefont {R.}~\bibnamefont {Schnabel}},\ }\bibfield  {title} {\bibinfo {title} {Squeezed states of light and their applications in laser interferometers},\ }\href@noop {} {\bibfield  {journal} {\bibinfo  {journal} {Physics Reports}\ }\textbf {\bibinfo {volume} {684}},\ \bibinfo {pages} {1} (\bibinfo {year} {2017})}\BibitemShut {NoStop}%
\bibitem [{\citenamefont {Vahlbruch}\ \emph {et~al.}(2016)\citenamefont {Vahlbruch}, \citenamefont {Mehmet}, \citenamefont {Danzmann},\ and\ \citenamefont {Schnabel}}]{vahlbruch2016detection}%
  \BibitemOpen
  \bibfield  {author} {\bibinfo {author} {\bibfnamefont {H.}~\bibnamefont {Vahlbruch}}, \bibinfo {author} {\bibfnamefont {M.}~\bibnamefont {Mehmet}}, \bibinfo {author} {\bibfnamefont {K.}~\bibnamefont {Danzmann}},\ and\ \bibinfo {author} {\bibfnamefont {R.}~\bibnamefont {Schnabel}},\ }\bibfield  {title} {\bibinfo {title} {Detection of 15 db squeezed states of light and their application for the absolute calibration of photoelectric quantum efficiency},\ }\href@noop {} {\bibfield  {journal} {\bibinfo  {journal} {Physical review letters}\ }\textbf {\bibinfo {volume} {117}},\ \bibinfo {pages} {110801} (\bibinfo {year} {2016})}\BibitemShut {NoStop}%
\bibitem [{\citenamefont {Safavi-Naeini}\ \emph {et~al.}(2013)\citenamefont {Safavi-Naeini}, \citenamefont {Gr{\"o}blacher}, \citenamefont {Hill}, \citenamefont {Chan}, \citenamefont {Aspelmeyer},\ and\ \citenamefont {Painter}}]{safavi2013squeezed}%
  \BibitemOpen
  \bibfield  {author} {\bibinfo {author} {\bibfnamefont {A.~H.}\ \bibnamefont {Safavi-Naeini}}, \bibinfo {author} {\bibfnamefont {S.}~\bibnamefont {Gr{\"o}blacher}}, \bibinfo {author} {\bibfnamefont {J.~T.}\ \bibnamefont {Hill}}, \bibinfo {author} {\bibfnamefont {J.}~\bibnamefont {Chan}}, \bibinfo {author} {\bibfnamefont {M.}~\bibnamefont {Aspelmeyer}},\ and\ \bibinfo {author} {\bibfnamefont {O.}~\bibnamefont {Painter}},\ }\bibfield  {title} {\bibinfo {title} {Squeezed light from a silicon micromechanical resonator},\ }\href@noop {} {\bibfield  {journal} {\bibinfo  {journal} {Nature}\ }\textbf {\bibinfo {volume} {500}},\ \bibinfo {pages} {185} (\bibinfo {year} {2013})}\BibitemShut {NoStop}%
\bibitem [{\citenamefont {Caves}(1981)}]{caves1981quantum}%
  \BibitemOpen
  \bibfield  {author} {\bibinfo {author} {\bibfnamefont {C.~M.}\ \bibnamefont {Caves}},\ }\bibfield  {title} {\bibinfo {title} {Quantum-mechanical noise in an interferometer},\ }\href@noop {} {\bibfield  {journal} {\bibinfo  {journal} {Physical Review D}\ }\textbf {\bibinfo {volume} {23}},\ \bibinfo {pages} {1693} (\bibinfo {year} {1981})}\BibitemShut {NoStop}%
\bibitem [{\citenamefont {Wiseman}\ and\ \citenamefont {Milburn}(2009)}]{wiseman2009quantum}%
  \BibitemOpen
  \bibfield  {author} {\bibinfo {author} {\bibfnamefont {H.~M.}\ \bibnamefont {Wiseman}}\ and\ \bibinfo {author} {\bibfnamefont {G.~J.}\ \bibnamefont {Milburn}},\ }\href@noop {} {\emph {\bibinfo {title} {Quantum measurement and control}}}\ (\bibinfo  {publisher} {Cambridge university press},\ \bibinfo {year} {2009})\BibitemShut {NoStop}%
\bibitem [{\citenamefont {Andersen}\ \emph {et~al.}(2016)\citenamefont {Andersen}, \citenamefont {Gehring}, \citenamefont {Marquardt},\ and\ \citenamefont {Leuchs}}]{andersen201630}%
  \BibitemOpen
  \bibfield  {author} {\bibinfo {author} {\bibfnamefont {U.~L.}\ \bibnamefont {Andersen}}, \bibinfo {author} {\bibfnamefont {T.}~\bibnamefont {Gehring}}, \bibinfo {author} {\bibfnamefont {C.}~\bibnamefont {Marquardt}},\ and\ \bibinfo {author} {\bibfnamefont {G.}~\bibnamefont {Leuchs}},\ }\bibfield  {title} {\bibinfo {title} {30 years of squeezed light generation},\ }\href@noop {} {\bibfield  {journal} {\bibinfo  {journal} {Physica Scripta}\ }\textbf {\bibinfo {volume} {91}},\ \bibinfo {pages} {053001} (\bibinfo {year} {2016})}\BibitemShut {NoStop}%
\bibitem [{\citenamefont {Orozco}\ \emph {et~al.}(1987)\citenamefont {Orozco}, \citenamefont {Raizen}, \citenamefont {Xiao}, \citenamefont {Brecha},\ and\ \citenamefont {Kimble}}]{orozco1987squeezed}%
  \BibitemOpen
  \bibfield  {author} {\bibinfo {author} {\bibfnamefont {L.}~\bibnamefont {Orozco}}, \bibinfo {author} {\bibfnamefont {M.}~\bibnamefont {Raizen}}, \bibinfo {author} {\bibfnamefont {M.}~\bibnamefont {Xiao}}, \bibinfo {author} {\bibfnamefont {R.}~\bibnamefont {Brecha}},\ and\ \bibinfo {author} {\bibfnamefont {H.}~\bibnamefont {Kimble}},\ }\bibfield  {title} {\bibinfo {title} {Squeezed-state generation in optical bistability},\ }\href@noop {} {\bibfield  {journal} {\bibinfo  {journal} {JOSA B}\ }\textbf {\bibinfo {volume} {4}},\ \bibinfo {pages} {1490} (\bibinfo {year} {1987})}\BibitemShut {NoStop}%
\bibitem [{\citenamefont {Wu}\ \emph {et~al.}(1986)\citenamefont {Wu}, \citenamefont {Kimble}, \citenamefont {Hall},\ and\ \citenamefont {Wu}}]{wu1986generation}%
  \BibitemOpen
  \bibfield  {author} {\bibinfo {author} {\bibfnamefont {L.-A.}\ \bibnamefont {Wu}}, \bibinfo {author} {\bibfnamefont {H.}~\bibnamefont {Kimble}}, \bibinfo {author} {\bibfnamefont {J.}~\bibnamefont {Hall}},\ and\ \bibinfo {author} {\bibfnamefont {H.}~\bibnamefont {Wu}},\ }\bibfield  {title} {\bibinfo {title} {Generation of squeezed states by parametric down conversion},\ }\href@noop {} {\bibfield  {journal} {\bibinfo  {journal} {Physical review letters}\ }\textbf {\bibinfo {volume} {57}},\ \bibinfo {pages} {2520} (\bibinfo {year} {1986})}\BibitemShut {NoStop}%
\bibitem [{\citenamefont {Scully}\ and\ \citenamefont {Zubairy}(1997)}]{scully1997quantum}%
  \BibitemOpen
  \bibfield  {author} {\bibinfo {author} {\bibfnamefont {M.~O.}\ \bibnamefont {Scully}}\ and\ \bibinfo {author} {\bibfnamefont {M.~S.}\ \bibnamefont {Zubairy}},\ }\href@noop {} {\emph {\bibinfo {title} {Quantum optics}}}\ (\bibinfo  {publisher} {Cambridge university press},\ \bibinfo {year} {1997})\BibitemShut {NoStop}%
\bibitem [{\citenamefont {Dicke}(1954)}]{dicke1954coherence}%
  \BibitemOpen
  \bibfield  {author} {\bibinfo {author} {\bibfnamefont {R.~H.}\ \bibnamefont {Dicke}},\ }\bibfield  {title} {\bibinfo {title} {Coherence in spontaneous radiation processes},\ }\href@noop {} {\bibfield  {journal} {\bibinfo  {journal} {Physical review}\ }\textbf {\bibinfo {volume} {93}},\ \bibinfo {pages} {99} (\bibinfo {year} {1954})}\BibitemShut {NoStop}%
\bibitem [{\citenamefont {Paulisch}\ \emph {et~al.}(2019)\citenamefont {Paulisch}, \citenamefont {Perarnau-Llobet}, \citenamefont {Gonz{\'a}lez-Tudela},\ and\ \citenamefont {Cirac}}]{paulisch2019quantum}%
  \BibitemOpen
  \bibfield  {author} {\bibinfo {author} {\bibfnamefont {V.}~\bibnamefont {Paulisch}}, \bibinfo {author} {\bibfnamefont {M.}~\bibnamefont {Perarnau-Llobet}}, \bibinfo {author} {\bibfnamefont {A.}~\bibnamefont {Gonz{\'a}lez-Tudela}},\ and\ \bibinfo {author} {\bibfnamefont {J.~I.}\ \bibnamefont {Cirac}},\ }\bibfield  {title} {\bibinfo {title} {Quantum metrology with one-dimensional superradiant photonic states},\ }\href@noop {} {\bibfield  {journal} {\bibinfo  {journal} {Physical Review A}\ }\textbf {\bibinfo {volume} {99}},\ \bibinfo {pages} {043807} (\bibinfo {year} {2019})}\BibitemShut {NoStop}%
\bibitem [{\citenamefont {Braunstein}\ and\ \citenamefont {Caves}(1994)}]{braunstein1994statistical}%
  \BibitemOpen
  \bibfield  {author} {\bibinfo {author} {\bibfnamefont {S.~L.}\ \bibnamefont {Braunstein}}\ and\ \bibinfo {author} {\bibfnamefont {C.~M.}\ \bibnamefont {Caves}},\ }\bibfield  {title} {\bibinfo {title} {Statistical distance and the geometry of quantum states},\ }\href@noop {} {\bibfield  {journal} {\bibinfo  {journal} {Physical Review Letters}\ }\textbf {\bibinfo {volume} {72}},\ \bibinfo {pages} {3439} (\bibinfo {year} {1994})}\BibitemShut {NoStop}%
\bibitem [{\citenamefont {Helstrom}(1969)}]{helstrom1969quantum}%
  \BibitemOpen
  \bibfield  {author} {\bibinfo {author} {\bibfnamefont {C.~W.}\ \bibnamefont {Helstrom}},\ }\bibfield  {title} {\bibinfo {title} {Quantum detection and estimation theory},\ }\href@noop {} {\bibfield  {journal} {\bibinfo  {journal} {Journal of Statistical Physics}\ }\textbf {\bibinfo {volume} {1}},\ \bibinfo {pages} {231} (\bibinfo {year} {1969})}\BibitemShut {NoStop}%
\bibitem [{\citenamefont {Holevo}(2011)}]{holevo2011probabilistic}%
  \BibitemOpen
  \bibfield  {author} {\bibinfo {author} {\bibfnamefont {A.~S.}\ \bibnamefont {Holevo}},\ }\href@noop {} {\emph {\bibinfo {title} {Probabilistic and statistical aspects of quantum theory}}},\ Vol.~\bibinfo {volume} {1}\ (\bibinfo  {publisher} {Springer Science \& Business Media},\ \bibinfo {year} {2011})\BibitemShut {NoStop}%
\bibitem [{\citenamefont {Yang}\ \emph {et~al.}(2023)\citenamefont {Yang}, \citenamefont {Huelga},\ and\ \citenamefont {Plenio}}]{yang2023efficient}%
  \BibitemOpen
  \bibfield  {author} {\bibinfo {author} {\bibfnamefont {D.}~\bibnamefont {Yang}}, \bibinfo {author} {\bibfnamefont {S.~F.}\ \bibnamefont {Huelga}},\ and\ \bibinfo {author} {\bibfnamefont {M.~B.}\ \bibnamefont {Plenio}},\ }\bibfield  {title} {\bibinfo {title} {Efficient information retrieval for sensing via continuous measurement},\ }\href {https://doi.org/10.1103/PhysRevX.13.031012} {\bibfield  {journal} {\bibinfo  {journal} {Phys. Rev. X}\ }\textbf {\bibinfo {volume} {13}},\ \bibinfo {pages} {031012} (\bibinfo {year} {2023})}\BibitemShut {NoStop}%
\bibitem [{\citenamefont {Gammelmark}\ and\ \citenamefont {M\o{}lmer}(2014)}]{gammelmark2014fischer}%
  \BibitemOpen
  \bibfield  {author} {\bibinfo {author} {\bibfnamefont {S.}~\bibnamefont {Gammelmark}}\ and\ \bibinfo {author} {\bibfnamefont {K.}~\bibnamefont {M\o{}lmer}},\ }\bibfield  {title} {\bibinfo {title} {Fisher information and the quantum cram\'er-rao sensitivity limit of continuous measurements},\ }\href {https://doi.org/10.1103/PhysRevLett.112.170401} {\bibfield  {journal} {\bibinfo  {journal} {Phys. Rev. Lett.}\ }\textbf {\bibinfo {volume} {112}},\ \bibinfo {pages} {170401} (\bibinfo {year} {2014})}\BibitemShut {NoStop}%
\bibitem [{\citenamefont {Pezz{\'e}}\ and\ \citenamefont {Smerzi}(2013)}]{pezze2013ultrasensitive}%
  \BibitemOpen
  \bibfield  {author} {\bibinfo {author} {\bibfnamefont {L.}~\bibnamefont {Pezz{\'e}}}\ and\ \bibinfo {author} {\bibfnamefont {A.}~\bibnamefont {Smerzi}},\ }\bibfield  {title} {\bibinfo {title} {Ultrasensitive two-mode interferometry with single-mode number squeezing},\ }\href@noop {} {\bibfield  {journal} {\bibinfo  {journal} {Physical Review Letters}\ }\textbf {\bibinfo {volume} {110}},\ \bibinfo {pages} {163604} (\bibinfo {year} {2013})}\BibitemShut {NoStop}%
\bibitem [{\citenamefont {Campos}\ \emph {et~al.}(2003)\citenamefont {Campos}, \citenamefont {Gerry},\ and\ \citenamefont {Benmoussa}}]{campos2003optical}%
  \BibitemOpen
  \bibfield  {author} {\bibinfo {author} {\bibfnamefont {R.}~\bibnamefont {Campos}}, \bibinfo {author} {\bibfnamefont {C.~C.}\ \bibnamefont {Gerry}},\ and\ \bibinfo {author} {\bibfnamefont {A.}~\bibnamefont {Benmoussa}},\ }\bibfield  {title} {\bibinfo {title} {Optical interferometry at the heisenberg limit with twin fock states and parity measurements},\ }\href@noop {} {\bibfield  {journal} {\bibinfo  {journal} {Physical Review A}\ }\textbf {\bibinfo {volume} {68}},\ \bibinfo {pages} {023810} (\bibinfo {year} {2003})}\BibitemShut {NoStop}%
\bibitem [{\citenamefont {Hofmann}(2009)}]{hofmann2009all}%
  \BibitemOpen
  \bibfield  {author} {\bibinfo {author} {\bibfnamefont {H.~F.}\ \bibnamefont {Hofmann}},\ }\bibfield  {title} {\bibinfo {title} {All path-symmetric pure states achieve their maximal phase sensitivity in conventional two-path interferometry},\ }\href@noop {} {\bibfield  {journal} {\bibinfo  {journal} {Physical Review A—Atomic, Molecular, and Optical Physics}\ }\textbf {\bibinfo {volume} {79}},\ \bibinfo {pages} {033822} (\bibinfo {year} {2009})}\BibitemShut {NoStop}%
\bibitem [{\citenamefont {Aoki}\ \emph {et~al.}(2006)\citenamefont {Aoki}, \citenamefont {Dayan}, \citenamefont {Wilcut}, \citenamefont {Bowen}, \citenamefont {Parkins}, \citenamefont {Kippenberg}, \citenamefont {Vahala},\ and\ \citenamefont {Kimble}}]{aoki2006observation}%
  \BibitemOpen
  \bibfield  {author} {\bibinfo {author} {\bibfnamefont {T.}~\bibnamefont {Aoki}}, \bibinfo {author} {\bibfnamefont {B.}~\bibnamefont {Dayan}}, \bibinfo {author} {\bibfnamefont {E.}~\bibnamefont {Wilcut}}, \bibinfo {author} {\bibfnamefont {W.~P.}\ \bibnamefont {Bowen}}, \bibinfo {author} {\bibfnamefont {A.~S.}\ \bibnamefont {Parkins}}, \bibinfo {author} {\bibfnamefont {T.}~\bibnamefont {Kippenberg}}, \bibinfo {author} {\bibfnamefont {K.}~\bibnamefont {Vahala}},\ and\ \bibinfo {author} {\bibfnamefont {H.}~\bibnamefont {Kimble}},\ }\bibfield  {title} {\bibinfo {title} {Observation of strong coupling between one atom and a monolithic microresonator},\ }\href@noop {} {\bibfield  {journal} {\bibinfo  {journal} {Nature}\ }\textbf {\bibinfo {volume} {443}},\ \bibinfo {pages} {671} (\bibinfo {year} {2006})}\BibitemShut {NoStop}%
\bibitem [{\citenamefont {Akimov}\ \emph {et~al.}(2007)\citenamefont {Akimov}, \citenamefont {Mukherjee}, \citenamefont {Yu}, \citenamefont {Chang}, \citenamefont {Zibrov}, \citenamefont {Hemmer}, \citenamefont {Park},\ and\ \citenamefont {Lukin}}]{akimov2007generation}%
  \BibitemOpen
  \bibfield  {author} {\bibinfo {author} {\bibfnamefont {A.}~\bibnamefont {Akimov}}, \bibinfo {author} {\bibfnamefont {A.}~\bibnamefont {Mukherjee}}, \bibinfo {author} {\bibfnamefont {C.}~\bibnamefont {Yu}}, \bibinfo {author} {\bibfnamefont {D.}~\bibnamefont {Chang}}, \bibinfo {author} {\bibfnamefont {A.}~\bibnamefont {Zibrov}}, \bibinfo {author} {\bibfnamefont {P.}~\bibnamefont {Hemmer}}, \bibinfo {author} {\bibfnamefont {H.}~\bibnamefont {Park}},\ and\ \bibinfo {author} {\bibfnamefont {M.}~\bibnamefont {Lukin}},\ }\bibfield  {title} {\bibinfo {title} {Generation of single optical plasmons in metallic nanowires coupled to quantum dots},\ }\href@noop {} {\bibfield  {journal} {\bibinfo  {journal} {Nature}\ }\textbf {\bibinfo {volume} {450}},\ \bibinfo {pages} {402} (\bibinfo {year} {2007})}\BibitemShut {NoStop}%
\bibitem [{\citenamefont {Lund-Hansen}\ \emph {et~al.}(2008)\citenamefont {Lund-Hansen}, \citenamefont {Stobbe}, \citenamefont {Julsgaard}, \citenamefont {Thyrrestrup}, \citenamefont {S{\"u}nner}, \citenamefont {Kamp}, \citenamefont {Forchel},\ and\ \citenamefont {Lodahl}}]{lund2008experimental}%
  \BibitemOpen
  \bibfield  {author} {\bibinfo {author} {\bibfnamefont {T.}~\bibnamefont {Lund-Hansen}}, \bibinfo {author} {\bibfnamefont {S.}~\bibnamefont {Stobbe}}, \bibinfo {author} {\bibfnamefont {B.}~\bibnamefont {Julsgaard}}, \bibinfo {author} {\bibfnamefont {H.}~\bibnamefont {Thyrrestrup}}, \bibinfo {author} {\bibfnamefont {T.}~\bibnamefont {S{\"u}nner}}, \bibinfo {author} {\bibfnamefont {M.}~\bibnamefont {Kamp}}, \bibinfo {author} {\bibfnamefont {A.}~\bibnamefont {Forchel}},\ and\ \bibinfo {author} {\bibfnamefont {P.}~\bibnamefont {Lodahl}},\ }\bibfield  {title} {\bibinfo {title} {Experimental realization of highly efficient broadband coupling of single quantum dots<? format?> to a photonic crystal waveguide},\ }\href@noop {} {\bibfield  {journal} {\bibinfo  {journal} {Physical review letters}\ }\textbf {\bibinfo {volume} {101}},\ \bibinfo {pages} {113903} (\bibinfo {year} {2008})}\BibitemShut {NoStop}%
\bibitem [{\citenamefont {Wallraff}\ \emph {et~al.}(2004)\citenamefont {Wallraff}, \citenamefont {Schuster}, \citenamefont {Blais}, \citenamefont {Frunzio}, \citenamefont {Huang}, \citenamefont {Majer}, \citenamefont {Kumar}, \citenamefont {Girvin},\ and\ \citenamefont {Schoelkopf}}]{wallraff2004strong}%
  \BibitemOpen
  \bibfield  {author} {\bibinfo {author} {\bibfnamefont {A.}~\bibnamefont {Wallraff}}, \bibinfo {author} {\bibfnamefont {D.~I.}\ \bibnamefont {Schuster}}, \bibinfo {author} {\bibfnamefont {A.}~\bibnamefont {Blais}}, \bibinfo {author} {\bibfnamefont {L.}~\bibnamefont {Frunzio}}, \bibinfo {author} {\bibfnamefont {R.-S.}\ \bibnamefont {Huang}}, \bibinfo {author} {\bibfnamefont {J.}~\bibnamefont {Majer}}, \bibinfo {author} {\bibfnamefont {S.}~\bibnamefont {Kumar}}, \bibinfo {author} {\bibfnamefont {S.~M.}\ \bibnamefont {Girvin}},\ and\ \bibinfo {author} {\bibfnamefont {R.~J.}\ \bibnamefont {Schoelkopf}},\ }\bibfield  {title} {\bibinfo {title} {Strong coupling of a single photon to a superconducting qubit using circuit quantum electrodynamics},\ }\href@noop {} {\bibfield  {journal} {\bibinfo  {journal} {Nature}\ }\textbf {\bibinfo {volume} {431}},\ \bibinfo {pages} {162} (\bibinfo {year} {2004})}\BibitemShut {NoStop}%
\bibitem [{\citenamefont {Breuer}\ and\ \citenamefont {Petruccione}(2002)}]{breuer2002theory}%
  \BibitemOpen
  \bibfield  {author} {\bibinfo {author} {\bibfnamefont {H.-P.}\ \bibnamefont {Breuer}}\ and\ \bibinfo {author} {\bibfnamefont {F.}~\bibnamefont {Petruccione}},\ }\href@noop {} {\emph {\bibinfo {title} {The theory of open quantum systems}}}\ (\bibinfo  {publisher} {OUP Oxford},\ \bibinfo {year} {2002})\BibitemShut {NoStop}%
\bibitem [{\citenamefont {Liu}\ \emph {et~al.}(2020)\citenamefont {Liu}, \citenamefont {Yuan}, \citenamefont {Lu},\ and\ \citenamefont {Wang}}]{liu2020quantum}%
  \BibitemOpen
  \bibfield  {author} {\bibinfo {author} {\bibfnamefont {J.}~\bibnamefont {Liu}}, \bibinfo {author} {\bibfnamefont {H.}~\bibnamefont {Yuan}}, \bibinfo {author} {\bibfnamefont {X.-M.}\ \bibnamefont {Lu}},\ and\ \bibinfo {author} {\bibfnamefont {X.}~\bibnamefont {Wang}},\ }\bibfield  {title} {\bibinfo {title} {Quantum fisher information matrix and multiparameter estimation},\ }\href@noop {} {\bibfield  {journal} {\bibinfo  {journal} {Journal of Physics A: Mathematical and Theoretical}\ }\textbf {\bibinfo {volume} {53}},\ \bibinfo {pages} {023001} (\bibinfo {year} {2020})}\BibitemShut {NoStop}%
\bibitem [{\citenamefont {Barndorff-Nielsen}\ and\ \citenamefont {Gill}(2000)}]{barndorff2000fisher}%
  \BibitemOpen
  \bibfield  {author} {\bibinfo {author} {\bibfnamefont {O.~E.}\ \bibnamefont {Barndorff-Nielsen}}\ and\ \bibinfo {author} {\bibfnamefont {R.~D.}\ \bibnamefont {Gill}},\ }\bibfield  {title} {\bibinfo {title} {Fisher information in quantum statistics},\ }\href@noop {} {\bibfield  {journal} {\bibinfo  {journal} {Journal of Physics A: Mathematical and General}\ }\textbf {\bibinfo {volume} {33}},\ \bibinfo {pages} {4481} (\bibinfo {year} {2000})}\BibitemShut {NoStop}%
\bibitem [{\citenamefont {Hayashi}(2011)}]{hayashi2011comparison}%
  \BibitemOpen
  \bibfield  {author} {\bibinfo {author} {\bibfnamefont {M.}~\bibnamefont {Hayashi}},\ }\bibfield  {title} {\bibinfo {title} {Comparison between the cramer-rao and the mini-max approaches in quantum channel estimation},\ }\href@noop {} {\bibfield  {journal} {\bibinfo  {journal} {Communications in mathematical physics}\ }\textbf {\bibinfo {volume} {304}},\ \bibinfo {pages} {689} (\bibinfo {year} {2011})}\BibitemShut {NoStop}%
\bibitem [{\citenamefont {Yang}\ \emph {et~al.}(2019)\citenamefont {Yang}, \citenamefont {Chiribella},\ and\ \citenamefont {Hayashi}}]{yang2019attaining}%
  \BibitemOpen
  \bibfield  {author} {\bibinfo {author} {\bibfnamefont {Y.}~\bibnamefont {Yang}}, \bibinfo {author} {\bibfnamefont {G.}~\bibnamefont {Chiribella}},\ and\ \bibinfo {author} {\bibfnamefont {M.}~\bibnamefont {Hayashi}},\ }\bibfield  {title} {\bibinfo {title} {Attaining the ultimate precision limit in quantum state estimation},\ }\href@noop {} {\bibfield  {journal} {\bibinfo  {journal} {Communications in Mathematical Physics}\ }\textbf {\bibinfo {volume} {368}},\ \bibinfo {pages} {223} (\bibinfo {year} {2019})}\BibitemShut {NoStop}%
\bibitem [{\citenamefont {Xu}\ and\ \citenamefont {Fan}(2015)}]{xu2015input}%
  \BibitemOpen
  \bibfield  {author} {\bibinfo {author} {\bibfnamefont {S.}~\bibnamefont {Xu}}\ and\ \bibinfo {author} {\bibfnamefont {S.}~\bibnamefont {Fan}},\ }\bibfield  {title} {\bibinfo {title} {Input-output formalism for few-photon transport: A systematic treatment beyond two photons},\ }\href@noop {} {\bibfield  {journal} {\bibinfo  {journal} {Physical Review A}\ }\textbf {\bibinfo {volume} {91}},\ \bibinfo {pages} {043845} (\bibinfo {year} {2015})}\BibitemShut {NoStop}%
\bibitem [{\citenamefont {Trivedi}\ \emph {et~al.}(2019{\natexlab{a}})\citenamefont {Trivedi}, \citenamefont {Fischer}, \citenamefont {Mishra},\ and\ \citenamefont {Vu{\v{c}}kovi{\'c}}}]{trivedi2019point}%
  \BibitemOpen
  \bibfield  {author} {\bibinfo {author} {\bibfnamefont {R.}~\bibnamefont {Trivedi}}, \bibinfo {author} {\bibfnamefont {K.}~\bibnamefont {Fischer}}, \bibinfo {author} {\bibfnamefont {S.~D.}\ \bibnamefont {Mishra}},\ and\ \bibinfo {author} {\bibfnamefont {J.}~\bibnamefont {Vu{\v{c}}kovi{\'c}}},\ }\bibfield  {title} {\bibinfo {title} {Point-coupling hamiltonian for frequency-independent linear optical devices},\ }\href@noop {} {\bibfield  {journal} {\bibinfo  {journal} {Physical Review A}\ }\textbf {\bibinfo {volume} {100}},\ \bibinfo {pages} {043827} (\bibinfo {year} {2019}{\natexlab{a}})}\BibitemShut {NoStop}%
\bibitem [{\citenamefont {Horn}\ and\ \citenamefont {Johnson}(2012)}]{horn2012matrix}%
  \BibitemOpen
  \bibfield  {author} {\bibinfo {author} {\bibfnamefont {R.~A.}\ \bibnamefont {Horn}}\ and\ \bibinfo {author} {\bibfnamefont {C.~R.}\ \bibnamefont {Johnson}},\ }\href@noop {} {\emph {\bibinfo {title} {Matrix analysis}}}\ (\bibinfo  {publisher} {Cambridge university press},\ \bibinfo {year} {2012})\BibitemShut {NoStop}%
\bibitem [{\citenamefont {Gardiner}\ and\ \citenamefont {Zoller}(2004)}]{gardiner2004quantum}%
  \BibitemOpen
  \bibfield  {author} {\bibinfo {author} {\bibfnamefont {C.}~\bibnamefont {Gardiner}}\ and\ \bibinfo {author} {\bibfnamefont {P.}~\bibnamefont {Zoller}},\ }\href@noop {} {\emph {\bibinfo {title} {Quantum noise: a handbook of Markovian and non-Markovian quantum stochastic methods with applications to quantum optics}}}\ (\bibinfo  {publisher} {Springer Science \& Business Media},\ \bibinfo {year} {2004})\BibitemShut {NoStop}%
\bibitem [{\citenamefont {Carmichael}(2009)}]{carmichael2009open}%
  \BibitemOpen
  \bibfield  {author} {\bibinfo {author} {\bibfnamefont {H.}~\bibnamefont {Carmichael}},\ }\href@noop {} {\emph {\bibinfo {title} {An open systems approach to quantum optics: lectures presented at the Universit{\'e} Libre de Bruxelles, October 28 to November 4, 1991}}},\ Vol.~\bibinfo {volume} {18}\ (\bibinfo  {publisher} {Springer Science \& Business Media},\ \bibinfo {year} {2009})\BibitemShut {NoStop}%
\bibitem [{\citenamefont {Carmichael}(2013)}]{carmichael2013statistical}%
  \BibitemOpen
  \bibfield  {author} {\bibinfo {author} {\bibfnamefont {H.~J.}\ \bibnamefont {Carmichael}},\ }\href@noop {} {\emph {\bibinfo {title} {Statistical methods in quantum optics 1: master equations and Fokker-Planck equations}}}\ (\bibinfo  {publisher} {Springer Science \& Business Media},\ \bibinfo {year} {2013})\BibitemShut {NoStop}%
\bibitem [{\citenamefont {Pichler}\ \emph {et~al.}(2017)\citenamefont {Pichler}, \citenamefont {Choi}, \citenamefont {Zoller},\ and\ \citenamefont {Lukin}}]{pichler2017universal}%
  \BibitemOpen
  \bibfield  {author} {\bibinfo {author} {\bibfnamefont {H.}~\bibnamefont {Pichler}}, \bibinfo {author} {\bibfnamefont {S.}~\bibnamefont {Choi}}, \bibinfo {author} {\bibfnamefont {P.}~\bibnamefont {Zoller}},\ and\ \bibinfo {author} {\bibfnamefont {M.~D.}\ \bibnamefont {Lukin}},\ }\bibfield  {title} {\bibinfo {title} {Universal photonic quantum computation via time-delayed feedback},\ }\href@noop {} {\bibfield  {journal} {\bibinfo  {journal} {Proceedings of the National Academy of Sciences}\ }\textbf {\bibinfo {volume} {114}},\ \bibinfo {pages} {11362} (\bibinfo {year} {2017})}\BibitemShut {NoStop}%
\bibitem [{\citenamefont {Wei}\ \emph {et~al.}(2021)\citenamefont {Wei}, \citenamefont {Malz}, \citenamefont {Gonz{\'a}lez-Tudela},\ and\ \citenamefont {Cirac}}]{wei2021generation}%
  \BibitemOpen
  \bibfield  {author} {\bibinfo {author} {\bibfnamefont {Z.-Y.}\ \bibnamefont {Wei}}, \bibinfo {author} {\bibfnamefont {D.}~\bibnamefont {Malz}}, \bibinfo {author} {\bibfnamefont {A.}~\bibnamefont {Gonz{\'a}lez-Tudela}},\ and\ \bibinfo {author} {\bibfnamefont {J.~I.}\ \bibnamefont {Cirac}},\ }\bibfield  {title} {\bibinfo {title} {Generation of photonic matrix product states with rydberg atomic arrays},\ }\href@noop {} {\bibfield  {journal} {\bibinfo  {journal} {Physical Review Research}\ }\textbf {\bibinfo {volume} {3}},\ \bibinfo {pages} {023021} (\bibinfo {year} {2021})}\BibitemShut {NoStop}%
\bibitem [{\citenamefont {Wei}\ \emph {et~al.}(2022)\citenamefont {Wei}, \citenamefont {Cirac},\ and\ \citenamefont {Malz}}]{wei2022generation}%
  \BibitemOpen
  \bibfield  {author} {\bibinfo {author} {\bibfnamefont {Z.-Y.}\ \bibnamefont {Wei}}, \bibinfo {author} {\bibfnamefont {J.~I.}\ \bibnamefont {Cirac}},\ and\ \bibinfo {author} {\bibfnamefont {D.}~\bibnamefont {Malz}},\ }\bibfield  {title} {\bibinfo {title} {Generation of photonic tensor network states with circuit qed},\ }\href@noop {} {\bibfield  {journal} {\bibinfo  {journal} {Physical Review A}\ }\textbf {\bibinfo {volume} {105}},\ \bibinfo {pages} {022611} (\bibinfo {year} {2022})}\BibitemShut {NoStop}%
\bibitem [{\citenamefont {Rubies-Bigorda}\ \emph {et~al.}(2024)\citenamefont {Rubies-Bigorda}, \citenamefont {Masson}, \citenamefont {Yelin},\ and\ \citenamefont {Asenjo-Garcia}}]{rubies2024deterministic}%
  \BibitemOpen
  \bibfield  {author} {\bibinfo {author} {\bibfnamefont {O.}~\bibnamefont {Rubies-Bigorda}}, \bibinfo {author} {\bibfnamefont {S.~J.}\ \bibnamefont {Masson}}, \bibinfo {author} {\bibfnamefont {S.~F.}\ \bibnamefont {Yelin}},\ and\ \bibinfo {author} {\bibfnamefont {A.}~\bibnamefont {Asenjo-Garcia}},\ }\bibfield  {title} {\bibinfo {title} {Deterministic generation of photonic entangled states using decoherence-free subspaces},\ }\href@noop {} {\bibfield  {journal} {\bibinfo  {journal} {arXiv preprint arXiv:2410.03325}\ } (\bibinfo {year} {2024})}\BibitemShut {NoStop}%
\bibitem [{\citenamefont {Giles}\ and\ \citenamefont {Pierce}(2000)}]{giles2000introduction}%
  \BibitemOpen
  \bibfield  {author} {\bibinfo {author} {\bibfnamefont {M.~B.}\ \bibnamefont {Giles}}\ and\ \bibinfo {author} {\bibfnamefont {N.~A.}\ \bibnamefont {Pierce}},\ }\bibfield  {title} {\bibinfo {title} {An introduction to the adjoint approach to design},\ }\href@noop {} {\bibfield  {journal} {\bibinfo  {journal} {Flow, turbulence and combustion}\ }\textbf {\bibinfo {volume} {65}},\ \bibinfo {pages} {393} (\bibinfo {year} {2000})}\BibitemShut {NoStop}%
\bibitem [{\citenamefont {Givoli}(2021)}]{givoli2021tutorial}%
  \BibitemOpen
  \bibfield  {author} {\bibinfo {author} {\bibfnamefont {D.}~\bibnamefont {Givoli}},\ }\bibfield  {title} {\bibinfo {title} {A tutorial on the adjoint method for inverse problems},\ }\href@noop {} {\bibfield  {journal} {\bibinfo  {journal} {Computer Methods in Applied Mechanics and Engineering}\ }\textbf {\bibinfo {volume} {380}},\ \bibinfo {pages} {113810} (\bibinfo {year} {2021})}\BibitemShut {NoStop}%
\bibitem [{\citenamefont {White}\ \emph {et~al.}(2022)\citenamefont {White}, \citenamefont {Trivedi}, \citenamefont {Narayanan},\ and\ \citenamefont {Vučković}}]{white2022enhancing}%
  \BibitemOpen
  \bibfield  {author} {\bibinfo {author} {\bibfnamefont {A.~D.}\ \bibnamefont {White}}, \bibinfo {author} {\bibfnamefont {R.}~\bibnamefont {Trivedi}}, \bibinfo {author} {\bibfnamefont {K.}~\bibnamefont {Narayanan}},\ and\ \bibinfo {author} {\bibfnamefont {J.}~\bibnamefont {Vučković}},\ }\bibfield  {title} {\bibinfo {title} {Enhancing superradiance in spectrally inhomogeneous cavity qed systems with dynamic modulation},\ }\href@noop {} {\bibfield  {journal} {\bibinfo  {journal} {ACS Photonics}\ }\textbf {\bibinfo {volume} {9}},\ \bibinfo {pages} {2467} (\bibinfo {year} {2022})}\BibitemShut {NoStop}%
\bibitem [{\citenamefont {Zhou}\ \emph {et~al.}(2020)\citenamefont {Zhou}, \citenamefont {Zou},\ and\ \citenamefont {Jiang}}]{zhou2020saturating}%
  \BibitemOpen
  \bibfield  {author} {\bibinfo {author} {\bibfnamefont {S.}~\bibnamefont {Zhou}}, \bibinfo {author} {\bibfnamefont {C.-L.}\ \bibnamefont {Zou}},\ and\ \bibinfo {author} {\bibfnamefont {L.}~\bibnamefont {Jiang}},\ }\bibfield  {title} {\bibinfo {title} {Saturating the quantum cram{\'e}r--rao bound using locc},\ }\href@noop {} {\bibfield  {journal} {\bibinfo  {journal} {Quantum Science and Technology}\ }\textbf {\bibinfo {volume} {5}},\ \bibinfo {pages} {025005} (\bibinfo {year} {2020})}\BibitemShut {NoStop}%
\bibitem [{\citenamefont {Yuan}\ \emph {et~al.}(2023)\citenamefont {Yuan}, \citenamefont {Seif}, \citenamefont {Lingenfelter}, \citenamefont {Schuster}, \citenamefont {Clerk},\ and\ \citenamefont {Jiang}}]{yuan2023universal}%
  \BibitemOpen
  \bibfield  {author} {\bibinfo {author} {\bibfnamefont {M.}~\bibnamefont {Yuan}}, \bibinfo {author} {\bibfnamefont {A.}~\bibnamefont {Seif}}, \bibinfo {author} {\bibfnamefont {A.}~\bibnamefont {Lingenfelter}}, \bibinfo {author} {\bibfnamefont {D.~I.}\ \bibnamefont {Schuster}}, \bibinfo {author} {\bibfnamefont {A.~A.}\ \bibnamefont {Clerk}},\ and\ \bibinfo {author} {\bibfnamefont {L.}~\bibnamefont {Jiang}},\ }\bibfield  {title} {\bibinfo {title} {Universal control in bosonic systems with weak kerr nonlinearities},\ }\href@noop {} {\bibfield  {journal} {\bibinfo  {journal} {arXiv preprint arXiv:2312.15783}\ } (\bibinfo {year} {2023})}\BibitemShut {NoStop}%
\bibitem [{\citenamefont {Fischer}\ \emph {et~al.}(2018)\citenamefont {Fischer}, \citenamefont {Trivedi},\ and\ \citenamefont {Lukin}}]{fischer2018particle}%
  \BibitemOpen
  \bibfield  {author} {\bibinfo {author} {\bibfnamefont {K.~A.}\ \bibnamefont {Fischer}}, \bibinfo {author} {\bibfnamefont {R.}~\bibnamefont {Trivedi}},\ and\ \bibinfo {author} {\bibfnamefont {D.}~\bibnamefont {Lukin}},\ }\bibfield  {title} {\bibinfo {title} {Particle emission from open quantum systems},\ }\href@noop {} {\bibfield  {journal} {\bibinfo  {journal} {Physical Review A}\ }\textbf {\bibinfo {volume} {98}},\ \bibinfo {pages} {023853} (\bibinfo {year} {2018})}\BibitemShut {NoStop}%
\bibitem [{\citenamefont {Sch{\"o}n}\ \emph {et~al.}(2005)\citenamefont {Sch{\"o}n}, \citenamefont {Solano}, \citenamefont {Verstraete}, \citenamefont {Cirac},\ and\ \citenamefont {Wolf}}]{schon2005sequential}%
  \BibitemOpen
  \bibfield  {author} {\bibinfo {author} {\bibfnamefont {C.}~\bibnamefont {Sch{\"o}n}}, \bibinfo {author} {\bibfnamefont {E.}~\bibnamefont {Solano}}, \bibinfo {author} {\bibfnamefont {F.}~\bibnamefont {Verstraete}}, \bibinfo {author} {\bibfnamefont {J.~I.}\ \bibnamefont {Cirac}},\ and\ \bibinfo {author} {\bibfnamefont {M.~M.}\ \bibnamefont {Wolf}},\ }\bibfield  {title} {\bibinfo {title} {Sequential generation of entangled multiqubit states},\ }\href@noop {} {\bibfield  {journal} {\bibinfo  {journal} {Physical review letters}\ }\textbf {\bibinfo {volume} {95}},\ \bibinfo {pages} {110503} (\bibinfo {year} {2005})}\BibitemShut {NoStop}%
\bibitem [{\citenamefont {Sch{\"o}n}\ \emph {et~al.}(2007)\citenamefont {Sch{\"o}n}, \citenamefont {Hammerer}, \citenamefont {Wolf}, \citenamefont {Cirac},\ and\ \citenamefont {Solano}}]{schon2007sequential}%
  \BibitemOpen
  \bibfield  {author} {\bibinfo {author} {\bibfnamefont {C.}~\bibnamefont {Sch{\"o}n}}, \bibinfo {author} {\bibfnamefont {K.}~\bibnamefont {Hammerer}}, \bibinfo {author} {\bibfnamefont {M.~M.}\ \bibnamefont {Wolf}}, \bibinfo {author} {\bibfnamefont {J.~I.}\ \bibnamefont {Cirac}},\ and\ \bibinfo {author} {\bibfnamefont {E.}~\bibnamefont {Solano}},\ }\bibfield  {title} {\bibinfo {title} {Sequential generation of matrix-product states in cavity qed},\ }\href@noop {} {\bibfield  {journal} {\bibinfo  {journal} {Physical Review A—Atomic, Molecular, and Optical Physics}\ }\textbf {\bibinfo {volume} {75}},\ \bibinfo {pages} {032311} (\bibinfo {year} {2007})}\BibitemShut {NoStop}%
\bibitem [{\citenamefont {Cirac}\ \emph {et~al.}(2021)\citenamefont {Cirac}, \citenamefont {Perez-Garcia}, \citenamefont {Schuch},\ and\ \citenamefont {Verstraete}}]{cirac2021matrix}%
  \BibitemOpen
  \bibfield  {author} {\bibinfo {author} {\bibfnamefont {J.~I.}\ \bibnamefont {Cirac}}, \bibinfo {author} {\bibfnamefont {D.}~\bibnamefont {Perez-Garcia}}, \bibinfo {author} {\bibfnamefont {N.}~\bibnamefont {Schuch}},\ and\ \bibinfo {author} {\bibfnamefont {F.}~\bibnamefont {Verstraete}},\ }\bibfield  {title} {\bibinfo {title} {Matrix product states and projected entangled pair states: Concepts, symmetries, theorems},\ }\href@noop {} {\bibfield  {journal} {\bibinfo  {journal} {Reviews of Modern Physics}\ }\textbf {\bibinfo {volume} {93}},\ \bibinfo {pages} {045003} (\bibinfo {year} {2021})}\BibitemShut {NoStop}%
\bibitem [{\citenamefont {Cirac}\ \emph {et~al.}(1997)\citenamefont {Cirac}, \citenamefont {Zoller}, \citenamefont {Kimble},\ and\ \citenamefont {Mabuchi}}]{cirac1997quantum}%
  \BibitemOpen
  \bibfield  {author} {\bibinfo {author} {\bibfnamefont {J.~I.}\ \bibnamefont {Cirac}}, \bibinfo {author} {\bibfnamefont {P.}~\bibnamefont {Zoller}}, \bibinfo {author} {\bibfnamefont {H.~J.}\ \bibnamefont {Kimble}},\ and\ \bibinfo {author} {\bibfnamefont {H.}~\bibnamefont {Mabuchi}},\ }\bibfield  {title} {\bibinfo {title} {Quantum state transfer and entanglement distribution among distant nodes in a quantum network},\ }\href@noop {} {\bibfield  {journal} {\bibinfo  {journal} {Physical Review Letters}\ }\textbf {\bibinfo {volume} {78}},\ \bibinfo {pages} {3221} (\bibinfo {year} {1997})}\BibitemShut {NoStop}%
\bibitem [{\citenamefont {Khaneja}\ \emph {et~al.}(2005)\citenamefont {Khaneja}, \citenamefont {Reiss}, \citenamefont {Kehlet}, \citenamefont {Schulte-Herbr{\"u}ggen},\ and\ \citenamefont {Glaser}}]{khaneja2005optimal}%
  \BibitemOpen
  \bibfield  {author} {\bibinfo {author} {\bibfnamefont {N.}~\bibnamefont {Khaneja}}, \bibinfo {author} {\bibfnamefont {T.}~\bibnamefont {Reiss}}, \bibinfo {author} {\bibfnamefont {C.}~\bibnamefont {Kehlet}}, \bibinfo {author} {\bibfnamefont {T.}~\bibnamefont {Schulte-Herbr{\"u}ggen}},\ and\ \bibinfo {author} {\bibfnamefont {S.~J.}\ \bibnamefont {Glaser}},\ }\bibfield  {title} {\bibinfo {title} {Optimal control of coupled spin dynamics: design of nmr pulse sequences by gradient ascent algorithms},\ }\href@noop {} {\bibfield  {journal} {\bibinfo  {journal} {Journal of magnetic resonance}\ }\textbf {\bibinfo {volume} {172}},\ \bibinfo {pages} {296} (\bibinfo {year} {2005})}\BibitemShut {NoStop}%
\bibitem [{\citenamefont {Grimsmo}(2015)}]{grimsmo2015time}%
  \BibitemOpen
  \bibfield  {author} {\bibinfo {author} {\bibfnamefont {A.~L.}\ \bibnamefont {Grimsmo}},\ }\bibfield  {title} {\bibinfo {title} {Time-delayed quantum feedback control},\ }\href@noop {} {\bibfield  {journal} {\bibinfo  {journal} {Physical review letters}\ }\textbf {\bibinfo {volume} {115}},\ \bibinfo {pages} {060402} (\bibinfo {year} {2015})}\BibitemShut {NoStop}%
\bibitem [{\citenamefont {Calaj{\'o}}\ \emph {et~al.}(2019)\citenamefont {Calaj{\'o}}, \citenamefont {Fang}, \citenamefont {Baranger},\ and\ \citenamefont {Ciccarello}}]{calajo2019exciting}%
  \BibitemOpen
  \bibfield  {author} {\bibinfo {author} {\bibfnamefont {G.}~\bibnamefont {Calaj{\'o}}}, \bibinfo {author} {\bibfnamefont {Y.-L.~L.}\ \bibnamefont {Fang}}, \bibinfo {author} {\bibfnamefont {H.~U.}\ \bibnamefont {Baranger}},\ and\ \bibinfo {author} {\bibfnamefont {F.}~\bibnamefont {Ciccarello}},\ }\bibfield  {title} {\bibinfo {title} {Exciting a bound state in the continuum through multiphoton scattering plus delayed quantum feedback},\ }\href@noop {} {\bibfield  {journal} {\bibinfo  {journal} {Physical review letters}\ }\textbf {\bibinfo {volume} {122}},\ \bibinfo {pages} {073601} (\bibinfo {year} {2019})}\BibitemShut {NoStop}%
\bibitem [{\citenamefont {Trivedi}\ \emph {et~al.}(2021)\citenamefont {Trivedi}, \citenamefont {Malz}, \citenamefont {Sun}, \citenamefont {Fan},\ and\ \citenamefont {Vu{\v{c}}kovi{\'c}}}]{trivedi2021optimal}%
  \BibitemOpen
  \bibfield  {author} {\bibinfo {author} {\bibfnamefont {R.}~\bibnamefont {Trivedi}}, \bibinfo {author} {\bibfnamefont {D.}~\bibnamefont {Malz}}, \bibinfo {author} {\bibfnamefont {S.}~\bibnamefont {Sun}}, \bibinfo {author} {\bibfnamefont {S.}~\bibnamefont {Fan}},\ and\ \bibinfo {author} {\bibfnamefont {J.}~\bibnamefont {Vu{\v{c}}kovi{\'c}}},\ }\bibfield  {title} {\bibinfo {title} {Optimal two-photon excitation of bound states in non-markovian waveguide qed},\ }\href@noop {} {\bibfield  {journal} {\bibinfo  {journal} {Physical Review A}\ }\textbf {\bibinfo {volume} {104}},\ \bibinfo {pages} {013705} (\bibinfo {year} {2021})}\BibitemShut {NoStop}%
\bibitem [{\citenamefont {Collett}\ and\ \citenamefont {Gardiner}(1984)}]{collett1984squeezing}%
  \BibitemOpen
  \bibfield  {author} {\bibinfo {author} {\bibfnamefont {M.}~\bibnamefont {Collett}}\ and\ \bibinfo {author} {\bibfnamefont {C.}~\bibnamefont {Gardiner}},\ }\bibfield  {title} {\bibinfo {title} {Squeezing of intracavity and traveling-wave light fields produced in parametric amplification},\ }\href@noop {} {\bibfield  {journal} {\bibinfo  {journal} {Physical Review A}\ }\textbf {\bibinfo {volume} {30}},\ \bibinfo {pages} {1386} (\bibinfo {year} {1984})}\BibitemShut {NoStop}%
\bibitem [{\citenamefont {Trivedi}\ \emph {et~al.}(2019{\natexlab{b}})\citenamefont {Trivedi}, \citenamefont {Radulaski}, \citenamefont {Fischer}, \citenamefont {Fan},\ and\ \citenamefont {Vu{\v{c}}kovi{\'c}}}]{trivedi2019photon}%
  \BibitemOpen
  \bibfield  {author} {\bibinfo {author} {\bibfnamefont {R.}~\bibnamefont {Trivedi}}, \bibinfo {author} {\bibfnamefont {M.}~\bibnamefont {Radulaski}}, \bibinfo {author} {\bibfnamefont {K.~A.}\ \bibnamefont {Fischer}}, \bibinfo {author} {\bibfnamefont {S.}~\bibnamefont {Fan}},\ and\ \bibinfo {author} {\bibfnamefont {J.}~\bibnamefont {Vu{\v{c}}kovi{\'c}}},\ }\bibfield  {title} {\bibinfo {title} {Photon blockade in weakly driven cavity quantum electrodynamics systems with many emitters},\ }\href@noop {} {\bibfield  {journal} {\bibinfo  {journal} {Physical review letters}\ }\textbf {\bibinfo {volume} {122}},\ \bibinfo {pages} {243602} (\bibinfo {year} {2019}{\natexlab{b}})}\BibitemShut {NoStop}%
\bibitem [{\citenamefont {Ruskai}(1994)}]{ruskai1994beyond}%
  \BibitemOpen
  \bibfield  {author} {\bibinfo {author} {\bibfnamefont {M.~B.}\ \bibnamefont {Ruskai}},\ }\bibfield  {title} {\bibinfo {title} {Beyond strong subadditivity? improved bounds on the contraction of generalized relative entropy},\ }\href@noop {} {\bibfield  {journal} {\bibinfo  {journal} {Reviews in Mathematical Physics}\ }\textbf {\bibinfo {volume} {6}},\ \bibinfo {pages} {1147} (\bibinfo {year} {1994})}\BibitemShut {NoStop}%
\bibitem [{\citenamefont {Perez-Garcia}\ \emph {et~al.}(2006)\citenamefont {Perez-Garcia}, \citenamefont {Verstraete}, \citenamefont {Wolf},\ and\ \citenamefont {Cirac}}]{perez2006matrix}%
  \BibitemOpen
  \bibfield  {author} {\bibinfo {author} {\bibfnamefont {D.}~\bibnamefont {Perez-Garcia}}, \bibinfo {author} {\bibfnamefont {F.}~\bibnamefont {Verstraete}}, \bibinfo {author} {\bibfnamefont {M.~M.}\ \bibnamefont {Wolf}},\ and\ \bibinfo {author} {\bibfnamefont {J.~I.}\ \bibnamefont {Cirac}},\ }\bibfield  {title} {\bibinfo {title} {Matrix product state representations},\ }\href@noop {} {\bibfield  {journal} {\bibinfo  {journal} {arXiv preprint quant-ph/0608197}\ } (\bibinfo {year} {2006})}\BibitemShut {NoStop}%
\bibitem [{\citenamefont {Dong}\ and\ \citenamefont {Petersen}(2010)}]{dong2010quantum}%
  \BibitemOpen
  \bibfield  {author} {\bibinfo {author} {\bibfnamefont {D.}~\bibnamefont {Dong}}\ and\ \bibinfo {author} {\bibfnamefont {I.~R.}\ \bibnamefont {Petersen}},\ }\bibfield  {title} {\bibinfo {title} {Quantum control theory and applications: a survey},\ }\href@noop {} {\bibfield  {journal} {\bibinfo  {journal} {IET control theory \& applications}\ }\textbf {\bibinfo {volume} {4}},\ \bibinfo {pages} {2651} (\bibinfo {year} {2010})}\BibitemShut {NoStop}%
\bibitem [{\citenamefont {d’Alessandro}(2021)}]{d2021introduction}%
  \BibitemOpen
  \bibfield  {author} {\bibinfo {author} {\bibfnamefont {D.}~\bibnamefont {d’Alessandro}},\ }\href@noop {} {\emph {\bibinfo {title} {Introduction to quantum control and dynamics}}}\ (\bibinfo  {publisher} {Chapman and hall/CRC},\ \bibinfo {year} {2021})\BibitemShut {NoStop}%
\bibitem [{\citenamefont {Schirmer}\ \emph {et~al.}(2001)\citenamefont {Schirmer}, \citenamefont {Fu},\ and\ \citenamefont {Solomon}}]{schirmer2001complete}%
  \BibitemOpen
  \bibfield  {author} {\bibinfo {author} {\bibfnamefont {S.~G.}\ \bibnamefont {Schirmer}}, \bibinfo {author} {\bibfnamefont {H.}~\bibnamefont {Fu}},\ and\ \bibinfo {author} {\bibfnamefont {A.~I.}\ \bibnamefont {Solomon}},\ }\bibfield  {title} {\bibinfo {title} {Complete controllability of quantum systems},\ }\href@noop {} {\bibfield  {journal} {\bibinfo  {journal} {Physical Review A}\ }\textbf {\bibinfo {volume} {63}},\ \bibinfo {pages} {063410} (\bibinfo {year} {2001})}\BibitemShut {NoStop}%
\end{thebibliography}%
\ \ 
\newpage

\appendix
\onecolumngrid
\section{Quantum Fisher Information of the light source}
In this section, we provide details on the measurement protocol for QFI of a light source using its correlation functions and Eq.~(\ref{eq:QFI_corr_fun}), the derivation of Eq.~(\ref{eq:qfi_system_dynamics}), which expresses the QFI in terms of the internal dynamics of the system, as well as an application of Eq.~(\ref{eq:QFI_corr_fun}) to study the QFI of some paradigmatic quantum optical systems.

\subsection{Measurement of the QFI from correlation functions}\label{app:qfi_measurement}

To measure the QFI of a light source, it is enough to measure the correlation functions in Eq.~(\ref{eq:QFI_corr_fun}). In this section, {we show that this can be done without a full interferometric measurement of the unknown phase,} for both the cases where there is a single source simultaneously emitting into the two input ports of the MZI, as well as when independent and identical sources emit into these input ports and we only have access to one of these sources.
\begin{figure}[b]
	\centering
	\includegraphics[width=1.0\linewidth]{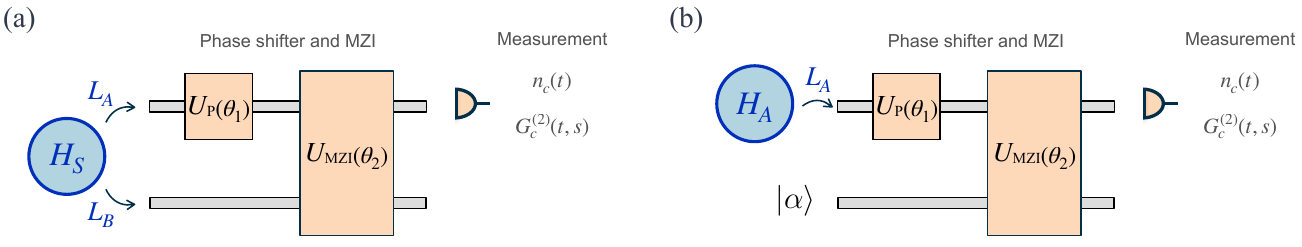}
	\caption{\label{fig:correlation_measurement} (a) Schematic representation of the measurement setup for extracting the 2-point and 4-point correlation functions. We measure the photon count, $n_c(t)$, and the second-order correlation function, $G^{(2)}_c(t,s)$, after applying a phase shifter and an MZI unitary for different known values of $\theta_1$, and $\theta_2$. This measurement enables the extraction of the correlation functions required for computing QFI. (b) If the source emits into a single port, we can use the same measurement, but with a coherent state as the input to the second port. }
\end{figure}

\subsubsection{Single source emitting into both ports} First, we apply a phase shifter $U_P(\theta_1)$ to one of the output arms of the source (here $a_t$), and then pass it through an MZI with the known phase $\theta_2$ (Fig.~\ref{fig:correlation_measurement}). This transforms the annihilation operator $a_t$ to
\begin{equation}
    c_t = U_{\text{MZI}}^\dag(\theta_2) U_P^\dag(\theta_1) a_t U_P(\theta_1)U_{\text{MZI}}(\theta_2) = a_t e^{i\theta_1} \cos(\theta_2) + b_t \sin(\theta_2).
\end{equation}
If we measure the photon flux at the output we get,
\begin{equation}
    n_c(t) = \langle c_t^\dag c_t \rangle = \cos^2(\theta_2) \langle a_t^\dag a_t \rangle + \sin^2(\theta_2) \langle b_t^\dag b_t \rangle + \sin(2\theta_2) \Re \big(\langle a_t^\dag b_t\rangle e^{-i\theta_1} \big).
\end{equation}
Thus, if we choose $\theta_1 = \pi/2$ and measure $n_c(t)$ as a function of $\theta_2$, we can extract the 1-point correlations $\mathcal{C}^{(1)}_{a;a}(t)$, $\mathcal{C}^{(1)}_{b;b}(t)$, and $\mathcal{C}^{(1)}_{a;b}(t)$. Next, we perform a two-photon correlation measurement at the output port to obtain $G^{(2)}_c(t, s) = \langle c_t^\dagger c_s^\dagger c_t c_s\rangle$. This yields
\begin{subequations}\label{eq:g2_measure}
\begin{align}
    G^{(2)}_c(t,s) = \cos^4(\theta_2) g_{40} + \sin^4(\theta_2)g_{04} +\cos^2(\theta_2)\sin^2(\theta_2) g_{22} +
    2\cos^3(\theta_2)\sin(\theta_2) g_{31} +  2\cos(\theta_2)\sin^3(\theta_2) g_{13},
\end{align}
where
\begin{align}
    g_{40} &= \langle a^\dag_t a^\dag_s a_t a_s \rangle , \\
    g_{31} &= \Re\big(\langle a^\dag_t a^\dag_t a_s b_s^\dag\rangle e^{i\theta_1(s)}\big) + \Re\big(\langle a_t b^\dag_t a_s^\dag a_s\rangle e^{i\theta_1(t)}\big),\\ \label{eq:g_22}
    g_{22} & = \langle a^\dag_t b^\dag_s a_t b_s \rangle+ \langle b^\dag_t a^\dag_s b_t a_s \rangle + 2\Re\big(\langle a^\dag_t a^\dag_s b_t b_s \rangle  e^{-i(\theta_1(t)+\theta_1(s))}\big)+ 2\Re\big(\langle a^\dag_t b^\dag_s b_t a_s \rangle  e^{-i(\theta_1(t)-\theta_1(s))}\big), \\
    g_{13} &= \Re\big(\langle a^\dag_t b_t b_s^\dag b_s\rangle e^{-i\theta_1(t)} \big) + \Re\big(\langle b_t^\dag a^\dag_s b_t b_s\rangle e^{-i\theta_1(s)}\big), \text{ and}\\ 
    g_{04} &= \langle b^\dag_t b^\dag_s b_t b_s \rangle,
\end{align}
\end{subequations}
where we assume that the angle of the phase shifter $\theta_1$ can be time-dependent.
Choosing a linear function for $\theta_1(t)$, i.e. $\theta_1(t) = \theta^*_1 t$ and measuring the value of $G^{(2)}_c(t,s)$ as a function of $t$, $s$, $\theta^*_1$, and $\theta_2$, we can find the 2-point correlations $\mathcal{C}^{(2)}_{a, b; b, a}(t, s) = \Re\big(\langle a_{t}^\dagger b^\dagger_{s} b_{t}a_{s}\rangle\big)$, $\mathcal{C}^{(2)}_{b, b; a, a}(t, s) = \Re\big(\langle a^\dagger_{t}a^\dagger_{s} b_{t}b_{s}\big)\rangle$. 

\subsubsection{Single source emitting into one port} If the source emits into one port, we can use the same measurement setup, except that we introduce a coherent state $\ket\alpha$ into the second port. Then Eq.~(\ref{eq:g_22}) becomes
\begin{align}
    g_{22} = |\alpha|^2\langle a^\dag_t a_t \rangle+ |\alpha|^2\langle a^\dag_s a_s \rangle + 2\Re\big(\langle a^\dag_t a^\dag_s \rangle  \alpha^2 e^{-i(\theta_1(t)+\theta_1(s))}\big)+ 2\Re\big(\langle a^\dag_t a_s \rangle  |\alpha|^2 e^{-i(\theta_1(t)-\theta_1(s))}\big) \big).
\end{align} 
We again choose $\theta_1(t) = \theta_1^* t$ and measure \( G_c^{(2)}(t, s) \) for different values of \( \theta_2 \) and \( \theta_1^* \). Thus, we can determine \( \langle a_t a_s \rangle \) and \( \langle a_t^\dag a_s \rangle \). Alternatively, we can fix \( \theta_2 \) and vary \( |\alpha|^2 \) instead.
\subsection{Quantum Fisher Information in terms of the light source dynamics} \label{app:QFI_comp}
In this subsection, we show how to compute the functions $\mathcal{C}_{x^1 \dots x^n; y^1\dots y^n}^{(n)}(t_1 \dots t_n)$ by only tracking the internal dynamics of the source. A result that we will use for this analysis is the Quantum regression theorem \cite{carmichael2013statistical}, which we review below.

\begin{proposition}[Quantum regression theorem]\label{prop:qreg_theorem}
    Suppose $H(t)$ is the Hamiltonian of a system with $M$ bosonic ports
    \[
    H(t) = H_S(t) + \sum_{\alpha = 1}^M (L_\alpha a_{\alpha, t}^\dagger + \text{h.c.}),
    \]
    where $H_S(t), L_\alpha$ are operators on the system Hilbert space, and $a_{\alpha, t}$ are annihilation operators on the bosonic ports with $[a_{\alpha, t}, a^\dagger_{\beta, s}] = \delta_{\alpha, \beta} \delta(t - s)$. Then, for any system operators $O_1, O_2 \dots O_k, Q_1, Q_2 \dots Q_k, P$, system state $\ket{\phi}$ and $0 < t_1 < t_2 \dots < t_{k - 1} < t_k$,
    \begin{align}
    &\bra{\phi, \textnormal{vac}}  \bigg(\prod_{j = 1}^{k - 1} O_j(t_j)\bigg)P(t_k) \bigg(\prod_{j = k - 1}^{1}  Q_j(t_j)\bigg) \ket{\phi, \textnormal{vac}} \nonumber\\
    &\qquad \qquad \qquad  = \textnormal{Tr}\bigg(P \bigg(\prod_{j = k - 1}^{1}\mathcal{E}(t_{j + 1}, t_j)  \mathcal{M}_j \bigg) \mathcal{E}(t_1, t_0)(\ket{\phi}\!\bra{\phi})\bigg),
    \end{align}
    where for system operator $X$, $X(t) = U(0, t) X U(t, 0)$ with $U(t, 0) = \overrightarrow{\mathcal{T}}\textnormal{exp}(-i\int_0^t H(s) ds)$, $\mathcal{E}(t, s) = \overrightarrow{\mathcal{T}}\textnormal{exp}(\int_s^t \mathcal{L}(\tau) d\tau)$ with $\mathcal{L}(\tau) = -i[H_S(t), \cdot] + \sum_{\alpha = 1}^M \mathcal{D}_{L_\alpha}$ and $\mathcal{M}_j(X) = Q_j X O_j$.
\end{proposition}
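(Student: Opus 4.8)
The plan is to convert the Heisenberg-picture multi-time correlator on the left-hand side into a Schrödinger-picture nested expression by repeatedly using the group law $U(t_i,0)U(0,t_j)=U(t_i,t_j)$ together with cyclicity of the trace. Substituting $X(t_j)=U(0,t_j)X\,U(t_j,0)$ into the two ordered products and collapsing adjacent unitary factors, the entire string sandwiched between the copies of $\ket{\phi,\text{vac}}$ telescopes into
\begin{align*}
U(0,t_1)\,O_1\,U(t_1,t_2)\,O_2\cdots U(t_{k-1},t_k)\,P\,U(t_k,t_{k-1})\,Q_{k-1}\cdots U(t_2,t_1)\,Q_1\,U(t_1,0),
\end{align*}
where the hypothesis $0<t_1<\dots<t_k$ is what lets the intervals order consistently. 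Writing the expectation as a trace over the joint system-port space against $\rho_0=\proj{\phi}\otimes\proj{\text{vac}}$ and moving the outermost unitaries onto $\rho_0$ by cyclicity, I would then peel factors off from the earliest time inward: conjugate $\rho_0$ by $U(t_1,0)$, apply the map $\rho\mapsto Q_1\rho\,O_1=\mathcal{M}_1(\rho)$ (which acts only on the system and commutes with all port operators), conjugate by $U(t_2,t_1)$, and so on, inserting $P$ last at time $t_k$. This exhibits the left-hand side as $\textnormal{Tr}(P\,\rho_k)$, with $\rho_k$ built by alternating joint unitary conjugations over the intervals $[t_j,t_{j+1}]$ with insertions of $\mathcal{M}_j$.

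The heart of the proof is to show that each joint conjugation-and-trace over one interval reduces to the system Lindblad channel, i.e.\ the lemma
\begin{align*}
\textnormal{Tr}_{\text{port}}\!\big[U(t',t)\big(\sigma_S\otimes\proj{\text{vac}}\big)U(t,t')\big]=\mathcal{E}(t',t)(\sigma_S).
\end{align*}
This is precisely the microscopic derivation of the Lindblad master equation from the white-noise coupling $L_\alpha a_{\alpha,t}^\dagger+\text{h.c.}$; I would establish it by a coarse-grained time-step expansion, using the vacuum input condition $a_{\alpha,t}\ket{\text{vac}}=0$ and the singular commutator $[a_{\alpha,t},a_{\beta,s}^\dagger]=\delta_{\alpha\beta}\delta(t-s)$ to produce the Lindblad increment $\mathcal{L}(\tau)=-i[H_S(\tau),\cdot]+\sum_\alpha\mathcal{D}_{L_\alpha}$ to second order in $dt$.

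The step that makes the iteration legitimate — and the main obstacle to control — is the causal/Markov independence of the port modes on disjoint intervals. Because $H(s)$ couples the system only to the instantaneous operator $a_{\alpha,s}$, the field modes supported on $[t_j,t_{j+1}]$ never interact with the system before $t_j$ and hence are still in vacuum when that interval's evolution begins, supplying the \emph{fresh} vacuum input the lemma requires. Moreover, since $O_j,Q_j,P$ are system operators they never touch photons already emitted into earlier intervals, so those modes may be traced out immediately after each $\mathcal{M}_j$ insertion; and because $\mathcal{M}_j$ commutes with the partial trace over the port, that trace acts only through the system-reduced state. Iterating the lemma interval by interval then replaces each joint conjugation by $\mathcal{E}(t_{j+1},t_j)$ and gives $\rho_1=\mathcal{E}(t_1,t_0)(\proj{\phi})$ with $t_0=0$, followed by $\rho_{j+1}=\mathcal{E}(t_{j+1},t_j)\mathcal{M}_j(\rho_j)$, so that $\textnormal{Tr}(P\rho_k)$ is exactly the claimed right-hand side. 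The delicate point throughout is the rigorous treatment of the delta-correlated field; this is cleanest if one regularizes the port into discrete time-bins of width $dt$ (a genuine collision model), proves the identity exactly in that setting, and then takes the continuum limit.
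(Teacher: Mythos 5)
Your proposal is correct, and it is the predual (Schr\"odinger-picture) version of the argument the paper actually gives. Both proofs share the same skeleton: telescope the Heisenberg operators into a single string of interval unitaries, factorize the port Hilbert space and the vacuum over the time intervals $(t_{j},t_{j+1}]$, prove a per-interval lemma by discretizing into time-bins and expanding to second order in the step size, and then iterate. The difference is which form of the per-interval lemma carries the induction. The paper works in the Heisenberg/adjoint picture: its key identity is $\smallbra{\text{vac}_{(s,t]}}U(s,t)\,Q\,U(t,s)\smallket{\text{vac}_{(s,t]}}=\mathcal{E}^\dagger(t,s)(Q)$ for a system operator $Q$, where the vacuum matrix element is taken only over the modes supported on $(s,t]$. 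Because the global vacuum factorizes as a tensor product of interval vacua, the paper can insert all of these interval bras and kets simultaneously and absorb each one into a nested $\mathcal{E}^\dagger$, so it never has to reason about the entanglement between the system and the already-used field modes. Your route instead propagates states forward and uses the partial-trace identity $\textnormal{Tr}_{\text{port}}[U(t',t)(\sigma_S\otimes\proj{\text{vac}})U(t,t')]=\mathcal{E}(t',t)(\sigma_S)$; this forces you to justify, at each step, that the modes on earlier intervals are no longer touched by the dynamics or by the system operators $O_j,Q_j,P$ and can therefore be traced out, and that the modes on the upcoming interval are still in vacuum. You identify and handle exactly this point (causality of the white-noise coupling, $\mathcal{M}_j$ commuting with the port trace), so the argument closes; it is simply one extra structural step that the paper's operator-valued nesting sidesteps. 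The two lemmas are adjoints of one another, and the paper itself converts back to the trace/state form in its last line, so the end results coincide.
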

\begin{proof}
    To prove this statement, we note that in time basis, the Hilbert space of any the ports $\mathcal{H}$ can be expressed as $\mathcal{H} = \mathcal{H}_{(-\infty, 0] \cup (t_k, \infty)} \otimes \mathcal{H}_{(0, t_1]} \otimes \mathcal{H}_{(t_1, t_2]} \otimes \dots \otimes \mathcal{H}_{(t_{k - 1}, t_k]}$, where for an interval $I$, $\mathcal{H}_I$ is the Hilbert space of states with photons created by $a_{\alpha, t}$ with $t \in I$, i.e.
    \[
    \ket{\psi} \in \mathcal{H}_I: \ \ket{\psi} = \smallket{\text{vac}_I} + \sum_{n = 1}^\infty \int_{t_1, t_2 \dots t_n \in I} \psi_{\alpha_1, \alpha_2 \dots \alpha_n}(t_1, t_2 \dots t_n) a_{t_1}^\dagger a_{t_2}^\dagger \dots a_{t_n}^\dagger \smallket{\text{vac}_I} dt_1 dt_2 \dots dt_n,
    \]
    where $\smallket{\text{vac}_I}$ is the vacuum corresponding to the Hilbert space $\mathcal{H}_I$. We note that 
    \begin{align}\label{eq:vacuum_decomp}
    \ket{\text{vac}} = \smallket{\text{vac}_{(-\infty, 0] \cup (t_k, \infty)}} \otimes \smallket{\text{vac}_{(t_0, t_1]}} \otimes \smallket{\text{vac}_{(t_1, t_2]}} \otimes \dots \otimes \smallket{\text{vac}_{(t_{k - 1}, t_k]}}
    \end{align}
    Furthermore, for any system operator $Q$ and for $s < t$, we note that
    \begin{align}\label{eq:application_qreg}
    q(t, s) = \smallbra{\text{vac}_{(s, t]}}U(s, t) Q U(t, s) \smallket{\text{vac}_{(s, t]}} = \mathcal{E}^\dagger(t, s)(Q),
    \end{align}
    where $\mathcal{E}^\dagger(t, s)$ is the adjoint of $\mathcal{E}(t, s)$, i.e.,~it is a super-operator satisfying $\text{Tr}(\mathcal{E}^\dagger(t, s)(X) Y) = \text{Tr}(X \mathcal{E}(t, s)(Y))$ for all $X$ and $Y$. It is also easy to show that 
    \[
    \frac{d}{dt}\mathcal{E}^\dagger(t, s) = \mathcal{E}^\dagger(t, s) \mathcal{L}^\dagger(t) \text{ where }\mathcal{L}^\dagger(t) = -i[\cdot, H(t)] + \sum_{\alpha}\mathcal{D}_{L_\alpha}^\dagger,
    \]
    with $\mathcal{D}_L^\dagger(X) = L^\dagger X L - \{L^\dagger L, X\}/2$.
    
    To establish Eq.~(\ref{eq:application_qreg}), we begin by partitioning the interval $(s, t] = (\tau_0, \tau_1] \cup (\tau_1, \tau_2] \cup \dots \cup (\tau_{m - 1}, \tau_m]$, where $\tau_j = s + j \varepsilon $ with $\varepsilon = (t - s) / m$. Then, we note that for any system operator $Q'$,
    \begin{align*}
    &\smallbra{\text{vac}_{(\tau_{k - 1}, \tau_k]}} U(\tau_{k - 1}, \tau_k) Q' U(\tau_k, \tau_{k - 1})\smallket{\text{vac}_{(\tau_{k - 1}, \tau_k]}}\nonumber\\
    & \qquad = Q' -i\int_{\tau_{k - 1}}^{\tau_k}  \smallbra{\text{vac}_{(\tau_{k - 1}, \tau_k]}}[Q', H(s)] \smallket{\text{vac}_{(\tau_{k - 1}, \tau_k]}} ds - \nonumber\\
    &\qquad \qquad \qquad \qquad \qquad \int_{\tau_{k - 1}}^{\tau_k} \int_{\tau_{k - 1}}^s \smallbra{\text{vac}_{(\tau_{k - 1}, \tau_k]}} [[Q', H(s')], H(s)] \smallket{\text{vac}_{(\tau_{k - 1}, \tau_k]}} ds dt + O(\varepsilon^2) \nonumber\\
    &\qquad = Q' - i\int_{\tau_{k - 1}}^{\tau_k} [Q', H_S(s)]ds + \varepsilon \sum_{\alpha}\mathcal{D}_{L_\alpha}^\dagger(Q') + O(\varepsilon^2) = \mathcal{E}^\dagger(\tau_{k}, \tau_{k - 1})(Q') + O(\varepsilon^2).
    \end{align*}
    Therefore,
    \begin{align}
        q(t, s) &= \bigg(\prod_{k = 0}^{m - 1}\smallbra{\text{vac}_{(\tau_k, \tau_{k + 1}]}}\bigg) \bigg(\prod_{k = 0}^{m - 1}U(\tau_{k}, \tau_{k + 1})\bigg) Q \bigg(\prod_{k = m - 1}^0 U(\tau_{k + 1}, \tau_k)\bigg) \bigg(\prod_{k = m - 1}^{0}\smallket{\text{vac}_{(\tau_k, \tau_{k + 1}}}\bigg)
        \nonumber\\
        &= \bigg(\prod_{k = 0}^{m - 1}\smallbra{\text{vac}_{(\tau_k, \tau_{k + 1}]}}U(\tau_k, \tau_{k + 1}) \bigg) Q\bigg(\prod_{k = m - 1}^{0} U(\tau_{k + 1}, \tau_{k })\smallket{\text{vac}_{(\tau_k, \tau_{k + 1}]}} \bigg)\nonumber\\
        &= \bigg(\prod_{k = 0}^{m - 1}\mathcal{E}^\dagger(\tau_{k + 1}, \tau_k)\bigg)Q + O(m \varepsilon^2) \nonumber\\
        &= \mathcal{E}^\dagger(t, s)(Q) + O(\abs{t - s}\varepsilon).
    \end{align}
    Taking $\varepsilon \to 0$ establishes Eq.~\eqref{eq:application_qreg}. Next, using Eqs.~\eqref{eq:vacuum_decomp} and \ref{eq:application_qreg}, we obtain that
    \begin{align}
        &\bra{\phi, \textnormal{vac}}  \bigg(\prod_{j = 1}^{k - 1} O_j(t_j)\bigg)P(t_k) \bigg(\prod_{j = k - 1}^{1}  Q_j(t_j)\bigg) \ket{\phi, \textnormal{vac}} \nonumber\\
        &=\bra{\phi, \textnormal{vac}} U(t_0, t_1) \bigg(\prod_{j = 1}^{k - 1} O_j U(t_j, t_{j + 1})\bigg)P \bigg(\prod_{j = k - 1}^{1} U(t_{j + 1}, t_j) Q_j\bigg) U(t_1, t_0)\ket{\phi, \textnormal{vac}} \nonumber \\
        &=\smallbra{\phi}\bigg[ \smallbra{\text{vac}_{(t_0, t_1]}}U(t_0, t_1)  \bigg(\prod_{j = 1}^{k - 1}O_j \smallbra{\text{vac}_{(t_j,t_{j + 1}]}} U(t_{j}, t_{j + 1})\bigg) P \times \nonumber\\
        &\qquad \qquad \qquad \qquad \qquad  \qquad \qquad \bigg(\prod_{j = k - 1}^1 U(t_{j + 1}, t_j)\smallket{\text{vac}_{(t_j, t_{j + 1}]}} Q_j\bigg) U(t_1, t_0) \smallket{\text{vac}_{(t_0, t_1]}}\bigg] \ket{\phi} \nonumber \\
        & = \smallbra{\phi}\mathcal{E}^\dagger(t_1, t_0) \big(O_1 \mathcal{E}^\dagger(t_2, t_1)\big(O_2\mathcal{E}^\dagger(t_3, t_2)(O_3 \dots \mathcal{E}^\dagger(t_{k - 1}, t_{k - 2})(O_{k - 1}\mathcal{E}^\dagger(t_{k}, t_{k - 1}))(P)Q_{k - 1})\dots Q_3 \big) Q_2\big)Q_1\big)\ket{\phi} \nonumber\\
        &=\textnormal{Tr}\bigg(P \bigg(\prod_{j = k - 1}^{1}\mathcal{E}(t_{j + 1}, t_j)  \mathcal{M}_j \bigg) \mathcal{E}(t_1, t_0)(\ket{\phi}\!\bra{\phi})\bigg),
    \end{align}
    which proves the proposition statement.
\end{proof}

Next, we consider the correlator $\mathcal{C}^{(n)}_{x^1 \dots x^n; y^1 \dots y^n}(t_1 \dots t_n)$ defined in the main text:
\[
\mathcal{C}^{(n)}_{x^1 \dots x^n; y^1 \dots y^n}(t_1 \dots t_n) = \left\langle \prod_{i = 1}^n x^{i\dagger}_{t_i} \prod_{i = 1}^n y^i_{t_i}  \right\rangle,
\]
where $x^i, y^i \in \{a, b\}$. Using Eq.~(\ref{eq:psi}), this correlator can be expressed in the Heisenberg picture as
\begin{align}\label{eq:g_1_heisenberg_pic}
    \mathcal{C}^{(n)}_{x^1 \dots x^n; y^1 \dots y^n}(t_1 \dots t_n) &= \frac{1}{\mathcal{N}^2}\bra{\text{vac}, \phi_{S, i}} \Phi_{S, f}(T)\prod_{i = 1}^n x^{i\dagger}_{t_i}(T) \prod_{i = 1}^n y^i_{t_i}(T)\ket{\text{vac}, \phi_{S, i}}, \nonumber\\
&= \frac{1}{\mathcal{N}^2} \bra{\text{vac}, \phi_{S, i}}  \overleftarrow{\mathcal{T}}\left[\prod_{i = 1}^n x^{i\dagger}_{t_i}(T)\right]\Phi_{S, f}(T) \overrightarrow{\mathcal{T}}\left [\prod_{i = 1}^n y^{i}_{t_i}(T)\right]\ket{\text{vac}, \phi_{S, i}},
\end{align}
where $O(T) = U^\dagger(T, 0) O U(T, 0)$, $\Phi_{S, f} = \ket{\phi_{S, f}}\!\bra{\phi_{S, f}}$ and $\overrightarrow{\mathcal{T}}$ orders Heisenberg-picture operators in the decreasing order of time indices while $\overleftarrow{\mathcal{T}}$ performs this time-ordering in increasing order of time-indices. Note that in Eq.~(\ref{eq:g_1_heisenberg_pic}), the time ordering operators can be inserted fictitiously since all the Heisenberg-picture operators are evaluated at time $T$. Next, from the input-output formalism \cite{collett1984squeezing}, it follows that
\begin{align}\label{eq:input_output}
    x_t(T) = x_t(0) - iL_X(t) \Theta(0 \leq t \leq T),
\end{align}
where the indicator function $\Theta(0\leq t \leq T) = 1$ if $t \in (0, T)$, $1/2$ if $t \in \{0, T\}$ and $0$ otherwise. We now substitute Eq.~(\ref{eq:input_output}) into Eq.~(\ref{eq:g_1_heisenberg_pic}): Due to the time-ordering operator, we can then move $y_t^i(0)$ to the right and $x_t^{i\dagger}(0)$ to the left and apply them on the vacuum state in the port. Since $y_t^i(0)\ket{\text{vac}} = 0$ and $\bra{\text{vac}}x_t^{i\dagger}(0) = 0$, we obtain that 
\begin{align}\label{eq:qfi_corr_fun_heis_full}
    \mathcal{C}^{(n)}_{x^1 \dots x^n; y^1\dots y^n}(t_1 \dots t_n) = \frac{1}{\mathcal{N}^2} \bra{\textnormal{vac}, \phi_{S, i}} \overleftarrow{\mathcal{T}}\left [ \prod_{i = 1}^n L_{X^i}^\dagger(t_i) \right ] \Phi_{S, f}(T) \overrightarrow{\mathcal{T}}\left [\prod_{i = 1}^n L_{Y^i}(t_i) \right ]\ket{\textnormal{vac}, \phi_{S, i}}.
\end{align}
Now, applying the quantum regresssion theorem (\cref{prop:qreg_theorem}) to Eq.~(\ref{eq:qfi_corr_fun_heis_full}), we obtain Eq.~(\ref{eq:qfi_system_dynamics}) from the main text.

\subsection{Examples}\label{app:qfi_examples}
\begin{figure}
	\centering
	\includegraphics[width=0.85\linewidth]{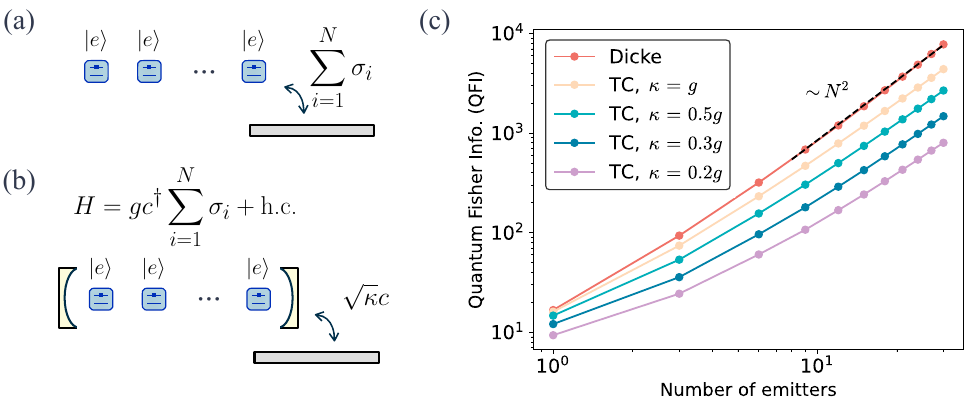}
	\caption{\label{fig:QFI_N} (a) Dicke model setup. All atoms are initially excited and collectively emit \( N \) photons into the port. The jump operator governing this process is given by $\sum_i\sigma_i$. (b) Tavis-Cumming model setup. All atoms are initially excited and interact with a cavity mode with an interaction strength \( g \). The annihilation operator of the cavity mode is denoted as \( c \). Photons are emitted from the cavity into the port at a decay rate $\kappa$.  (c) Scaling of the QFI with the number of emitters for the Dicke and Tavis-Cumming setup.}
\end{figure}
As examples of applying \cref{eq:QFI_corr_fun,eq:qfi_system_dynamics}, we consider some paradigmatic systems used as quantum light sources and study the QFI of the emitted photons. In all of our examples, we will consider the setting of two identical and independent sources emitting into the two input ports of the MZI.

\subsubsection{Harmonic Oscillator}
Consider first the simplest setting of an optical cavity with annihilation operator $a$ and frequency $\omega_c$. We consider the source Hamiltonian to be $H = \omega_c a^\dagger a$ and jump operator $L = a$. We will initialize the cavity in an initial state $\ket{\phi_i}$, and set the total emission time $T \to \infty$. The cavity would then have decayed to the vacuum state $\ket{\text{vac}}$ which we will use as the final state (i.e., $\ket{\phi_f} = \ket{\text{vac}}$).

For simplicity we will choose the reference frequency to be $\omega_c = 0$. For this setup, it is useful to note that $\mathcal{D}_a^\dag(X) = -X/2$ for $X \in \{a, a^\dagger\}$ and $\mathcal{D}_a^\dagger(X) = -X$ for $X \in \{a^2,a^{\dagger 2}, a^\dagger a\}$ which implies that
\begin{align}
&\mathcal{E}^\dagger(t, s)(X) = e^{-({t - s})/2} X \text{ if }X \in \{a, a^\dagger\}, \nonumber\\
&\mathcal{E}^\dagger(t, s)(X) = e^{-(t - s)} X \text{ if } X \in \{a^2, a^{\dagger2}, a^\dagger a\},
\end{align}
where, for a channel $\mathcal{E}(X) = \sum_i K_i X K_i^\dagger$, its adjoint is given by $\mathcal{E}^\dagger(X) = \sum_i K_i X K_i^\dagger$. From Eq.~(\ref{eq:QFI_identical_sources}), we then obtain that
\[
\textnormal{QFI} = 8 \bigg(\abs{\bra{\phi_i}a^\dagger a\ket{\phi_i}}^2 - \abs{\bra{\phi_i} a^2 \ket{\phi_i}}^2 + \bra{\phi_i}a^\dagger a \ket{\phi_i}\bigg).
\]
\begin{enumerate}
    \item[(a)] Suppose $\ket{\phi_A} = \ket{\alpha}$, i.e., the initial state is a coherent state with (mean) photon number $N = \abs{\alpha}^2$. We then obtain
    \[
    \textnormal{QFI} = 8\abs{\alpha}^2 \sim N \text{ for large }N,
    \]
    and matches the expected scaling.
    \item[(b)] Suppose $\ket{\phi_A} = \ket{N}$, i.e., the initial state is Fock state, then
    \[
    \textnormal{QFI} = 8(N^2 + N) \sim N^2 \text{ for large }N,
    \]
    and matches the expected scaling.
\end{enumerate}

\subsubsection{Dicke and Tavis-Cumming model}
Next, using the QFI formula in Eq.~(\ref{eq:qfi_system_dynamics}) to compute QFI, we simulate and reproduce the result from Ref.~\cite{paulisch2019quantum} for a source described by the Dicke Model, where $N$ excited 2-level systems emit collectively into the optical port [Fig.~\ref{fig:QFI_N}(a)]. In this case, the source Hamiltonian $H = 0$ and the jump operator $L = \sum_{i=1}^N \sigma_i$, where $\sigma_i = \ket{g_i}\!\bra{e_i}$. In Fig.~\ref{fig:QFI_N}(c), we plot the QFI as a function of the number of emitters, $N$. The QFI scales quadratically with $N$ as expected.

Next, we consider a source described by the Tavis-Cumming model \cite{trivedi2019photon}, which comprises of an optical cavity coupling to $N$ emitters [Fig.~\ref{fig:QFI_N}(b)]. Here, the source Hamiltonian is described by
\begin{equation}
    H = g c^\dag \sum_i(\sigma_i + \sigma_i^\dagger)+ \text{h.c.} ,
\end{equation}
where \( g \) is the coupling strength between the two-level systems and the cavity, \( c \) is the annihilation operator of the cavity, and \( \sigma_i = \ket{g_i}\!\bra{e_i} \). Initially, all \( N \) emitters are excited and subsequently emit into the cavity, which then emits into the optical port. The photon emission from the cavity into the output port is described by the jump operator \(L =  \sqrt{\kappa} c \). We simulate the QFI of the emitted photons for different values of the cavity-to-port decay rates, $\kappa$ relative to the emitter-cavity interaction strength $g$ [Fig.~\ref{fig:QFI_N}(c)]. In all cases, we observe that the QFI scales quadratically with the number of emitters $N$ (which is equal to the number of emitted photons), demonstrating Heisenberg-limited scaling. Furthermore, for larger values of $\kappa$, the QFI for the Tavis-Cumming setup approaches the Dicke case, as expected.

\begin{figure}
	\centering
	\includegraphics[width=0.9\linewidth]{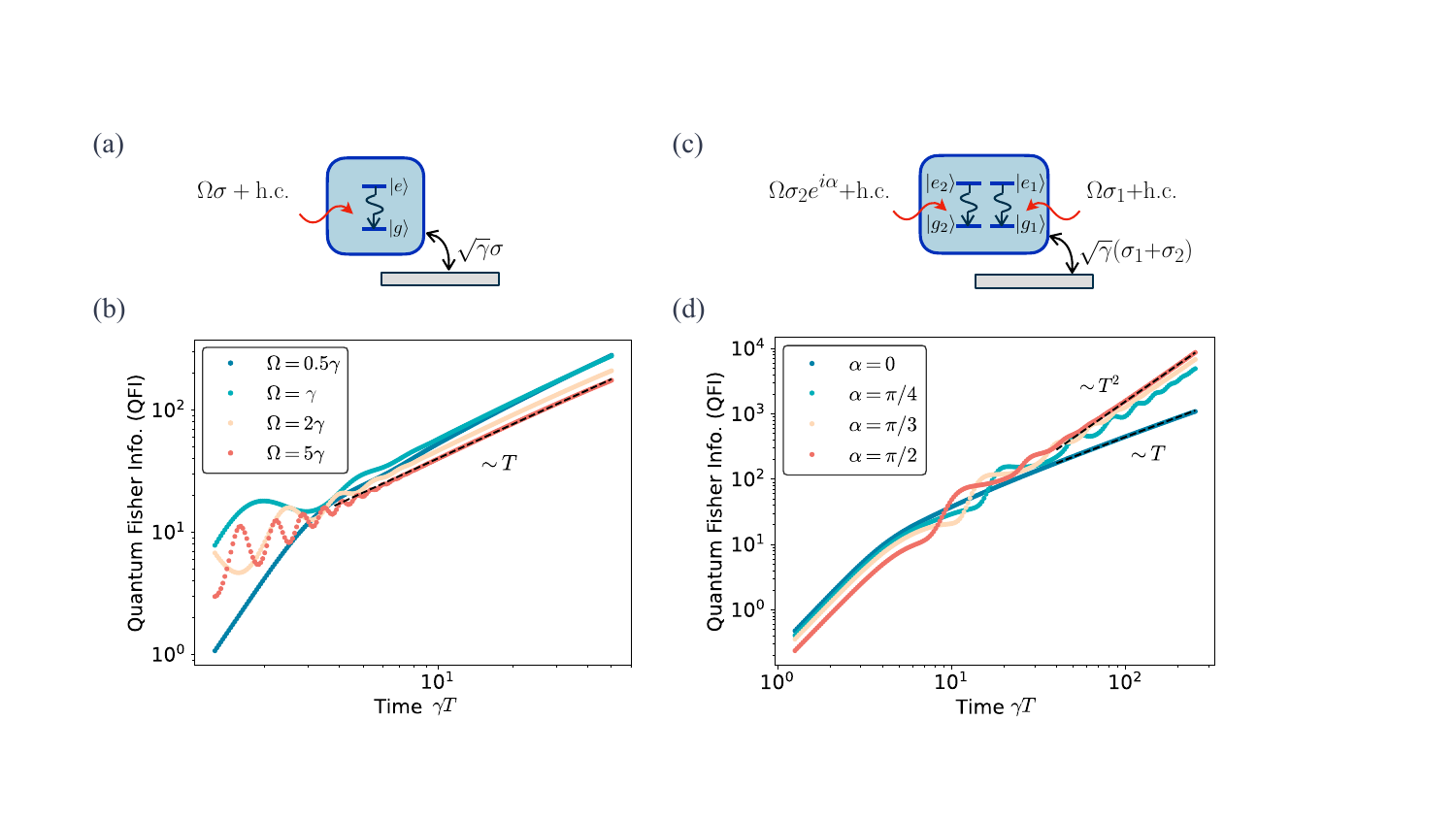}
	\caption{\label{fig:2level_QFI_T} (a) Schematic of the 2-level system setup, where the system is driven by a laser with amplitude \( \Omega \) and emits photons into the port at a decay rate \( \gamma \). (b) Scaling of the QFI with time for the 2-level system. The QFI scales linearly with time for different values of $\Omega$. (c) Schematic of the $\pi$-level system setup, where the system is driven by two lasers with amplitude \( \Omega \) and phase difference \( \alpha \), emitting photons into the port at a decay rate \( \gamma \). (d) Scaling of the QFI with time for the $\pi$-level system with $\Omega = \gamma /\sqrt{8}$. The QFI scales quadratically with time for non-zero $\alpha$, while for $\alpha =0$ it scales linearly.}
\end{figure}

\subsubsection{Driven 2-level and $\pi-$level source}
Next, we consider a continuously driven two level system [Fig.~\ref{fig:2level_QFI_T}(a)]. The source Hamiltonian for this system is given by $H = (\Omega \sigma + \text{h.c.})$ and the jump operator $L = \sqrt{\gamma}\sigma$ where $\ket{g}\!\bra{e}$. The Lindbladian specified by $H$ and $L$ has a unique fixed point $\uptau$ given by
\begin{align}\label{eq:2ls_fp}
\uptau = \frac{1}{1 + 8 \abs{\Omega}^2}\big((1 + 4\abs{\Omega}^2)\ket{g}\!\bra{g} +4\Omega^2\ket{e}\!\bra{e} + 2i\Omega\ket{g}\!\bra{e} - 2i\Omega^* \ket{e}\!\bra{g}\big).
\end{align}
Since this fixed point is unique, as per the discussion in Section \ref{sec:source_spectrum_qfi}, we expect the QFI to scale linearly with $T$ asymptotically. This is numerically demonstrated in Fig.~\ref{fig:2level_QFI_T}(b) where we plot the QFI as a function of time for $\Omega\in\{0.5\gamma, \gamma, 2\gamma, 5\gamma\}$.

Next we consider a $\pi-$level system with two excited states $\ket{e_1}, \ket{e_2}$ and two ground states $\ket{g_1}, \ket{g_2}$ [Fig.~\ref{fig:2level_QFI_T}(c)]. The source Hamiltonian $H$ has a coherent drive applied on the two transitions ($\ket{e_1}\to \ket{g_1}$ and $\ket{e_2}\to \ket{g_2}$) but with different phases, i.e., $H = \Omega_0 (\sigma_1 + \sigma_2 e^{i\alpha} + \text{h.c.})$, where $\ket{\sigma_i} = \ket{g_i}\!\bra{e_i}$. The source is coupled to the output port with the jump operator $L = \sigma_1 + \sigma_2$. The Lindbladian specified by $H$ and $L$ has two fixed points, $\uptau_1, \uptau_2$ where $\uptau_1$ is given by Eq.~(\ref{eq:2ls_fp}) with $\ket{e}\to \ket{e_1}, \ket{g}\to \ket{g_1}, \Omega \to \Omega_0 $ and $\uptau_2$ is given by Eq.~(\ref{eq:2ls_fp}) with $\ket{e}\to \ket{e_2}, \ket{g} \to \ket{g_2}, \Omega \to \Omega_0 e^{i\alpha}$. To analyze the QFI of the emitted photons, we note that $\mathcal{E}(t, s)$ can be expressed as
\begin{equation} \label{eq:two_FP}
    \mathcal{E}(t, s) \rho = \tr(P_1\rho) \uptau_1 + \tr(P_2\rho) \uptau_2 + \mathcal{M}(t, s) \rho,
\end{equation}
where $P_i = \ket{e_i}\!\bra{e_i} + \ket{g_i}\!\bra{g_i}$ and $\smallnorm{\mathcal{M}(t, s)}_\diamond \leq O(e^{-\gamma_0(t - s)})$ as $\abs{t - s} \to \infty$ for some $\gamma_0 > 0$. We will consider the initial source state to be $\ket{\phi_i} = (\ket{g_1} + \ket{g_2})/\sqrt{2}$ and the final source state to be $\ket{\phi_f} = (\ket{g_1} + \ket{g_2})/\sqrt{2}$. We now estimate $\text{Q}^{(2)}$ defined in Eq.~(\ref{eq:QFI_identical_sources}). In the limit of $\gamma_0 T \gg 1$, we obtain that
\begin{equation}
    \textnormal{Q}^{(2)} = {2T^2} \frac{ \big( \smallabs{\sum_j \abs{\Tr(L\uptau_j)}^2 \Tr(\uptau_j \rho_f) \Tr(P_j \rho_i)}^2 - \smallabs{\sum_j {\Tr(L\uptau_j)}^2 \Tr(\uptau_j \rho_f) \Tr(P_j \rho_i)}^2 \big) }{\big( {\sum_j \Tr(\uptau_j \rho_f) \Tr(P_j \rho_i)} \big)^2} + O(T),
\end{equation}
where $\rho_i = \ket{\phi_i}\!\bra{\phi_i}$ and $\rho_f = \ket{\phi_f}\!\bra{\phi_f}$. Using the explicit expressions for $\uptau_1, \uptau_2, P_1, P_2$ and $\ket{\phi_i}, \ket{\phi_f}$, we obtain that 
\begin{equation}\label{eq:qfi_pi_ls}
    \textnormal{Q}^{(2)} = {2T^2} \frac{32^2{\Omega_0}^4 - 16^2{\Omega_0}^4\abs{1 + e^{-2i\alpha}}^2}{(1 + 8 {\Omega_0}^2)^4} + O(T) =  \frac{2^{11}{\Omega_0}^4}{(1 + 8 {\Omega_0}^2)^4} T^2 \sin^2\alpha + O(T).
\end{equation}
Figure \ref{fig:2level_QFI_T}(d) shows numerically simulated QFI for the $\pi-$level system as a function $\alpha$. As expected from Eq.~(\ref{eq:qfi_pi_ls}), the QFI shows Heisenberg limited scaling ($\sim T^2$) when $\alpha \neq 0$, while when $\alpha = 0$, the QFI scales as $T$. Physically, this can be attributed to the fact that when $\alpha = 0$, the $\pi-$level system becomes indistiguishable from the 2-level system, and thus the $\text{QFI}\sim T$. On the other hand, when $\alpha \neq 0$, the emitted photons are a superposition of two macroscopic photon states and hence the state exhibits QFI $\sim T^2$.

\section{Quantum Fisher Information scaling with time}
\subsection{Strictly contractive source dynamics implies no quantum advantage in interferometry}\label{app:strictly_contr}

\subsubsection{Strictly contractive dynamics forbid quantum advantage}
We recall from the main text that a Lindbladian $\mathcal{L}(t)$ is said to generate strictly contractive dynamics if $\exists C_0, \tau_0, \gamma_0 > 0$ such that $\forall $ states $\rho_1, \rho_2$, 
\begin{align}\label{eq:strictly_contr_app}
    \norm{\mathcal{E}(t, s)(\rho_1 - \rho_2)}_1 \leq C_0 e^{-\gamma_0\abs{t - s}}\norm{\rho_1 - \rho_2}_1 \qquad \forall  \ \abs{t - s} \geq \tau_0,
\end{align}
where $\mathcal{E}(t, s) = \mathcal{T}\text{exp}(\int_s^t \mathcal{L}(\tau)d\tau)$. We first establish that this condition implies that the QFI introduced in Eqs.~\eqref{eq:QFI_identical_sources} scales as $T$, and thus exhibits standard quantum limit (SQL) scaling. We begin by noting the following simple fact.
\begin{lemma}
    Suppose $\mathcal{E}(t, s)$ is strictly contractive [Eq.~(\ref{eq:strictly_contr_app})], then for any traceless (possibly non-Hermitian) operator $B$,
    \begin{align}\label{eq:strict_contr_non_herm}
\norm{\mathcal{E}(t, s)(B)}_1 \leq 2C_0 e^{-\gamma_0{\abs{t - s}}}\norm{B}_1 \ \forall \abs{t - s} \geq \tau_0.
\end{align}
\end{lemma}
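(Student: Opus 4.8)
The plan is to reduce the general (non-Hermitian) case to the hypothesis \eqref{eq:strictly_contr_app}, which is stated only for differences of density matrices $\rho_1 - \rho_2$, in two stages: first I would establish the bound for traceless \emph{Hermitian} operators, and then split an arbitrary traceless $B$ into its Hermitian and anti-Hermitian parts.

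First I would treat a traceless Hermitian operator $H$ via its Jordan decomposition $H = H_+ - H_-$ into positive and negative parts, which have orthogonal supports so that $\norm{H}_1 = \text{Tr}(H_+) + \text{Tr}(H_-)$. Tracelessness forces $\text{Tr}(H_+) = \text{Tr}(H_-) =: p$, whence $\norm{H}_1 = 2p$. If $p = 0$ then $H = 0$ and the claim is trivial; otherwise $\rho_1 = H_+/p$ and $\rho_2 = H_-/p$ are genuine density matrices, $H = p(\rho_1 - \rho_2)$, and the orthogonality of supports gives $\norm{\rho_1 - \rho_2}_1 = 2$. Applying the hypothesis \eqref{eq:strictly_contr_app} and using $p = \norm{H}_1/2$ then yields $\norm{\mathcal{E}(t,s)(H)}_1 \le p\, C_0 e^{-\gamma_0\abs{t-s}} \norm{\rho_1 - \rho_2}_1 = C_0 e^{-\gamma_0\abs{t-s}}\norm{H}_1$ for all $\abs{t-s}\ge\tau_0$.

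Next I would write a general traceless $B$ as $B = B_1 + i B_2$ with $B_1 = (B+B^\dagger)/2$ and $B_2 = (B-B^\dagger)/(2i)$ both Hermitian. Since $\text{Tr}(B) = 0$ implies $\text{Tr}(B_1) = \Re(\text{Tr}\, B) = 0$ and $\text{Tr}(B_2) = \text{Im}(\text{Tr}\, B) = 0$, both parts are traceless Hermitian, so the first stage applies to each. By linearity of $\mathcal{E}(t,s)$, the triangle inequality for $\norm{\cdot}_1$, and the elementary bounds $\norm{B_1}_1, \norm{B_2}_1 \le \norm{B}_1$ (which follow from $\norm{B^\dagger}_1 = \norm{B}_1$), I obtain $\norm{\mathcal{E}(t,s)(B)}_1 \le C_0 e^{-\gamma_0\abs{t-s}}\big(\norm{B_1}_1 + \norm{B_2}_1\big) \le 2C_0 e^{-\gamma_0\abs{t-s}}\norm{B}_1$, which is exactly \eqref{eq:strict_contr_non_herm}.

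There is no serious obstacle here — the argument is essentially bookkeeping — but the step requiring care is verifying that both the positive/negative parts of stage one and the Hermitian/anti-Hermitian parts of stage two are genuinely traceless, since only then can they be recast as (scalar multiples of) differences of density matrices to which the hypothesis applies. The factor of $2$ in the final bound arises precisely from summing the two Hermitian contributions, each of which individually obeys the sharper constant $C_0$.
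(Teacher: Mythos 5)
Your proof is correct and follows essentially the same route as the paper's: the positive/negative (Jordan) decomposition of a traceless Hermitian operator to get the bound with constant $C_0$, followed by the split into Hermitian and anti-Hermitian parts with the triangle inequality and $\norm{B_h}_1,\norm{B_{ah}}_1\le\norm{B}_1$ to produce the factor of $2$. The only difference is cosmetic — you explicitly dispose of the degenerate case $p=0$, which the paper leaves implicit.
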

\begin{proof}
   Consider first the case where $B = B^\dagger$. Suppose $B$ has eigenvalues and eigenvectors  $(\lambda_\alpha, \ket{\alpha})$ --- then $B = B^{(+)} - B^{(-)}$
   \[
   B^{(+)} = \sum_{\alpha: \lambda_\alpha \geq 0}\lambda_\alpha \ket{\alpha}\!\bra{\alpha} \text{ and }B^{(-)} = -\sum_{\alpha: \lambda_\alpha < 0}\lambda_\alpha \ket{\alpha}\!\bra{\alpha}.
   \]
   Note that $B^{(+)}, B^{(-)} \succeq 0$ and $\text{Tr}(B^{(+)}) = \text{Tr}(B^{(-)}) = \norm{B}_1/2$. Therefore, using Eq.~(\ref{eq:strictly_contr_app}) with $\rho_1 = 2B^{(+)} / \norm{B}_1, \rho_2 = 2B^{(-)}/\norm{B}_1$, we obtain that for $\abs{t - s} \geq \tau_0$,
\begin{align}\label{eq:strict_contr_herm}
\norm{\mathcal{E}(t, s)(B)}_1 \leq C_0 e^{-\gamma_0 \abs{t - s}} \norm{B}_1.
\end{align}
Finally, any traceless (but possibly non-Hermitian) operator $B$ can be expressed as $B = B_h + i B_{ah}$ where $B_h = (B + B^\dagger)/2$, $B_{ah} = i(B^\dagger - B)/2$ are both Hermitian operators. Using Eq.~(\ref{eq:strict_contr_herm}), we then obtain that for $\abs{t - s}\geq \tau_0$
\begin{align}\label{eq:strict_contr_non_herm}
\norm{\mathcal{E}(t, s)(B)}_1 \leq \norm{\mathcal{E}(t, s)(B_h)}_1 + \norm{\mathcal{E}(t, s)(B_{ah})}_1 \leq C_0 e^{-\gamma_0\abs{t - s}}(\norm{B_h}_1 + \norm{B_{ah}}_1) \leq 2C_0 e^{-\gamma \abs{t - s}}\norm{B}_1,
\end{align}
which proves the Lemma.
\end{proof}
\noindent Next, we show that strict contractivity of the dynamics implies that expected values of time-domain annihilation operators in the emitted state approximately factorize. We consider here the setting introduced in the main text of the source coupling to a set of output ports described by the time-domain annihilation operator $a_{\alpha,  t}$,
\begin{align}\label{eq:source_port_hamiltonian}
H(t) = H_S(t) + \sum_{\alpha} \big(a_{\alpha, t}^\dagger L_\alpha +\text{h.c.}\big).
\end{align}
The source dynamics will be captured by the Lindbladian $\mathcal{L}(t) = -i[H_S(t), \cdot] + \sum_{\alpha}\mathcal{D}_{L_\alpha}$. We will now consider the state $\ket{\psi}$ in the output ports obtained on evolving an initial state $\ket{\phi_i, \text{vac}}$ (i.e.,~the source being in the state $\ket{\phi_i}$ and the output ports being in the vacuum state) for time $T$ and projecting the source onto the state $\ket{\phi_f}$:
\[
\ket{\psi} = \frac{1}{\mathcal{N}} \bra{\phi_f}U(T, 0) \ket{\phi_i, \text{vac}} \text{ where } U(t, s) = \mathcal{T}\exp\bigg(-\int_{s}^t H(\tau) d\tau\bigg) \text{ and } \mathcal{N} = \smallabs{\bra{\phi_f}U(T, 0) \ket{\phi_i, \text{vac}}}^2.
\]
We now establish the fact that, when the source dynamics is strictly contractive, the connected correlators in the emitted photonic state at two different time instants decay exponentially with their time-difference.
\begin{lemma}[Decay of connected correlators]\label{lemma:conn_corr_decay}
    Suppose $\mathcal{E}(t, s) = \overrightarrow{\mathcal{T}}\textnormal{exp}(\int_s^t \mathcal{L}(\tau)d\tau)$, where $\mathcal{L}(\tau)$ corresponding to the source-port Hamiltonian in Eq.~(\ref{eq:source_port_hamiltonian}), is strictly contractive [Eq.~(\ref{eq:strictly_contr_app})], then ,
    \begin{align}\label{eq:decay_connected_corr_ann}
     \smallabs{\bra{\psi} a_{\alpha, t} a_{\beta, s}\ket{\psi} - \bra{\psi} a_{\alpha, t}\ket{\psi} \bra{\psi}a_{\beta, s}\ket{\psi}} \leq \frac{8C_0}{\mathcal{N}^4}\smallnorm{L_\alpha}\smallnorm{L_\beta}e^{-\gamma_0(t - s)},
    \end{align}
    and
    \begin{align}\label{eq:decay_connected_corr_crea}
     \smallabs{\bra{\psi} a_{\alpha, t}^\dagger a_{\beta, s}\ket{\psi} - \bra{\psi} a_{\alpha, t}^\dagger\ket{\psi} \bra{\psi}a_{\beta, s}\ket{\psi}} \leq \frac{8C_0}{\mathcal{N}^4}\smallnorm{L_\alpha}\smallnorm{L_\beta}e^{-\gamma_0(t - s)},
    \end{align}
    where $0 \leq s < t\leq T$ and $t - s \geq \tau_0$.
\end{lemma}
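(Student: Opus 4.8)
The plan is to reduce all three ingredients of the connected correlator to the internal source dynamics and then apply strict contractivity through a single traceless decomposition. First I would invoke the input--output relation [Eq.~(\ref{eq:input_output})] and the quantum regression theorem (\cref{prop:qreg_theorem}), as in \cref{app:QFI_comp}, to write for $0\le s<t\le T$
\begin{align*}
\bra{\psi}a_{\alpha,t}a_{\beta,s}\ket{\psi} &= \frac{1}{\mathcal{N}^2}\Tr\!\big[\mathcal{P}_f(t)\,L_\alpha\,\mathcal{E}(t,s)(L_\beta\rho_i(s))\big],\\
\bra{\psi}a_{\alpha,t}\ket{\psi} &= \frac{1}{\mathcal{N}^2}\Tr\!\big[\mathcal{P}_f(t)\,L_\alpha\,\rho_i(t)\big],
\end{align*}
with $\rho_i(u)=\mathcal{E}(u,0)(\ket{\phi_i}\!\bra{\phi_i})$, $\mathcal{P}_f(u)=\mathcal{E}^\dagger(T,u)(\ket{\phi_f}\!\bra{\phi_f})$, and with overall unit-modulus phases from Eq.~(\ref{eq:input_output}) suppressed since only the modulus is bounded. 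The key move is to express the single-time expectation using the \emph{same} propagator factor: starting from $\mathcal{N}^{-2}\Tr[\ket{\phi_f}\!\bra{\phi_f}\,\mathcal{E}(T,s)(L_\beta\rho_i(s))]$, I split $\mathcal{E}(T,s)=\mathcal{E}(T,t)\mathcal{E}(t,s)$ and push $\mathcal{E}(T,t)$ onto the final projector to obtain
\[
\bra{\psi}a_{\beta,s}\ket{\psi}=\frac{1}{\mathcal{N}^2}\Tr\!\big[\mathcal{P}_f(t)\,\mathcal{E}(t,s)(L_\beta\rho_i(s))\big].
\]

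Next I would set $W=\mathcal{E}(t,s)(L_\beta\rho_i(s))$ and decompose its argument along the trace. With $\mu=\Tr(L_\beta\rho_i(s))$, the operator $B=L_\beta\rho_i(s)-\mu\,\rho_i(s)$ is traceless and satisfies $\norm{B}_1\le 2\norm{L_\beta}$; since $\mathcal{E}(t,s)(\rho_i(s))=\rho_i(t)$ this gives $W=\mu\,\rho_i(t)+E$ with $E=\mathcal{E}(t,s)(B)$. Because $E$ is traceless, the contractivity estimate of the preceding lemma [Eq.~(\ref{eq:strict_contr_non_herm})] applies and yields $\norm{E}_1\le 2C_0 e^{-\gamma_0(t-s)}\norm{B}_1\le 4C_0\norm{L_\beta}e^{-\gamma_0(t-s)}$ whenever $t-s\ge\tau_0$.

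Substituting $W=\mu\,\rho_i(t)+E$ into the three expressions and using $\Tr[\mathcal{P}_f(t)\rho_i(t)]=\mathcal{N}^2$, the single-time factor becomes $\mu+\mathcal{N}^{-2}\Tr[\mathcal{P}_f(t)E]$ and the joint correlator becomes $\mu\,\bra{\psi}a_{\alpha,t}\ket{\psi}+\mathcal{N}^{-2}\Tr[\mathcal{P}_f(t)L_\alpha E]$ (both up to the common phase). Forming the connected correlator, the two copies of $\mu\,\bra{\psi}a_{\alpha,t}\ket{\psi}$ cancel exactly, leaving
\[
\bra{\psi}a_{\alpha,t}a_{\beta,s}\ket{\psi}-\bra{\psi}a_{\alpha,t}\ket{\psi}\bra{\psi}a_{\beta,s}\ket{\psi}
=\frac{1}{\mathcal{N}^2}\Tr[\mathcal{P}_f(t)L_\alpha E]-\frac{\bra{\psi}a_{\alpha,t}\ket{\psi}}{\mathcal{N}^2}\Tr[\mathcal{P}_f(t)E],
\]
which is linear in the exponentially small $E$. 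I would then bound each term by H\"older's inequality $\smallabs{\Tr(XY)}\le\norm{X}_\infty\norm{Y}_1$, using $\norm{\mathcal{P}_f(t)}_\infty\le 1$ (the adjoint channel $\mathcal{E}^\dagger$ is completely positive and unital, so $0\preceq\mathcal{P}_f(t)\preceq I$), $\norm{\mathcal{P}_f(t)L_\alpha}_\infty\le\norm{L_\alpha}$, $\smallabs{\bra{\psi}a_{\alpha,t}\ket{\psi}}\le\norm{L_\alpha}/\mathcal{N}^2$, and $\mathcal{N}^2\le 1$. Each term then contributes at most $4C_0\mathcal{N}^{-4}\norm{L_\alpha}\norm{L_\beta}e^{-\gamma_0(t-s)}$, summing to exactly the bound in Eq.~(\ref{eq:decay_connected_corr_ann}); the creation--annihilation statement Eq.~(\ref{eq:decay_connected_corr_crea}) follows by replacing $L_\alpha\to L_\alpha^\dagger$ and using $\norm{L_\alpha^\dagger}=\norm{L_\alpha}$.

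The step I expect to be the main obstacle is the post-selection operator $\mathcal{P}_f$, which blocks the naive factorization one would use for an unconditioned state. Decomposing directly against $\rho_i(s)$ and identifying $\mu=\Tr(L_\beta\rho_i(s))$ with $\bra{\psi}a_{\beta,s}\ket{\psi}$ leaves a residual $\Tr(L_\beta\rho_i(s))-\bra{\psi}a_{\beta,s}\ket{\psi}$ that does \emph{not} decay with $t-s$, so the argument would fail. The resolution is precisely the routing of the single-time expectation through the intermediate time $t$ in the first step, which makes the non-decaying pieces of the joint and factorized correlators literally identical so that they cancel, leaving only the contractible traceless remainder $E$. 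A secondary, more technical point is justifying the trace representation of the anomalous correlator $\bra{\psi}a_{\alpha,t}a_{\beta,s}\ket{\psi}$ (two annihilation operators), where causality must be used to discard the free-field parts of both Heisenberg operators before applying \cref{prop:qreg_theorem}.
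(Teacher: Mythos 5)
Your proof is correct and follows essentially the same route as the paper: express both correlators via the quantum regression theorem, split off the traceless part $L_\beta\rho_i(s)-\Tr(L_\beta\rho_i(s))\rho_i(s)$, and apply the traceless-operator contractivity estimate, arriving at the same $8C_0\mathcal{N}^{-4}$ constant. The only difference is bookkeeping — you route the one-point function through the intermediate time $t$ so the non-decaying pieces cancel identically, whereas the paper keeps it at time $s$ and absorbs the resulting discrepancy $\Tr(\mathcal{P}_f(s)L_\beta\rho_i(s))-\mathcal{N}^2\Tr(L_\beta\rho_i(s))$ with a second application of contractivity on $[s,T]$.
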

\begin{proof}
    We will establish Eq.~(\ref{eq:decay_connected_corr_ann}), the proof of Eq.~(\ref{eq:decay_connected_corr_crea}) will follow similarly. Following the calculation in Section \ref{app:QFI_comp}, $\bra{\psi}a_{\alpha, t} a_{\beta, s}\ket{\psi}$ and $\bra{\psi}a_{\nu, \tau}\ket{\psi}$ can be expressed as
    \[
    \bra{\psi} a_{\alpha, t} a_{\alpha, s} \ket{\psi} = \frac{1}{\mathcal{N}^2} \text{Tr}(P_f(t) L_\alpha \mathcal{E}(t, s) (L_\beta \rho_i(s))) \text{ and }\bra{\psi}a_{\nu, \tau}\ket{\psi} = \frac{1}{\mathcal{N}^2} \text{Tr}(P_f(\tau)L_\nu \rho_i(\tau)),
    \]
    where $\rho_i(\tau) = \mathcal{E}(\tau, 0)(\rho_i)$ with $\rho_i = \ket{\phi_i}\!\bra{\phi_i}$ and $\mathcal{P}_f(\tau) = \mathcal{E}^\dagger(T, \tau)(\ket{\phi_f}\!\bra{\phi_f})$. Using these expressions, we then obtain
    \begin{align}\label{eq:inequality_1}
        &\smallabs{\bra{\psi} a_{\alpha, t} a_{\beta, s}\ket{\psi} - \bra{\psi} a_{\alpha, t}\ket{\psi} \bra{\psi}a_{\alpha, s}\ket{\psi}}\nonumber\\
        &\qquad = \frac{1}{\mathcal{N}^4}\abs{\text{Tr}(\mathcal{P}_f(t)L_\alpha \mathcal{E}(t, s)(\mathcal{N}^2L_\beta \rho_i(s) - \text{Tr}(\mathcal{P}_f(s)L_\beta \rho_i(s))\rho_i(s))}, \nonumber\\
        &\qquad \leq \frac{\norm{L_\alpha}}{\mathcal{N}^2}\smallnorm{\mathcal{E}(t, s)( L_\beta \rho_i(s) -\text{Tr}(L_\beta \rho_i(s)) \rho_i(s))}_1 + \frac{\norm{L_\alpha}}{\mathcal{N}^4}\smallabs{\text{Tr}(\mathcal{P}_f(s) L_\beta \rho_i(s)) - \mathcal{N}^2\text{Tr}(L_\beta \rho_i(s))}.
    \end{align}
    Similarly, using $\mathcal{N}^2 = \text{Tr}(\mathcal{P}_f\mathcal{E}(T, 0)(\rho_i)) = \text{Tr}(\mathcal{P}_f\mathcal{E}(T, s)\rho_i(s))$, we obtain that
    \begin{align}\label{eq:inequality_2}
        \smallabs{\text{Tr}(\mathcal{P}_f(s) L_\beta \rho_i(s)) - \mathcal{N}^2\text{Tr}(L_\beta \rho_i(s))} &= \smallabs{\text{Tr}(\mathcal{P}_f \mathcal{E}(T, s)(L_\beta \rho_i(s) - \text{Tr}(L_\beta \rho_i(s))\rho_i(s))}, \nonumber\\
        &\leq \smallnorm{\mathcal{E}(T, s)(L_\beta \rho_i(s) - \text{Tr}(L_\beta \rho_i(s)) \rho_i(s))}_1.
    \end{align}
    Combining Eqs.~\eqref{eq:inequality_1} and \eqref{eq:inequality_2}, and using the strict contractivity of $\mathcal{E}(t, s)$, we then obtain that
    \begin{align*}
    \smallabs{\bra{\psi}a_{\alpha, t}a_{\beta, s}\ket{\psi} - \bra{\psi}a_{\alpha, t}\ket{\psi}\bra{\psi}a_{\beta, s}\ket{\psi}} &\leq 2C_0 \norm{L_\alpha} \bigg(\frac{e^{-\gamma_0(t - s)}}{\mathcal{N}^2} + \frac{e^{-\gamma_0(T - s)}}{\mathcal{N}^4}\bigg) \smallnorm{L_\beta \rho_i(s) - \text{Tr}(L_\beta \rho_i(s)) \rho_i(s)}_1, \nonumber \\
    &\leq \frac{8C_0}{\mathcal{N}^4} \norm{L_\alpha}\norm{L_\beta} e^{-\gamma_0(t - s)}.
    \end{align*}
    This establishes Eq.~(\ref{eq:decay_connected_corr_ann}). A similar analysis can be performed to establish Eq.~(\ref{eq:decay_connected_corr_crea}).
\end{proof}

Using lemma \ref{lemma:conn_corr_decay}, we now show that the QFI [given by Eq.~(\ref{eq:QFI_identical_sources})] scales as $\sim T$ for large $T$. Consider the two correlation functions $C^{(g)}(t, s) = \langle a_t^\dagger a_s\rangle$ and $C^{(\chi)}(t, s) = \langle a_t a_s\rangle$. Applying lemma \ref{lemma:conn_corr_decay} together with the fact that $\smallabs{\langle a_t \rangle \langle a_s \rangle} = \smallabs{\langle a_t^\dagger \rangle \langle a_s \rangle}$ and $\smallabs{\langle a_t a_s\rangle}, \smallabs{\langle a_t^\dagger a_s \rangle} \leq \norm{L}^2, \smallabs{a_t}, \smallabs{a_t^\dagger}\leq \norm{L}$, we obtain that
\begin{align}
    \smallabs{C^{(\chi)}(t, s)}^2 - \smallabs{C^{(g)}(t, s)}^2 \leq \frac{64 C_0^2}{\mathcal{N}^8}  \norm{L}^4 e^{-2\gamma_0\abs{t - s}} + \frac{32C_0}{\mathcal{N}^4} \norm{L}^4 e^{-\gamma_0\abs{t - s}} \leq \frac{32 \norm{L}^4}{p_0^4}(2C_0^2 + C_0) e^{-\gamma_0\abs{t- s}},
\end{align}
where $p_0$ is a $T-$independent constant such that $p_0 \leq \mathcal{N}^2$.
Finally, using this together with the expression for $\textnormal{Q}^{(2)}$ [Eq.~(\ref{eq:QFI_identical_sources}b)] we obtain that
\begin{align}
    \text{Q}^{(2)} &= 2\int_0^T \int_0^t \big(\smallabs{C^{(\chi)}(t, s)}^2 - \smallabs{C^{(g)}(t, s)}^2\big) ds dt \nonumber \\
    &= 2\int_0^T \int_{\tau}^T  \big(\smallabs{C^{(\chi)}(t, t - \tau)}^2 - \smallabs{C^{(g)}(t, t - \tau)}^2\big)dt d\tau \nonumber \\
    &\leq \frac{2}{p_0^4}\int_{0}^{\tau_0} \int_{\tau}^T \norm{L}^4 dt d\tau + \frac{64}{p_0^4}\int_{\tau_0}^T \int_{\tau}^T \norm{L}^4 (2C_0^2 + C_0) e^{-\gamma_0 \tau} dt d\tau \nonumber \\
    &\leq \frac{2\norm{L}^4}{p_0^4}\bigg(\tau_0 + \frac{32}{\gamma_0} (2C_0^2 + C_0)  \bigg) T,
\end{align}
which establishes that $\text{Q}^{(2)} \leq O(T)$ as $T \to \infty$ and consequently from Eq.~(\ref{eq:QFI_identical_sources}a), QFI $\leq O(T)$.

\subsubsection{Provable strict contractivity for  Lindbladians with full Kraus rank}\label{app:provable_str_contr_fkr}
Next, we will consider two settings where we are able to establish that the source dynamics are provably strictly contractive irrespective of the time-dependent Hamiltonian applied on the source. The first setting that we will consider is when the source dynamics are modelled by a Lindbladian with a time dependent Hamiltonian as well as a set of jump operators which span the whole space of traceless operators operators. More specifically, for a source with $d$ levels, the source Lindbladian $\mathcal{L}(t)$ is given by
\begin{align}\label{eq:lindblad_str_contr}
    \mathcal{L}(t) = -i[H(t), \cdot] + \sum_{\alpha = 1}^{d^2 - 1} \mathcal{D}_{L_\alpha} = -i[H(t), \cdot] + \sum_{\alpha = 0}^{d^2} \mathcal{D}_{L_\alpha},
\end{align}
where $\{L_\alpha\}_{\alpha \in \{1, 2 \dots d^2 - 1\}}$ span the space of traceless system operators. Furthermore, we have artificially added the dissipator $\mathcal{D}_{L_0}$ with $L_0 = I$ for convenience without changing the master equation: The set of jump operators $\{L_\alpha\}_{\alpha \in \{0, 1 \dots d^2 - 1\}}$ form a complete basis for the space of system operators. Due this completeness of $\{L_\alpha\}_{\alpha \in \{0, 1, 2 \dots d^2 - 1\}}$, $\exists \lambda_0 > 0$ such that
\begin{align}\label{eq:jump_op_full_rank}
\forall \ \text{all system operators } X: \sum_{\alpha} \smallabs{\text{Tr}(L_\alpha^\dagger X)}^2 \geq \lambda_0 \text{Tr}(X^\dagger X).
\end{align}
We first show that the Lindbladian [Eq.~(\ref{eq:lindblad_str_contr})] with the jump operators satisfying Eq.~(\ref{eq:jump_op_full_rank}) generates strictly contractive dynamics.

\begin{proposition}\label{prop:contractivity_full_kraus}
    Suppose $\mathcal{E}(t, s) = \overrightarrow{\mathcal{T}}\textnormal{exp}(\int_s^t \mathcal{L}(\tau) d\tau)$ is the channel generated by a Lindbladian [Eq.~(\ref{eq:lindblad_str_contr})] whose jump operators satisfy Eq.~(\ref{eq:jump_op_full_rank}), then $\forall t > s$ and any two states $\rho_1, \rho_2$,
    \[
    \smallnorm{\mathcal{E}(t, s)(\rho_1 - \rho_2)}_1 \leq e^{-\lambda_0 \abs{t - s}}\smallnorm{\rho_1 - \rho_2}_1.
    \]
\end{proposition}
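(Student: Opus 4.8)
The plan is to derive a differential inequality for $\phi(t) = \norm{\mathcal{E}(t,s)(X)}_1$ with $X = \rho_1 - \rho_2$ and then integrate it using Gr\"onwall's comparison lemma. First I would reduce to the case where $X$ is traceless and Hermitian: since $\mathcal{E}(t,s)$ is trace preserving and Hermiticity preserving, $X(t) := \mathcal{E}(t,s)(X)$ stays traceless and Hermitian, with $\phi(s) = \norm{\rho_1-\rho_2}_1$. If $X = 0$ the claim is trivial, so assume $X \neq 0$; being traceless and nonzero, $X(t)$ has at least one strictly positive and one strictly negative eigenvalue at every $t$, a fact I will use at the end.

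The key object is the derivative of $\phi$. Because $t \mapsto X(t)$ is smooth and $\norm{\cdot}_1$ is convex and Lipschitz, $\phi$ is Lipschitz, and I would work with its upper Dini derivative, which equals $D^+\phi(t) = \Tr(S\,\dot X) + \norm{P_0\,\dot X\,P_0}_1$, where $\dot X = \mathcal{L}(t)(X(t))$, $S = \operatorname{sgn}(X(t))$ is the sign operator on the support of $X(t)$, and $P_0$ projects onto $\ker X(t)$; the second term is the subgradient contribution coming from the kernel block. The Hamiltonian part of $\mathcal{L}(t)$ drops out of both terms: since $S$ is a function of $X(t)$ it commutes with $X(t)$, so $\Tr(S[H,X]) = 0$ by cyclicity, while $P_0[H,X]P_0 = 0$ because $XP_0 = P_0 X = 0$; the same argument kills the anticommutator piece of each dissipator in the $P_0$ block.

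Next I would evaluate the dissipative contribution in the eigenbasis $\{\ket{i}\}$ of $X(t)$, with eigenvalues $\lambda_i$. A direct computation gives $\Tr(S\,\mathcal{D}_{L_\alpha}(X)) = \sum_{ij}\abs{\bra{j}L_\alpha\ket{i}}^2\,\abs{\lambda_i}\,(\operatorname{sgn}\lambda_i\operatorname{sgn}\lambda_j - 1)$, which is manifestly nonpositive and splits into an opposite-sign part (both $\lambda_i,\lambda_j \neq 0$, factor $-2$) and a kernel part ($\lambda_j = 0$, factor $-1$). The crucial observation is that the leftover Dini term exactly cancels this kernel part: since $P_0\mathcal{D}_{L_\alpha}(X)P_0 = P_0 L_\alpha X L_\alpha^\dagger P_0$, the triangle inequality gives $\norm{P_0\dot X P_0}_1 \le \sum_i \abs{\lambda_i}\sum_{l:\lambda_l = 0}\sum_\alpha \abs{\bra{l}L_\alpha\ket{i}}^2$, which equals the magnitude of the kernel part of $\sum_\alpha \Tr(S\mathcal{D}_{L_\alpha}(X))$. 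Hence $D^+\phi(t)$ is bounded above by the opposite-sign terms alone, $D^+\phi(t) \le -2\sum_{\lambda_i > 0,\,\lambda_j < 0}\sum_\alpha \abs{\bra{j}L_\alpha\ket{i}}^2 \lambda_i \;-\; 2\sum_{\lambda_i < 0,\,\lambda_j > 0}\sum_\alpha \abs{\bra{j}L_\alpha\ket{i}}^2 \abs{\lambda_i}$.

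Finally I would invoke the full-Kraus-rank hypothesis. Applying the frame inequality~\eqref{eq:jump_op_full_rank} to the rank-one operator $\ket{l}\!\bra{k}$, which satisfies $\Tr((\ket{l}\!\bra{k})^\dagger \ket{l}\!\bra{k}) = 1$, yields $\sum_\alpha \abs{\bra{l}L_\alpha\ket{k}}^2 = \sum_\alpha \abs{\Tr(L_\alpha^\dagger \ket{l}\!\bra{k})}^2 \ge \lambda_0$ for every pair $(l,k)$. Since $X(t)$ always has at least one negative eigenvalue, summing over such $j$ gives $\sum_\alpha\sum_{\lambda_j < 0}\abs{\bra{j}L_\alpha\ket{i}}^2 \ge \lambda_0$ for each positive $i$, and symmetrically with signs reversed, so $D^+\phi(t) \le -2\lambda_0\big(\sum_{\lambda_i > 0}\lambda_i + \sum_{\lambda_i < 0}\abs{\lambda_i}\big) = -2\lambda_0\,\phi(t) \le -\lambda_0\,\phi(t)$. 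The Gr\"onwall comparison lemma then gives $\phi(t) \le e^{-\lambda_0(t - s)}\phi(s)$ (in fact with the sharper rate $2\lambda_0$), which is the assertion. I expect the main obstacle to be the nonsmoothness of the trace norm: rigorously justifying the Dini-derivative formula and, in particular, verifying that the kernel subgradient term is dominated by the negative kernel part of $\Tr(S\dot X)$, so that no positive contribution survives when $X(t)$ is degenerate or develops a kernel.
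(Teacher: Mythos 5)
Your argument is correct, but it takes a genuinely different route from the paper. The paper discretizes time, writes the one-step dissipative map $\mathcal{K}$ as a convex combination $(1-\varepsilon\lambda_0)\tilde{\mathcal{K}}+\varepsilon\lambda_0\mathcal{R}_\uptau$ with $\mathcal{R}_\uptau$ the replacement channel onto the (positive-definite) fixed point $\uptau$ of the dissipator, verifies that $\tilde{\mathcal{K}}$ is completely positive by bounding its Choi matrix from below using Eq.~(\ref{eq:jump_op_full_rank}), and then uses the fact that replacement channels annihilate traceless operators to get a contraction factor $(1-\varepsilon\lambda_0)$ per step, hence $e^{-\lambda_0\abs{t-s}}$ in the limit $\varepsilon\to 0$. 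You instead derive a differential inequality for $\phi(t)=\smallnorm{\mathcal{E}(t,s)(X)}_1$ directly, using the subdifferential of the trace norm and applying the frame condition only to the rank-one operators $\ket{j}\!\bra{i}$ built from opposite-sign eigenvectors of $X(t)$ (for which the $L_0=I$ term vanishes, since $\langle i | j\rangle=0$). I checked your eigenbasis computation of $\Tr(S\,\mathcal{D}_{L_\alpha}(X))$, the cancellation of the kernel subgradient term against the $\lambda_j=0$ contributions, and the final Gr\"onwall step; they are all sound, and the existence of at least one eigenvalue of each sign for nonzero traceless $X(t)$ closes the argument. Your approach buys a sharper rate ($2\lambda_0$ rather than $\lambda_0$), avoids the discretization bookkeeping, and does not need positivity of the fixed point or the full frame condition (only its restriction to orthogonal rank-one operators); the paper's approach buys freedom from nonsmooth analysis, since it never differentiates the trace norm and works entirely at the level of channels. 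The one point you should still pin down is the one you flag yourself: the identity $D^+\phi=\Tr(S\dot X)+\smallnorm{P_0\dot X P_0}_1$ as the directional derivative of a convex function along a $C^1$ curve, the characterization of the subdifferential of $\smallnorm{\cdot}_1$ at a Hermitian point (maximizers of $\Tr(YX)$ over $\smallnorm{Y}_\infty\leq 1$ are necessarily block-diagonal with respect to $\textnormal{supp}(X)\oplus\ker X$), and the Gr\"onwall comparison lemma for upper Dini derivatives of continuous functions; all are standard but deserve a citation or a short proof, and the degenerate case $X(t)=0$ should be dispatched by linearity of the evolution.
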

\begin{proof}
    We will denote by $\mathcal{D}$ the Lindbladian $\mathcal{D} = \sum_{\alpha = 1}^{d^2} \mathcal{D}_{L_\alpha}$.
    $\mathcal{D}$ contains contribution from all the jump operators in system Lindbladian which are assumed to be time independent. It will also be convenient for us to work in a discretization of this dynamics: We will discretize the time interval $(t, s] = (t_T, t_{T - 1}] \cup (t_{T - 1}, t_{T - 2}] \cup \dots \cup (t_{1}, t_0]$ where $t_m = s + m \varepsilon$ with $\varepsilon = (t - s)/T$ and approximate the channel $\mathcal{E}(t, s)$ as
    \begin{align}\label{eq:discretization}
    \tilde{\mathcal{E}} = \mathcal{K} \mathcal{U}_{T} \mathcal{K} \mathcal{U}_{T - 1} \dots \mathcal{K} \mathcal{U}_1,
    \end{align}
    where $\mathcal{U}_\tau(X) = U_\tau X U_\tau^\dagger $, with $U_\tau = \overrightarrow{\mathcal{T}}\textnormal{exp}(-i\int_{t_{\tau - 1}}^{t_\tau} H_S(s) ds)$ and $\mathcal{K}(X) = \sum_{\alpha = 0}^{d^2} K_\alpha X K_\alpha^\dagger$ where
    \[
    K_\alpha = \sqrt{\varepsilon} L_\alpha \text{ for }i \in \{1, 2 \dots d^2\}\text{ and }K_0 = \bigg({I - \varepsilon\sum_{\alpha = 1}^{d^2 } L_\alpha^\dagger L_\alpha}\bigg)^{1/2},
    \]
    We note that
    \begin{align}\label{eq:appx_channel_D}
    \smallnorm{e^{\varepsilon \mathcal{D}} - \mathcal{K}}_\diamond \leq O(\varepsilon^2),
    \end{align}
    and consequently 
    \begin{align}\label{eq:discretization_error}
        \smallnorm{\mathcal{E}(t, s) - \tilde{\mathcal{E}}}_\diamond \leq O(T \varepsilon^2) \leq O(\abs{t - s}\varepsilon),
    \end{align}
     and thus the exact dynamics is recovered in the limit of $\varepsilon \to 0$. We denote the fixed point of the Lindbladian $\mathcal{D}$ by $\uptau$. Since the jump operators $\{L_\alpha\}_{\alpha \in\{0, 1, 2\dots d^2 - 1\}}$ have full Kraus rank, $\uptau$ is both the unique fixed point of $\mathcal{D}$ and also positive definite ($\uptau \succ 0$). Furthermore, using Eq.~(\ref{eq:appx_channel_D}) and the fact that $\mathcal{D}(\uptau) = 0$, we obtain that $\smallnorm{\mathcal{K}(\uptau) - \uptau}_1 = \smallnorm{\mathcal{K}(\uptau) - e^{\varepsilon\mathcal{D}}(\uptau)}_1 \leq O(\varepsilon^2)$. Next, we decompose the channel $\mathcal{K}$ into the convex combination of a superoperator $\tilde{\mathcal{K}}$ and the replacement channel $\mathcal{R}_\uptau = \text{Tr}(\cdot)\uptau$:
    \begin{align}\label{eq:decomp_K}
    \mathcal{K} = (1 - \varepsilon \lambda_0)\tilde{\mathcal{K}} + \varepsilon \lambda_0 \mathcal{R}_{\uptau}.
    \end{align}
    We first show that $\tilde{\mathcal{K}}$ defined this way is a valid channel. First, we note that $\tilde{\mathcal{K}}$ is trace-preserving since both $\mathcal{K}$ and $\mathcal{R}_\uptau$ are trace preserving. Second, to ensure that is also completely positive, by the Choi-Jamiolkowski isomorphism, we need to show that $\Phi_{\tilde{\mathcal{K}}} = (\Phi_{\mathcal{K}} - \varepsilon \lambda_0  \Phi_{\mathcal{R}_\uptau}) / (1 - \varepsilon \lambda_0) \succeq 0$, where $\Phi_\mathcal{S} = (\mathcal{S}\otimes \text{id})(\ket{\Phi^+}\!\bra{\Phi^+})$ is the Choi state of $\mathcal{S}$. We note that for a state $\ket{\psi} = \sum_{i, j}\psi_{i, j}\ket{i, j} \in \mathbb{C}^d \otimes \mathbb{C}^d$, and its corresponding operator $\ket{\Psi} = \sum_{i, j} \psi_{i, j}\ket{i}\! \bra{j}$,
    \begin{align}\label{eq:choi_state_K_action}
    \bra{\psi} \Phi_{\mathcal{K}}\ket{\psi} &= \frac{1}{d}\sum_{\alpha} \sum_{i, i'} \sum_{p, p'} \psi_{p, i}^* \psi_{p', i'} \bra{p}K_\alpha \ket{i} \bra{i'}K_\alpha^\dagger \ket{p'} \nonumber\\
    &=\frac{1}{d} \sum_{\alpha} \abs{\text{Tr}(K_\alpha^\dagger \Psi)}^2 \nonumber \\
    &\geq \frac{\varepsilon}{d}\sum_{\alpha \geq 1}\abs{\text{Tr}(L_\alpha^\dagger \Psi)}^2 \geq \frac{\varepsilon \lambda_0}{d}\norm{\psi}^2.
    \end{align}
    Similarly,
    \begin{align}\label{eq:choi_state_R_action}
    \bra{\psi}\Phi_{\mathcal{R}_\uptau}\ket{\psi} = \frac{1}{d}\bra{\psi}(I \otimes \uptau) \ket{\psi} \leq \frac{1}{d} \norm{\psi}^2 \norm{\uptau} \leq \frac{1}{d}\norm{\psi}^2.
    \end{align}
    Therefore, from Eqs.~\eqref{eq:choi_state_K_action} and \eqref{eq:choi_state_R_action}, we find that for every $\ket{\psi} \in \mathbb{C}^d \otimes \mathbb{C}^d$, 
    \begin{align*}
        \bra{\psi} \Phi_{\tilde{\mathcal{K}}} \ket{\psi} = \frac{1}{1 - \varepsilon \lambda_0}\bigg(\bra{\psi} \mathcal{K}\ket{\psi} - \varepsilon \lambda_0 \bra{\psi}\mathcal{R}_\uptau \ket{\psi} \bigg) \geq 0,
    \end{align*}
    and thus $\tilde{\mathcal{K}}$ is a valid channel. Using Eq.~(\ref{eq:decomp_K}), we can then show that $\mathcal{K}$ is strictly contractive since for any two states $\rho_1, \rho_2$,
    \[
    \norm{\mathcal{K}(\rho_1 - \rho_2)}_1 = (1 - \varepsilon \lambda_0) \smallnorm{\tilde{\mathcal{K}}(\rho_1 - \rho_2)}_1 \leq (1 - \varepsilon \lambda_0) \norm{\rho_1 - \rho_2}_1.
    \]
    Using this together with Eq.~(\ref{eq:discretization}), we obtain that
    \[
    \smallnorm{\mathcal{\tilde{E}}(\rho_1 - \rho_2)}_1 \leq (1 - \varepsilon \lambda_0)^T \norm{\rho_2 - \rho_1}.
    \]
    Using Eq.~(\ref{eq:discretization_error}), we obtain that 
    \[
    \smallnorm{\mathcal{E}(t, s)(\rho_1 - \rho_2)}_1 \leq \smallnorm{\mathcal{\tilde{E}}(\rho_1 - \rho_2)}_1 + O(\abs{t - s}\varepsilon) \leq (1 - \varepsilon\lambda_0)^T \norm{\rho_1 - \rho_2} + O(\abs{t - s}\varepsilon).
    \]
    Finally, taking the limit $\varepsilon \to 0$ in this inequality, and setting $(1 - \varepsilon \lambda_0)^T \to e^{-\lambda_0 \abs{t - s}}$, we obtain the proposition.
\end{proof}

\subsubsection{Strict contractivity of the two-level system}\label{app:provable_str_contr_tls}
We next consider the Lindbladian describing a two-level system with states $\ket{e}, \ket{g}$ with a jump operator $L =\sqrt{\gamma}\sigma$ where $\sigma = \ket{g}\!\bra{e}$ and any time-dependent Hamiltonian $H(t)$:
\begin{align}\label{eq:master_equation_2ls_time_dep}
    \mathcal{L}(t) = -i[H(t), \cdot] + \gamma \mathcal{D}_\sigma.
\end{align}
We show that for this system as well, the dynamics generated by $\mathcal{L}(t)$ is strictly contractive.
\begin{proposition}
   Suppose $\mathcal{E}(t, s) = \overrightarrow{\mathcal{T}}\textnormal{exp}(\int_s^t \mathcal{L}(\tau) d\tau)$ is the channel generated by the two-level system Lindbladian in Eq.~(\ref{eq:master_equation_2ls_time_dep}) then $\forall t > s$ and any two states $\rho_1, \rho_2$,
    \[
    \smallnorm{\mathcal{E}(t, s)(\rho_1 - \rho_2)}_1 \leq e^{-\gamma \abs{t - s}/2}\smallnorm{\rho_1 - \rho_2}_1.
    \]
\end{proposition}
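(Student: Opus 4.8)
\emph{Plan.} Since $\rho_1,\rho_2$ are density matrices, the difference $X=\rho_1-\rho_2$ is traceless and Hermitian, so it suffices to prove $\norm{\mathcal{E}(t,s)(X)}_1\le e^{-\gamma(t-s)/2}\norm{X}_1$ for every traceless Hermitian $X$ (taking $t\ge s$ without loss of generality). Set $X(t)=\mathcal{E}(t,s)(X)$, which obeys the linear equation $\dot X(t)=\mathcal{L}(t)\big(X(t)\big)$ and remains traceless Hermitian for all $t$. The central idea I would use is that, for a \emph{two-level} system, the trace norm of a traceless Hermitian operator is rigidly controlled by a single smooth quadratic quantity: such an $X$ has eigenvalues $\pm\mu$ with $\mu=\sqrt{\text{Tr}(X^2)/2}$, so $\norm{X}_1=\sqrt{2\,\text{Tr}(X^2)}$. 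I would therefore adopt the Lyapunov function $f(t)=\text{Tr}\!\big(X(t)^2\big)$, prove the pointwise differential inequality $\dot f(t)\le -\gamma f(t)$, and then integrate.

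The first key step is to show that the time-dependent Hamiltonian contributes \emph{nothing} to the decay of $f$. Differentiating, $\dot f=2\,\text{Tr}\!\big(X\,\mathcal{L}(t)(X)\big)$, and the Hamiltonian piece is $-2i\,\text{Tr}\big(X[H(t),X]\big)$, which vanishes identically by cyclicity of the trace for any $H(t)$. This is exactly why an arbitrary, fully controllable drive cannot obstruct the contraction: unitary conjugation only rotates the Bloch vector and preserves its length, i.e.\ preserves $f$.

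The second step is a direct evaluation of the dissipative contribution. Writing $X=\left(\begin{smallmatrix} a & b\\ b^* & -a\end{smallmatrix}\right)$ with $a\in\mathbb{R}$, $b\in\mathbb{C}$, and using $\sigma=\smallket{g}\!\smallbra{e}$, a short computation gives $\gamma\,\mathcal{D}_\sigma(X)=\gamma\left(\begin{smallmatrix}-a & -b/2\\ -b^*/2 & a\end{smallmatrix}\right)$, whence $\text{Tr}\!\big(X\,\gamma\mathcal{D}_\sigma(X)\big)=-\gamma\,(2a^2+|b|^2)$. Since $\text{Tr}(X^2)=2(a^2+|b|^2)$, combining this with the vanishing Hamiltonian term yields $\dot f=-2\gamma(2a^2+|b|^2)\le -2\gamma(a^2+|b|^2)=-\gamma f$. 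Physically the inversion $a$ decays at rate $\gamma$ and the coherence $b$ at the slower rate $\gamma/2$; the coherence therefore sets the overall trace-norm rate, and the bound is saturated by a pure-coherence difference ($a=0$), so the exponent $\gamma/2$ cannot be improved.

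The final step is routine: Grönwall's inequality applied to $\dot f\le-\gamma f$ gives $f(t)\le e^{-\gamma(t-s)}f(s)$, and the norm identity $\norm{X}_1=\sqrt{2f}$ turns this into $\norm{X(t)}_1\le e^{-\gamma(t-s)/2}\norm{X(s)}_1$, which is the claim. The step I would flag as the main obstacle is conceptual rather than computational: this quadratic-Lyapunov argument is special to the two-level case, because only for $2\times2$ traceless Hermitian matrices is the (non-smooth) trace norm tied exactly to the smooth quadratic $\text{Tr}(X^2)$. In higher dimensions $\text{Tr}(X^2)$ can decay while $\norm{X}_1$ fails to contract uniformly, which is precisely why the general multi-level statement must instead be obtained through the separate full-Kraus-rank argument (\cref{prop:contractivity_full_kraus}). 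A minor secondary point is ensuring $f$ is differentiable and that the inequality holds pointwise for a merely piecewise-continuous $H(t)$; this follows from well-posedness of the time-ordered propagator and applying Grönwall in integral form.
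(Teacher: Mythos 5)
Your proof is correct, and it takes a genuinely different route from the paper's. The paper discretizes the dynamics into alternating unitary and Kraus steps, invokes Ruskai's theorem that the trace-norm contraction coefficient of a qubit channel is attained on a pair of orthogonal pure states, parametrizes those states on the Bloch sphere to bound the per-step contraction factor by $\sqrt{1-\gamma\varepsilon}$, and then takes $\varepsilon\to 0$. You instead work directly in continuous time with the Lyapunov function $f=\text{Tr}(X^2)$, using the identity $\norm{X}_1=\sqrt{2\,\text{Tr}(X^2)}$ for traceless Hermitian $2\times2$ matrices, the vanishing of the Hamiltonian contribution to $\dot f$, and the explicit evaluation $\text{Tr}(X\,\gamma\mathcal{D}_\sigma(X))=-\gamma(2a^2+\smallabs{b}^2)\le -\gamma f/2\cdot 2$, followed by Grönwall. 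I checked the dissipator computation and the norm identity; both are right, and both arguments yield the same (tight) rate $\gamma/2$, saturated on pure-coherence differences. Your approach is more elementary --- it avoids the Trotterization error bookkeeping and the external reference to Ruskai --- at the cost of being rigidly tied to $d=2$, where the trace norm is a smooth function of $X$; you correctly identify this as the reason the multi-level case needs a separate argument. The paper's route has the mild advantage that the reduction to orthogonal pure states is a general principle, so the same template could in principle be reused for other few-level channels where an explicit extremal computation is tractable. Your closing remark about handling merely piecewise-continuous $H(t)$ via the integral form of Grönwall is the right way to tie up that loose end.
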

\begin{proof}
    Similar to the proof of proposition \ref{prop:contractivity_full_kraus}, we work in a trotterization of the dynamics: We will discretize the interval $(t, s] = (\tau_T, \tau_{T - 1}] \cup (\tau_{T - 1}, \tau_{T - 2}] \cup \dots \cup (\tau_1, \tau_0]$, where $\tau_m = s + m\varepsilon$ with $\varepsilon = (t - s)/T$. The channel $\mathcal{E}(t, s)$ will be approximated by $\hat{\mathcal{E}}$ where
    \begin{align}\label{eq:trotterization_tls}
        \hat{\mathcal{E}} = \mathcal{K} \mathcal{U}_{T} \mathcal{K}\mathcal{U}_{T - 1} \dots \mathcal{K}\mathcal{U}_1,
    \end{align}
    where $\mathcal{U}_k(X) = U_k X U_k^\dagger$ with $U_k = \overrightarrow{\mathcal{T}}\textnormal{exp}(-i\int_{\tau_{k - 1}}^{\tau_k} H(s) ds)$ and $\mathcal{K}(X) = K_0 X K_0^\dagger + K_1 X K_1^\dagger$ where
    \[
    K_1 = \sqrt{\gamma \varepsilon}\sigma = \sqrt{\gamma \varepsilon} \ket{g}\!\bra{e} \text{ and }K_0 = (1 - \gamma \varepsilon \sigma^\dagger \sigma)^{1/2} = \ket{g}\!\bra{g} + \sqrt{1 - \gamma\varepsilon}\ket{e}\!\bra{e}.
    \]
    Since $\smallnorm{\mathcal{K} - \exp(\gamma\varepsilon\mathcal{D}_{\sigma})}_\diamond \leq O(\varepsilon^2)$, it follows that $\smallnorm{\hat{\mathcal{E}} - \mathcal{E}(t, s)}_\diamond \leq O(T \varepsilon^2) \leq O(\abs{t - s}\varepsilon)$ and thus the exact continuous-time dynamics can be recovered in the limit of $\varepsilon \to 0$. Next, we establish that the channel $\mathcal{K}$ is strictly contractive --- to show this, we recall a result from Ref.~\cite{ruskai1994beyond} (theorem 2), where it was shown that for any two states $\rho_1, \rho_2$ such that $\rho_1 \neq \rho_2$,
    \begin{align}\label{eq:contr_coeff}
         \frac{\norm{\mathcal{K}(\rho_1 - \rho_2 )}_1}{\norm{\rho_1 - \rho_2}_1} \leq \frac{1}{2}\sup_{\substack{\ket{\psi_1}, \ket{\psi_2} \\ \bra{\psi_1}\psi_2\rangle = 0}}\norm{\mathcal{K}(\ket{\psi_1}\!\bra{\psi_1} - \ket{\psi_2}\!\bra{\psi_2})}.
    \end{align}
    In the other words, the states $\rho_1, \rho_2$ that maximize $\smallnorm{\mathcal{K}\big(\rho_1 - \rho_2 \big)}_1$ are pure states $\ket{\psi_1}, \ket{\psi_2}$ that are orthogonal to each other (i.e.,~$\bra{\psi_1}\psi_2 \rangle = 0$). Setting $\ket{\psi_1} = \cos \theta \ket{e} + e^{i\varphi}  \sin \theta \ket{g}$ and $\ket{\psi_2} = \sin \theta \ket{e} - e^{i\varphi}\cos \theta \ket{g}$ we obtain
    \[
    \mathcal{K}(\ket{\psi_1}\!\bra{\psi_1} - \ket{\psi_2}\!\bra{\psi_2}) = (1 - \gamma \varepsilon) \cos 2\theta \big(\ket{e}\!\bra{e} -  \ket{g}\!\bra{g}\big) + (1 - \gamma \varepsilon)^{1/2}\sin 2\theta \big(e^{i\varphi}  \ket{e}\!\bra{g} + \text{h.c.} \big),
    \]
    and therefore
    \[
    \frac{1}{2}\norm{\mathcal{K}(\ket{\psi_1}\!\bra{\psi_1} - \ket{\psi_2}\!\bra{\psi_2})}_1 = \big((1 - \gamma \varepsilon)^2 \cos^2 2\theta + (1 - \gamma \varepsilon) \sin^2 2\theta\big)^{1/2} \leq \sqrt{1 - \gamma \varepsilon}.
    \]
    Finally, using this together with the fact that unitaries leave the 1-norm invariant, we obtain that for any two states $\rho_1, \rho_2$
    \[
    \smallnorm{\hat{\mathcal{E}}(\rho_1 - \rho_2)} \leq (1 - \gamma \varepsilon)^{T/2}.
    \]
    Finally, taking the limit of $\varepsilon \to 0$ and setting $\hat{\mathcal{E}} \to \mathcal{E}(t, s)$ and $(1 - \gamma \varepsilon)^{T/2} \to e^{-\gamma \abs{t - s}/2}$, we obtain the proposition.
\end{proof}

\subsection{Adjoint-variable method for optimizing QFI}\label{app:adj_vm}
In this subsection, we will provide details of numerically optimizing the QFI with respect to the time-dependent Hamiltonian applied on the source. We will work with a discretized matrix-product state representation of the emitted photons, which we review first. Then, we will show how the computation of the gradient of the QFI with respect to the Hamiltonian parameters can be sped up using the adjoint variable method.

\subsubsection{Review: Matrix Product state representation}\label{app:mps_rep}
We first show how to approximate the state of the emitted photons with a matrix product state. We recall from the main text that the Hamiltonian of the source interacting with $d$ output ports is given by
\begin{align*}
    H(t) = H_S(t) + \sum_{j = 1}^M \big(a_{j, t}^\dagger L_j + \text{h.c.}\big),
\end{align*}
where $[a_{\alpha, t}, a_{\alpha', t'}] = \delta_{\alpha, \alpha'}\delta(t - t')$ and $H_S(t)$ and $L_\alpha$ are operators acting on the source Hilbert space. We will denote by $U(t, s)$ and $U_S(t, s)$ the unitaries generated by the Hamiltonians $H(t)$ and $H_S(t)$ respectively:
\begin{align*}
    U(t, s) =\overrightarrow{\mathcal{T}}\exp\bigg(-\int_s^t H(\tau) d\tau\bigg) \text{ and }U_S(t, s) = \overrightarrow{\mathcal{T}}\exp\bigg(-\int_s^t H_S(\tau) d\tau\bigg).
\end{align*}
Assuming that the initial and final states of the source are $\ket{\phi_{S, i}}, \ket{\phi_{S, f}}$, the state of the photons $\ket{\psi}$ in the output ports can be expressed as
\begin{align}\label{eq:state_def}
    \ket{\psi} = \frac{1}{\mathcal{N}} \langle \phi_{S, f}\ket{\Psi} \text{ where }\ket{\Psi} = U(T, 0) \ket{\text{vac}, \phi_{S, i}},
\end{align}
where $\mathcal{N} = \norm{\langle \phi_{S, f}\ket{\Psi}}$ is a normalization constant. We will approximate $\ket{\Psi}$ by $\smallket{\hat{\Psi}}$, which would be a state in the Hilbert space $(\mathbb{C}^{d + 1})^{\otimes M}$ where we will (i) effectively discretize the ports in space as well as (ii) truncate the Hilbert space dimension of each discretized step to having at-most a single-excitation, i.e.,
\begin{align}\label{eq:def_hat_Psi}
    \smallket{\hat{\Psi}} = U_M^S K_M U_{M - 1}^S K_{M - 1} \dots U_1^S K_1 \ket{\phi_{S, i}},
\end{align}
where $M$ is the number of time steps that $[0, T]$ is discretized into and
\begin{subequations}\label{eq;def_discr}
\begin{align}
    U_k^S = U_S(k\varepsilon, (k - 1)\varepsilon) \text{ and }K_k = (I - \varepsilon Q)^{1/2} \ket{0_k} - i\sqrt{\varepsilon} \sum_{j = 1}^d L_j \ket{j_{k}},
\end{align}
with $\varepsilon = T / M$ and
\begin{align}
Q = \sum_{j = 1}^d L_j^\dagger L_j.
\end{align}
\end{subequations}
Here $\ket{0_k} = \smallket{\text{vac}_{((k - 1)\varepsilon, k \varepsilon]}}$ is the vacuum in the output port corresponding to the time interval $((k - 1)\varepsilon, k \varepsilon]$ and $\ket{j_k}$ is a state with a single photon in the $j^\text{th}$ output port in the time interval $((k - 1)\varepsilon, k \varepsilon]$, i.e.,
\begin{align}
    \ket{j_\alpha} = A_{\alpha, k}^\dagger \smallket{\text{vac}_{((k - 1)\varepsilon, k \varepsilon]}} \text{ with }A_{\alpha, k} = \frac{1}{\sqrt{\varepsilon}} \int_{(k - 1)\varepsilon}^{k \varepsilon} a_{\alpha, s}ds.
\end{align}
It can be noted that $[A_{\alpha, k}, A^\dagger_{\alpha', k'}] = \delta_{\alpha, \alpha'} \delta_{k, k'}$. The approximated state of photons $\smallket{\hat{\psi}} = \bra{\phi_{S, f}} \hat{\Psi}\rangle / \hat{\mathcal{N}}$ where $\hat{\mathcal{N}} = \smallnorm{\smallbra{\phi_{S, f}} \hat{\Psi}\rangle}$.

For our analysis, it will be convenient to use the following lemma.
\begin{lemma}
    For any interval $s < \tau < t$,
    \[
    \smallbra{\textnormal{vac}_{(s, t]}}U(t, s) \smallket{\textnormal{vac}_{(s, t]}} = U_\textnormal{eff}(t, s) \text{ and }\smallbra{\textnormal{vac}_{(s, t]}}a_{\alpha, \tau} U(t, s) \smallket{\textnormal{vac}_{(s, t]}} = -iU_\textnormal{eff}(t, \tau) L_\alpha U_\textnormal{eff}(\tau, s),
    \]
    where 
    \[
    U_\textnormal{eff}(t, s) = \overrightarrow{\mathcal{T}} \exp\bigg(-i\int_{s}^t H_\textnormal{eff}(\tau) d\tau\bigg) \text{ with }H_\textnormal{eff}(t) = H_S(t) - \frac{i}{2}Q.
    \]
\end{lemma}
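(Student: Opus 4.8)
The plan is to prove both identities with the same time-bin discretization used for the quantum regression theorem (\cref{prop:qreg_theorem}), and in fact with exactly the discretized propagator of Eqs.~\eqref{eq:def_hat_Psi}--\eqref{eq;def_discr}. First I would partition $(s,t] = (\tau_{m-1},\tau_m]\cup\dots\cup(\tau_0,\tau_1]$ with $\tau_k = s + k\varepsilon$ and $\varepsilon = (t-s)/m$, and introduce the orthonormal time-bin modes $A_{\alpha,k} = \varepsilon^{-1/2}\int_{\tau_{k-1}}^{\tau_k} a_{\alpha,r}\,dr$ obeying $[A_{\alpha,k}, A_{\beta,k'}^\dagger] = \delta_{\alpha\beta}\delta_{kk'}$. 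Over a single bin the interaction integrates to $\sqrt{\varepsilon}\sum_\alpha(A_{\alpha,k}^\dagger L_\alpha + \text{h.c.})$, so the propagator factorizes as $U(\tau_k,\tau_{k-1}) = U_S(\tau_k,\tau_{k-1})\exp\!\big(-i\sqrt{\varepsilon}\sum_\alpha(A_{\alpha,k}^\dagger L_\alpha + A_{\alpha,k}L_\alpha^\dagger)\big)$ up to higher-order Trotter corrections in $\varepsilon$, which is precisely the isometry $K_k$ followed by $U_k^S$ appearing in Eq.~\eqref{eq;def_discr}.

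The structural point is that each bin's unitary acts only on the system and on the mode $A_{\alpha,k}$, and modes in distinct bins are independent. Hence, sandwiching the full time-ordered product $U(t,s) = U_m \cdots U_1$ between the factorized vacuum $\smallket{\textnormal{vac}_{(s,t]}} = \bigotimes_k \smallket{\textnormal{vac}_k}$ collapses exactly, for the discretized dynamics, into an ordered product of per-bin system operators, $\smallbra{\textnormal{vac}_{(s,t]}}U(t,s)\smallket{\textnormal{vac}_{(s,t]}} = \prod_{k} \smallbra{\textnormal{vac}_k} U(\tau_k,\tau_{k-1}) \smallket{\textnormal{vac}_k}$, since the bin-$k$ vacuum projector commutes through every factor except $U(\tau_k,\tau_{k-1})$. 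Expanding a single factor between vacua, the linear term vanishes and the sole surviving quadratic contraction is $\smallbra{\textnormal{vac}_k}A_{\alpha,k}A_{\beta,k}^\dagger\smallket{\textnormal{vac}_k}L_\alpha^\dagger L_\beta = \delta_{\alpha\beta}L_\alpha^\dagger L_\beta$, producing the It\^o shift $-\tfrac{\varepsilon}{2}Q$ with $Q = \sum_\alpha L_\alpha^\dagger L_\alpha$; together with $U_S(\tau_k,\tau_{k-1})$ this gives $\smallbra{\textnormal{vac}_k}U(\tau_k,\tau_{k-1})\smallket{\textnormal{vac}_k} = U_\textnormal{eff}(\tau_k,\tau_{k-1}) + O(\varepsilon^2)$, reproducing the shift $H_\textnormal{eff} = H_S - \tfrac{i}{2}Q$. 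Multiplying the $m = O(1/\varepsilon)$ factors and letting $\varepsilon\to 0$ yields the first identity.

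For the second identity I would locate the bin $k^\ast$ containing $\tau$. The bra $\smallbra{\textnormal{vac}}a_{\alpha,\tau}$ requires exactly one photon to be present, in port $\alpha$ and, by bin orthogonality, in bin $k^\ast$, which can only be created inside $U(\tau_{k^\ast},\tau_{k^\ast-1})$; the remaining bins again factor into vacuum-to-vacuum amplitudes assembling into $U_\textnormal{eff}(t,\tau_{k^\ast})$ on the left and $U_\textnormal{eff}(\tau_{k^\ast-1},s)$ on the right. Within bin $k^\ast$ only the linear creation term $-i\sqrt{\varepsilon}A_{\alpha,k^\ast}^\dagger L_\alpha$ survives at leading order, and since $\tau$ lies strictly inside the bin, $a_{\alpha,\tau}A_{\alpha,k^\ast}^\dagger\smallket{\textnormal{vac}} = \varepsilon^{-1/2}\smallket{\textnormal{vac}}$; the two factors of $\sqrt{\varepsilon}$ cancel so this bin contributes $-iL_\alpha$. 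Sending $\varepsilon\to 0$, so that $\tau_{k^\ast-1},\tau_{k^\ast}\to\tau$, gives $-iU_\textnormal{eff}(t,\tau)L_\alpha U_\textnormal{eff}(\tau,s)$.

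The main obstacle is making the factorization and the limit rigorous rather than merely formal: one must verify that cross-bin processes, a photon created in bin $k$ and absorbed in bin $k' \neq k$, contribute identically zero, which is forced by the exact orthogonality $[A_{\alpha,k},A_{\beta,k'}^\dagger] = \delta_{\alpha\beta}\delta_{kk'}$, and that the accumulated per-bin errors are $O(m\varepsilon^2) = O(\smallabs{t-s}\varepsilon)$ and hence vanish as $\varepsilon\to 0$ --- the same estimate already established for Eq.~\eqref{eq:application_qreg}. A secondary subtlety is the contact term $a_{\alpha,\tau}A_{\alpha,k^\ast}^\dagger$ when $\tau$ sits near a bin boundary, which is the reason the statement is restricted to the interior $s < \tau < t$.
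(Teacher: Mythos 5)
Your proof of the first identity follows essentially the paper's own route: the paper likewise partitions $(s,t]$ into bins of width $\delta$, shows via a second-order Dyson expansion that each bin's vacuum-to-vacuum amplitude equals $U_\textnormal{eff}(\tau_{k+1},\tau_k)+O(\delta^2)$, multiplies the $O(1/\delta)$ factors, and sends $\delta\to 0$; your contraction argument producing the shift $-\tfrac{\varepsilon}{2}Q$ is the same calculation in slightly different clothing. For the second identity, however, you take a genuinely different route. The paper integrates the Heisenberg equation of motion for $a_{\alpha,\tau}(t,s)$ to obtain the \emph{exact} input--output identity $a_{\alpha,\tau}U(t,s) = U(t,s)a_{\alpha,\tau} - iU(t,\tau)L_\alpha U(\tau,s)$; the first term annihilates the vacuum, and the second factorizes over $(s,\tau]$ and $(\tau,t]$, so the result follows immediately from the first identity with no further limiting argument. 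You instead locate the photon in the single bin containing $\tau$ and contract $a_{\alpha,\tau}$ against the linear creation term $-i\sqrt{\varepsilon}A^\dagger_{\alpha,k^\ast}L_\alpha$ of that bin's propagator. Your computation is correct --- the $\sqrt{\varepsilon}$ factors cancel, cross-bin contractions vanish by orthogonality of the bin modes, and the residual ordering ambiguity inside the singular bin (the insertion point of $L_\alpha$ drifts within an interval of width $\varepsilon$, contributing an $O(\varepsilon)$ error) is harmless in the limit --- but it requires this extra bookkeeping in the bin containing $\tau$, which the paper's exact operator identity sidesteps entirely. Both arguments are valid; the paper's is cleaner for the second identity because it confines all discretization error to the first identity and then reuses that result, whereas yours stays entirely within the time-bin picture and makes the one-photon bookkeeping explicit.
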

\begin{proof}
We show this by discretizing the interval $(s, t]$ into $(\tau_0, \tau_1] \cup (\tau_1, \tau_2] \cup \dots \cup (\tau_{N - 1}, \tau_N]$ where $\tau_k = s + k \delta$ with $\delta = (t - s)/N$. We note that an explicit second order Dyson expansion of $U(\tau_{k + 1}, \tau_k)$ yields
\begin{align*}
\smallbra{\text{vac}_{(\tau_k, \tau_{k + 1}]}}U(\tau_{k + 1}, \tau_k) \smallket{\text{vac}_{(\tau_k, \tau_{k + 1}]}}=U_\text{eff}(\tau_{k + 1}, \tau_k) + O(\delta^2).
\end{align*}
Consequently, 
\begin{align*}
    \smallbra{\textnormal{vac}_{(s, t]}}U(t, s) \smallket{\textnormal{vac}_{(s, t]}} = \prod_{k = N}^1 \smallbra{\text{vac}_{(\tau_k, \tau_{k + 1}]}} U(\tau_{k + 1}, \tau_k) \smallket{\text{vac}_{(\tau_k, \tau_{k + 1}]}} = {U}_\text{eff}(t, s) + O(\abs{t - s}\delta).
\end{align*}
Taking the limit of $\delta \to 0$ yields that $\smallbra{\textnormal{vac}_{(s, t]}}U(t, s) \smallket{\textnormal{vac}_{(s, t]}} = U_\textnormal{eff}(t, s)$. 

Next, we consider deriving an expression for $\smallbra{\textnormal{vac}_{(s, t]}}a_{\alpha, \tau} U(t, s) \smallket{\textnormal{vac}_{(s, t]}}$. We begin with the Heisenberg picture operator $a_{\alpha, \tau}(t, s) = U(s, t) a_{\alpha, \tau} U(t, s)$ which satisfies
\begin{align*}
    \frac{d}{dt}a_{\alpha, \tau}(t, s) = -i \delta(\tau - t) U(s, t) L_\alpha U(t, s).
\end{align*}
Noting that $a_{\alpha, \tau}(s, s) = a_{\alpha, \tau}$, we can integrate this equation from $s \to t$ to obtain
\[
a_{\alpha, \tau}(t, s) = a_{\alpha, \tau} -iU(s, \tau) L_\alpha U(\tau, s),
\]
or equivalently
\[
a_{\alpha, \tau} U(t, s) = U(t, s) a_{\alpha, \tau} -i U(t, \tau) L_\alpha U(\tau, s).
\]
Therefore, we obtain that
\begin{align*}
\smallbra{\text{vac}_{(s, t]}}a_{\alpha, \tau} U(t, s) \smallket{\text{vac}_{(s, t]}} &= -i\smallbra{\text{vac}_{(s, t]}}U(t, \tau) L_\alpha U(\tau, s) \smallket{\text{vac}_{(s, t]}} \nonumber \\
&=-i\smallbra{\text{vac}_{(\tau, t]}}U(t, \tau) \smallket{\text{vac}_{(\tau, t]}} L_\alpha \smallbra{\text{vac}_{(s, \tau]}}U(\tau, s) \smallket{\text{vac}_{(s, \tau]}} \nonumber \\
&=-i U_\text{eff}(t, \tau) L_\alpha U_\text{eff}(\tau, s),
\end{align*}
which proves the lemma.
\end{proof}

We next provide an error bound between the exact state and its MPS representation.
\begin{proposition}[Error bound on MPS representation]
    Supppose $J, \ell > 0$ are positive real numbers such that $\forall t \geq 0: \norm{H_S(t)} \leq J$ and $\sum_{\alpha = 1}^d \norm{L_\alpha}^2 \leq \ell$, and $\eta_0 = 2\ell(J + 2\ell)$ then if $\varepsilon < \mathcal \mathcal{N}^2 / \eta_0 T$
\[
\smallnorm{\ket{\psi}\!\bra{\psi} - \smallket{\hat{\psi}}\!\smallbra{\hat{\psi}}} \leq \bigg(1 + \frac{1}{(\mathcal{N}^2 - \eta_0 T\varepsilon)}\bigg) \frac{\eta_0 T \varepsilon}{\mathcal{N}^2} \leq O\bigg(\frac{\eta_0 T\varepsilon}{\mathcal{N}^2}\bigg).
\]
\end{proposition}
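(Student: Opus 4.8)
The plan is to compare the exact evolution and its Trotter--MPS discretization bin by bin, to accumulate the resulting one-step errors, and finally to convert a bound on the unnormalized, source-projected objects into a bound on the normalized density matrices. The natural starting point is the preceding lemma, which shows that sandwiching the exact propagator between bin vacua reproduces the effective non-Hermitian propagator $U_\textnormal{eff}$, while a single emission inserts a factor $-iL_\alpha$; this is precisely the content of the discrete block $U_k^S K_k$. First I would expand the true one-bin propagator $U(k\varepsilon,(k-1)\varepsilon)$ to second order in a Dyson series, exactly as in the lemma's proof, and compare it sector by sector with $U_k^S K_k$: the zero- and one-photon amplitudes agree up to $O(\varepsilon^2)$ [using $(I-\varepsilon Q)^{1/2}=I-\tfrac{\varepsilon}{2}Q+O(\varepsilon^2)$ and $U_\textnormal{eff}=U_k^S(I-\tfrac{\varepsilon}{2}Q)+O(\varepsilon^2)$ with $Q=\sum_\alpha L_\alpha^\dagger L_\alpha$], while the truncated two-or-more-photon sector carries weight $O(\varepsilon^2)$. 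Bounding every prefactor with $\norm{H_S}\le J$ and $\norm{Q}\le\sum_\alpha\norm{L_\alpha}^2\le \ell$ produces a single-step error whose coefficient is controlled by $\eta_0=2\ell(J+2\ell)$, the $\ell J$ piece coming from the cross term between $H_S$ and the emission and the $\ell^2$ pieces from the $Q$-correction and the discarded double emission.

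Next I would telescope over the $M=T/\varepsilon$ bins. Writing $\ket{\Psi}=U(T,0)\ket{\phi_{S,i},\textnormal{vac}}$ and $\smallket{\hat\Psi}$ as ordered products of one-bin operators and swapping the exact block for the discrete one a single bin at a time, the discrepancy becomes a sum of $M$ terms each estimated by the single-step bound; crucially, the discrete blocks $U_k^S K_k$ are isometries (one checks $K_k^\dagger K_k=I$) and the exact blocks are unitary, so no spurious norm growth enters the telescope. Projecting the source onto $\ket{\phi_{S,f}}$ can only contract norms, so the accumulated error descends to the unnormalized output vectors $\ket{\Psi_f}=\langle\phi_{S,f}\ket{\Psi}$ and $\smallket{\hat\Psi_f}=\langle\phi_{S,f}\ket{\hat\Psi}$. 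The outcome I would aim for is the pair of estimates $\smallabs{\mathcal{N}^2-\hat{\mathcal{N}}^2}\le\eta_0 T\varepsilon$ and $\norm{\proj{\Psi_f}-\proj{\hat\Psi_f}}_1\le\eta_0 T\varepsilon$, i.e.\ both the lost weight and the unnormalized projector distance are $O(\eta_0 T\varepsilon)$.

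Finally I would normalize. Using the identity $\proj{\psi}-\proj{\hat\psi}=\mathcal{N}^{-2}(\proj{\Psi_f}-\proj{\hat\Psi_f})+(\mathcal{N}^{-2}-\hat{\mathcal{N}}^{-2})\proj{\hat\Psi_f}$ together with $\norm{\proj{\hat\Psi_f}}_1=\hat{\mathcal{N}}^2$ and $\hat{\mathcal{N}}^2\ge\mathcal{N}^2-\eta_0 T\varepsilon$, the first term contributes $\eta_0 T\varepsilon/\mathcal{N}^2$ and the second is controlled by $(\mathcal{N}^2-\eta_0 T\varepsilon)^{-1}$; this is exactly where the hypothesis $\varepsilon<\mathcal{N}^2/(\eta_0 T)$ enters, since it guarantees $\hat{\mathcal{N}}^2>0$ so that $\smallket{\hat\psi}$ is well defined. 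Assembling the two contributions reproduces the stated factor $1+(\mathcal{N}^2-\eta_0 T\varepsilon)^{-1}$ multiplying $\eta_0 T\varepsilon/\mathcal{N}^2$. I expect the main obstacle to be the accumulation step rather than this final algebra: one must argue that the per-bin errors add up only linearly in $M$, hence to $O(T\varepsilon)$, instead of combining into the naively larger $O(\sqrt{T\varepsilon})$ that a crude triangle inequality on vector norms would suggest. This requires carefully tracking the discarded multi-photon weight of each bin through the remaining unitary evolution and through the source projection, and controlling the subleading zero- and one-photon corrections so that they do not accumulate coherently.
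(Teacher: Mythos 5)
Your overall architecture --- a per-bin comparison, a telescope over the $M=T/\varepsilon$ bins, and a final conversion from the unnormalized objects to the normalized projectors --- is the same as the paper's, and both your single-bin Dyson estimates (including the identification of the $\ell J$ and $\ell^2$ contributions to $\eta_0$) and your closing algebra (using $\hat{\mathcal N}^2\ge\mathcal N^2-\eta_0T\varepsilon$ to produce the factor $1+(\mathcal N^2-\eta_0T\varepsilon)^{-1}$) match the paper essentially line by line. The gap is exactly where you suspect it is, and your proposal does not close it. If you telescope the \emph{states} by swapping the exact one-bin isometry $V_m=U(m\varepsilon,(m-1)\varepsilon)\smallket{0_m}$ for the discrete block $U_m^SK_m$ one bin at a time, the per-step error in vector norm is \emph{not} $O(\varepsilon^2)$: the truncated two-or-more-photon sector of $V_m$ has probability $O(\varepsilon^2)$ but amplitude (norm) $O(\varepsilon)$, and the one-photon mismatch is $O(\varepsilon^{3/2})$. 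A triangle inequality over $M$ bins therefore yields $O(M\varepsilon)=O(T)$, which is vacuous --- worse even than the $O(\sqrt{T\varepsilon})$ you anticipate. Isometry of the blocks prevents spurious norm growth but does nothing to reduce the per-step error, so "tracking the discarded multi-photon weight through the remaining evolution" is a restatement of what must be proved rather than a proof of it.

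The missing idea, which is the actual content of the paper's argument, is to telescope the \emph{overlap} $\smallbra{\hat\Psi}\Psi\rangle$ rather than the states. Introducing the source-space operator $T_m=V_m^\dagger U_m^SK_m$, one gets $\smallabs{\smallbra{\hat\Psi}\Psi\rangle-1}\le\sum_{m=1}^M\norm{T_m-I}$. Inside $T_m$ the discarded multi-photon sector of $V_m$ is exactly orthogonal to the image of $K_m$ and drops out entirely, and the $O(\varepsilon)$ parts of the zero- and one-photon sectors cancel against each other: $U_\text{eff}^\dagger(m\varepsilon,(m-1)\varepsilon)\,U_m^S\,(I-\varepsilon Q)^{1/2}\approx I-\varepsilon Q$ while the one-photon term contributes $+\varepsilon Q$, leaving $\norm{T_m-I}\le\tfrac12\eta_0\varepsilon^2$ and hence a fidelity deficit of order $\eta_0T\varepsilon$. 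Only through this cancellation do the errors add up linearly. (One caveat you inherit from the paper: converting the fidelity bound $\smallabs{\smallbra{\hat\Psi}\Psi\rangle-1}\le\tfrac12\eta_0T\varepsilon$ into the trace-norm bound $\smallnorm{\proj{\Psi}-\proj{\hat{\Psi}}}_1\le\eta_0T\varepsilon$ is itself not immediate, since for pure states the trace distance scales like the square root of the fidelity deficit; your stated intermediate goal $\smallnorm{\proj{\Psi_f}-\proj{\hat\Psi_f}}_1\le\eta_0T\varepsilon$ carries the same issue.) Without the overlap-based telescope your accumulation step fails, so the proof as proposed is incomplete at its central point.
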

\begin{proof}
    It will be convenient to define $J, \ell$ via
    \[
    J = \max_{t \in [0, T]}\norm{H_S(t)} \text{ and }\ell = \sum_{\alpha = 1}^d \norm{L_\alpha}^2.
    \]
    It can be noted that $\norm{Q} \leq \ell$, where $Q$ is defined in Eq.~(\ref{eq;def_discr}b). 
    
    We will first show that $\smallbra{\hat{\Psi}} \Psi \rangle$ is close to 1, where $\ket{\Psi}$ and $\smallket{\hat{\Psi}}$ are defined in Eqs.~(\ref{eq:state_def}) and (\ref{eq:def_hat_Psi}) respectively. We begin by rewriting $\smallbra{\hat{\Psi}} \Psi \rangle$ as
    \begin{align*}
        \smallbra{\hat{\Psi}} \Psi \rangle = \bra{\phi_{S, i}}  {V}_1^\dagger  \dots {V}_{M - 1}^\dagger {V}_M^\dagger U^S_M K_M U^S_{M - 1} K_{M - 1} \dots U^S_1 K_1\ket{\phi_{S, i}},
    \end{align*}
    where $V_m = U(m\varepsilon, (m - 1)\varepsilon) \smallket{0_m}$. It will be convenient to define
    \begin{align*}
        \smallket{\hat{\Psi}_m} = U_m^S K_{m} U_{m - 1}^S K_{m - 1} \dots U_1^S K_1 \ket{\phi_{S, i}} \text{ and }\smallket{{\Psi}_m} = V_m V_{m -1} \dots V_1 \ket{\phi_{S, i}}.
    \end{align*}
    It can be noted that $\ket{\Psi_M} = \ket{\Psi}$ and $\smallket{\hat{\Psi}_M} = \smallket{\hat{\Psi}}$. Furthermore, it is straightfoward to see that $\norm{\ket{\Psi_m}} = 1$. Additionally, since $\sum_{\alpha}K_\alpha^\dagger K_\alpha = I$ (with the identity being on the system operator), it also follows that $\smallnorm{\smallket{\hat{\Psi}_m}} = 1$.
    
    We will also define $T_m = V_m^\dagger U^S_m K_m$: Note that $T_m$ is an operator acting on the Hilbert space of the source. Furthermore, we obtain the recursive inequality
    \begin{align*}
        \smallabs{\smallbra{\hat{\Psi}} \Psi \rangle - \smallbra{\hat{\Psi}_{m - 1}} \Psi_{m - 1}\rangle} \leq \smallabs{\smallbra{\Psi_{m - 1}}(T_m - I) \smallket{\hat{\Psi}_{m - 1}}}  \leq \norm{T_m - I},
    \end{align*}
    and therefore
    \begin{align}\label{eq:telescoping}
        \smallabs{\smallbra{\hat{\Psi}_m} \Psi_m \rangle - 1} \leq \sum_{m = 1}^M \norm{T_m - I}.
    \end{align}
    Thus, it is sufficient to bound $\norm{T_m - I}$. This can be done with an application of time-dependent perturbation theory. Note that
    \begin{align}\label{eq:T_m_expression}
        T_m  &= \smallbra{0_m} U^\dagger(m\varepsilon, (m - 1)\varepsilon) \ket{0_m} U_m^S (I - \varepsilon Q)^{1/2} - i\sqrt{\varepsilon}\sum_{\alpha = 1}^d \smallbra{0_m}U^\dagger(m\varepsilon, (m - 1)\varepsilon) A^\dagger_{\alpha, m}\ket{0_m} U^S_m L_\alpha \nonumber \\
        &=U^\dagger_\text{eff}(m\varepsilon, (m - 1)\varepsilon)U_m^S (I - \varepsilon Q)^{1/2} +\sum_{\alpha = 1}^d \int_{(m - 1)\varepsilon}^{m \varepsilon}U_\text{eff}^\dagger(s, (m - 1)\varepsilon) L_\alpha^\dagger U_\text{eff}^\dagger(m\varepsilon, s) U^S_m L_\alpha.
    \end{align}
    Next, we observe that since $\sqrt{1 - x} = 1 - \frac{x}{2} - \frac{x^2}{4(1 - x)^{1/2} + (2 - x)^2}$, $(I - \varepsilon Q)^{1/2} \approx I - \varepsilon Q/2$ and for $\varepsilon \leq 1/\ell$
    \begin{align}\label{eq:upper_bound_term_sqrt}
        \norm{(I - \varepsilon Q)^{1/2} - \bigg(I - \frac{\varepsilon Q}{2}\bigg)} \leq \varepsilon^2\smallnorm{(4(I - \varepsilon Q)^{1/2} + (2 - \varepsilon Q)^2)^{-1}Q^2} \leq \varepsilon^2  \ell^2.
    \end{align}
    Furthermore, we note that
    \begin{align*}
        &(U_m^S)^\dagger U_\text{eff}(m\varepsilon, (m - 1)\varepsilon) \nonumber\\
        &\qquad= I + \int_{(m - 1)\varepsilon}^{m\varepsilon}\frac{d}{dt'} U_S((m - 1)\varepsilon, t') U_\text{eff}(t', (m - 1)\varepsilon) dt' \nonumber \\
        &\qquad= I -\frac{1}{2} \int_{(m - 1)\varepsilon}^{m\varepsilon} U_S((m -1)\varepsilon, t') Q U_\text{eff}(t', (m - 1)\varepsilon) dt' \nonumber \\
        &\qquad=I - \frac{\varepsilon}{2}Q - \frac{1}{2} \int_{(m - 1)\varepsilon}^{m\varepsilon} \int_{(m - 1)\varepsilon}^{t'} \frac{d}{dt''}U_S((m - 1)\varepsilon, t'') Q U_\text{eff}(t'', (m - 1)\varepsilon) dt'' dt' \nonumber \\
        &\qquad=I - \frac{\varepsilon}{2}Q - \frac{i}{2} \int_{(m -1)\varepsilon}^{m\varepsilon} \int_{(m - 1)\varepsilon}^{t'} U_S((m - 1)\varepsilon, t'')(H_S(t'') Q - Q H_\text{eff}(t'') )U_\text{eff}(t'', (m - 1)\varepsilon)dt'' dt'.
    \end{align*}
    Thus, for small $\varepsilon$, $(U_m^S)^\dagger U_\text{eff}(m\varepsilon, (m - 1)\varepsilon) \approx I - \varepsilon Q/2$ with the equation above implying that
    \begin{align}\label{eq:upper_bound_term_1}
        \norm{(U_m^S)^\dagger U_\text{eff}(m\varepsilon, (m - 1)\varepsilon) - \bigg(I -\frac{\varepsilon Q}{2}\bigg)} \leq \frac{\varepsilon^2}{4}\ell\bigg(2J + \frac{1}{2}\ell\bigg),
    \end{align}
    where we have used $\norm{H_S(t)} \leq J, \norm{H_\text{eff}(t)} \leq J + \norm{Q}/2$, $\norm{U_S(t, s)} = 1$ and $\norm{U_\text{eff}(t, s)} \leq 1$.
    
    Finally, we note that
    \begin{align*}
        &U^\dagger_\text{eff}(s, (m - 1)\varepsilon) L_\alpha^\dagger U_\text{eff}^\dagger(m\varepsilon, s) \nonumber\\
        &= L_\alpha^\dagger U^\dagger_\text{eff}(m\varepsilon, s) +i \bigg(\int_{(m - 1)\varepsilon}^{s}U_\text{eff}^\dagger(s', (m - 1)\varepsilon) H_\text{eff}^\dagger(s')ds'\bigg)L_\alpha^\dagger U_\text{eff}^\dagger(m\varepsilon, s) \nonumber\\
        &=L_\alpha^\dagger -iL_\alpha^\dagger\bigg(\int_{m\varepsilon}^{s} H_\text{eff}^\dagger(s')U_\text{eff}^\dagger(m\varepsilon, s') ds'\bigg) +i \bigg(\int_{(m - 1)\varepsilon}^{s}U_\text{eff}^\dagger(s', (m - 1)\varepsilon) H_\text{eff}^\dagger(s')ds'\bigg)L_\alpha^\dagger U_\text{eff}^\dagger(m\varepsilon, s),
    \end{align*}
    and therefore $U_\text{eff}^\dagger(s, (m - 1)\varepsilon) L_\alpha^\dagger U_\text{eff}^\dagger(m\varepsilon, s) \approx L_\alpha^\dagger$ with the equation above implying that
    \begin{align}\label{eq:upper_bound_term_2}
        \smallnorm{U^\dagger_\text{eff}(s, (m - 1)\varepsilon) L_\alpha^\dagger U_\text{eff}^\dagger(m\varepsilon, s)  - L_\alpha^\dagger} \leq \varepsilon \norm{L_\alpha} \bigg(J + \frac{\ell}{2}\bigg).
    \end{align}
    where we have used $\norm{U_\text{eff}(t, s)} \leq 1$ and $\norm{H_\text{eff}(t)} \leq J + \norm{Q}/2$. Using Eqs.~(\ref{eq:upper_bound_term_sqrt}), (\ref{eq:upper_bound_term_1}) and (\ref{eq:upper_bound_term_2}) together with Eq.~(\ref{eq:T_m_expression}), we obtain that for $\varepsilon \norm{Q} \leq 1$
    \begin{align}\label{eq:T_upper_bound}
        \norm{T_m - I} \leq \frac{5}{4}\varepsilon^2 \ell^2 + \varepsilon^2\bigg(J + \frac{\ell}{2}\bigg)\ell + \frac{\varepsilon^2}{4}\ell\bigg(2J + \frac{1}{2}\ell\bigg) \leq \frac{1}{2}\eta_0 \varepsilon^2,
    \end{align}
    where $\eta_0 = 2\ell(2\ell + J)$. Thus, from Eq.~(\ref{eq:telescoping}) that
    \[
    \smallabs{\bra{\Psi}\hat{\Psi}\rangle - 1} \leq \frac{1}{2}\eta_0 M \varepsilon^2 \text{ or equivalently }\smallnorm{\ket{\Psi} - \smallket{\hat{\Psi}}}^2 = 2\text{Re}\big(1 - \bra{\Psi}\hat{\Psi}\rangle\big) \leq  \eta_0 M \varepsilon^2.
    \]
    Next, we consider the normalized states, $\ket{\psi} = \langle \phi_{S, f}\ket{\Psi}/\mathcal{N} $ and $\smallket{\hat{\psi}} = \langle \phi_{S, f}\smallket{\hat{\Phi}} / \hat{\mathcal{N}}$. We first note that
    \[
    \smallabs{\mathcal{N}^2 - \hat{\mathcal{N}}^2} = \smallabs{\text{Tr}\big(\ket{\phi_{S, f}}\!\bra{\phi_{S, f}}(\ket{\Psi}\!\bra{\Psi} - \smallket{\hat{\Psi}}\!\smallbra{\hat{\Psi}})\big)} \leq \smallnorm{\ket{\Psi}\!\bra{\Psi} - \smallket{\hat{\Psi}}\!\smallbra{\hat{\Psi}}}_1 \leq  \eta_0 M\varepsilon^2.
    \]
    Note that this implies that $\hat{\mathcal{N}}^2 \geq \mathcal{N}^2 - \eta_0 M\varepsilon^2$. Consequently, we then obtain that if $M\varepsilon^2 < \mathcal{N}^2 / \eta_0$,
    \[
    \abs{\frac{1}{\mathcal{N}^2} - \frac{1}{\hat{\mathcal{N}}^2}} = \frac{\smallabs{\mathcal{N}^2 - \hat{\mathcal{N}}^2}}{\mathcal{N}^2\hat{\mathcal{N}^2}} \leq \frac{\eta_0 M \varepsilon^2}{\mathcal{N}^2(\mathcal{N}^2 - \eta_0 M\varepsilon^2)}.
    \]
    Finally,
    \begin{align}
        \smallnorm{\ket{\psi}\!\bra{\psi} - \smallket{\hat{\psi}}\!\smallbra{\hat{\psi}}} \leq \frac{1}{\mathcal{N}^2}\smallnorm{\smallket{\Psi}\!\smallbra{\Psi} - \smallket{\hat{\Psi}}\!\smallbra{\hat{\Psi}} }_1 + \abs{\frac{1}{\mathcal{N}^2} - \frac{1}{\hat{\mathcal{N}}^2}}  \leq \bigg(1 + \frac{1}{(\mathcal{N}^2 - \eta_0 M\varepsilon^2)}\bigg) \frac{\eta_0 M \varepsilon^2}{\mathcal{N}^2},
    \end{align}
    which proves the lemma.
\end{proof}

\subsubsection{Gradient computation}
We next show how to compute the QFI and the QFI gradient with respect to the system unitaries applied in every time-step. To make the computations easy to visualize, we will use the tensor network diagrammatic notation. 

\begin{align}
 \ket{\psi} = \frac{1}{\mathcal{N}}
    \begin{array}{c}
        \begin{tikzpicture}[scale=0.6]
        \bTensor{(-\dx*0, 0)}{\small $\phi_i$}{1}
        \KTensor{(-\dx*2,0)}{\small $K$}
        \KTensor{(-\dx*4,0)}{\small $K$}
        \SingleDots{-\dx*6,0}{\dx/2}
        \KTensor{(-\dx*7,0)}{\small $K$}
        \bTensor{(-\dx*9, 0)}{\small $\phi_f^*$}{-1}
        \gate{(-\dx*3,0)}{\small $U_1^S$}
        \gate{(-\dx*1,0)}{\small $U_0^S$}
        \gate{(-\dx*8,0)}{\small $U_M^S$}
        \gate{(-\dx*5,0)}{\small $U_2^S$}
        \end{tikzpicture}
    \end{array},
\end{align}
where
\begin{align}
    \begin{array}{c}
        \begin{tikzpicture}[scale=0.6]
        \draw(-0.35, 0.9) node {$\alpha$};
        \KTensor{(0, 0)}{\small $K$}{0} 
        \end{tikzpicture}
    \end{array} = \begin{cases}
        (I - \varepsilon Q)^{1/2}  & \text{ if }\alpha = 0, \\
        -i\sqrt{\varepsilon} L & \text{ if } \alpha = 1.
    \end{cases}
\end{align}
We will find it convenient to introduce $\text{E}_{m}^n, \text{R}_m^n, \text{L}_m^n$ for $0 \leq m < n \leq M$ given by
\begin{subequations}
\begin{align}
    \begin{array}{c}
    \begin{tikzpicture}[scale=0.6]
        \CTensor{(0, 0)}{\small $\text{E}_{m}^{n}$}
    \end{tikzpicture}
    \end{array} &= 
    \begin{array}{c}
    \begin{tikzpicture}[scale=0.6]
        \UTensor{(0, 0)}{\small $U_m^S$}{\small $U_m^{S*}$};
        \TTensor{(-\singledx, 0)}{\small $K$}{\small $K^*$};
        \UTensor{(-\singledx*2, 0)}{\small $U_{m + 1}^S$}{\small $U_{m + 1}^{S*}$};
        \TTensor{(-\singledx*3, 0)}{\small $K$}{\small $K^*$};
        \DoubleDots{(-\singledx*4, 0)}{\singledx/2}
        \UTensor{(-\singledx*5, 0)}{\small $U_{n- 1}^S$}{\small $U_{n -1}^{S*}$};
        \TTensor{(-\singledx*6, 0)}{\small $K$}{\small $K^*$};
        \UTensor{(-\singledx*7, 0)}{\small $U_{n}^S$}{\small $U_{n}^{S*}$};
    \end{tikzpicture}
    \end{array}, \nonumber \\
        \begin{array}{c}
    \begin{tikzpicture}[scale=0.6]
        \CTensor{(0, 0)}{\small $\text{R}_{m}^{n}$}
    \end{tikzpicture}
    \end{array} &= 
    \begin{array}{c}
    \begin{tikzpicture}[scale=0.6]
        \UTensor{(0, 0)}{\small $U_m^S$}{\small $U_m^{S*}$};
        \TTensor{(-\singledx, 0)}{\small $K$}{\small $K^*$};
        \UTensor{(-\singledx*2, 0)}{\small $U_{m + 1}^S$}{\small $U_{m + 1}^{S*}$};
        \TTensor{(-\singledx*3, 0)}{\small $K$}{\small $K^*$};
        \DoubleDots{(-\singledx*4, 0)}{\singledx/2}
        \UTensor{(-\singledx*5, 0)}{\small $U_{n}^S$}{\small $U_{n}^{S*}$};
        \TTensor{(-\singledx*6, 0)}{\small $K$}{\small $K^*$};
    \end{tikzpicture}
    \end{array}, \nonumber \\
        \begin{array}{c}
    \begin{tikzpicture}[scale=0.6]
        \CTensor{(0, 0)}{\small $\text{L}_{m}^{n}$}
    \end{tikzpicture}
    \end{array} &= 
    \begin{array}{c}
    \begin{tikzpicture}[scale=0.6]
        \TTensor{(0, 0)}{\small $K$}{\small $K^*$};
        \UTensor{(-\singledx, 0)}{\small $U_m^S$}{\small $U_m^{S*}$};
        \TTensor{(-\singledx*2, 0)}{\small $K$}{\small $K^*$};
        \UTensor{(-\singledx*3, 0)}{\small $U_{m + 1}^S$}{\small $U_{m + 1}^{S*}$};
        \DoubleDots{(-\singledx*4, 0)}{\singledx/2}
        \TTensor{(-\singledx*5, 0)}{\small $K$}{\small $K^*$};
        \UTensor{(-\singledx*6, 0)}{\small $U_{n}^S$}{\small $U_{n}^{S*}$};
    \end{tikzpicture}
    \end{array},
\end{align}
\end{subequations}
where for $n < m$ we interpret $\text{E}_m^n, \text{R}_m^n , \text{L}_m^n  = I\otimes I $. We can also note that for any $m \leq p \leq n$,
\begin{align}\label{eq:decomp_E_L_R}
     \begin{array}{c}
    \begin{tikzpicture}[scale=0.6]
        \CTensor{(0, 0)}{\small $\text{E}_{m}^{n}$}
    \end{tikzpicture}
    \end{array} = \begin{array}{c}
    \begin{tikzpicture}[scale=0.6]
        \CTensor{(0, 0)}{\small $\text{R}_{m}^{p - 1}$};
        \UTensor{(-\singledx, 0)}{\small $U_p^{S*}$}{\small $U_p^S$};
        \CTensor{(-\singledx * 2, 0}{\small $\text{L}_{p + 1}^{n}$}
    \end{tikzpicture}
    \end{array} 
\end{align}
The normalization constant $\mathcal{N}$ is given by the diagrammatic expression
\begin{align}
    \mathcal{N}^2 = \begin{array}{c}
    \begin{tikzpicture}[scale=0.6]
    \BTensor{(0, 0)}{\small $\phi_i$}{\small $\phi_i^*$}{1};
    \CTensor{(-\singledx, 0)}{\small $\text{E}_{M, 0}$};
    \BTensor{(-\singledx*2, 0)}{\small $\phi_f^*$}{\small $\phi_f$}{-1}
    \end{tikzpicture}
    \end{array}
\end{align}
We will consider the case of identical and independent sources --- in terms of the MPS approximation of the state, the QFI from Eq.~(\ref{eq:QFI_identical_sources}) can be approximated by
\begin{subequations}\label{eq:qfi_identical_numerical}
\begin{align}
\text{QFI} \approx \frac{16 }{\mathcal{N}^4}\sum_{n = 1}^M \sum_{m = 1}^{n - 1} \bigg(\smallabs{C^{(g)}_{n, m}}^2 - \smallabs{C^{(\chi)}_{n, m}}^2\bigg) + \frac{8}{\mathcal{N}^2} \sum_{m = 1}^M {n_m},
\end{align}
where 
\begin{align}
    C_{n, m}^{(g)} &= 
    \begin{array}{c}
    \begin{tikzpicture}[scale=0.6]
    \BTensor{(0, 0)}{\small $\phi_i$}{\small $\phi_i^*$}{1};
    \BTensor{(-\singledx*6, 0)}{\small $\phi_f^*$}{\small $\phi_f$}{-1};
    \KGate{(-\singledx*2, 0)}{\small $K_0$}{\small $K_1^*$};
    \KGate{(-\singledx*4, 0)}{\small $K_1$}{\small $K_0^*$};
    \CTensor{(-\singledx, 0)}{\small $\text{E}_{0}^{m - 1}$};
    \CTensor{(-\singledx*3, 0)}{\small $\text{E}_{m}^{n - 1}$};
    \CTensor{(-\singledx*5, 0)}{\small $\text{E}_{n}^{M}$};
    \end{tikzpicture}
    \end{array}, \\
    C_{n, m}^{(\chi)} &= \begin{array}{c}
    \begin{tikzpicture}[scale=0.6]
    \BTensor{(0, 0)}{\small $\phi_i$}{\small $\phi_i^*$}{1};
    \BTensor{(-\singledx*6, 0)}{\small $\phi_f^*$}{\small $\phi_f$}{-1};
    \KGate{(-\singledx*2, 0)}{\small $K_0$}{\small $K_1^*$};
    \KGate{(-\singledx*4, 0)}{\small $K_0$}{\small $K_1^*$};
    \CTensor{(-\singledx, 0)}{\small $\text{E}_{0}^{m - 1}$};
    \CTensor{(-\singledx*3, 0)}{\small $\text{E}_{m}^{n - 1}$};
    \CTensor{(-\singledx*5, 0)}{\small $\text{E}_{n}^{M}$};
    \end{tikzpicture}
    \end{array},\\
    n_m &= \begin{array}{c}
    \begin{tikzpicture}[scale=0.6]
    \BTensor{(0, 0)}{\small $\phi_i$}{\small $\phi_i^*$}{1};
    \BTensor{(-\singledx*4, 0)}{\small $\phi_f^*$}{\small $\phi_f$}{-1};
    \KGate{(-\singledx*2, 0)}{\small $K_1$}{\small $K_1^*$};
    \CTensor{(-\singledx, 0)}{\small $\text{E}_{0}^{m - 1}$};
    \CTensor{(-\singledx*3, 0)}{\small $\text{E}_{m}^{M}$};
    \end{tikzpicture}
    \end{array},
\end{align}
\end{subequations}
We note that Eq.~(\ref{eq:qfi_identical_numerical}) can be used to numerically compute the QFI in $O(M^2)$ time. To see this, note that $\{\text{E}_{n, m}\}_{0\leq m\leq n\leq M}$ can be pre-computed in $O(M^2)$ time. After this pre-computation, $\mathcal{N}$, $C^{(g)}_{n, m}, C^{(\chi)}_{n, m}$ and $n_m$ can be computed in $O(1)$ time. Therefore, the total time required to compute the QFI scales as $O(M^2)$.

We next consider the computation of the gradient of the QFI with respect to the unitaries $U_p^S$ --- if the gradient is computed by first computing $\nabla_{U_p^S} C_{n, m}^{(g)}$ and $\nabla_{U_p^S} C_{n, m}^{(\chi)}$, then the total time required to compute the gradient of QFI with respect to all the unitaries $U_p^S$ would be $O(M^3)$. In practice, this can become prohibitively expensive when $M \gtrsim 100$. We next outline an application of the adjoint variable method \cite{giles2000introduction, givoli2021tutorial, white2022enhancing} that can reduce the cost of computing $\nabla_{U_p^S}\text{QFI}$ to time $O(M^2)$. We will formulate the adjoint variable method for computing the gradient of both the double summand and the single summand in Eq.~(\ref{eq:qfi_identical_numerical}), both of which require slightly different treatment.

\emph{Cost functions that are single sums}. We first consider a cost function $\Gamma^{(1)}(\theta)$ which depends on the parameters $\{\theta_0, \theta_1 \dots \theta_M\}$ through the unitaries $U^S = \{U^S_0(\theta_0), U^S_1(\theta_1) \dots U^S_M(\theta_M)\}$ and can be expressed as a single summation:
\begin{align}
    \Gamma^{(1)}(\theta) = \sum_{m = 1}^N f(\gamma_m), 
\end{align}
where $f:\mathbb{C} \to \mathbb{R}$ is Wirtinger differentiable (i.e.,~differentiable with respect to the real and imaginary parts of the complex-value argument), and $\gamma_m \in \mathbb{C}$ is given by the following contraction:
\begin{align}\label{eq:gamma_m}
    \gamma_m = 
    \begin{array}{c}
    \begin{tikzpicture}[scale=0.6]
    \BTensor{(0, 0)}{\small $\phi_i$}{\small $\phi_i^*$}{1};
    \BTensor{(-\singledx*4, 0)}{\small $\phi_f^*$}{\small $\phi_f$}{-1};
    \KGate{(-\singledx*2, 0)}{\small $Q$}{\small $P^*$};
    \CTensor{(-\singledx, 0)}{\small $\text{E}_{0}^{m - 1}$};
    \CTensor{(-\singledx*3, 0)}{\small $\text{E}_{m}^{M}$};
    \end{tikzpicture}
    \end{array},
\end{align}
for some $P$ and $Q$. Next, we note that
\begin{subequations}\label{eq:deriv_gamma_n}
\begin{align}
    \frac{\partial}{\partial \theta_p} \Gamma(\theta) =2 \sum_{m = 1}^N \text{Re}\bigg(f'(\gamma_m)\frac{\partial \gamma_m}{\partial \theta_p} \bigg),
\end{align}
where $f' \equiv \partial f(z, z^*)/ \partial z$. Note that $\partial \gamma_m / \partial \theta_p$ will be given by
\begin{align}
     \frac{\partial \gamma_m}{\partial \theta_p} &= \begin{array}{c}
    \begin{tikzpicture}[scale=0.6]
    \BTensor{(0, 0)}{\small $\phi_i$}{\small $\phi_i^*$}{1};
    \CTensor{(-\singledx, 0)}{\small $\text{R}_{0}^{p - 1}$};
    \UDTensor{(-\singledx*2, 0}{\large $\frac{\partial \mathcal{U}^S_p}{\partial \theta_p}$};
    \CTensor{(-\singledx*3, 0)}{\small $\text{L}_{p+1}^{m - 1}$};
    \KGate{(-\singledx*4, 0)}{\small $Q$}{\small $P^*$};
        \CTensor{(-\singledx*5, 0)}{\small $\text{E}_{m}^{M}$};
    \BTensor{(-\singledx*6, 0)}{\small $\phi_f^*$}{\small $\phi_f$}{-1};
    \end{tikzpicture}
    \end{array} \text{     if } 0\leq p < m \text{ and}, \nonumber \\
     \frac{\partial \gamma_m}{\partial \theta_p} &= \begin{array}{c}
    \begin{tikzpicture}[scale=0.6]
    \BTensor{(0, 0)}{\small $\phi_i$}{\small $\phi_i^*$}{1};
    \CTensor{(-\singledx, 0)}{\small $\text{E}_{0}^{m - 1}$};
    \KGate{(-\singledx*2, 0)}{\small $Q$}{\small $P^*$};
    \CTensor{(-\singledx*3, 0)}{\small $\text{R}_{m}^{p - 1}$};
    \UDTensor{(-\singledx*4, 0}{\large $\frac{\partial \mathcal{U}^S_p}{\partial \theta_p}$};
    \CTensor{(-\singledx*5, 0)}{\small $\text{L}_{p + 1}^{M}$};
    \BTensor{(-\singledx*6, 0)}{\small $\phi_f^*$}{\small $\phi_f$}{-1};
    \end{tikzpicture}
    \end{array}
    \text{     if }m\leq p \leq M,
\end{align}
where
\begin{align}
    \begin{array}{c}
        \begin{tikzpicture}[scale=0.6]
             \UDTensor{(0, 0}{\large $\frac{\partial \mathcal{U}^S_p}{\partial \theta_p}$};
        \end{tikzpicture}
    \end{array} = \frac{\partial}{\partial \theta_p}\left(\begin{array}{c}
        \begin{tikzpicture}[scale=0.6]
             \UTensor{(0, 0}{\small $U_p^S$}{\small $U_p^{S*}$};
        \end{tikzpicture}
    \end{array}
    \right).
\end{align}
\end{subequations}
While Eq.~(\ref{eq:deriv_gamma_n}) can be used to compute $\partial \Gamma^{(1)}(\theta) / \partial \theta_p$, the total cost to compute this derivative for all $p$ would be $O(M^2)$. To compute these derivatives in time $O(M)$, we first obtain from Eq.~(\ref{eq:deriv_gamma_n}a) that
\begin{subequations}\label{eq:deriv_single_sum_adj}
\begin{align}
    \frac{\partial}{\partial \theta_p}\Gamma^{(1)}(\theta) = 2\text{Re}\bigg(\sum_{m = p + 1}^M f'(\gamma_m)\frac{\partial \gamma_m}{\partial \theta_p} \bigg) + 2\text{Re}\bigg(\sum_{m = 1}^p f'(\gamma_m)\frac{\partial \gamma_m}{\partial \theta_p} \bigg),
\end{align}
and using Eq.~(\ref{eq:deriv_gamma_n}b), it follows that
\begin{align}
    \sum_{m = p + 1}^M f'(\gamma_m)\frac{\partial \gamma_m}{\partial \theta_p} = 
    \begin{array}{c}
        \begin{tikzpicture}[scale=0.6]
            \BTensor{(0, 0)}{\small $\phi_i$}{\small $\phi_i^*$}{1};
            \CTensor{(-\singledx, 0)}{\small $\text{R}_{0}^{p - 1}$};
            \UDTensor{(-\singledx*2, 0}{\large $\frac{\partial \mathcal{U}^S_p}{\partial \theta_p}$};
            \sidetensorleft{(-\singledx*3, 0)}{\small $l_p$};
        \end{tikzpicture}
    \end{array}, 
\end{align}
where
\begin{align}
    \begin{array}{c}
        \begin{tikzpicture}[scale=0.6]
            \sidetensorleft{(0, 0)}{\small $l_p$};
        \end{tikzpicture}
    \end{array} = \sum_{m = p + 1}^M f'(\gamma_m) \left(\begin{array}{c}
    \begin{tikzpicture}[scale=0.6]
    \CTensor{(0, 0)}{\small $\text{L}_{p+1}^{m - 1}$};
    \KGate{(-\singledx*1, 0)}{\small $Q$}{\small $P^*$};
        \CTensor{(-\singledx*2, 0)}{\small $\text{E}_{m}^{M}$};
    \BTensor{(-\singledx*3, 0)}{\small $\phi_f^*$}{\small $\phi_f$}{-1};
    \end{tikzpicture}
    \end{array}\right),
\end{align}
and 
\begin{align}
    \sum_{m = 1}^p f'(\gamma_m)\frac{\partial \gamma_m}{\partial \theta_p} = 
    \begin{array}{c}
        \begin{tikzpicture}[scale=0.6]
            \sidetensorright{(0, 0)}{\small $r_p$};
            \UDTensor{(-\singledx, 0}{\large $\frac{\partial \mathcal{U}^S_p}{\partial \theta_p}$};
            \CTensor{(-\singledx*2, 0)}{\small $\text{L}_{p + 1}^{M}$};
            \BTensor{(-\singledx*3, 0)}{\small $\phi_f^*$}{\small $\phi_f$}{-1};            
        \end{tikzpicture}
    \end{array}, 
\end{align}
where
\begin{align}
    \begin{array}{c}
        \begin{tikzpicture}[scale=0.6]
            \sidetensorright{(0, 0)}{\small $r_p$};
        \end{tikzpicture}
    \end{array} = \sum_{m = 1}^p f'(\gamma_m) \left(\begin{array}{c}
    \begin{tikzpicture}[scale=0.6]
    \BTensor{(0, 0)}{\small $\phi_i$}{\small $\phi_i^*$}{1};
    \CTensor{(-\singledx, 0)}{\small $E_0^{m - 1}$};
    \KGate{(-\singledx*2, 0)}{\small $Q$}{\small $P^*$};
    \CTensor{(-\singledx*3, 0)}{\small $R_m^{p - 1}$};
    \end{tikzpicture}
    \end{array}\right).
\end{align}
\end{subequations}
If the vectors $l_p, r_p$ are pre-computed, then Eqs.~(\ref{eq:deriv_single_sum_adj}a), (\ref{eq:deriv_single_sum_adj}b) and (\ref{eq:deriv_single_sum_adj}d) can be used to compute $\partial \Gamma(\theta)/ \partial \theta_p$ for all $p$ in $O(M)$ time. Furthermore, $l_p, r_p$ can be computed, for all $p$, in a total of $O(M)$ time by noting that they satisfy the following recursion:
\begin{subequations}\label{eq:adjoint_single_sum_recursion}
\begin{align}
    \begin{array}{c}
        \begin{tikzpicture}[scale=0.6]
            \sidetensorleft{(0, 0)}{\small $l_{p - 1}$};
        \end{tikzpicture}
    \end{array} &= 
   \begin{array}{c}
        \begin{tikzpicture}[scale=0.6]
            \TTensor{(0, 0)}{\small $K$}{\small $K^*$};
            \UTensor{(-\singledx, 0)}{\small $U_p^S$}{\small $U_p^{S*}$}
            \sidetensorleft{(-\singledx*2, 0)}{\small $l_{p}$};
        \end{tikzpicture}
    \end{array} +  f'(\gamma_p)\left(
        \begin{array}{c}
        \begin{tikzpicture}[scale=0.6]
            \KGate{(0, 0)}{\small $Q$}{\small $P^*$};
            \CTensor{(-\singledx, 0)}{\small $E_p^M$};
            \BTensor{(-\singledx*2, 0)}{\small $\phi_f^*$}{\small $\phi_f$}{-1};
        \end{tikzpicture}
    \end{array} 
    \right), \\
        \begin{array}{c}
        \begin{tikzpicture}[scale=0.6]
            \sidetensorright{(0, 0)}{\small $r_{p}$};
        \end{tikzpicture}
    \end{array} &=     \begin{array}{c}
         \begin{tikzpicture}[scale=0.6]
             \sidetensorright{(0, 0)}{\small $r_{p }$};
             \UTensor{(-\singledx, 0)}{\small $U_{p}^S$}{\small $U_{p - 1}^{S*}$};
             \TTensor{(-\singledx*2, 0)}{\small $K$}{\small $K^*$};
         \end{tikzpicture} 
    \end{array} + f'(\gamma_p) \left(
    \begin{array}{c}
        \begin{tikzpicture}[scale=0.6]
            \BTensor{(0, 0)}{\small $\phi_i$}{\small $\phi_i^*$}{1}
            \CTensor{(-\singledx, 0)}{\small $E_0^{p - 1}$};
            \KGate{(-\singledx*2, 0)}{\small $Q$}{\small $P^*$};
        \end{tikzpicture}
    \end{array}
    \right),
\end{align}
\end{subequations}
together with the boundary conditions $l_M = r_0 = 0$. Therefore, to compute $\{\partial \Gamma(\theta) / \partial \theta_p\}_{p \in \{0, 1, 2 \dots M\}}$, an algorithm with $O(M)$ run-time is as follows:
\begin{enumerate}
    \item[(1)] First compute $\{\gamma_m\}_{m\in \{1, 2 \dots M\}}$ --- this can be done in $O(M)$ time by recursively computing $\text{E}_0^{m - 1}$ and $\text{E}_m^M$ for $m \in \{1, 2 \dots M\}$ using
    \begin{align}
        \begin{array}{c}
        \begin{tikzpicture}[scale=0.6]
            \CTensor{(0, 0)}{\small $E_{m - 1}^M$};
        \end{tikzpicture}
    \end{array} = \begin{array}{c}
        \begin{tikzpicture}[scale=0.6]
            \CTensor{(0, 0)}{\small $E_{m}^M$};
            \TTensor{(\singledx, 0)}{\small $K$}{\small $K^*$};
            \UTensor{(\singledx*2, 0)}{\small $U_{m - 1}^S$}{\small $U_{m - 1}^{S*}$}
        \end{tikzpicture}
    \end{array} \text{ and }
        \begin{array}{c}
        \begin{tikzpicture}[scale=0.6]
            \CTensor{(0, 0)}{\small $E_{0}^{m}$};
        \end{tikzpicture}
    \end{array} = \begin{array}{c}
        \begin{tikzpicture}[scale=0.6]
            \CTensor{(0, 0)}{\small $E_{0}^{m - 1}$};
            \TTensor{(-\singledx, 0)}{\small $K$}{\small $K^*$};
            \UTensor{(-\singledx*2, 0)}{\small $U_{m }^S$}{\small $U_{m}^{S*}$}
        \end{tikzpicture}
    \end{array} ,
    \end{align}
    together with $E^{M}_{M + 1} = E^{-1}_0 = I$ and then using Eq.~(\ref{eq:gamma_m}).
    \item[(2)] Next, we compute $\{l_p\}_{p \in \{0, 1, 2 \dots M\}}, \{r_p\}_{p \in \{0, 1, 2 \dots M\}}$ using the recursion in Eq.~(\ref{eq:adjoint_single_sum_recursion}) --- this can be done in $O(M)$ time. 
    \item[(3)] Finally, we use Eqs.~(\ref{eq:deriv_single_sum_adj}a), (\ref{eq:deriv_single_sum_adj}b) and (\ref{eq:deriv_single_sum_adj}d) to compute $\{\partial \Gamma^{(1)}(\theta)/ \partial \theta_p\}_{p \in \{0, 1, 2 \dots M\}}$ --- this can again be done in $O(M)$ time.
\end{enumerate}

\emph{Cost functions that are double sums}. Next, we consider cost functions $\Gamma^{(2)}(\theta)$ that are of the form
\begin{align}
    \Gamma^{(2)}(\theta) = \sum_{n = 1}^N \sum_{m = 1}^{n - 1}f(\gamma_{n, m}),
\end{align}
where $f:\mathbb{C} \to \mathbb{R}$ is Wirtinger differentiable and $\gamma_{n, m} \in \mathbb{C}$ is given by:
\begin{align}\label{eq:gamma_m_n}
    \gamma_{n, m} = 
    \begin{array}{c}
        \begin{tikzpicture}[scale=0.6]
            \BTensor{(0, 0)}{\small $\phi_i$}{\small $\phi_i^*$}{1};
            \BTensor{(-\singledx*6, 0)}{\small $\phi_f^*$}{\small $\phi_f$}{-1};
            \KGate{(-\singledx*2, 0)}{\small $Q_2$}{\small $P_2^*$};
            \KGate{(-\singledx*4, 0)}{\small $Q_1$}{\small $P_1^*$};
            \CTensor{(-\singledx, 0)}{\small $\text{E}_{0}^{m - 1}$};
            \CTensor{(-\singledx*3, 0)}{\small $\text{E}_{m}^{n - 1}$};
            \CTensor{(-\singledx*5, 0)}{\small $\text{E}_{n}^{M}$};
        \end{tikzpicture}
    \end{array},
\end{align}
for some fixed tensors $P_1, P_2, Q_1, Q_2$. Again, the derivative of $\Gamma^{(2)}(\theta)$, $\partial \Gamma^{(2)}(\theta) / \partial \theta_p$ is given by
\begin{subequations}\label{eq:deriv_gamma_m_n}
\begin{align}
    \frac{\partial}{\partial \theta_p}\Gamma^{(2)}(\theta) = 2\sum_{n = 1}^M \sum_{m = 1}^{n - 1}\text{Re}\bigg(f'(\gamma_{n, m}) \frac{\partial \gamma_{n, m}}{\partial \theta_p}\bigg),
\end{align}
where $f' \equiv \partial f(z, z^*)/\partial z$. Note that $\partial \gamma_{n, m}/\partial \theta_p$ will be given by
\begin{align}\label{eq:gamma_n_m_deriv}
    \frac{\partial \gamma_{n, m}}{\partial \theta_p} &= 
    \begin{array}{c}
        \begin{tikzpicture}[scale=0.6]
        \BTensor{(0, 0)}{\small $\phi_f^*$}{\small $\phi_f$}{-1};
        \CTensor{(\singledx, 0)}{\small $\text{E}_n^M$};
        \KGate{(\singledx*2, 0)}{\small $Q_1$}{\small $P_1^*$};
        \CTensor{(\singledx*3, 0)}{\small $\text{E}_m^{n - 1}$};
        \KGate{(\singledx*4, 0)}{\small $Q_2$}{\small $P_2^*$};
        \CTensor{(\singledx*5, 0)}{\small $\text{L}_{p + 1}^{m - 1}$};
        \UDTensor{(\singledx*6, 0)}{\large $\frac{\partial \mathcal{U}_p^S}{\partial \theta_p}$};
        \CTensor{(\singledx*7, 0)}{\small $\text{R}_{0}^{p - 1}$};
        \BTensor{(\singledx*8, 0)}{\small $\phi_i$}{\small $\phi_i^*$}{1};
        \end{tikzpicture}
    \end{array} \ \text{ if }0 \leq p < m, \nonumber\\
    \frac{\partial \gamma_{n, m}}{\partial \theta_p} &= 
    \begin{array}{c}
        \begin{tikzpicture}[scale=0.6]
        \BTensor{(0, 0)}{\small $\phi_f^*$}{\small $\phi_f$}{-1};
        \CTensor{(\singledx, 0)}{\small $\text{E}_n^M$};
        \KGate{(\singledx*2, 0)}{\small $Q_1$}{\small $P_1^*$};
        \CTensor{(\singledx*3, 0)}{\small $\text{L}_{p + 1}^{n - 1}$};
        \UDTensor{(\singledx*4, 0)}{\large $\frac{\partial \mathcal{U}_p^S}{\partial \theta_p}$};
        \CTensor{(\singledx*5, 0)}{\small $\text{R}_{m}^{p - 1}$};
        \KGate{(\singledx*6, 0)}{\small $Q_2$}{\small $P_2^*$};
        \CTensor{(\singledx*7, 0)}{\small $\text{E}_{0}^{m - 1}$};
        \BTensor{(\singledx*8, 0)}{\small $\phi_i$}{\small $\phi_i^*$}{1};
        \end{tikzpicture}
    \end{array} \ \text{ if }m \leq p < n \text{ and},\nonumber \\
     \frac{\partial \gamma_{n, m}}{\partial \theta_p} &= 
    \begin{array}{c}
        \begin{tikzpicture}[scale=0.6]
        \BTensor{(0, 0)}{\small $\phi_f^*$}{\small $\phi_f$}{-1};
        \CTensor{(\singledx, 0)}{\small $\text{L}_{p + 1}^M$};
        \UDTensor{(\singledx*2, 0)}{\large $\frac{\partial \mathcal{U}_p^S}{\partial \theta_p}$};
        \CTensor{(\singledx*3, 0)}{\small $\text{R}_{n}^{p - 1}$};
        \KGate{(\singledx*4, 0)}{\small $Q_1$}{\small $P_1^*$};
        \CTensor{(\singledx*5, 0)}{\small $\text{E}_{m}^{n - 1}$};
        \KGate{(\singledx*6, 0)}{\small $Q_2$}{\small $P_2^*$};
        \CTensor{(\singledx*7, 0)}{\small $\text{E}_{0}^{m - 1}$};
        \BTensor{(\singledx*8, 0)}{\small $\phi_i$}{\small $\phi_i^*$}{1};
        \end{tikzpicture}
    \end{array} \ \text{ if }n \leq p \leq M.
\end{align}
\end{subequations}
Note that using Eq.~(\ref{eq:deriv_gamma_m_n}) to compute $\{\partial \Gamma^{(2)}(\theta)/\partial \theta_p\}_{p \in \{0,1, 2 \dots M\}}$ would require $O(M^3)$ time. However, similar to the case of a cost function with a single summation, this computation can be re-organized to require $O(M^2)$ time. To do so, we note that
\begin{subequations}\label{eq:deriv_double_sum_adj}
\begin{align}
    \frac{\partial}{\partial \theta_p}\Gamma^{(2)}(\theta) =2\text{Re}\bigg(\sum_{p < m < n \leq M} f'(\gamma_{n, m}) \frac{\partial \gamma_{n, m}}{\partial \theta_p}\bigg) + 2\text{Re}\bigg(\sum_{\substack{1\leq m \leq p \\ p < n \leq M}} f'(\gamma_{n, m}) \frac{\partial \gamma_{n, m}}{\partial \theta_p}\bigg) + 2\text{Re}\bigg(\sum_{ 1\leq m < n \leq p} f'(\gamma_{n, m}) \frac{\partial \gamma_{n, m}}{\partial \theta_p}\bigg).
\end{align}
Using Eq.~(\ref{eq:gamma_n_m_deriv}), it follows that
\begin{align}
       \sum_{p < m < n \leq M} f'(\gamma_{n, m})\frac{\partial \gamma_{n, m}}{\partial \theta_p} = 
    \begin{array}{c}
        \begin{tikzpicture}[scale=0.6]
            \BTensor{(0, 0)}{\small $\phi_i$}{\small $\phi_i^*$}{1};
            \CTensor{(-\singledx, 0)}{\small $\text{R}_{0}^{p - 1}$};
            \UDTensor{(-\singledx*2, 0}{\large $\frac{\partial \mathcal{U}^S_p}{\partial \theta_p}$};
            \sidetensorleft{(-\singledx*3, 0)}{\small $l_p$};
        \end{tikzpicture}
    \end{array}, 
\end{align}
where
\begin{align}
    \begin{array}{c}
         \begin{tikzpicture}[scale=0.6]
         \sidetensorleft{(0, 0)}{\small $l_p$};
         \end{tikzpicture} 
    \end{array} = \sum_{p < m < n \leq M}f'(\gamma_{n, m})\left(
    \begin{array}{c}
         \begin{tikzpicture}[scale=0.6]
             \BTensor{(0, 0)}{\small $\phi_f^*$}{\small $\phi_f$}{-1};
             \CTensor{(\singledx, 0)}{\small $\text{E}_n^M$};
             \KGate{(\singledx*2, 0}{\small $Q_1$}{\small $P_1^*$};
             \CTensor{(\singledx*3, 0)}{\small $\text{E}_m^{n - 1}$};
             \KGate{(\singledx*4, 0}{\small $Q_2$}{\small $P_2^*$};
             \CTensor{(\singledx*5, 0)}{\small $\text{L}_{p + 1}^{m - 1}$};
         \end{tikzpicture} 
    \end{array}\right).
\end{align}
Furthermore,
\begin{align}
 \sum_{\substack{1 \leq m \leq p \\ p < n \leq M}} f'(\gamma_{n, m})\frac{\partial \gamma_{n, m}}{\partial \theta_p} = 
    \sum_{1 \leq m \leq p}\left(\begin{array}{c}
        \begin{tikzpicture}[scale=0.6]
            \sidetensorleft{(0, 0)}{\small $c_{p, m}$};
            \UDTensor{(\singledx, 0}{\large $\frac{\partial \mathcal{U}^S_p}{\partial \theta_p}$};
            \CTensor{(\singledx*2, 0)}{\small $\text{R}_{m}^{p - 1}$};
            \KGate{(\singledx*3, 0)}{\small $Q_2$}{\small $P_2^*$};
            \CTensor{(\singledx*4, 0)}{\small $\text{E}_0^{m - 1}$};
            \BTensor{(\singledx*5, 0)}{\small $\phi_i$}{\small $\phi_i^*$}{1};
        \end{tikzpicture}
    \end{array}\right), 
\end{align}
where
\begin{align}
    \begin{array}{c}
         \begin{tikzpicture}[scale=0.6]
         \sidetensorleft{(0, 0)}{\small $c_{p, m}$};
         \end{tikzpicture} 
    \end{array} = \sum_{p  < n \leq M}f'(\gamma_{n, m})\left(
    \begin{array}{c}
         \begin{tikzpicture}[scale=0.6]
             \BTensor{(0, 0)}{\small $\phi_f^*$}{\small $\phi_f$}{-1};
             \CTensor{(\singledx, 0)}{\small $\text{E}_n^M$};
             \KGate{(\singledx*2, 0}{\small $Q_1$}{\small $P_1^*$};
             \CTensor{(\singledx*3, 0)}{\small $\text{L}_{p + 1}^{n - 1}$};
         \end{tikzpicture} 
    \end{array}\right).
\end{align}
Finally,
\begin{align}
 \sum_{\substack{1 \leq m < n \leq p}} f'(\gamma_{n, m})\frac{\partial \gamma_{n, m}}{\partial \theta_p} = 
    \begin{array}{c}
        \begin{tikzpicture}[scale=0.6]
        \BTensor{(0, 0)}{\small $\phi_f^*$}{\small $\phi_f$}{-1};
        \CTensor{(\singledx, 0)}{\small $\text{L}_{p + 1}^M$};
        \UDTensor{(\singledx*2, 0)}{\large $\frac{\partial \mathcal{U}_p^S}{\partial \theta_p}$};
        \sidetensorright{(\singledx*3, 0)}{\small $r_{p}$};
        \end{tikzpicture}
    \end{array}, 
\end{align}
where
\begin{align}
\begin{array}{c}
    \begin{tikzpicture}[scale=0.6]
        \sidetensorright{(0, 0)}{\small $r_{p}$};
    \end{tikzpicture}
\end{array} = 
    \sum_{1 \leq m < n \leq p}\left(
    \begin{array}{c}
        \begin{tikzpicture}[scale=0.6]
        \CTensor{(0, 0)}{\small $\text{R}_{n}^{p - 1}$};
        \KGate{(\singledx, 0)}{\small $Q_1$}{\small $P_1^*$};
        \CTensor{(\singledx*2, 0)}{\small $\text{E}_{m}^{n - 1}$};
        \KGate{(\singledx*3, 0)}{\small $Q_2$}{\small $P_2^*$};
        \CTensor{(\singledx*4, 0)}{\small $\text{E}_{0}^{m - 1}$};
        \BTensor{(\singledx*5, 0)}{\small $\phi_i$}{\small $\phi_i^*$}{1};
        \end{tikzpicture}
    \end{array}\right)
\end{align}
\end{subequations}
If the vectors $l_p, c_{p, m}, r_p$ are pre-computed, then from Eqs.~(\ref{eq:deriv_double_sum_adj}a), (\ref{eq:deriv_double_sum_adj}b), (\ref{eq:deriv_double_sum_adj}d) and (\ref{eq:deriv_double_sum_adj}f), we can compute $\{\partial \Gamma^{(2)}(\theta) / \partial \theta_p\}_{p \in \{0, 1, 2 \dots M\}}$ in $O(M^2)$ time. Furthermore, $l_p, c_{p, m}, r_p$ can also be computed in $O(M^2)$ time by noting that they satisfy the following recursions:
\begin{subequations}\label{eq:adjoint_double_sum_recursion}
\begin{align}
    \begin{array}{c}
         \begin{tikzpicture}[scale=0.6]
         \sidetensorleft{(0, 0)}{\small $l_{p- 1}$};
         \end{tikzpicture} 
    \end{array} &=
    \begin{array}{c}
        \begin{tikzpicture}[scale=0.6]
            \TTensor{(0, 0)}{\small $K$}{\small $K^*$};
            \UTensor{(-\singledx, 0)}{\small $U_p^S$}{\small $U_p^{S*}$}
            \sidetensorleft{(-\singledx*2, 0)}{\small $l_{p}$};
        \end{tikzpicture}
    \end{array} + \sum_{p  < n \leq M}f'(\gamma_{n, p})\left(
    \begin{array}{c}
         \begin{tikzpicture}[scale=0.6]
             \BTensor{(0, 0)}{\small $\phi_f^*$}{\small $\phi_f$}{-1};
             \CTensor{(\singledx, 0)}{\small $\text{E}_n^M$};
             \KGate{(\singledx*2, 0}{\small $Q_1$}{\small $P_1^*$};
             \CTensor{(\singledx*3, 0)}{\small $\text{E}_p^{n - 1}$};
             \KGate{(\singledx*4, 0}{\small $Q_2$}{\small $P_2^*$};
         \end{tikzpicture} 
    \end{array}\right), \\
    \begin{array}{c}
         \begin{tikzpicture}[scale=0.6]
         \sidetensorleftlarger{(0, 0)}{\small $c_{p - 1, m}$};
         \end{tikzpicture} 
    \end{array} &=
    \begin{array}{c}
        \begin{tikzpicture}[scale=0.6]
            \TTensor{(0, 0)}{\small $K$}{\small $K^*$};
            \UTensor{(-\singledx, 0)}{\small $U_p^S$}{\small $U_p^{S*}$}
            \sidetensorleft{(-\singledx*2, 0)}{\small $c_{p,m}$};
        \end{tikzpicture}
    \end{array} + f'(\gamma_{p, m})\left(\begin{array}{c}
         \begin{tikzpicture}[scale=0.6]
             \BTensor{(0, 0)}{\small $\phi_f^*$}{\small $\phi_f$}{-1};
             \CTensor{(\singledx, 0)}{\small $\text{E}_n^M$};
             \KGate{(\singledx*2, 0}{\small $Q_1$}{\small $P_1^*$};
         \end{tikzpicture} 
    \end{array}\right), \\
    \begin{array}{c}
    \begin{tikzpicture}[scale=0.6]
        \sidetensorright{(0, 0)}{\small $r_{p}$};
    \end{tikzpicture}
\end{array} &=      \begin{array}{c}
         \begin{tikzpicture}[scale=0.6]
             \sidetensorright{(0, 0)}{\small $r_{p - 1}$};
             \UTensor{(-\singledx, 0)}{\small $U_{p-1}^S$}{\small $U_{p - 1}^{S*}$};
             \TTensor{(-\singledx*2, 0)}{\small $K$}{\small $K^*$};
         \end{tikzpicture} 
    \end{array} + 
    \sum_{1 \leq m <  p}\left(
    \begin{array}{c}
        \begin{tikzpicture}[scale=0.6]
        \KGate{(\singledx, 0)}{\small $Q_1$}{\small $P_1^*$};
        \CTensor{(\singledx*2, 0)}{\small $\text{E}_{m}^{p - 1}$};
        \KGate{(\singledx*3, 0)}{\small $Q_2$}{\small $P_2^*$};
        \CTensor{(\singledx*4, 0)}{\small $\text{E}_{0}^{m - 1}$};
        \BTensor{(\singledx*5, 0)}{\small $\phi_i$}{\small $\phi_i^*$}{1};
        \end{tikzpicture}
    \end{array}\right).
\end{align}
\end{subequations}
together with the boundary conditions $l_M = c_{M, m} = r_0 = 0$.  Therefore, to compute $\{\partial \Gamma(\theta) / \partial \theta_p\}_{p \in \{0, 1, 2 \dots M\}}$, an algorithm with $O(M)$ run-time is as follows:
\begin{enumerate}
    \item[(1)] First compute $\{\text{E}_{n,m}\}_{0\leq m \leq n \leq M}$ --- this can be done in $O(M^2)$ by using, for every $m$, the following recurrence in $n$
    \begin{align}
        \begin{array}{c}
        \begin{tikzpicture}[scale=0.6]
            \CTensor{(0, 0)}{\small $E_{m}^{n}$};
        \end{tikzpicture}
    \end{array} = \begin{array}{c}
        \begin{tikzpicture}[scale=0.6]
            \CTensor{(0, 0)}{\small $E_{m}^{n - 1}$};
            \TTensor{(-\singledx, 0)}{\small $K$}{\small $K^*$};
            \UTensor{(-\singledx*2, 0)}{\small $U_{n }^S$}{\small $U_{n}^{S*}$}
        \end{tikzpicture}
    \end{array} ,
    \end{align}
    together with $E^{m- 1}_{m} =  I$ and then using Eq.~(\ref{eq:gamma_m}).
    \item[(2)] Next, we compute $l_p, c_{p, m}, r_p$ using the recursion in Eq.~(\ref{eq:adjoint_double_sum_recursion}) --- this can be done in $O(M^2)$ time. 
    \item[(3)] Finally, we use Eqs.~(\ref{eq:deriv_double_sum_adj}a), (\ref{eq:deriv_double_sum_adj}b) and (\ref{eq:deriv_double_sum_adj}d) to compute $\{\partial \Gamma^{(1)}(\theta)/ \partial \theta_p\}_{p \in \{0, 1, 2 \dots M\}}$ --- this can again be done in $O(M^2)$ time.
\end{enumerate}
\section{Optimal measurement}This Appendix provides details on the optimal measurements discussed in the main text. First, we provide details of the optimal measurement protocol which utilizes tunable non-linear optical systems. Next, we complete the details of optimality of photodetection and tunable linear optics from Section \ref{sec:pdlo}.

\subsection{Optimal measurement with non-linear optics}\label{app:opt_measurement_nlo}
We first recall a basic fact: A measurement to extract $\varphi$ from $\ket{\psi_\varphi}$ when $\varphi = \varphi_0$ that saturates the CRB is the projective measurement $P_0 = \ket{\psi_{\varphi_0}}\! \bra{\psi_{\varphi_0}}$, $P_1, P_2 \dots$, where $P_0 + P_1 + P_2 + \dots = I$ and $P_i P_j = \delta_{i, j} P_i$. To see this, note that the probability of the $i^\text{th}$ measurement outcome is given by $p_i(\varphi) = \bra{\psi_\varphi}  P_i \ket{\psi_\varphi}$. The classical Fisher Information (CFI) for this probability distribution at $\varphi = \varphi_0$ is given by
\begin{equation} 
    \text{CFI} = \lim_{\varphi \to \varphi_0}\sum_i \frac{(p_i'(\varphi))^2}{p_i(\varphi)}.
\end{equation}
We notice that $p_i(\varphi_0) = \delta_{i, 0}$ and $p_i'(\varphi_0) = 0$. Thus, CFI($\varphi_0$) can be expressed as
\begin{equation}
\text{CFI} = \lim_{\varphi \to \varphi_0} \sum_i \frac{(p_i'(\varphi))^2}{p_i(\varphi)} = 2 \sum_{i \neq 0}p_i''(\varphi_0).
\end{equation}
Since for $i \neq 0$, $P_i \ket{\psi_{\varphi_0}} = 0$, it follows that 
\[
p_i''(\varphi_0) = \bra{\psi_{\varphi_0}}P_i \ket{\psi''_{\varphi_0}} + \bra{\psi''_{\varphi_0}}P_i \ket{\psi_{\varphi_0}} + 2\bra{\psi'_{\varphi_0}}P_i \ket{\psi'_{\varphi_0}} = 2\bra{\psi'_{\varphi_0}}P_i \ket{\psi'_{\varphi_0}},
\]
and consequently,
\[
\text{CFI} = 4\sum_{i\neq 0}\bra{\psi'_{\varphi_0}} P_i \ket{\psi'_{\varphi_0}} = 4\big(\smallnorm{\ket{\psi_{\varphi_0}'}}^2 - \smallabs{\bra{\psi_{\varphi_0}} \psi'_{\varphi_0}\rangle}^2\big) = \text{QFI},
\]
thus establishing that this measurement is optimal.

Next, we continue the analysis in Section~\ref{sec:pd_nlo} of the main text, and provide details on how to implement re-absorption of the photons emitted by the source using a non-linear optical cavity. As was discussed in Appendix \ref{app:mps_rep}, the photons emitted in time $T$ into the two input ports of the MZI can be approximately expressed as an MPS on $M = T/\varepsilon$ qudits with 3 levels, where $\varepsilon$ is a small discretization parameter. It is a well known fact that a sequence of unitaries can be designed between an ancilla qudit and the MPS $\ket{\psi}$ to map $\ket{\psi}\to \ket{0}^{\otimes M}$ \cite{schon2005sequential}. More specifically, given an MPS on $M$ qudits with $d$-levels and bond dimension $D$ with tensors $A_1, A_2 \dots A_M$, we can always alternatively expressed in terms of unitaries $U_1, U_2 \dots U_M$, with $U_i:\mathbb{C}^D \otimes \mathbb{C}^d \to \mathbb{C}^D \otimes \mathbb{C}^d$ \cite{perez2006matrix}:
\begin{align}
    \begin{array}{c}
        \begin{tikzpicture}[scale=0.6]
            \bTensor{(0, 0)}{$\phi_f^*$}{-1};
            \KTensor{(\dx,0)}{\small $A_M$};
            \SingleDots{\dx*2,0}{\dx/2};
            \KTensor{(\dx*3,0)}{\small $A_2$};
            \KTensor{(\dx*4,0)}{\small $A_1$};
            \bTensor{(\dx*5, 0)}{$\phi_i$}{1};
        \end{tikzpicture}
    \end{array} = 
        \begin{array}{c}
        \begin{tikzpicture}[scale=0.6, baseline={([yshift=1.3cm] current bounding box.center)}]
            \UT{(0,0)}{\small $U_M$};
            \SingleDots{(\dx,0)}{\dx/2};
            \UT{(\dx*2,0)}{\small $U_2$};
            \UT{(\dx*3,0)}{\small $U_1$};
            \draw[>-, thick] (-0.9, 0) -- (-0.8, 0);
            \draw[>-, thick] (0.8, 0) -- (0.9, 0);
            \draw[>-,,thick] (-0.9 + \dx*2, 0) -- (-0.8 + \dx*2, 0);
            \draw[>-, thick] (-0.9 + \dx*3, 0) -- (-0.8 + \dx*3, 0);
            \draw[>-, thick] (0.8 + \dx*3, 0) -- (0.9 + \dx*3, 0);
            \draw[>-, thick] (0, 0.8) -- (0, 0.9);
            \draw[>-, thick] (\dx*2, 0.8) -- (\dx*2, 0.9);
            \draw[>-, thick] (\dx*3, 0.8) -- (\dx*3, 0.9);
            \draw[>-, thick] (0, -0.9) -- (0, -0.8);
            \draw[>-, thick] (\dx*2, -0.9) -- (\dx*2, -0.8);
            \draw[>-, thick] (\dx*3, -0.9) -- (\dx*3, -0.8);
            \draw(0, -1.7) node {$\ket{0}$};
            \draw(\dx*2, -1.7) node {$\ket{0}$};
            \draw(\dx*3, -1.7) node {$\ket{0}$};
            \draw(-1.7, 0) node {$\ket{0}$};
            \draw(\dx*3 + 1.9, 0) node {$(\ket{0})$};
        \end{tikzpicture}
    \end{array}
\end{align}
In other words, the MPS $\ket{\psi}$ can be generated by initializing a $D-$level ancilla and the $M$ qudits in $\ket{0}$ and then applying $U_M$ on the ancilla and the $M^\text{th}$ qudit, then $U_{M - 1}$ on the ancilla and the $(M - 1)^\text{th}$ qudit and so on. Furthermore, since $\ket{\psi}$ is a normalized state, it is always guaranteed that on applying the last unitary in this sequence, $U_1$, the ancilla will disentangle from the remaining qudits and can be assumed to be in the state $\ket{0}$. To undo the state preparation, we can thus start from an ancilla in $\ket{0}$, apply $U_1^\dagger$ to the ancilla and the right-most qudit, then apply $U_2^\dagger$ to the ancilla and the next qudit and so on --- at the end of this sequence of unitaries, the ancilla as well as the remaining qudits will be in $\ket{0}$ state.

To physically implement these unitaries, our proposal is to use a multi-mode optical cavity with $\chi^{(3)}$ nonlinearity and coherent drives. We assume that we can implement the following Hamiltonian $H_R(t)$ between the three cavities
\begin{align}\label{eq:Hamiltonian_kerr}
    H_R(t) = \frac{\chi}{2}\sum_{k, k' \in \{0,A,B\}}c_k^\dagger c_{k'}^\dagger c_k c_{k'} + \sum_{k \in \{0,A,B\}} \big(\lambda_k^*(t) e^{-i\phi_k(t)} b_k +\text{h.c.}\big),
\end{align}
where, as shown in Fig.~\ref{fig:mzi_reabsorb}, $c_A$ and $c_B$ are the annihilation operators of the modes coupling to the output ports of the MZI and $c_0$ is the annihilation operator of an additional mode that will play the role of the ancilla. Furthermore, we also assume that we can tunably and linearly couple the modes $c_A$ and $c_B$ to the MZI output ports via the Hamiltonian
\begin{align}
    H_{R, P}(t) = V_A(t) a_t^\dagger c_A + V_B(t) b_t^\dagger c_B + \text{h.c.}
\end{align}
Now, to implement the unitary that coherently re-absorbs the photons emitted by the source, at every timestep $[m\varepsilon, (m + 1)\varepsilon)$, we perform two steps: \emph{first}, transfer the photons from the segment $[m\varepsilon, (m + 1)\varepsilon)$ in the ports to the cavity modes $c_A, c_B$ and \emph{second}, apply the unitary in between $U_m^\dagger$ in between the three modes. To perform the first step, we set all the tunable parameters ($\{\lambda_k(t)\}$) to 0, and choose $V_A(t), V_B(t)$ in the interval $t \in [m\varepsilon, (m + 1)\varepsilon)$ so as to map 
    \[
    \big(1 + \alpha_0 A_m^\dagger + \beta_0 B_m^\dagger)\ket{\text{vac}} \to \big(1 + \alpha_0 c_A^\dagger + \beta_0 c_B^\dagger) \ket{0_A, 0_B},
    \]
    where, following Appendix \ref{app:mps_rep}, $A_m = \int_{m\varepsilon}^{(m + 1)\varepsilon}a_t dt / \sqrt{\varepsilon}, B_m = \int_{m\varepsilon}^{(m + 1)\varepsilon}b_t dt / \sqrt{\varepsilon}$ and $\ket{0_A}, \ket{0_B}$ are respectively the vacuum states of the oscillators $c_A$ and $c_B$. Recall from Appendix \ref{app:mps_rep} that for sufficiently small $\varepsilon$, the segment of the ports corresponding to $[m\varepsilon, (m + 1)\varepsilon)$ has at most $1$-photon upto $O(\varepsilon^2)$ error, so this unitary effectively transfers the photons in the segment $[m\varepsilon, (m + 1)\varepsilon)$ of the two ports to the oscillators $c_A$ and $c_B$. It is also easy to check that the explicit choice $V_A(t), V_B(t) = ((m + 1)\varepsilon - t)^{-1/2}$ for $t \in [m\varepsilon, (m + 1)\varepsilon)$ accomplishes this transfer of excitations from the ports to the two coupled cavities.
    
Next, we switch off the couplings to the MZI ports (i.e.,~set $V_A(t), V_B(t) = 0$) and apply the unitary $U_m^\dagger$: Note that the unitary $U_m^\dagger$ acts on the space $\mathbb{C}^{3}\otimes \mathbb{C}^D$, where the qutrit is in between the states $\ket{0_A, 0_B}, c_A^\dagger \ket{0_A, 0_B}, c_B^\dagger \ket{0_A, 0_B}$ and the $D$-level qudit corresponds to the state of the oscillator $c_0$. We also point out that this gate needs to applied on time-scales much faster than $\varepsilon$, i.e.,~before the oscillators interact with the next time-bin. However, it would appear from Eq.~(\ref{eq:Hamiltonian_kerr}) that the gate-time is limited to $\sim 1/\chi$. Experimentally it is typically only feasible to have very large coherent drives $\lambda_k(t)$, but $\chi$ remains a fixed (and often small) constant. However, we show that by adapting the strategy of Ref.~\cite{yuan2023universal}, we can use a large drive to apply any target unitary on this subspace within a time limited only by the strength of drive $\lambda_k(t)$.

Suppose $U_R(t,s)=\mathcal{T}\text{exp}(-\int_s^t H_R(\tau) d\tau)$ is the unitary group corresponding to $H_R(t)$: We first consider this unitary group in the frame rotated with respect to the Hamiltonian $\sum_k\phi_k'(t)c_k^\dagger c_k $ and displaced with respect to the Hamiltonian $ \sum_k (\dot{\gamma}_k(t) c_k^\dagger - \text{h.c.})$. Furthermore, given $\gamma_k(t)$ (which we will choose later) we will make the choices
\begin{align*}
&\dot{\phi}_k(t) =- \chi\sum_{k'}\smallabs{\gamma_{k'}(t)}^2 - \frac{\chi}{2}\smallabs{\gamma_k(t)}^2,\nonumber\\
&\lambda_k(t) = i\dot{\gamma}_k(t)-\dot{\phi}_k(t) \gamma_k(t)-\chi D \gamma_k(t)-\chi \sum_{k'}\smallabs{\gamma_k(t)}^2.
\end{align*}
The effective Hamiltonian $\tilde{H}_R(t)$ in this new frame is then given by
\begin{align}
\tilde{H}_R(t) = \frac{\chi}{2}\sum_{k, k'} c_k^\dagger c_{k'}^\dagger c_k c_{k'} + \chi \sum_{k}\big(\gamma_k(t) c_k^\dagger(N-D)+\text{h.c.}\big)+ \frac{\chi}{2}\sum_{k\neq k'} \big(c_k^\dagger c_{k'}^\dagger \gamma_k(t) \gamma_{k'}(t) + c_k^\dagger c_{k'}\gamma_k(t) \gamma_{k'}^*(t)+\text{h.c.}\big),
\end{align}
where $N=\sum_kc_k^\dagger c_k$ is the total number of photons in the three cavities. To further simplify $\tilde{H}_R(t)$, we also choose $\gamma_k(t) = \Gamma_k(t) f_k(t)$ where $f_k(t)$ are rapidly oscillating functions that further satisfy $\langle f_k(t)\rangle = 1, \langle f_k^2(t)\rangle = 0$, and for $k \neq k'$ $\langle f_k(t)f_{k'}(t)\rangle=\langle f_k(t) f_{k'}^*(t)\rangle= 0$. A set of possible explicit choices for $f_A(t), f_B(t),f_0(t)$ that satisfy this condition are
\begin{align}
    f_k(t) = 1 + \frac{1}{\sqrt{2}} \sum_{n=1}^6 e^{i2 \pi m_k n / 7} \cos(n\Omega t) \text{ where }m_0= 1, m_A = 2 \text{ and }m_B =3.
\end{align}
In the limit of large $\Omega$, the dynamics due to $\tilde{H}_R(t)$ is well approximated by $\hat{H}_R(t)$
\begin{subequations}\label{eq:final_blockaded_hamiltonian}
   \begin{align}
\hat{H}_R(t) = \frac{\chi}{2}\sum_{k, k'} c_k^\dagger c_{k'}^\dagger c_k c_{k'} + \hat{H}^c_R(t)
\end{align}
where
\begin{align}
    \hat{H}^c_R(t) = \chi\sum_k\big( \text{Re}[\Gamma_k(t)] H_k^{(+)} + \text{Im}[\Gamma_k(t)] H_k^{(-)}\big),
\end{align}
with
\begin{align}
   H_k^{(+)} =  c_k^\dagger (N- D)+\text{h.c.} \text{ and }H_k^{(-)} = i(c_k^\dagger (N- D)-\text{h.c.}).
\end{align} 
\end{subequations}
Here $N = \sum_k c_k^\dagger c_k$ is the total number of photons in the three oscillators. Given $\vec{n} = (n_0, n_A, n_B)$ and denoting by $\ket{\vec{n}} \propto (c_0^\dagger)^{n_0}(c_A^\dagger)^{n_A}(c_B^\dagger)^{n_B}\ket{0}$, we first observe that $\hat{H}_R(t)$ and $\hat{H}_R^{c}(t)$ are block diagonal in the subspace $\mathcal{S}^{(D)} = \text{span}(\{\ket{\vec{n}}:n_0+n_A+n_B\leq D\})$ (the subspace with no more than $D$ photons in the three cavities) and the subspace orthogonal to it, i.e.,~for all $\ket{\psi}\in \mathcal{S}^{(D)}$, $\hat{H}_R(t)\ket{\psi}, \hat{H}_R^{c}(t)\ket{\psi} \in \mathcal{S}^{(D)}$. We now concentrate on the finite-dimensional subspace $\mathcal{S}^{(D)}$ and show that the Hamiltonian $\hat{H}_R^c(t)$ is universal in this subspace, i.e.,~any desired unitary $U$ can be implemented up to a global phase with an appropriate choice of $\Gamma_k(t)$. From the standard theory of quantum control \cite{dong2010quantum, d2021introduction}, to establish universality of $H_R^{c}(t)$, we need to establish that the algebra generated by the operators $\{H_k^{(+)}, H_k^{(-)}\}_{k \in \{0, A, B\}}$ under nested commutator is the full $\mathfrak{su}(\text{dim}(\mathcal{S}^{(D)}))$ algebra, i.e.,~any traceless Hermitian operator on the subspace $\mathcal{S}^{(D)}$ can be expressed as a linear combination of $\{H_k^{(+)}, H_k^{(-)}\}_{k \in \{0, A, B\}}$ and their nested commutators. Reference~\cite{yuan2023universal} considered this problem for the case of a single oscillator, and established universality of $\hat{H}_R^c(t)$ using a result from Ref.~\cite{schirmer2001complete} that described a set of sufficient checkable conditions for a set of operators to generate the full algebra of traceless Hermitian operators. However, the result of Ref.~\cite{schirmer2001complete} does not apply directly to $\hat{H}_R^c(t)$ when more than one oscillator is present and their argument has to be modified. We do so in the following lemma.

\begin{lemma}\label{lemma:controllability}
    For a positive integers $k, d>1$, consider a finite-dimensional Hilbert space $\mathcal{H}$ with orthonormal basis elements $\{\ket{\vec{n}} \text{ where }\vec{n} = \{n_1,n_2 \dots n_k\} \text{ with } n_1 + n_2 \dots n_k \leq d\}$. Consider the Hermitian operators $H_0$, $\{H_p^{(+)}, H_p^{(-)}\}_{p \in \{1, 2 \dots k\}}$ where
    \begin{align*}
    &H_0 = \sum_{\vec{n}: \norm{\vec{n}}_1 \leq d} \mu({\vec{n}}) \ket{\vec{n}}\!\bra{\vec{n}},\nonumber\\
    &H_p^{(+)} = \sum_{\vec{n}:\smallnorm{\vec{n}}_1 \leq d - 1}d_p^{(+)}(\vec{n}) \big(\ket{\vec{n}}\!\bra{\vec{n} + \vec{e}_p} +\text{h.c.}\big), \text{ and }\nonumber\\
    &H_p^{(-)} = i\sum_{\vec{n}:\smallnorm{\vec{n}}_1 \leq d - 1}d_p^{(-)}(\vec{n}) \big(\ket{\vec{n}}\!\bra{\vec{n} +\vec{e}_p} -\text{h.c.}\big),
    \end{align*}
    where $\vec{e}_p = (\underbrace{0, \dots 0}_{p - 1 \text{ times}}, 1, 0 \dots 0)$.
    If the coefficients $d_p^{(\pm)}(\vec{n})$, $\mu(\vec{n})$ satisfy the conditions
    \begin{enumerate}
        \item[C1.] {Non-zero off diagonal elements in $H_p^{(\pm)}$}: For all $p \in \{1, 2 \dots k\}$ and $\vec{n}$, $d_p^{(\pm)}(\vec{n}) \neq 0$.
        \item[C2.] Non-degeneracy in level-spacings of $H_0$: For all $p \in \{1, 2 \dots k\}$, 
        \[
        (\mu(\vec{n} + \vec{e}_p) - \mu(\vec{n}))^2 = (\mu(\vec{m} + \vec{e}_p) - \mu(\vec{m}))^2 \text{ if and only if }\vec{n} = \vec{m},
        \]
    \end{enumerate}
    then the algebra generated by the set of operators $\{H_0, H^{(\pm)}_1, H^{(\pm)}_2 \dots H^{(\pm)}_k\}$ contains the full $\mathfrak{su}(\textnormal{dim}(\mathcal{H}))$ algebra.
\end{lemma}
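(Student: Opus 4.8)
\emph{Setup and strategy.} The plan is to show that the real Lie algebra $\mathfrak{g}$ generated by $\{H_0, H_1^{(\pm)},\dots,H_k^{(\pm)}\}$ exhausts the traceless Hermitian operators on $\mathcal{H}$, of dimension $N := \dim\mathcal{H}$. Since the commutator of two Hermitian operators is anti-Hermitian, I would work throughout with the bracket $\{A,B\} := -i[A,B]$, under which the traceless Hermitian operators form a real Lie algebra isomorphic to $\mathfrak{su}(N)$; all factors of $i$ below are bookkeeping under this convention. Writing $E_{\vec n\vec m} = \ket{\vec n}\bra{\vec m}$, I introduce for each lattice edge $(\vec n,\vec n+\vec e_p)$ the Hermitian transition generators $X_p(\vec n) = E_{\vec n,\vec n+\vec e_p}+E_{\vec n+\vec e_p,\vec n}$ and $Y_p(\vec n) = i(E_{\vec n,\vec n+\vec e_p}-E_{\vec n+\vec e_p,\vec n})$, and set $\delta_p(\vec n) = \mu(\vec n+\vec e_p)-\mu(\vec n)$, so that $H_p^{(+)} = \sum_{\vec n} d_p^{(+)}(\vec n)X_p(\vec n)$ and $H_p^{(-)} = \sum_{\vec n} d_p^{(-)}(\vec n)Y_p(\vec n)$. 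The proof proceeds in three stages: (i) isolate each individual $X_p(\vec n), Y_p(\vec n)$ from the lumped operators $H_p^{(\pm)}$ using $H_0$; (ii) connect these nearest-neighbor generators into transition generators between arbitrary pairs of basis states; (iii) produce the diagonal (Cartan) generators.

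\emph{Stage (i): spectral isolation.} A direct computation gives $\{H_0, X_p(\vec n)\} = \delta_p(\vec n)Y_p(\vec n)$ and $\{H_0, Y_p(\vec n)\} = -\delta_p(\vec n)X_p(\vec n)$, hence $\{H_0,\{H_0,X_p(\vec n)\}\} = -\delta_p(\vec n)^2 X_p(\vec n)$, and likewise for $Y_p(\vec n)$. Therefore applying the double bracket $\{H_0,\{H_0,\cdot\}\}$ exactly $j$ times to $H_p^{(+)}$ yields $\sum_{\vec n} d_p^{(+)}(\vec n)(-\delta_p(\vec n)^2)^j X_p(\vec n)$, an element of $\mathfrak g$ for every $j\ge 0$. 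By condition C2 the nodes $\{-\delta_p(\vec n)^2\}_{\vec n}$ are pairwise distinct, so the associated Vandermonde matrix is invertible and suitable real linear combinations of these elements isolate each $d_p^{(+)}(\vec n)X_p(\vec n)$; condition C1 ($d_p^{(+)}(\vec n)\neq 0$) then places each $X_p(\vec n)$ in $\mathfrak g$. The identical argument applied to $H_p^{(-)}$ places each $Y_p(\vec n)$ in $\mathfrak g$. It is crucial here that C2 concerns the \emph{squared} spacings: the double bracket only sees $\delta_p(\vec n)^2$, so two transitions with $\delta_p(\vec n)=-\delta_p(\vec m)$ could not otherwise be resolved.

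\emph{Stages (ii) and (iii): connectivity and Cartan.} The lattice graph on $\{\vec n:\|\vec n\|_1\le d\}$ with edges $(\vec n,\vec n+\vec e_p)$ is connected, since any two points of the simplex are joined by a path of unit $\pm\vec e_p$ steps. For distinct basis states $a,b,c$ one has the $\mathfrak{su}(2)$-type identities $\{X_{(a,b)},X_{(b,c)}\}\propto Y_{(a,c)}$ and $\{X_{(a,b)},Y_{(b,c)}\}\propto X_{(a,c)}$, where $X_{(a,b)}, Y_{(a,b)}$ denote the real and imaginary transition generators for the pair $(a,b)$. Starting from the nearest-neighbor generators of stage (i) and inducting along a connecting path, I would generate $X_{(\vec n,\vec m)}$ and $Y_{(\vec n,\vec m)}$ for every pair $\vec n\neq\vec m$. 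Finally $\{X_{(a,b)},Y_{(a,b)}\}\propto E_{aa}-E_{bb}$ supplies all diagonal differences, which span the traceless diagonal operators. Since the off-diagonal generators for all pairs together with the diagonal differences span the traceless Hermitian operators, $\mathfrak g = \mathfrak{su}(N)$, completing the argument.

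\emph{Main obstacle.} The crux, and the step where the single-oscillator argument of Ref.~\cite{schirmer2001complete} must be genuinely adapted, is stage (i): one must resolve the transitions mode-by-mode by letting $\{H_0,\cdot\}$ act within each $H_p^{(\pm)}$, and recognize that the relevant non-degeneracy is of the \emph{squared} level spacings (condition C2), which is exactly what cancels the sign ambiguity of the double bracket. Once each elementary $X_p(\vec n), Y_p(\vec n)$ is in hand, stages (ii)--(iii) are essentially combinatorial bookkeeping, the only care being the connectivity of the multi-mode lattice (absent in the single-chain case) and the consistent tracking of Hermitian versus anti-Hermitian operators through the bracket $\{\cdot,\cdot\}$.
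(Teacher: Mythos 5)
Your proposal is correct and follows essentially the same route as the paper's proof: repeated commutation with $H_0$ produces a Vandermonde system in the squared level spacings whose invertibility (condition C2) together with C1 isolates each nearest-neighbor transition generator, after which lattice connectivity and the standard $\mathfrak{su}(2)$-type commutator identities yield all off-diagonal and diagonal generators. The only difference is the order of the two stages (the paper establishes connectivity of the nearest-neighbor generators first and then isolates them from $H_p^{(\pm)}$), which is immaterial.
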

\begin{proof}
Unless otherwise mentioned, throughout this proof $\vec{n} = (n_1, n_2 \dots n_k)$ will denote a vector of non-negative integers with $n_1 + n_2+ \dots + n_k \leq d$. For notational convenience, we will define the Hermitian operators $E^{(\pm)}(\vec{n}, \vec{m})$ via
\[
E^{(0)}(\vec{n}, \vec{m}) = \ket{\vec{n}}\!\bra{\vec{n}} - \ket{\vec{m}}\!\bra{\vec{m}}, E^{(+)}(\vec{n}, \vec{m}) = \ket{\vec{n}}\!\bra{\vec{m}} + \text{h.c.} \text{ and }E^{(-)}(\vec{n}, \vec{m}) = i(\ket{\vec{n}}\!\bra{\vec{m}} - \text{h.c.}).
\]
Furthermore, for an operator $E$, we will denote by $\mathcal{C}_E$ its commutator, i.e., $\mathcal{C}_E(X) =[E, X]$. For a set of operators $\mathcal{Q} = \{E_1, E_2 \dots E_n\}$, we will denote by $\text{alg}(\mathcal{Q})$ to be the algebra generated by $\mathcal{Q}$ under nested commutators.

We first observe that the operators $\mathcal{M} = \{E^{(+)}(\vec{n}, \vec{n} + \vec{e}_p), E^{(-)}(\vec{n}, \vec{n} + \vec{e}_p)\}_{\vec{n}: \norm{\vec{n}}_1 \leq d - 1}$ generate the full $\mathfrak{su}(\text{dim}(\mathcal{H}))$ algebra (i.e.,~the algebra of traceless Hermitian operators on $\mathcal{H}$). To see this explicitly, note that for any $\vec{n} \neq \vec{m}$, we can always find $\vec{k}_1, \vec{k}_2 \dots \vec{k}_P$ such that $\smallnorm{\vec{n} - \vec{k}_1}_1, \smallnorm{\vec{k}_1 - \vec{k}_2}_1, \smallnorm{\vec{k}_2 - \vec{k}_3}_1 \dots \smallnorm{\vec{k}_P - \vec{m}}_1 = 1$ and consequently $E^{(\pm)}(\vec{n}, \vec{k}_1), E^{(\pm)}(\vec{k}_1, \vec{k}_2), E^{(\pm)}(\vec{k}_2, \vec{k}_3) \dots E^{(\pm)}(\vec{k}_P, \vec{m}) \in \mathcal{M}$. Furthermore, since for any $\vec{n} \neq \vec{m} \neq \vec{k}$, $[E^{(+)}(\vec{n}, \vec{m}), E^{(+)}(\vec{m}, \vec{k})]= -i E^{(-)}(\vec{m}, \vec{k})$, $[E^{(+)}(\vec{n}, \vec{m}), E^{(-)}(\vec{m}, \vec{k})] = iE^{(+)}(\vec{n}, \vec{k})$, $[E^{(-)}(\vec{n}, \vec{m}), E^{(-)}(\vec{m}, \vec{k})] = -E^{(+)}(\vec{n}, \vec{k})$, it follows that for both $\sigma \in \{+, -\}$
\[
E^{(\sigma)}(\vec{n}, \vec{m}) = \eta\mathcal{C}_{E^{(\sigma_1)}(\vec{n}, \vec{k}_1)}, \mathcal{C}_{E^{(\sigma_2)}(\vec{k}_1, \vec{k}_2)} \dots \mathcal{C}_{E^{(\sigma_P)}(\vec{k}_{P-1}, \vec{k}_P)}, E^{(\sigma_{P + 1})}(\vec{k}_P, \vec{m}),
\]
for some $\sigma_1, \sigma_2 \dots \sigma_{P + 1} \in \{+, -\}$ and $ \eta \in \{\pm 1, \pm i\}$
and therefore $E^{(\pm)}(\vec{n}, \vec{m}) \in \text{alg}(\mathcal{M})$. Furthermore, since $[E^{(+)}(\vec{n}, \vec{m}), E^{(-)}(\vec{n}, \vec{m})] = -iE^{(0)}(\vec{n}, \vec{m})$, we conclude that for all $\vec{n}, \vec{m}$, $E^{(0)}(\vec{n}, \vec{m}) \in \text{alg}(\mathcal{M})$. Since the operator set $\{E^{(0)}(\vec{n}, \vec{m}), E^{(+)}(\vec{n}, \vec{m}), E^{(-)}(\vec{n}, \vec{m})\}_{\vec{n}\neq \vec{m}}$ spans the whole set of Hermitian traceless operators on $\mathcal{H}$, we obtain that $\text{alg}(\mathcal{M}) = \mathfrak{su}(\text{dim}(\mathcal{H}))$.

Next, we show that $\mathcal{M} = \{E^{(+)}(\vec{n}, \vec{n} + \vec{e}_p), E^{(-)}(\vec{n}, \vec{n} + \vec{e}_p)\}_{\vec{n}: \norm{\vec{n}}_1 \leq d - 1}$ is contained in $\text{alg}(\{H_0, H_1^{(\pm)}, H_2^{(\pm)} \dots H_k^{(\pm)}\})$. For this, we observe that $\mathcal{C}_{H_0}^{2k}(H_p^{(+)})  \in \text{alg}(\{H_0, H_1^{(\pm)}, H_2^{(\pm)} \dots H_k^{(\pm)}\})$ is given by
\[
\mathcal{C}_{H_0}^{2k}(H_p^{(+)}) = \sum_{\vec{n}:\norm{\vec{n}}_1 \leq d - 1} (\mu(\vec{n} + \vec{e}_p) - \mu(\vec{n}))^{2k} d_p^{(+)}(\vec{n})E^{(+)}(\vec{n}, \vec{n} + \vec{e}_p).
\]
We note that this equation can be re-written as
\[
\begin{bmatrix}
    H_p^{(+)} \\
    \mathcal{C}_{H_0}^2(H_p^{(+)}) \\
    \mathcal{C}_{H_0}^4(H_p^{(+)})  \\
    \vdots
\end{bmatrix} = \text{V}
\begin{bmatrix}
    E^{(+)}(\vec{n}_1, \vec{n}_1 +  \vec{e}_p) \\
    E^{(+)}(\vec{n}_2, \vec{n}_1 +  \vec{e}_p) \\
    E^{(+)}(\vec{n}_3, \vec{n}_1 +  \vec{e}_p) \\
    \vdots
\end{bmatrix},
\]
where
\[
\textbf{V} = \begin{bmatrix}
   1 & 1 & 1 & \dots \\
   (\mu(\vec{n}_1 + \vec{e}_p) - \mu(\vec{n}_1))^2 & (\mu(\vec{n}_2 + \vec{e}_p) - \mu(\vec{n}_2))^2 & (\mu(\vec{n}_3 + \vec{e}_p) - \mu(\vec{n}_3))^2 & \dots \\
    (\mu(\vec{n}_1 + \vec{e}_p) - \mu(\vec{n}_1))^4 & (\mu(\vec{n}_2 + \vec{e}_p) - \mu(\vec{n}_2))^4 & (\mu(\vec{n}_3 + \vec{e}_p) - \mu(\vec{n}_3))^4 & \dots \\
    \vdots & \vdots & \vdots & \ddots
\end{bmatrix}.
\]
We note that $\textbf{V}$ is a Vandermonde matrix and since for $\vec{n} \neq \vec{m}$, $(\mu(\vec{n} + \vec{e}_p)^2 - \mu(\vec{n}))^2 \neq (\mu(\vec{m} + \vec{e}_p)^2 - \mu(\vec{m}))^2$ by assumption, $\textbf{V}$ is invertible. Therefore, we obtain that $E^{(+)}(\vec{n}, \vec{n} + \vec{e}_p)$ can be expressed as linear combinations of $H_p^{(+)}, \mathcal{C}^2_{H_0}(H_p^{(+)}), \mathcal{C}^4_{H_0}(H_p^{(+)}) \dots$ and thus $E^{(+)}(\vec{n}, \vec{n} + \vec{e}_p) \in \text{alg}(\{H_0, H_1^{(\pm)}, H_2^{(\pm)} \dots H_k^{(\pm)}\})$. A similar argument can be repeated to show that $E^{(-)}(\vec{n}, \vec{n} + e_p) \in \text{alg}(\{H_0, H_1^{(\pm)}, H_2^{(\pm)} \dots H_k^{(\pm)}\})$ for all $\vec{n}$. Since we have already previously established that the algebra generated by $\{E^{(+)}(\vec{n}, \vec{n} + \vec{e}_p), E^{(-)}(\vec{n}, \vec{n} + \vec{e}_p)\}$ is the full $\mathfrak{su}(\text{dim}(\mathcal{H}))$, we obtain the lemma statement.
\end{proof}

Next, we use lemma \ref{lemma:controllability} to show that the Hamiltonian $H_R^c(t)$ in Eq.~(\ref{eq:final_blockaded_hamiltonian}) is completely controllable within the blockaded subspace $\mathcal{S}^{(D)} = \text{span}(\{\ket{n_0, n_A, n_B}: n_0 + n_A + n_B \leq D\}$. We first re-write it within the subspace $\mathcal{S}^{(D)}$ as
\[
\hat{H}_R^c(t) = \chi \sum_{k \in \{0, A, B\}}\big(\text{Re}[\Gamma_k(t)] H_k^{(+)} + \text{Im}[\Gamma_k(t)] H_k^{(-)}\big),
\]
where
\begin{align*}
&H_k^{(+)} = c_k^\dagger (N - D) + \text{h.c.} = \sum_{\vec{n}:\smallnorm{\vec{n}}_1 \leq D - 1} (\smallnorm{\vec{n}}_1 - D) \sqrt{n_k + 1} \big(\ket{\vec{n}}\!\bra{\vec{n} + \vec{e}_k} + \text{h.c.}\big),\nonumber\\
&H_k^{(-)} = i(c_k^\dagger (N - D) - \text{h.c.}) = i\sum_{\vec{n}: \smallnorm{\vec{n}}_1 \leq D - 1} (\smallnorm{\vec{n}}_1 - D) \sqrt{n_k + 1} \big(\ket{\vec{n}}\!\bra{\vec{n} + \vec{e}_k} - \text{h.c.}\big).
\end{align*}
Note that the operators $H_k^{(\pm)}$ already satisfy the conditions of lemma \ref{lemma:controllability}.
Next, for any real constants $\vec{\xi} = \{\xi_0, \xi_A, \xi_B\}$, the algebra generated by $\{H_k^{(+)}, H_k^{(-)}\}_{k\in \{0, A, B\}}$ contains the operator $H_0$ given by
\begin{subequations}\label{eq:H_0_blockade}
\begin{align}
H_0 = i \sum_k \xi_k [H_k^{(+)}, H_k^{(-)}] = \sum_{\vec{n}: \norm{\vec{n}}_1 \leq D} \mu(\vec{n}) \ket{\vec{n}}\!\bra{\vec{n}},
\end{align}
where
\begin{align}
    \mu(\vec{n}) = \sum_k\big(\xi_k n_k \big(2\smallnorm{\vec{n}}_1 - 2D - 1\big)  + \xi_k(\smallnorm{n}_1 - D)^2\big).
\end{align}
\end{subequations}
We next show that the constants $\xi_0, \xi_A, \xi_B$ can be chosen to satisfy the non-degeneracy condition in lemma $\ref{lemma:controllability}$. We will choose $\xi_k$ to be irrational numbers such that their products $\{\xi_k \xi_{k'}\}_{k, k' \in \{0, A, B\}}$ are irrational and incommensurate --- a concrete choice could be $\xi_0 = 2^{1/3}, \xi_A = 3^{1/3}, \xi_B =  5^{1/3}$. Next, we note that for any integers $\vec{p} = (p_0, p_A, p_B)$ and $\vec{m} = (q_0, q_A, q_B)$,
\begin{align}\label{eq:rational}
\bigg(\sum_{k \in \{0, A, B\}} p_k \xi_k\bigg)^2 = \bigg(\sum_{k \in \{0, A, B\}} q_k \xi_k\bigg)^2 \implies \text{ either }\vec{p} = \vec{q} \text{ or }\vec{p} = -\vec{q}.
\end{align}
We use this fact to now show that $H_0$ in Eq.~(\ref{eq:H_0_blockade}) satisfies the non-degeneracy condition in lemma \ref{lemma:controllability}, i.e., for all $k$, the set of real numbers $\{(\mu(\vec{n} + \vec{e}_k) - \mu(\vec{n}))^2\}_{\vec{n}: \smallnorm{\vec{n}}_1 \leq D}$ are all distinct. We begin by noting that
\[
\mu(\vec{n} + \vec{e}_k) - \mu(\vec{n}) = \big(2n_k + 2(2n - 2D + 1)) \xi_k + \sum_{k' \neq k} \xi_{k'}(2n_{k'} + 2n - 2D + 1).
\]
Next, suppose that there were $\vec{n}, \vec{m}$ such that $(\mu(\vec{n} + \vec{e_k}) - \mu(\vec{n}))^2 = (\mu(\vec{m} + \vec{e_k}) - \mu(\vec{m}))^2$ --- applying Eq.~(\ref{eq:rational}), there are two possibilities. \emph{First},
\begin{align*}
    &\big(2n_k + 2(2n - 2D + 1)) =  \big(2m_k + 2(2m - 2D + 1)), \text{ and }\nonumber\\
    &(2n_{k'} + 2n - 2D + 1) = (2m_{k'} + 2m - 2D + 1) \text{ for }k' \neq k,
\end{align*}
which implies that $\vec{n} = \vec{m}$. \emph{Second}, 
\begin{align*}
    &\big(2n_k + 2(2n - 2D + 1)) =  -\big(2m_k + 2(2m - 2D + 1)), \text{ and }\nonumber\\
    &(2n_{k'} + 2n - 2D + 1) = -(2m_{k'} + 2m - 2D + 1) \text{ for }k' \neq k,
\end{align*}
which implies that $5(\smallnorm{\vec{n}}_1 + \smallnorm{\vec{m}}_1) = 4(2D - 1)$ --- since we can always choose $D$ such that $2D - 1$ is not divisible by 5, this equation will have no solutions. Therefore, we conclude that as long as $\vec{n} \neq \vec{m}$, $(\mu(\vec{n} + \vec{e_k}) - \mu(\vec{n}))^2 \neq (\mu(\vec{m} + \vec{e_k}) - \mu(\vec{m}))^2$, thus confirming that $H_0$ as constructed in Eq.~(\ref{eq:H_0_blockade}) satisfies the non-degeneracy condition in lemma \ref{lemma:controllability}.

Thus, we have shown that $\hat{H}_R^c(t)$ as in Eq.~(\ref{eq:final_blockaded_hamiltonian}) can be used to implement any unitary within the blockaded subspace, thus allowing us to implement the re-absorption unitary required in the optimal measurement setup. Furthermore, since the magnitude of $\Gamma_k(t)$ in Eq.~(\ref{eq:final_blockaded_hamiltonian}) are determined only by the coherent drive applied on the oscillators, the speed of applying a unitary on the blockaded subspace is not limited by the non-linear strength $\chi$.

\subsection{Optimality of photodetection and linear optics}\label{app:opt_measurement_pdlo}
\subsubsection{Independent sources}
We recall the setup introduced in the main text Section \ref{sec:pdlo} --- a linear optical element is applied at the output of the MZI and is tuned depending on the output of the photodetectors [Fig.~\ref{fig:beam_splitter}(b)]. The result of the photodetection would be a sequence of times $0 \leq \tau_1 < \tau_2 \dots < \tau_n < \dots$ at which the photons are detected as well as $\sigma_1, \sigma_2 \dots \sigma_n \dots \in \{a, b\}$ indicating the port in which a photon has been detected. Having obtained this photodetection record until the $n^\text{th}$ detection event and before the next detection event, we apply a linear optical element described by the unitary $U^{\vec{\sigma}^n}_{\vec{\tau}^n}(t)$, where $\vec{\sigma}^n = \{\sigma_n, \sigma_{n - 1} \dots \sigma_1\}, \vec{\tau}^n = \{\tau_n, \tau_{n - 1}\dots \tau_1\}$, for $t > \tau_n$. Suppose that the next detection happens at $\tau_{n + 1} > \tau_n$: The annihilation operators at the two output ports at this time, $a_{\tau_{n + 1}}$ and $b_{\tau_{n + 1}}$, after the application of the unitary $U^{\vec{\sigma}^n}_{\vec{\tau}^n}(t)$ will be given by
\begin{align}\label{eq:time_dep_bs}
U^{\vec{\sigma}^n \dagger}_{\vec{\tau}^n}(\tau_{n + 1})\begin{bmatrix}
a_{\tau_{n + 1}} \\
b_{\tau_{n + 1}}
\end{bmatrix} U^{\vec{\sigma}^n}_{\vec{\tau}^n}(\tau_{n + 1}) = \underbrace{\begin{bmatrix}
    (V_{a, a}(\tau_{n + 1}))^{\vec{\sigma}^n}_{\vec{\tau}^n} & (V_{a, b}(\tau_{n + 1}))^{\vec{\sigma}^n}_{\vec{\tau}^n} \\
    (V_{b, a}(\tau_{n + 1}))^{\vec{\sigma}^n}_{\vec{\tau}^n} & (V_{b, b}(\tau_{n + 1}))^{\vec{\sigma}^n}_{\vec{\tau}^n}
\end{bmatrix}}_{V^{\vec{\sigma}^n}_{\vec{\tau}^n}(\tau_{n + 1})}
\begin{bmatrix}
a_s \\
b_s
\end{bmatrix},
\end{align}
where $V^{\vec{\sigma}^n}_{\vec{\tau}^n}(\tau_{n + 1})$ is a $2\times 2$ unitary matrix. We can streamline the notation a lot more by introducing $p_{\vec{\tau}^k}^{\vec{\sigma}^k}, q_{\vec{\tau}^k}^{\vec{\sigma}^k}$ via
\begin{align}
    p_{\vec{\tau}^{k}}^{\{\vec{\sigma}^{k - 1}, a\}} = (V_{a, a}(\tau_k))^{\vec{\sigma}^{k - 1}}_{\vec{\tau}^{k - 1}},\   q_{\vec{\tau}^{k}}^{\{\vec{\sigma}^{k - 1}, a\}} = (V_{a, b}(\tau_k))^{\vec{\sigma}^{k - 1}}_{\vec{\tau}^{k - 1}},\ p_{\vec{\tau}^{k}}^{\{\vec{\sigma}^{k - 1}, b\}} = (V_{b, a}(\tau_k))^{\vec{\sigma}^{k - 1}}_{\vec{\tau}^{k - 1}} \ \text{ and }q_{\vec{\tau}^{k}}^{\{\vec{\sigma}^{k - 1}, b\}} = (V_{b, b}(\tau_k))^{\vec{\sigma}^{k - 1}}_{\vec{\tau}^{k - 1}}.
\end{align}
Using this notation, the measurement of a sequence of photodetection times $\vec{\tau}$ and a sequence of the ports $\vec{\sigma}$ at which these detections happen is equivalent to a projection on the state $\smallket{E_{\vec{\tau}}^{\vec{\sigma}}}$ where
\begin{align}\label{eq:linear_optics_measurement}
    \smallket{E_{\vec{\tau}}^{\vec{\sigma}}} = \prod_{k = 1}^{n}\big(\big(p^{\vec{\sigma}^k}_{\vec{\tau}^k}\big)^* a_{\tau_k}^\dagger + \big(q^{\vec{\sigma}^k}_{\vec{\tau}^k}\big)^* b_{\tau_k}^\dagger\big)  \ket{0},
\end{align}
where $\vec{\tau}^k = \{\tau_k, \tau_{k - 1} \dots \tau_1\}, \vec{\sigma}^k = \{\sigma_k, \sigma_{k - 1} \dots \sigma_1\}$. Using Eq.~(\ref{eq:time_dep_bs}), it also follows that $\smallbra{E^{\vec{\sigma}'}_{\vec{\tau}'}} E^{\vec{\sigma}}_{\vec{\tau}}\rangle = \delta_{\vec{\sigma}, \vec{\sigma}'}\delta(\vec{\tau} - \vec{\tau}')$. Finally, we remark that if $q^{\{\vec{\sigma}^{k - 1}, a\}}_{\vec{\tau}^k}, p^{\{\vec{\sigma}^{k - 1}, b\}}_{\vec{\tau}^k} = 0$ or $q^{\{\vec{\sigma}^{k - 1}, b\}}_{\vec{\tau}^k}, p^{\{\vec{\sigma}^{k - 1}, a\}}_{\vec{\tau}^k} = 0$ for all $\vec{\sigma}^{k - 1}, \vec{\tau}^k$, then this measurement protocol reduces to simply performing photodetection on the two output ports without any linear optics.

To understand the optimality of this measurement protocol, we will use the optimality criteria from Ref.~\cite{braunstein1994statistical}: A rank 1 projective measurement described by orthogonal states $\ket{E_x}$ optimally senses the parameter $\varphi$ from the state $\ket{\psi_\varphi}$ if and only if
\begin{equation}
    \text{Im}(\smallbra{E_x} \psi_\varphi \rangle \smallbra{\psi_\varphi^\perp} E_x\rangle) = 0 \qquad \forall x ,
\end{equation}
where 
\[
\ket{\psi_\varphi^\bot} = \left(1- \ket{\psi_\varphi}\!\bra{\psi_\varphi}\right)\frac{d}{d\varphi}\ket{\psi_\varphi}.
\]
Using the derivative of $\ket{\psi_\varphi}$ from Eq.~(\ref{eq:deriv_psi_phi}), we obtain that $\{\smallket{E^{\vec{\sigma}}_{\vec{\tau}}} \ \forall \ \vec{\sigma}, \vec{\tau}\}$ is optimal at $\varphi = 0$ if
\begin{align}\label{eq:optimality_condition_pd}
\smallbra{E_{\vec{\tau}}^{\vec{\sigma}}} \{\ket{\psi}\!\bra{\psi}, H_d\} \smallket{E_{\vec{\tau}}^{\vec{\sigma}}} = 2\smallabs{\smallbra{\psi} E_{\vec{\tau}}^{\vec{\sigma}}\rangle}^2 \smallbra{\psi} H_d \smallket{\psi}.
\end{align}
The rest of our analysis will be based on this optimality condition.

We first consider the case where two independent sources emit photons into the two input ports, i.e.,~the state $\ket{\psi} = \ket{\psi_A}\otimes \ket{\psi_B}$. It will be convenient to define the wave-functions corresponding to the states $\ket{\psi_A}, \ket{\psi_B}$: Given  $\vec{\tau} = \{\tau_1, \tau_2 \dots \tau_n\}$, we will define for $X \in \{A, B\}$
\[
\Psi_X(\vec{\tau}) = \bra{\text{vac}}\bigg(\prod_{i = 1}^n x_{\tau_i} \bigg)\ket{\psi_X},
\]
with $\Psi_X(\emptyset) = \bra{\text{vac}}\psi_X\rangle$. We will additionally assume that
\begin{align}
\Psi_X(\vec{\tau}) \neq 0 \ \forall \ \vec{\tau},
\end{align}
which is equivalent to assume that the wave-functions corresponding to the states $\ket{\psi_A}, \ket{\psi_B}$ are never exactly zero. Without loss of generality, we will assume $\Psi_X(\emptyset) > 0$ for both $X \in \{A, B\}$, and define phase $\Theta_X(\vec{\tau})$ via
\[
\Psi_X(\vec{\tau}) = \smallabs{\Psi_X(\vec{\tau})} e^{i \Theta_X(\vec{\tau})}.
\]

Our main result for the case of independent sources is that if photodetection with time and measurement-record dependent linear optics is optimal, then photodetection itself is an optimal measurement. To show this, we begin by applying Eq.~(\ref{eq:optimality_condition_pd}) for $\vec{\tau}, \vec{\sigma} =\emptyset $, which corresponds to the measurement outcome of no photons being detected. Together with $\Psi_X(\emptyset) \neq 0$, this yields
\[
\bra{\psi} H_d \ket{\psi} = 0.
\]
Next, we consider the case of a single photodetection event and apply Eq.~(\ref{eq:optimality_condition_pd}) for $\vec{\tau}^1 = \{\tau_1\}, \vec{\sigma}^1 = \{\sigma_1\}$. We obtain that
\begin{subequations}\label{eq:single_photon_optimality}
\begin{align}
\begin{bmatrix}
    p_{\tau_1}^{\sigma_1} & q_{\tau_1}^{\sigma_1}
\end{bmatrix} \Lambda^{(1)}(\vec{\tau}^1) \begin{bmatrix}
p_{\tau_1}^{\sigma_1 *}\\
q_{\tau_1}^{\sigma_1 *}
\end{bmatrix} = 0,
\end{align}
where $\Lambda^{(1)}(\vec{\tau}^1)$ is a $2 \times 2$ Hermitian matrix depending on $\tau_1$ and whose diagonal elements given by
\begin{align}
[\Lambda^{(1)}(\vec{\tau}^1)]_{1,1} = [\Lambda^{(1)}(\vec{\tau}^1)]_{2,2} =  \text{Im}(\Psi_A(\tau_1) \Psi_A(\emptyset) \Psi_B^*(\emptyset) \Psi_B^*(\tau_1)).
\end{align}
\end{subequations}
We recall that Eq.~(\ref{eq:single_photon_optimality}a) needs to have two orthogonal solutions corresponding to $\sigma_1 = a$ and $b$. This is only possible if $\text{Tr}[\Lambda^{(1)}(\vec{\tau}^1)] = 0$ \cite{horn2012matrix}, but from Eq.~(\ref{eq:single_photon_optimality}b), this would also imply that $[\Lambda^{(1)}(\vec{\tau}^1)]_{1,1} = [\Lambda^{(1)}(\vec{\tau}^1)]_{2,2} = 0$, i.e.,~$\Lambda^{(1)}(\vec{\tau}^1)$ is a matrix with 0 diagonal elements. Then, it follows that the possible solutions to Eq.~(\ref{eq:single_photon_optimality}a) are (up to a global phase that can be dropped)
\begin{align}\label{eq:single_photon_solution}
\begin{bmatrix}
p^{a}_{\tau_1} \\
q^{a}_{\tau_1}
\end{bmatrix} = \begin{bmatrix}
1 \\
0
\end{bmatrix},
\begin{bmatrix}
p^{b}_{\tau_1} \\
q^{b}_{\tau_1}
\end{bmatrix} = \begin{bmatrix}
0 \\
1
\end{bmatrix},
\end{align}
which would imply that photodetection, without any linear optics, remains optimal until the first photodetection event. We point out that $[p^a_{\tau_1}; q^a_{\tau_1}] = [0; 1]$ and $[p^b_{\tau_1}; q^b_{\tau_1}] = [1; 0]$ are also possible solutions to Eq.~(\ref{eq:single_photon_optimality}a), but this simply corresponds to swapping the two photodetectors before the first photodetection [see Eq.~(\ref{eq:time_dep_bs})] and is thus equivalent to the solution in Eq.~(\ref{eq:single_photon_solution}). Finally, it is useful to note that it follows from Eq.~(\ref{eq:single_photon_optimality}b) and $[\Lambda^{(1)}(\vec{\tau}^1)]_{1,1} = [\Lambda^{(1)}(\vec{\tau}^1)]_{2, 2} = 0$ that
\begin{align}\label{eq:phase_condition_single_ph}
\Theta_A(\tau_1) = \Theta_B(\tau_1),
\end{align}
which we will use subsequently.

Proceeding similarly, we next consider the case of two photodetection events and apply Eq.~(\ref{eq:optimality_condition_pd}) for $\vec{\tau}^2 = \{\tau_1, \tau_2\}$ and $\vec{\sigma}^2 = \{\sigma_1, \sigma_2\}$. We obtain 
\begin{align}\label{eq:optimality_two_photon}
    \begin{bmatrix}
        p_{\tau_1, \tau_2}^{\sigma_1, \sigma_2} & q_{\tau_1, \tau_2}^{\sigma_1, \sigma_2}
    \end{bmatrix}
    \Lambda^{(2)}_{\sigma_1}(\vec{\tau}^2)
    \begin{bmatrix}
        p_{\tau_1,\tau_2}^{\sigma_1, \sigma_2*} \\ q_{\tau_1,\tau_2}^{\sigma_1, \sigma_2*}
    \end{bmatrix}
    =0,
\end{align}
where $\Lambda^{(2)}_{\sigma_1}(\vec{\tau})$ is again a 2 by 2 Hermitian matrix which depends on $\tau_1, \tau_2$ (the times at which the photon is detected) and $\sigma_1$ (the port at which the first photon is detected). For $\sigma_1 = 0$,
\begin{subequations}\label{eq:diagonal_elem_two_ph}
\begin{align}
    [\Lambda_{a}^{(2)}(\vec{\tau}^2)]_{1, 1} &= \text{Im}\big(\big(\Psi_A(\tau_1) \Psi_B(\tau_2) + \Psi_A(\tau_2) \Psi_B(\tau_1)\big)  \Psi_A^*(\tau_1, \tau_2) \Psi_B^*(\emptyset)\big), \nonumber \\
    [\Lambda_a^{(2)}(\vec{\tau}^2)]_{2, 2} &= \text{Im}\big(\big(\Psi_A(\emptyset) \Psi_B(\tau_1, \tau_2) - \Psi_A(\tau_1, \tau_2) \Psi_B(\emptyset)\big)\Psi_A^*(\tau_1) \Psi_B^*(\tau_2)\big),
\end{align}
and for $\sigma_1 = 1$,
\begin{align}
    [\Lambda_{b}^{(2)}(\vec{\tau}^2)]_{1, 1} &= \text{Im}\big(\big(\Psi_A(\emptyset) \Psi_B(\tau_1, \tau_2) - \Psi_A(\tau_1, \tau_2) \Psi_B(\emptyset)\big)  \Psi_A^*(\tau_2) \Psi_B^*(\tau_1)\big), \nonumber \\
    [\Lambda_b^{(2)}(\vec{\tau}^2)]_{2, 2} &=- \text{Im}\big(\big(\Psi_A(\tau_1) \Psi_B( \tau_2) +  \Psi_A(\tau_2) \Psi_B(\tau_1)\big)\Psi_A^*(\emptyset) \Psi_B^*(\tau_1, \tau_2)\big).
\end{align}
\end{subequations}
Again, since Eq.~(\ref{eq:optimality_two_photon}) needs to have two orthogonal solutions (corresponding to $\sigma_2 = 0$ and 1), we obtain that $\text{Tr}(\Lambda^{(2)}_0(\vec{\tau})) = \text{Tr}(\Lambda^{(2)}_1(\vec{\tau})) = 0$. Using Eq.~(\ref{eq:phase_condition_single_ph}), this condition can be rewritten as 
\begin{equation}
    \text{M}^{(2)}(\tau_1, \tau_2)
    \begin{bmatrix}
        \text{Im}\big(\Psi_A^*(\tau_1, \tau_2) \Psi_B^*(\emptyset) e^{i(\chi(t_1) + \chi(t_2))}\big) \\ 
        \text{Im}\big(\Psi_B^*(\tau_1, \tau_2) \Psi_A^*(\emptyset) e^{i(\chi(t_1) + \chi(t_2))}\big)
    \end{bmatrix} 
    = 0 ,
\end{equation}
where $\chi(t) = \Theta_A(t) = \Theta_B(t)$ and
\begin{equation}
    \text{M}^{(2)}(\tau_1, \tau_2) = 
    \begin{pmatrix}
        2\abs{\Psi_A(\tau_1)}\abs{\Psi_B(\tau_2)} + \abs{\Psi_A(\tau_2)}\abs{\Psi_B(\tau_1)} & -\abs{\Psi_A(\tau_1)}\abs{\Psi_B(\tau_2)} \\ 
        -\abs{\Psi_A(\tau_2)}\abs{\Psi_B(\tau_1)} & 2\abs{\Psi_A(\tau_2)}\abs{\Psi_B(\tau_1)} + \abs{\Psi_A(\tau_1)}\abs{\Psi_B(\tau_2)} 
    \end{pmatrix}.
\end{equation}
Since $\Psi_A(\tau), \Psi_B(\tau) \neq 0$ by assumption,  by computing the determinant of $\text{M}^{(2)}(\tau_1, \tau_2)$ it can be verified that it is invertible. Therefore, it follows that $\text{Im}\big(\Psi_A^*(\tau_1, \tau_2) \Psi_B^*(\emptyset) e^{i(\chi(t_1) + \chi(t_2))}\big) = 0$ and $\text{Im}\big(\Psi_B^*(\tau_1, \tau_2) \Psi_A^*(\emptyset) e^{i(\chi(\tau_1) + \chi(\tau_2))}\big) = 0$. Since we have fixed $\Psi_A(\emptyset), \Psi_B(\emptyset) > 0$, this is equivalent to
\begin{equation}\label{eq:phase_condition_two_photon}
    \Theta_A(\tau_1, \tau_2) = \Theta_B(\tau_1, \tau_2) = \chi(\tau_1) + \chi(\tau_2).
\end{equation}
Returning to Eqs.~(\ref{eq:optimality_two_photon}) and (\ref{eq:diagonal_elem_two_ph}), we see that Eq.~(\ref{eq:phase_condition_two_photon}) implies that $\Lambda^{(2)}(\vec{\tau})$ has $0$ diagonal elements, and consequently similar to the case of a single photodetection even, photodetection without any linear optics remains an optimal measurement.

This argument can be continued for any number of photodetection events via induction: We assume that, for $\vec{\tau}^n = \{\tau_1, \tau_2 \dots \tau_n\}$ (which corresponds to $n$ detection events), 
\begin{align}\label{eq:phase_condition_ind}
\Theta_A(\vec{\tau}^n) = \Theta_B(\vec{\tau}^n) = \sum_{j = 1}^n\chi(\tau_j),
\end{align}
and apply the optimality condition [Eq.~(\ref{eq:optimality_condition_pd})] for $\vec{\sigma}^{n + 1} = \{\sigma_1, \sigma_2 \cdots \sigma_{n + 1} \}$, and $\vec{\tau}^{n + 1} = \{ \tau_1, \tau_2 \cdots\tau_{n + 1} \} $. Doing so, we obtain that
\begin{align}\label{eq:optimality_n_photon}
    \begin{bmatrix}
        p_{\vec{\tau}^{n + 1}}^{\vec{\sigma}^{n + 1}} & q_{\vec{\tau}^{n + 1}}^{\vec{\sigma}^{n + 1}}
    \end{bmatrix} \Lambda_{\vec{\sigma}^n}^{(n + 1)}(\vec{\tau}^{n + 1}) \begin{bmatrix}
    p_{\vec{\tau}^{n + 1}}^{\vec{\sigma}^{n + 1} *}\\
    q_{\vec{\tau}^{n + 1}}^{\vec{\sigma}^{n + 1} *}
    \end{bmatrix} = 0,
\end{align}
where $\Lambda_{\sigma^n}^{(n + 1)}(\vec{\tau}^{n + 1})$ is a 2 by 2 matrix. Since Eq.~(\ref{eq:optimality_n_photon}) needs to have two orthogonal solutions (corresponding to $\sigma_{n+ 1} =a$ and $\sigma_{n + 1}=b$), it follows that $\text{Tr}(\Lambda_{\sigma^n}^{(n + 1)}(\vec{\tau}^{n + 1})) = 0$. Next, given $l \in \{1, 2 \dots n\}$ we choose $\vec{\sigma}^n = \{\sigma_1, \sigma_2 \dots \sigma_n\}$ where $\sigma_k = a$ for $k \neq l$ and $\sigma_l = b$, i.e., $\vec{\sigma}^n$ corresponds to a photodetection record where all but one of the first $n$ photons are detected at port $a$, with a single photon detected at port $b$ at time $\tau_l$. For this choice of $\vec{\sigma}^n$, the condition for $\text{Tr}(\Lambda^{(n + 1)}_{\vec{\sigma}^n}(\vec{\tau}))$ can be written as
    \begin{equation}
        \Im \bigg( \Psi_A(\vec{\tau}^{n + 1} ) \Psi_B(\emptyset) \Psi_A^*(\vec{\tau}^{n + 1} \setminus \tau_l) \Psi_B^*(\{\tau_l\}) \bigg) = 0.
    \end{equation}
It thus follows that
\begin{equation}
        \Theta_A(\vec{\tau}^{n + 1})  = \Theta_A(\vec{\tau}^{n + 1}\setminus \tau_l) + \Theta_B(\{\tau_l\}) = \sum_{j = 1}^{n + 1} \chi(\tau_j),
\end{equation}
where, in the last step, we have used the induction hypothesis [Eq.~(\ref{eq:phase_condition_ind})]. Performing a similar analysis while choosing $\vec{\sigma}^n$ via $\sigma_k = b$ if $k \neq l$ and $\sigma_l = a$, we obtain a similar result for $\Theta_B(\vec{\tau}^{n + 1})$. By induction, we thus conclude that Eq.~(\ref{eq:phase_condition_ind}) is satisfied for all $n$.

Finally, we show that Eq.~(\ref{eq:phase_condition_ind}) implies that photodetection alone, without any linear optical elements, is optimal. We recall that photo-detection corresponds to a projective measurement on the states $\smallket{E_{\vec{\tau}_a, \vec{\tau}_b}}$ where
\begin{equation}
    \smallket{E_{\vec{\tau}_a,\vec{\tau}_b}}  = \prod_{\tau\in \vec{\tau}_a} a_{\tau}^\dagger \prod_{\tau' \in \vec{\tau}_b} b_{\tau'}^\dagger  \ket{0},
\end{equation}
where $\vec{\tau}_x$ denotes the set of detection times recorded at port $x \in \{a, b\}$. Then, checking the optimality condition in Eq.~(\ref{eq:optimality_condition_pd}) is equivalent to checking if
\begin{equation}
    \Im\bigg(\bigg[\sum_{\tau\in\vec{\tau}_a }\Psi_A(\vec{\tau}_a \setminus \tau)\Psi_B( \vec{\tau}_b \cup \tau)- \sum_{t\in\vec{\tau}_b}\Psi_A(\vec{\tau}_a \cup \tau)\Psi_B( \vec{\tau}_b \setminus \tau)\bigg]
    \Psi_A^*(\vec{\tau}_a)\Psi_B^*(\vec{\tau}_b)\bigg) = 0.
\end{equation}
It can now easily be verified that this equation holds if Eq.~(\ref{eq:phase_condition_ind}) is true, thus showing that photodetection alone is an optimal measurement. 

\subsubsection{Counterexample: entangled state}\label{app:linear_optics_entangled}
Here, we construct an example where linear optics and photodetection is optimal but photodetection alone is not. We begin by defining
\begin{equation}
    \Psi(\vec{\tau}_a;\vec{\tau}_b) = \bra{\text{vac}}\prod_{\tau \in \vec{\tau}_a} a_{\tau}\prod_{\tau \in \vec{\tau}_b} b_{\tau} \ket{\psi}.
\end{equation}
We will consider a photonic state where a single source emits into both the input ports of the MZI. Consequently, the photons in the two different ports can be entangled and $\Psi(\vec{\tau}_a; \vec{\tau}_b)$ does not necessarily factorize into a product of the form $\Psi_A(\vec{\tau}_a)\Psi_B(\vec{\tau}_b)$.

As an explicit example, consider a two-photon wavepacket given by
\begin{equation}
    \Psi(\tau_1,\tau_2;\emptyset) = f(\tau_1,\tau_2), \quad 
    \Psi(\tau_1;\tau_2) = -f(\tau_1,\tau_2), \quad
    \Psi(\emptyset; \tau_1,\tau_2) = (1+i)f(\tau_1,\tau_2), \quad
    \Psi(\tau_2;\tau_1) = (1+i)f(\tau_1,\tau_2),
\end{equation}
where $f(\tau_1,\tau_2)$ is a symmetric function chosen such that the wave function is normalized. First we note that for this specific example, $\bra{\psi} H_d \ket{\psi}$ is equal to zero, and the condition in Eq.~(\ref{eq:single_photon_optimality}) is satisfied since it's a two photon wavepacket. The diagonal entries of the matrix $\Lambda_{\sigma_1}^{(2)}(\vec{\tau})$ in Eq.~(\ref{eq:optimality_two_photon}) (corresponding to the optimality of two photodetection events) for $\sigma_1 = a$ and $b$ are 
\begin{subequations}
\begin{align}
    (\Lambda_{a}^{(2)}(\vec{\tau}))_{1, 1} &= \text{Im}\big(\big(\Psi(\tau_1;\tau_2) + \Psi(\tau_2;\tau_1)\big)  \Psi^*(\tau_1, \tau_2;\emptyset)\big) = \abs{f(\tau_1,\tau_2)}^2, \nonumber \\
    (\Lambda_a^{(2)}(\vec{\tau}))_{2, 2} &= \text{Im}\big(\big(\Psi(\emptyset;\tau_1, \tau_2) - \Psi(\tau_1, \tau_2;\emptyset)\big)\Psi^*(\tau_1;\tau_2)\big)= - \abs{f(\tau_1,\tau_2)}^2,
\end{align}
and,
\begin{align}
    (\Lambda_{b}^{(2)}(\vec{\tau}))_{1, 1} &= \text{Im}\big(\big(\Psi(\emptyset;\tau_1, \tau_2) - \nonumber \Psi(\tau_1, \tau_2;\emptyset)\big)  \Psi^*(\tau_2;\tau_1)\big) =  \abs{f(\tau_1,\tau_2)}^2, \\
    (\Lambda_b^{(2)}(\vec{\tau}))_{2, 2} &=- \text{Im}\big(\big(\Psi(\tau_1; \tau_2) + \Psi(\tau_2;\tau_1)\big)\Psi^*(\emptyset;\tau_1, \tau_2)\big) =  - \abs{f(\tau_1,\tau_2)}^2.
\end{align}
\end{subequations}
Clearly, we have $\text{Tr}(\Lambda_{\sigma_1}^{(2)}(\vec{\tau})) = 0$,  indicating that a measurement using linear optics and photodetection is optimal. However, the diagonal elements of $\text{Tr}(\Lambda_{\sigma_1}^{(2)}(\vec{\tau}))$ are nonzero. This implies that the two orthonormal basis solutions of Eq.~(\ref{eq:optimality_two_photon}) do not correspond to photodetection. Therefore, photodetection is sub-optimal as a measurement but becomes optimal when supplemented with linear optical elements. 

\end{document}